\numberwithin{equation}{section}
\newtheorem{lem}{Lemma}[section]
\newtheorem{thm}{Theorem}[section]
\newtheorem{ass}{Assumption}
\newtheorem{ex}{Example}
\renewcommand{\citep}[1]{\citeauthor{#1}, \citeyear{#1}}
\newcommand{\diag}{\text{diag}}
\newcommand{\Supp}{\text{Supp}}
\newcommand{\indep}{\perp\!\!\!\perp}
\newcommand{\convP}{\stackrel{p}{\longrightarrow}}
\newcommand{\convD}{\rightsquigarrow}
\newcommand{\N}{\mathcal{N}}
\newcommand{\eps}{\varepsilon}
\renewcommand{\epsilon}{\varepsilon}
\DeclareMathOperator*{\argmin}{arg\,min}
\newcommand*{\rom}[1]{\expandafter\@slowromancap\romannumeral #1@}
\title{Quantile Treatment Effects and Bootstrap Inference under Covariate-Adaptive Randomization\thanks{We are grateful to Federico Bugni, Qu Feng, Sukjin Han, Yu-Chin Hsu, Shakeeb Khan, Frank Kleibergen, Michael Leung, Jia Li, Wenjie Wang, and
		seminar participants at NTU, Financial Econometrics and New Finance Conference at Zhejiang University, SH3 Conference on Econometrics at SMU, 2019 Shanghai Workshop of Econometrics, and Asian Meeting of the
		Econometric Society for their valuable comments. We also thank the editor and two anonymous referees for their valuable comments which greatly improve our paper. Zhang acknowledges
		the financial support from Singapore Ministry of Education Tier 2 grant under grant no. MOE2018-T2-2-169 and the Lee Kong Chian fellowship. Any and all errors are our own.  }
	\\ \vspace{2mm}
}
\author{Yichong Zhang\thanks{Singapore Management University.\ E-mail~address: yczhang@smu.edu.sg.} \and  Xin Zheng\thanks{The corresponding author. Singapore Management University. E-mail~address: xin.zheng.2015@phdecons.smu.edu.sg.}
	\date{}
}
\begin{document}
	\renewcommand\texteuro{FIXME} 
	\maketitle
\begin{abstract}
	In this paper, we study the estimation and inference of the quantile treatment effect under covariate-adaptive randomization. We propose two estimation methods: (1) the simple quantile regression and (2) the inverse propensity score weighted quantile regression. For the two estimators, we derive their asymptotic distributions uniformly over a compact set of quantile indexes, and show that, when the treatment assignment rule does not achieve strong balance, the inverse propensity score weighted estimator has a smaller asymptotic variance than the simple quantile regression estimator. For the inference of method (1), we show that the Wald test using a weighted bootstrap standard error under-rejects. But for method (2), its asymptotic size equals the nominal level. We also show that, for both methods, the asymptotic size of the Wald test using a covariate-adaptive bootstrap standard error equals the nominal level. We illustrate the finite sample performance of the new estimation and inference methods using both simulated and real datasets. \\
	
	\noindent \textbf{Keywords:} Bootstrap inference, quantile treatment effect\bigskip
	
	\noindent \textbf{JEL codes:} C14, C21
\end{abstract}

\section{Introduction}
The randomized control trial (RCT), as pointed out by \cite{AP08}, is one of the five most common methods (along with instrumental variable regressions, matching estimations, differences-in-differences, and regression discontinuity designs) for causal inference. Researchers can use the RCT to estimate not only average treatment effects (ATEs) but also quantile treatment effects (QTEs), which capture the heterogeneity of the sign and magnitude of treatment effects, varying depending on their place in the overall distribution of outcomes. For example, \cite{MS11} estimate the QTE of teacher performance pay program on student learning via the difference of empirical quantiles of test scores between treatment and control groups. \cite{DGPR13} and \cite{BDGK15} estimate the QTEs of audits on endline pollution and a group-lending microcredit program on informal borrowing, respectively, via linear quantile regressions (QRs). \cite{CDDP15} estimate the QTE of microcredit on various household outcomes via a minimum distance method. \cite{BNM18} estimate the QTE of being informed on energy use via the inverse propensity score weighted (IPW) QR. Except \cite{CDDP15}, the other four papers all use the bootstrap to construct confidence intervals for their QTE estimates. However, RCTs have also been routinely implemented with covariate-adaptive randomization. Individuals are first stratified based on some baseline covariates, and then, within each stratum, the treatment status is assigned (independent of covariates) to achieve some balance between the sizes of treatment and control groups; as examples, see \citet[Chapter 9]{IR05} for a textbook treatment of the topic, and \cite{DGK07} and \cite{B09} for two excellent surveys on implementing RCTs in development economics. To achieve such balance, treatment status for different individuals usually exhibits a (negative) cross-sectional \textit{dependence}. The standard inference procedures that rely on cross-sectional \textit{independence} are usually conservative and lacking power. How do we consistently estimate QTEs under covariate-adaptive randomization? What are the asymptotic distributions for the QTE estimators, and how do we conduct proper bootstrap inference? These questions are as yet unaddressed.  

We propose two ways to estimate QTEs: (1) the simple quantile regression (SQR) and (2) the IPW QR. We establish the weak limits for both estimators uniformly over a compact set of quantile indexes and show that the IPW estimator has a smaller asymptotic variance than the SQR estimator when the treatment assignment rule does not achieve strong balance.\footnote{We will define ``strong balance" in Section \ref{sec:setup}.} If strong balance is achieved, then the two estimators are asymptotically first-order equivalent. For inference, we show that the Wald test combined with weighted bootstrap based critical values can lead to under-rejection for method (1), but its asymptotic size equals the nominal level for method (2). We also study the covariate-adaptive bootstrap which respects the cross-sectional dependence when generating the bootstrap sample. The estimator based on the covariate-adaptive bootstrap sample can mimic that of the original sample in terms of the standard error. Thus, using proper covariate-adaptive bootstrap based critical values, the asymptotic size of the Wald test equals the nominal level for both estimators.  

As originally proposed by \cite{D74}, the QTE, for a fixed quantile index, corresponds to the horizontal difference between the marginal distributions of the potential outcomes for treatment and control groups. \cite{F07} studies the identification and estimation of QTE under unconfoundedness. Our estimators (1) and (2) directly follow those in \cite{D74} and \cite{F07}, respectively.

\cite{SYZ10} first point out that, under covariate-adaptive randomization, the usual two-sample t-test for the ATE is conservative. They then propose a covariate-adaptive bootstrap which can produce the correct standard error. \cite{SY13} extend the results to generalized linear models. However, both groups of researchers parametrize the (transformed) conditional mean equation by a specific linear model and focus on a specific randomization scheme (covariate-adaptive biased coin method). \cite{MQLH18} derive the theoretical properties of ATE estimators based on general covariate-adaptive randomization under the linear model framework. \cite{BCS17} substantially generalize the framework to a fully nonparametric setting with a general class of randomization schemes. However, they mainly focus on the ATE and show that the standard two-sample t-test and the t-test based on the linear regression with strata fixed effects are conservative. They then obtain analytical estimators for the correct standard errors and study the validity of permutation tests. \cite{HHK11} study the IPW estimator for the ATE under adaptive randomization. However, they assume the treatment status is assigned completely independently across individuals. More recently, \cite{BCS18} study the estimation of ATE with multiple treatments and propose a fully saturated estimator. \cite{T18} study the estimation of ATE under an adaptive randomization procedure. 

Our paper complements the above papers in four aspects. First, we consider the estimation and inference of the QTE, which is a function of quantile index $\tau$. We rely on the empirical processes theories developed by \cite{VW96} and \cite{CCK14} to obtain uniformly weak convergence of our estimators over a compact set of $\tau$. Based on the uniform convergence, we can construct not only point-wise but also uniform confidence bands. Second, we study the asymptotic properties of the IPW estimator under covariate-adaptive randomization. When the treatment assignment rule does not achieve strong balance, the IPW estimator is more efficient than the SQR estimator. Third, we investigate the weighted bootstrap approximation to the asymptotic distributions of the SQR and IPW estimators. We show that the weighted bootstrap ignores the (negative) cross-sectional dependence due to the covariate-adaptive randomization and over-estimates the asymptotic variance for the SQR estimator. However, the asymptotic variance for the IPW estimator does not rely on the randomization scheme implemented. Thus, the asymptotic size of the Wald test using the IPW estimator paired with the weighted bootstrap based critical values equals the nominal level. Fourth, we investigate the covariate-adaptive bootstrap approximation to the asymptotic distributions of the SQR and IPW estimators. We establish that, using either estimator paired with its corresponding covariate-adaptive bootstrap based critical values, the asymptotic size of the Wald test  equals the nominal level. \cite{SYZ10} first propose the covariate-adaptive bootstrap and establish its validity for the ATE in a linear regression model under the null hypothesis that the treatment effect is not only zero but also homogeneous.\footnote{We say the average treatment effect is homogeneous if the conditional average treatment effect given covariates is the same as the unconditional one.} We modify the covariate-adaptive bootstrap and establish its validity for the QTE in the nonparametric setting proposed by \cite{BCS17}. In addition, our results do not rely on the homogeneity of the treatment effect. Compared with the analytical inference, the two bootstrap inferences for QTEs we study in this paper avoid estimating the infinite-dimensional nuisance parameters such as the densities of the potential outcomes, and thus, the choices of tuning parameters. In addition, unlike the permutation tests studied in \cite{BCS17}, the validity of bootstrap inferences does not require either strong balance condition or studentization. In particular, such studentization is cumbersome in the QTE context.  

As the asymptotic variance for the IPW estimator does not depend on the treatment assignment rule implemented in RCTs, this estimator (and equivalently, the fully saturated estimator for the ATE) is suitable for settings where the knowledge of the exact treatment assignment rule is not available. Such scenario occurs when researchers are using an experiment that was run in the past and the randomization procedure may not have been fully described. It also occurs in subsample analysis, where sub-groups are defined using variables that may have not been used to form the strata and the treatment assignment rule for each sub-group becomes unknown. We illustrate this fact in the subsample analysis of the empirical application in Section \ref{sec:app}.

The rest of the paper is organized as follows. In Section \ref{sec:setup}, we describe the model setup and notation. In Sections \ref{sec:SQR} and \ref{sec:ipw}, we discuss the asymptotic properties of estimators (1) and (2), respectively. In Sections \ref{sec:SBI} and \ref{sec:CABI}, we investigate the  weighted and covariate-adaptive bootstrap approximations to the asymptotic distributions of estimators (1) and (2), respectively. In Section \ref{sec:sim}, we examine the finite-sample performance of the estimation and inference methods. In Section \ref{sec:guide}, we  provide recommendations for practitioners. In Section \ref{sec:app}, we apply the new methods to estimate and infer the average and quantile treatment effects of iron efficiency on educational attainment. In Section \ref{sec:concl}, we conclude. We provide proofs for all results in an appendix.  We study the strata fixed effects quantile regression estimator and provide additional simulation results in the second online supplement.

\section{Setup and Notation}
\label{sec:setup}

First, denote the potential outcomes for treated and control groups as $Y(1)$ and $Y(0)$, respectively. The treatment status is denoted as $A$, where $A=1$ means treated and $A=0$ means untreated. The researcher can only observe $\{Y_i,Z_i,A_i\}_{i=1}^n$ where $Y_i = Y_i(1)A_i + Y_i(0)(1-A_i)$, and $Z_i$ is a collection of baseline covariates. Strata are constructed from $Z$ using a function $S: \Supp(Z)\mapsto \mathcal{S}$, where $\mathcal{S}$ is a finite set. For $1\leq i \leq n$, let $S_i = S(Z_i)$ and $p(s) = \mathbb{P}(S_i = s)$. Throughout the paper, we maintain the assumption that $p(s)$ is fixed w.r.t. $n$ and is positive for every $s \in \mathcal{S}$.\footnote{We can also allow for the DGP to depend on $n$ so that $p_n(s) = \mathbb{P}_n(S_i = s)$ and $p(s) = \lim p_n(s)$. All the results in this paper still hold as long as $n(s)\rightarrow \infty$ a.s. Interested readers can refer to the previous version of this paper on arXiv for more detail.} We make the following assumption for the data generating process (DGP) and the treatment assignment rule.
\begin{ass}
	\label{ass:assignment1}	
	\begin{enumerate}
		\item $\{Y_i(1),Y_i(0),S_i\}_{i=1}^n$ is i.i.d. 
		\item $\{Y_i(1),Y_i(0)\}^{n}_{i=1} \indep \{A_i\}^{n}_{i=1}|\{S_i\}^{n}_{i=1}$.
		\item $\left\{\left\{ \frac{D_n(s)}{\sqrt{n}}\right\}_{s \in \mathcal{S}}\biggl|\{S_i\}^{n}_{i=1}  \right\} \convD N(0,\Sigma_D)$ a.s., where 
		\begin{align*}
		D_n(s) = \sum_{i =1}^n (A_i-\pi)1\{S_i = s\}\quad \text{and} \quad	\Sigma_D = \diag\{p(s)\gamma(s):s \in \mathcal{S}\}
		\end{align*}
		with $0 \leq \gamma(s) \leq \pi(1-\pi)$.
		\item $\frac{D_n(s)}{n(s)} = o_p(1)$ for $s \in \mathcal{S}$, where $n(s) =\sum_{i =1}^n1\{S_i=s\}$. 
	\end{enumerate}
\end{ass}

Several remarks are in order. First, Assumptions \ref{ass:assignment1}.1--\ref{ass:assignment1}.3 are exactly the same as \citet[Assumption 2.2]{BCS17}. We refer interested readers to \cite{BCS17} for more discussion of these assumptions. Second, note that in Assumption \ref{ass:assignment1}.3 the parameter $\pi$ is the target proportion of treatment for each stratum and $D_n(s)$ measures the imbalance. \cite{BCS18} study the more general case that $\pi$ can take distinct values for different strata. Third, we follow the terminology in \cite{BCS17}, which follows that of \cite{E71} and \cite{HH12}, saying a treatment assignment rule achieves strong balance if $\gamma(s)=0$. Fourth, we do not require that the treatment status is assigned independently. Instead, we only require Assumption \ref{ass:assignment1}.3 or Assumption \ref{ass:assignment1}.4, which condition is satisfied by several treatment assignment rules such as simple random sampling (SRS), biased-coin design (BCD), adaptive biased-coin design (WEI), and stratified block randomization (SBR). \citet[Section 3]{BCS17} provide an excellent summary of these four examples. For completeness, we briefly repeat their descriptions below. Note that both BCD and SBR assignment rules achieve strong balance. Last, as $p(s)>0$, Assumption \ref{ass:assignment1}.3 implies Assumption \ref{ass:assignment1}.4.

\begin{ex}[SRS]
	\label{ex:srs}
	Let $\{A_i\}_{i=1}^n$ be drawn independently across $i$ and of $\{S_i\}_{i=1}^n$ as Bernoulli random variables with success rate $\pi$, i.e., for $k=1,\cdots,n$, 
	\begin{align*}
	\mathbb{P}\left(A_k = 1\big|\{S_i\}_{i=1}^n, \{A_{j}\}_{j=1}^{k-1}\right) = \mathbb{P}(A_k = 1) = \pi.
	\end{align*}
	Then, Assumption \ref{ass:assignment1}.3 holds with $\gamma(s) = \pi(1-\pi)$.
\end{ex}

\begin{ex}[WEI]
	\label{ex:wei}
	The design is first proposed by \cite{W78}. Let $n_{k-1}(S_k) = \sum_{i=1}^{k-1}1\{S_i = S_k\}$, $D_{k-1}(s) = \sum_{i=1}^{k-1}\left(A_i - \frac{1}{2} \right) 1\{S_i = s\}$, and  
	\begin{align*}
	\mathbb{P}\left(A_k = 1\big| \{S_i\}_{i=1}^k,\{A_i\}_{i=1}^{k-1}\right) = \phi\biggl(\frac{D_{k-1}(S_k)}{n_{k-1}(S_k)}\biggr),
	\end{align*}
	where $\phi(\cdot):[-1,1] \mapsto [0,1]$ is a pre-specified non-increasing function satisfying $\phi(-x) = 1- \phi(x)$. Here, $\frac{D_0(S_1)}{0}$ is understood to be zero. Then, \cite{BCS17} show that Assumption \ref{ass:assignment1}.3 holds with $\pi = \frac{1}{2}$ and $\gamma(s) = \frac{1}{4}(1 - 4\phi'(0))^{-1}$.
\end{ex} 

\begin{ex}[BCD]
	\label{ex:bcd}
	The treatment status is determined sequentially for $1 \leq k \leq n$ as
	\begin{align*}
	\mathbb{P}\left(A_k = 1| \{S_i\}_{i=1}^k,\{A_i\}_{i=1}^{k-1}\right) = \begin{cases}
	\frac{1}{2} & \text{if }D_{k-1}(S_k) = 0 \\
	\lambda & \text{if }D_{k-1}(S_k) < 0 \\
	1-\lambda & \text{if }D_{k-1}(S_k) > 0,	
	\end{cases}
	\end{align*}
	where $D_{k-1}(s)$ is defined as above and $\frac{1}{2}< \lambda \leq 1$. Then, \cite{BCS17} show that Assumption \ref{ass:assignment1}.3 holds with $\pi = \frac{1}{2}$ and $\gamma(s) = 0$. 	
\end{ex}

\begin{ex}[SBR]
	\label{ex:sbr}
	For each stratum, $\lfloor \pi n(s) \rfloor$ units are assigned to treatment and the rest is assigned to control. \cite{BCS17} then show that Assumption \ref{ass:assignment1}.3 holds with $\gamma(s) = 0$. 
\end{ex}

Our parameter of interest is the $\tau$-th QTE defined as 
\begin{align*}
q(\tau) = q_1(\tau) - q_0(\tau),
\end{align*}
where $\tau \in (0,1)$ is a quantile index and $q_j(\tau)$ is the $\tau$-th quantile of random variable $Y(j)$ for $j = 0,1.$ For inference, although we mainly focus on the Wald test for the null hypothesis that $q(\tau)$ equals some particular value, our method can also be used to test hypotheses involving multiple or even a continuum of quantile indexes. The following regularity conditions are common in the literature of quantile estimations. 
\begin{ass}
	\label{ass:tau}
	For $j=0,1$, denote $f_j(\cdot)$ and $f_j(\cdot|s)$ as the PDFs of $Y_i(j)$ and $Y_i(j)|S_i=s$, respectively.
	\begin{enumerate}
		\item $f_j(q_j(\tau))$ and $f_j(q_j(\tau)|s)$ are bounded and bounded away from zero uniformly over $\tau \in \Upsilon$ and $s \in \mathcal{S}$, where $\Upsilon$ is a compact subset of $(0,1)$.
		\item $f_j(\cdot)$ and $f_j(\cdot|s)$ are Lipschitz over $\{q_j(\tau):\tau \in \Upsilon\}.$
	\end{enumerate}
\end{ass}

\section{Estimation}
\label{sec:est}

\subsection{Simple Quantile Regression}
\label{sec:SQR}
In this section, we propose to estimate $q(\tau)$ by a QR of $Y_i$ on $A_i$. Denote $\beta(\tau) = (\beta_0(\tau),\beta_1(\tau))^\prime$, $\beta_0(\tau) = q_0(\tau)$, and $\beta_1(\tau) = q(\tau)$. We estimate $\beta(\tau)$ by $\hat{\beta}(\tau)$, where 
\begin{align*}
\hat{\beta}(\tau) = \argmin_{b = (b_0,b_1)^\prime \in \Re^2} \sum_{i=1}^{n}\rho_\tau\left(Y_i - \dot{A}_i'b\right),
\end{align*}
$\dot{A}_i = (1,A_i)^\prime$, and $\rho_\tau(u) = u(\tau - 1\{u\leq 0\})$ is the standard check function. We refer to $\hat{\beta}_1(\tau)$, the second element of $\hat{\beta}(\tau)$, as our SQR estimator for the $\tau$-th QTE. As $A_i$ is a dummy variable, $\hat{\beta}_1(\tau)$ is numerically the same as the difference between the $\tau$-th empirical quantiles of $Y$ in the treatment and control groups. 

\begin{thm}
	\label{thm:qr}
	If Assumptions \ref{ass:assignment1}.1--\ref{ass:assignment1}.3 and \ref{ass:tau} hold, then, uniformly over $\tau \in \Upsilon$, 
	\begin{align*}
	\sqrt{n}\left(\hat{\beta}_1(\tau) - q(\tau)\right) \convD \mathcal{B}_{sqr}(\tau),~\text{as}~n\rightarrow \infty,
	\end{align*}
	where $\mathcal{B}_{sqr}(\cdot)$ is a Gaussian process with covariance kernel $\Sigma_{sqr}(\cdot,\cdot)$. The expression for $\Sigma_{sqr}(\cdot,\cdot)$ can be found in the Appendix. 
\end{thm}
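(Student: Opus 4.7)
The plan is to use a convex M-estimation argument for quantile regression, combined with the score-decomposition technique for covariate-adaptive treatment assignments from \cite{BCS17}. Let $u = \sqrt{n}(b - \beta(\tau))$ and consider the rescaled objective
\[
M_n(u,\tau) = \sum_{i=1}^n \left\{\rho_\tau\!\left(Y_i - \dot{A}_i'\beta(\tau) - \dot{A}_i'u/\sqrt{n}\right) - \rho_\tau\!\left(Y_i - \dot{A}_i'\beta(\tau)\right)\right\},
\]
so that $\hat u(\tau) := \sqrt{n}(\hat\beta(\tau)-\beta(\tau)) = \argmin_u M_n(u,\tau)$. Applying the Knight identity to expand $\rho_\tau$ pointwise and then controlling the remainder, I would establish
\[
M_n(u,\tau) = -u' W_n(\tau) + \tfrac12 u' Q(\tau) u + o_p(1),
\]
with the remainder uniform in $u$ on compacta and in $\tau \in \Upsilon$. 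Convexity in $u$ and a uniform convexity lemma then give $\hat u(\tau) = Q(\tau)^{-1} W_n(\tau) + o_p(1)$ uniformly in $\tau$, and reading off the second coordinate yields the asserted limit for $\sqrt{n}(\hat\beta_1(\tau) - q(\tau))$.

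Computing the Hessian is routine: a second-order expansion under Assumption \ref{ass:tau} together with $n_j(s)/n \convP \pi^{(j)} p(s)$ (a consequence of Assumption \ref{ass:assignment1}.4, writing $\pi^{(1)}=\pi$ and $\pi^{(0)}=1-\pi$) yields
\[
Q(\tau) = (1-\pi)f_0(q_0(\tau))\begin{pmatrix}1 & 0\\ 0 & 0\end{pmatrix} + \pi f_1(q_1(\tau))\begin{pmatrix}1 & 1\\ 1 & 1\end{pmatrix}.
\]
The substantive work lies in the score $W_n(\tau) = n^{-1/2}\sum_i \dot{A}_i\,\psi_\tau(Y_i - \dot{A}_i'\beta(\tau))$, with $\psi_\tau(u) = \tau - 1\{u\le 0\}$. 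Because $A_i^2 = A_i$, each coordinate of $W_n$ is a linear combination of the scalars $T_j(\tau) := n^{-1/2}\sum_i A_i^{(j)}[\tau - 1\{Y_i(j) \le q_j(\tau)\}]$ for $j\in\{0,1\}$, where $A_i^{(1)} = A_i$ and $A_i^{(0)} = 1-A_i$. I would decompose each $T_j(\tau)$ into: (a) an empirical-process piece $n^{-1/2}\sum_i A_i^{(j)}[F_j(q_j(\tau)|S_i) - 1\{Y_i(j) \le q_j(\tau)\}]$, which, conditional on $(\{A_i\},\{S_i\})$, is a sum of mean-zero independent random variables by Assumption \ref{ass:assignment1}.2; (b) an imbalance piece $\pm n^{-1/2}\sum_{s\in\mathcal{S}}(\tau - F_j(q_j(\tau)|s)) D_n(s)$ whose limit is Gaussian by Assumption \ref{ass:assignment1}.3 with variance involving $\gamma(s)$; and (c) a marginal piece $\pi^{(j)} n^{-1/2}\sum_i[\tau - F_j(q_j(\tau)|S_i)]$ handled by the standard i.i.d.\ CLT. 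Since (b) and (c) depend only on $(\{A_i\},\{S_i\})$ while (a) is mean-zero conditional on them, the three are asymptotically orthogonal and converge jointly to a Gaussian limit.

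The principal obstacle is establishing uniform-in-$\tau$ weak convergence of $T_j$ in $\ell^\infty(\Upsilon)$ in the presence of the non-i.i.d.\ assignments. For piece (a) I would show that $\{(y,s)\mapsto 1\{y \le q_j(\tau)\} - F_j(q_j(\tau)|s):\tau\in\Upsilon\}$ is a VC-subgraph class with a bounded envelope, hence uniformly Donsker, and then invoke conditional maximal inequalities in the spirit of \cite{VW96} and \cite{CCK14} to obtain $\ell^\infty(\Upsilon)$ convergence conditionally on $(\{A_i\},\{S_i\})$, almost surely, which upgrades to unconditional weak convergence by dominated convergence. For pieces (b) and (c), the dependence on $\tau$ enters only through the Lipschitz maps $\tau\mapsto q_j(\tau)$ and $\tau\mapsto F_j(q_j(\tau)|s)$, so tightness in $\ell^\infty(\Upsilon)$ is immediate from Assumption \ref{ass:tau}. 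Combining the three limits via a Cram\'er--Wold argument and joining with the almost-sure limit of the Hessian delivers the stated Gaussian-process convergence of $\hat u(\tau)$; the kernel $\Sigma_{sqr}(\tau,\tau')$ arises as the $(2,2)$ entry of $Q(\tau)^{-1}\,\mathrm{Cov}(W_n(\tau), W_n(\tau'))\, Q(\tau')^{-1}$, and the contribution from (b) (proportional to $\gamma(s)$) is precisely what vanishes under strong balance, previewing the efficiency comparison with the IPW estimator in Section \ref{sec:ipw}.
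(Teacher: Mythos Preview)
Your proposal is correct and follows the same architecture as the paper: Knight's identity, the same Hessian $Q(\tau)$, the same three-way decomposition of the score into a within-stratum mean-zero piece, an imbalance piece driven by $D_n(s)$, and an i.i.d.\ stratum-marginal piece, with Kato's convexity lemma (\cite{K09}) delivering the uniform argmin representation.

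The only noteworthy difference is in how piece (a) and the remainder are handled. Rather than arguing abstractly via conditional Donsker theory, the paper uses the rearrangement device from \cite{BCS17}: it constructs auxiliary i.i.d.\ sequences $\{Y_i^s(j)\}$ with law $Y(j)\mid S=s$, orders units by stratum and then by treatment status, and reduces everything to partial-sum processes $\Gamma_n^s(\lfloor nt\rfloor,\tau)$ indexed by $(t,\tau)\in(0,1)\times\Upsilon$. Stochastic equicontinuity of these processes (via a Banach-valued Ottaviani-type inequality, Lemma~\ref{lem:S}, and the maximal inequalities of \cite{CCK14}) then lets the random block boundaries $N(s)/n$, $n_1(s)/n$ be swapped for their deterministic limits $F(s)$, $\pi p(s)$, which in one stroke gives uniformity in $\tau$ and makes piece (a) literally independent of $(W_{n,2},W_{n,3})$. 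Your conditional route is equivalent in substance but leaves the a.s.-conditional-to-unconditional upgrade and the joint convergence as separate steps; the coupling makes both automatic. One small omission in your sketch: you explain why (a) is orthogonal to (b) and (c) but not why (b) and (c) are orthogonal to each other---the paper handles this by a further conditioning on $\{S_i\}$ together with Assumption~\ref{ass:assignment1}.3.
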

The asymptotic variance for $\sqrt{n}\left(\hat{\beta}_1(\tau) - \beta_1(\tau)\right)$ is $\zeta_Y^2(\pi,\tau)  + \zeta_A^2(\pi,\tau) + \zeta_S^2(\tau)$, where
\begin{align*}
\zeta_Y^2(\pi,\tau) = \frac{\tau (1-\tau) - \mathbb{E}m_1^2(S,\tau)}{\pi f_1^2(q_1(\tau))} + \frac{\tau (1-\tau) - \mathbb{E}m_0^2(S,\tau)}{(1-\pi) f_0^2(q_0(\tau))},
\end{align*}
\begin{align*}
\zeta_A^2(\pi,\tau) = \mathbb{E}\gamma(S) \left(\frac{m_1(S,\tau)}{\pi f_1(q_1(\tau))} + \frac{m_0(S,\tau)}{(1-\pi) f_0(q_0(\tau))} \right)^2,
\end{align*}
\begin{align*}
\zeta_S^2(\tau) = \mathbb{E}\left(\frac{m_1(S,\tau)}{f_1(q_1(\tau))}-\frac{m_0(S,\tau)}{ f_0(q_0(\tau))} \right)^2,
\end{align*}
and $m_j(s,\tau) = \mathbb{E}(\tau - 1\{Y(j)\leq q_j(\tau)\}|S=s)$. Note that, if the treatment assignment rule achieves strong balance or the stratification is irrelevant\footnote{It means $\mathbb{P}(Y(j) \leq q_j(\tau)|S=s) = \tau$ for $s \in \mathcal{S}, j=0,1$.} then $\zeta_A^2(\pi,\tau) = 0$.

\subsection{Inverse Propensity Score weighted Quantile Regression}
\label{sec:ipw}
Denote $\hat{\pi}(s) = n_1(s)/n(s)$, $n_1(s) = \sum_{i=1}^n A_i 1\{S_i = s\}$, and $n(s) = \sum_{i=1}^n 1\{S_i = s\}$. Note $\hat{\pi}(S_i)$ is an estimator for the propensity score, i.e., $\pi$. In addition,  Assumption \ref{ass:assignment1}.2 implies that the unconfoundedness condition holds. Thus, following the lead of \cite{F07}, we can estimate $q_j(\tau)$ by the IPW QR. Let 
\begin{align*}
\hat{q}_1(\tau) = \argmin_q \frac{1}{n}\sum_{i=1}^n\frac{A_i}{\hat{\pi}(S_i)}\rho_\tau(Y_i - q) \quad \text{and} \quad \hat{q}_0(\tau) = \argmin_q\frac{1}{n}\sum_{i=1}^n \frac{1-A_i}{1-\hat{\pi}(S_i)}\rho_\tau(Y_i - q). 
\end{align*}
We then estimate $q(\tau)$ by $\hat{q}(\tau) = \hat{q}_1(\tau) - \hat{q}_0(\tau)$. 
\begin{thm}
	\label{thm:ipw}
	If Assumptions \ref{ass:assignment1}.1, \ref{ass:assignment1}.2, \ref{ass:assignment1}.4 and \ref{ass:tau} hold, then, uniformly over $\tau \in \Upsilon$, 
	\begin{equation*}
	\sqrt{n}\left(\hat{q}(\tau)-q(\tau)\right)\convD \mathcal{B}_{ipw}(\tau),~\text{as}~n\rightarrow \infty,
	\end{equation*}
	where $\mathcal{B}_{ipw}(\cdot)$ is a scalar Gaussian process with covariance kernel $\Sigma_{ipw}(\cdot,\cdot)$. The expression for $\Sigma_{ipw}(\cdot,\cdot)$ can be found in the Appendix. 	
\end{thm}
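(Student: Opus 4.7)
The plan is to prove Theorem \ref{thm:ipw} in two steps: first, derive a uniform-in-$\tau$ Bahadur-type representation for $\sqrt{n}(\hat q_j(\tau)-q_j(\tau))$; second, exploit a stratum-by-stratum identity specific to inverse-propensity weighting that eliminates the treatment-assignment contribution $\zeta_A^2(\pi,\tau)$ appearing in Theorem \ref{thm:qr}. For the first step I would begin by showing uniform consistency $\sup_{\tau\in\Upsilon}|\hat q_j(\tau)-q_j(\tau)|=\op(1)$, using the convexity of the IPW check-function criterion together with $\hat\pi(s)\convP\pi$ (which follows from Assumption \ref{ass:assignment1}.4 and $n(s)/n\convP p(s)$) and the identification of $q_j(\tau)$ supplied by Assumption \ref{ass:tau}. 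A Knight-identity/subgradient argument, combined with the Donsker property of $\{y\mapsto 1\{y\leq q\}:q\in\mathrm{range}(q_j(\Upsilon))\}$, then yields
\begin{equation*}
\sqrt{n}(\hat q_1(\tau)-q_1(\tau))=\frac{1}{f_1(q_1(\tau))}\cdot\frac{1}{\sqrt{n}}\sum_{i=1}^n\frac{A_i}{\hat\pi(S_i)}\bigl(\tau-1\{Y_i\leq q_1(\tau)\}\bigr)+\op(1)
\end{equation*}
uniformly over $\tau\in\Upsilon$, and analogously for $\hat q_0(\tau)$.

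The crux is the analysis of the leading IPW score. Because $\hat\pi(s)=n_1(s)/n(s)$, one has the exact arithmetic identity $\sum_{i:S_i=s}(A_i/\hat\pi(s))\,c(s)=n(s)c(s)$ for any stratum-constant $c(s)$. Centering the score at the conditional mean $m_1(S_i,\tau)$ therefore gives the clean decomposition
\begin{equation*}
\frac{1}{\sqrt{n}}\sum_{i=1}^n\frac{A_i}{\hat\pi(S_i)}\bigl(\tau-1\{Y_i\leq q_1(\tau)\}\bigr)=R_n^{(1)}(\tau)+R_n^{(2)}(\tau),
\end{equation*}
where $R_n^{(1)}(\tau)=\sum_{s\in\mathcal{S}}\frac{n(s)}{n_1(s)}\cdot\frac{1}{\sqrt{n}}\sum_{i:S_i=s,A_i=1}\bigl[(\tau-1\{Y_i(1)\leq q_1(\tau)\})-m_1(s,\tau)\bigr]$ and $R_n^{(2)}(\tau)=\frac{1}{\sqrt{n}}\sum_{i=1}^n m_1(S_i,\tau)$. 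Since $R_n^{(2)}(\tau)$ is a sum over \emph{all} observations rather than only the treated ones, the imbalance process $D_n(s)$ never enters the limit; this is exactly what removes the $\zeta_A^2(\pi,\tau)$ term present in Theorem \ref{thm:qr}.

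Finally I would establish joint weak convergence of $(R_n^{(1)},R_n^{(2)})$, together with its $j=0$ analogue, as Gaussian processes on $\Upsilon$. Conditional on $\{S_i,A_i\}_{i=1}^n$, the summands in $R_n^{(1)}(\tau)$ are independent across $i$ by Assumption \ref{ass:assignment1}.2, mean zero, with variance converging to $\mathbb{E}\bigl[\mathrm{Var}(\tau-1\{Y(1)\leq q_1(\tau)\}\mid S)\bigr]/\pi=(\tau(1-\tau)-\mathbb{E}m_1^2(S,\tau))/\pi$ by $n_1(s)/n\convP\pi p(s)$ and the law of total variance; dividing by $f_1^2(q_1(\tau))$ produces the first piece of $\zeta_Y^2(\pi,\tau)$. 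The process $R_n^{(2)}(\tau)$ is a standard i.i.d. empirical process and, combined with the $j=0$ counterpart, contributes $\zeta_S^2(\tau)$; the two pieces are asymptotically uncorrelated by the law of total covariance. The main obstacle is uniformity in $\tau$ for $R_n^{(1)}$, because the random weights $n(s)/n_1(s)$ and the cross-sectional dependence in $\{A_i\}$ preclude a direct appeal to standard Donsker results. I would handle this by conditioning on $\{S_i\}_{i=1}^n$ and arguing stratum by stratum (within each stratum the treated observations are exchangeable by Assumption \ref{ass:assignment1}.2 and the indicator class is Donsker), then combining strata via the maximal inequalities of \cite{VW96} and the empirical-process approximations of \cite{CCK14} to absorb the randomness of $n_1(s)$.
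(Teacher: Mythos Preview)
Your proposal is correct and follows the same overall architecture as the paper's proof: a Knight-identity linearization of the convex IPW criterion (the paper invokes \citet[Theorem 2]{K09} for this step), a decomposition of the score into a within-stratum centered piece plus a stratum-mean piece, and stratum-by-stratum empirical-process arguments for uniformity in $\tau$ (the paper implements your ``condition on $\{S_i\}$ and argue stratum by stratum'' via the rearrangement device of \cite{BCS17}, introducing auxiliary i.i.d.\ sequences $\{Y_i^s(j)\}$ and partial-sum processes $\Gamma_n^s$; Lemma \ref{lem:Qipw} then delivers the joint weak convergence).

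The one substantive difference is how the two-term decomposition is reached. The paper Taylor-expands $1/\hat\pi(s)$ around $1/\pi$ and carries along several $D_n(s)$-terms that it then shows cancel algebraically, leaving
\[
W_{ipw,1,n}(\tau)=\sum_{s\in\mathcal{S}}\frac{1}{\sqrt{n}}\sum_{i=1}^n\frac{A_i1\{S_i=s\}}{\pi}\eta_{i,1}(s,\tau)+\frac{1}{\sqrt{n}}\sum_{i=1}^n m_1(S_i,\tau)
\]
plus a remainder $R_{ipw}(\tau)$ controlled by $D_n(s)/n(s)=o_p(1)$. Your route via the exact self-normalizing identity $\sum_{i:S_i=s}A_i/\hat\pi(s)=n(s)$ produces the same two pieces in one line and makes the disappearance of $D_n(s)$ an arithmetic identity rather than an asymptotic cancellation; the trade-off is that your $R_n^{(1)}$ retains the random weight $1/\hat\pi(s)$ instead of $1/\pi$, so you need one additional Slutsky replacement before the stratum-level CLT. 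Both routes rely only on Assumption \ref{ass:assignment1}.4 and yield the same $\Sigma_{ipw}(\cdot,\cdot)=\Sigma_{ipw,1}+\Sigma_{ipw,2}$.
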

Two remarks are in order. First, the asymptotic variance for $\hat{q}(\tau)$ is 
\begin{align*}
\zeta_Y^2(\pi,\tau) + \zeta_S^2(\tau).
\end{align*}
When strong balance is not achieved and the stratification is relevant, we have $\zeta_A^2(\pi,\tau)>0$. Thus, $\hat{q}(\tau)$ is more efficient than $\hat{\beta}_{1}(\tau)$ in the sense that
\begin{align*}
\Sigma_{ipw}(\tau,\tau) < \Sigma_{sqr}(\tau,\tau).
\end{align*}
When strong balance is achieved ($\gamma(s) = 0$), we have $\zeta_A^2(\pi,\tau)=0$. Thus, the two estimators are asymptotically first-order equivalent. Based on the same argument, one can potentially prove that, when strong balance is not achieved and the stratification is relevant, the IPW estimator for ATE has strictly smaller asymptotic variance than the simple two-sample difference and strata fixed effects estimators studied by \cite{BCS17}, and is asymptotically equivalent to the fully saturated linear regression estimator proposed by \cite{BCS18}. Second, since the amount of ``balance" of the treatment assignment rule does not play a role in the limiting distribution of the IPW estimator, Assumption \ref{ass:assignment1}.3 is replaced by Assumption \ref{ass:assignment1}.4.

\section{Weighted Bootstrap }
\label{sec:SBI}
In this section, we approximate the  asymptotic distributions of the SQR and IPW estimators via the weighted bootstrap. Let $\{\xi_i\}_{i=1}^n$ be a sequence of bootstrap weights which will be specified later. Further denote 
$n_1^w(s) = \sum_{i =1}^n\xi_iA_i1\{S_i=s\}$, $n^w(s) = \sum_{i =1}^n \xi_i 1\{S_i=s\}$, and $\hat{\pi}^w(s) = n_1^w(s)/n^w(s)$. The weighted bootstrap counterparts for the two estimators we study in this paper can then be written respectively as 
\begin{align*}
\hat{\beta}^w(\tau) = \argmin_b \sum_{i=1}^n \xi_i \rho_\tau\left(Y_i - \dot{A}_i'b\right)
\end{align*}
and 
\begin{align*}
\hat{q}^w(\tau) = \hat{q}_1^w(\tau) - \hat{q}_0^w(\tau),
\end{align*}
where 
\begin{align*}
\hat{q}_1^w(\tau) = \argmin_q \sum_{i =1}^n \frac{\xi_iA_i}{\hat{\pi}^w(S_i)}\rho_\tau\left(Y_i - q\right) \quad \text{and} \quad \hat{q}_0^w(\tau) = \argmin_q \sum_{i =1}^n \frac{\xi_i(1-A_i)}{1-\hat{\pi}^w(S_i)}\rho_\tau\left(Y_i - q\right).
\end{align*}
The second element $\hat{\beta}_1^w(\tau)$ of $\hat{\beta}^w(\tau)$ and $\hat{q}^w(\tau)$ are the SQR and IPW bootstrap estimators for the $\tau$-th QTE, respectively. Next, we specify the bootstrap weights. 
\begin{ass}
	\label{ass:weight}
	Suppose $\{\xi_i\}_{i=1}^n$ is a sequence of nonnegative i.i.d. random variables with unit expectation and variance and a sub-exponential upper tail. 	
\end{ass}
The nonnegativity is required to maintain the convexity of the quantile regression objective function. The other conditions in Assumption \ref{ass:weight} are common for the weighted bootstrap approximation. In practice, we generate $\{\xi_i\}_{i=1}^n$ by the standard exponential distribution. The corresponding weighted bootstrap is also known as the Bayesian bootstrap. 

\begin{thm}
	\label{thm:b}
	If Assumptions \ref{ass:assignment1}.1--\ref{ass:assignment1}.3, \ref{ass:tau}, and \ref{ass:weight} hold, then uniformly over $\tau \in \Upsilon$ and conditionally on data,
	\begin{align*}
	\sqrt{n}\left(\hat{\beta}_1^w(\tau) - \hat{\beta}_1(\tau)\right) \convD \tilde{\mathcal{B}}_{sqr}(\tau),~\text{as}~n\rightarrow \infty,
	\end{align*}
	where $\tilde{\mathcal{B}}_{sqr}(\tau)$ is a Gaussian process. In addition,  $\tilde{\mathcal{B}}_{sqr}(\tau)$ shares the same covariance kernel with $\mathcal{B}_{sqr}(\tau)$ defined in Theorems \ref{thm:qr} with $\gamma(s)$ there replaced by $\pi(1-\pi)$.
	
	If Assumptions \ref{ass:assignment1}.1, \ref{ass:assignment1}.2, \ref{ass:assignment1}.4, \ref{ass:tau}, \ref{ass:weight} hold,  then uniformly over $\tau \in \Upsilon$ and conditionally on data, 	
	\begin{align*}
	\sqrt{n}\left(\hat{q}^w(\tau) - \hat{q}(\tau)\right) \convD \mathcal{B}_{ipw}(\tau),~\text{as}~n\rightarrow \infty,
	\end{align*}
	where $\mathcal{B}_{ipw}(\tau)$ is the same Gaussian process defined in Theorem \ref{thm:ipw}. 
\end{thm}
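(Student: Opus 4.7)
The plan is to treat the two bootstrap statements in parallel by (i) deriving a uniform Bahadur-type linearization of the bootstrap estimator, (ii) invoking a conditional multiplier functional CLT for the resulting linear part, and (iii) identifying the limiting covariance kernel. Throughout I would work on the event that $\hat{\pi}^w(s) \in (\delta, 1-\delta)$ for every $s \in \mathcal{S}$, which holds with probability approaching one by a multiplier law of large numbers applied stratum-by-stratum.

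For the SQR bootstrap, nonnegativity of $\xi_i$ keeps the objective convex, so I would apply Knight's identity together with a convexity lemma (as in Pollard or Hjort--Pollard) to obtain, uniformly in $\tau \in \Upsilon$,
\[
\sqrt{n}\bigl(\hat\beta^w(\tau)-\hat\beta(\tau)\bigr) \;=\; Q(\tau)^{-1}\,\frac{1}{\sqrt n}\sum_{i=1}^n (\xi_i-1)\,\dot{A}_i\bigl(\tau-1\{Y_i\le \dot{A}_i'\beta(\tau)\}\bigr)\;+\;o_p(1),
\]
with $Q(\tau)$ the block-diagonal limiting Hessian from Theorem \ref{thm:qr}. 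Conditional on the data, the summands are independent, mean zero, and their sum satisfies a conditional FCLT by the sub-exponential tail of $\xi_i$ together with a maximal inequality for the Donsker class $\{(y,a,s)\mapsto a\,1\{y\le q\}\}$. Because the $\xi_i$ are i.i.d. and independent of $\{A_i\}$, this construction completely ignores the cross-sectional dependence encoded in $D_n(s)/\sqrt n$: the conditional covariance is simply $n^{-1}\sum_i \dot{A}_i\dot{A}_i'(\tau-1\{Y_i\le \dot{A}_i'\beta(\tau)\})(\tau'-1\{Y_i\le \dot{A}_i'\beta(\tau')\})$. By a law of large numbers I would show this limit coincides with the covariance kernel of $\mathcal{B}_{sqr}$ in Theorem \ref{thm:qr}, \emph{but} with the $\zeta_A^2$ component evaluated at $\gamma(s)=\pi(1-\pi)$, i.e., as if assignments were SRS, which delivers the stated form of $\tilde{\mathcal{B}}_{sqr}$.

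For the IPW bootstrap I would separately linearize $\hat q_1^w(\tau)$ and $\hat q_0^w(\tau)$. Writing $\hat\pi^w(s)=\pi+O_p(n^{-1/2})$ from a stratum-level multiplier LLN and expanding the check-function objective, I obtain
\[
\sqrt{n}\bigl(\hat q_1^w(\tau)-\hat q_1(\tau)\bigr) \;=\; \frac{1}{\pi f_1(q_1(\tau))}\cdot\frac{1}{\sqrt n}\sum_{i=1}^n (\xi_i-1)\,\psi_{1,i}(\tau)\;+\;o_p(1),
\]
where $\psi_{1,i}(\tau)=\frac{A_i}{\pi}\bigl(\tau-1\{Y_i\le q_1(\tau)\}\bigr)-\frac{A_i-\pi}{\pi}\,m_1(S_i,\tau)$; the second term is precisely the stratum-level correction generated by reweighting by $\hat\pi^w(S_i)$ within each stratum. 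The analogous expression holds for $\hat q_0^w(\tau)$. Crucially, these influence functions do not involve $D_n(s)/\sqrt n$, so the conditional covariance, again obtained via the multiplier FCLT, reproduces $\Sigma_{ipw}$ exactly and yields $\mathcal{B}_{ipw}$ as the limit. Unconditional weak convergence of the data-dependent conditional laws then follows in the sense of Praestgaard--Wellner.

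The main obstacle is delivering the linearizations \emph{uniformly} in $\tau$, because the check function is non-smooth and the weights $\xi_i$ are only sub-exponential, not bounded. I would handle this by applying conditional symmetrization together with the maximal inequality of \cite{CCK14} to the bracketed VC subgraph classes $\{a\,1\{y\le q\}\}$ and $\{(1-a)\,1\{y\le q\}\}$, then propagating the resulting stochastic equicontinuity through the convexity argument. A secondary, but purely mechanical, issue is stitching together the bootstrap expansions of $\hat q_1^w$ and $\hat q_0^w$ and matching the resulting joint covariance with $\Sigma_{ipw}(\tau,\tau')$; this reduces to the same conditional-variance computation executed stratum-by-stratum.
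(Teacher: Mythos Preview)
Your proposal is correct and follows essentially the same route as the paper: linearize each bootstrap estimator around the true parameter via Knight's identity and the convexity argmin device (the paper cites \cite{K09}), subtract the original-sample linearization to obtain a multiplier score of the form $n^{-1/2}\sum_i(\xi_i-1)\mathcal{J}_i(\tau)$, and then establish conditional weak convergence by verifying stochastic equicontinuity (via the same VC/\cite{CCK14} maximal inequalities you invoke) and computing the finite-dimensional limiting variance. Your IPW influence function $\psi_{1,i}(\tau)=\frac{A_i}{\pi}(\tau-1\{Y_i\le q_1(\tau)\})-\frac{A_i-\pi}{\pi}m_1(S_i,\tau)$ coincides exactly with the paper's $\frac{A_i}{\pi}\eta_{i,1}(S_i,\tau)+m_1(S_i,\tau)$, and the only step you understate is that the ``law of large numbers'' for the conditional covariance in the SQR case must be argued via the stratum-rearrangement trick (since $\{A_i\}$ is not i.i.d.), which the paper handles in its Lemma for the SQR multiplier limit.
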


Four remarks are in order. First, the weighted bootstrap sample does not preserve the negative cross-sectional dependence in the original sample. Asymptotic variances of the weighted bootstrap estimators equal those of their original sample counterparts as if SRS is applied. In fact, the asymptotic variance for $\hat{\beta}^w_{1}(\tau)$ is
\begin{align*}
\zeta_Y^2(\pi,\tau) +  \tilde{\zeta}_A^2(\pi,\tau) + \zeta_S^2(\tau),
\end{align*}
where
\begin{align*}
\tilde{\zeta}_A^2(\pi,\tau) = \mathbb{E}\pi(1-\pi) \left(\frac{m_1(S,\tau)}{\pi f_1(q_1(\tau))} + \frac{m_0(S,\tau)}{(1-\pi) f_0(q_0(\tau))} \right)^2.
\end{align*}
This asymptotic variance is intuitive as the weight $\xi_i$ is independent with each other, which implies that, conditionally on data, the bootstrap sample observations are independent. As $\gamma(s) \leq \pi (1-\pi)$, we have 
$$\zeta_A^2(\pi,\tau) \leq \tilde{\zeta}_A^2(\pi,\tau).$$
If the inequality is strict, then the weighted bootstrap overestimates the asymptotic variance of the SQR estimator, and thus, the Wald test constructed using the SQR estimator and its weighted bootstrap standard error is conservative.   

Second, the asymptotic distribution of the weighted bootstrap IPW estimator coincides with that of the original estimator. The asymptotic size of the Wald test constructed using the IPW estimator and its weighted bootstrap standard error then equals the nominal level. Theorem \ref{thm:ipw} shows that the asymptotic variance for $\hat{q}(\tau)$ is invariant in the treatment assignment rule applied. Thus, even though the weighted bootstrap sample ignores the cross-sectional dependence and behaves as if the treatment status is generated randomly, the asymptotic variance for $\hat{q}^w(\tau)$ is still 
\begin{align*}
\zeta_Y^2(\pi,\tau) + \zeta_S^2(\tau). 
\end{align*} 

Third, the validity of weighted bootstrap for the IPW estimator only requires Assumption \ref{ass:assignment1}.4 instead of \ref{ass:assignment1}.3, for the same reason mentioned after Theorem \ref{thm:ipw}.

Fourth, it is possible to consider the conventional nonparametric bootstrap which generates the bootstrap sample from the empirical distribution of the data. If the observations are i.i.d., \citet[Section 3.6]{VW96} show that the conventional bootstrap is first-order equivalent to a weighted bootstrap with Poisson(1) weights. However, in the current setting, $\{A_i\}_{i \geq 1}$ is dependent. It is technically challenging to rigorously show that the above equivalence still holds. We leave it as an interesting topic for future research.  

\section{Covariate-Adaptive Bootstrap}
\label{sec:CABI}
In this section, we consider the covariate-adaptive bootstrap procedure as follows:
\begin{enumerate}
	\item Draw $\{S_i^*\}_{i=1}^n$ from the empirical distribution of $\{S_i\}_{i=1}^n$ with replacement. 
	\item Generate $\{A_i^*\}_{i=1}^n$ based on $\{S_i^*\}_{i=1}^n$ and the treatment assignment rule. 
	\item For $A_i^* = a$ and $S_i^* = s$, draw $Y_i^*$ from the empirical distribution of $Y_i$ given $A_i=a$ and $S_i = s$ with replacement. 
\end{enumerate}
First, Step 1 is the conventional nonparametric bootstrap. The bootstrap sample $\{S_i^*\}_{i=1}^n$ is obtained by drawing from the empirical distribution of $\{S_i\}_{i=1}^n$ with replacement $n$ times. Second, Step 2 follows the treatment assignment rule, and thus preserves the cross-sectional dependence structure in the bootstrap sample, even after conditioning on data. The weighted bootstrap sample, by contrast, is cross-sectionally independent given data. Third, Step 3 applies the conventional bootstrap procedure to the outcome $Y_i$ in the cell $(S_i,A_i) = (s,a) \in \mathcal{S} \times \{0,1\}$. Given that the original data contain $n_a(s)$ observations in this cell, in this step, the bootstrap sample $\{Y_i^*\}_{i: A_i^* =a, S_i^*=s}$ is obtained by drawing from the empirical distribution of these $n_a(s)$ outcomes with replacement $n^*_a(s)$ times, where $n^*_a(s) = \sum_{i=1}^n 1\{A_i^*=a, S_i^*=s\}.$ Unlike the conventional bootstrap, here both $n_a(s)$ and $n_a^*(s)$ are random and are not necessarily the same. Last, to implement the covariate-adaptive bootstrap, researchers need to know the treatment assignment rule for the original sample. Unlike in observational studies, such information is usually available for RCTs. If one only knows that the treatment assignment rule achieves strong balance, then Theorem \ref{thm:cab} below still holds, provided that the bootstrap sample is generated from any treatment assignment rule that achieves strong balance. Even worse, if no information on the treatment assignment rule is available, then one cannot implement the covariate-adaptive bootstrap inference. In this case, the weighted bootstrap for the IPW estimator can still provide a non-conservative Wald test, as shown in Theorem \ref{thm:b}.

Using the bootstrap sample $\{Y_i^*,A_i^*,S_i^*\}_{i=1}^n$, we can estimate QTE by the two methods considered in the paper. Let 
$n_1^*(s) = \sum_{i =1}^nA_i^*1\{S_i^*=s\}$, $n^*(s) = \sum_{i =1}^n 1\{S_i^*=s\}$, $\hat{\pi}^*(s) = \frac{n_1^*(s)}{n^*(s)}$, and $\dot{A}_i^* = (1,A_i^*)'$. Then, the two bootstrap estimators can be written respectively as 
\begin{align*}
\hat{\beta}^*(\tau) = \argmin_b \sum_{i=1}^n\rho_\tau\left(Y_i^* - \dot{A}_i^*b\right)
\end{align*}
and 
\begin{align*}
\hat{q}^*(\tau) = \hat{q}_1^*(\tau) - \hat{q}_0^*(\tau),
\end{align*}
where 
\begin{align*}
\hat{q}_1^* = \argmin_q \sum_{i =1}^n \frac{A_i^*}{\hat{\pi}^*(S_i^*)}\rho_\tau(Y_i^* - q) \quad \text{and} \quad \hat{q}_0^* = \argmin_q \sum_{i =1}^n \frac{1-A_i^*}{1-\hat{\pi}^*(S_i^*)}\rho_\tau(Y_i^* - q).
\end{align*}
The second element $\hat{\beta}_1^*(\tau)$ of $\hat{\beta}^*(\tau)$ and $\hat{q}^*(\tau)$ are the SQR and IPW bootstrap estimators for the $\tau$-th QTE, respectively. Parallel to Assumption \ref{ass:assignment1}, we make the following assumption for the bootstrap sample. 
\begin{ass}
	\label{ass:bassignment}
	Let $D_n^*(s) = \sum_{i=1}^n (A_i^* - \pi)1\{S_i^* = s\}$. 
	\begin{enumerate}
		\item $\left\{\left\{ \frac{D^*_n(s)}{\sqrt{n}}\right\}_{s \in \mathcal{S}}\biggl|\{S_i^*\}_{i=1}^n  \right\} \convD N(0,\Sigma_D)$ a.s., where $\Sigma_D = \diag\{p(s)\gamma(s):s \in \mathcal{S}\}$.
		\item $\sup_{s \in \mathcal{S} }\frac{|D^*_n(s)|}{\sqrt{n^*(s)}} = O_p(1)$, $\sup_{s \in \mathcal{S} }\frac{|D_n(s)|}{\sqrt{n(s)}} = O_p(1)$.
	\end{enumerate}
\end{ass}
Assumption \ref{ass:bassignment}.1 is a high-level assumption. Obviously, it holds for SRS. For WEI, this condition holds by the same argument in \citet[Lemma B.12]{BCS17} with the fact that $\frac{n^*(s)}{n(s)} \convP 1.$ For BCD, as shown in \citet[Lemma B.11]{BCS17}, 
\begin{align*}
D_n^*(s) |\{S_i^*\}_{i=1}^n = O_p(1). 
\end{align*}
Therefore, $D_n^*(s)/\sqrt{n^*(s)} \convP 0$ and Assumption \ref{ass:bassignment}.1 holds with $\gamma(s) = 0$. For SBR, it is clear that $|D_n^*(s)| \leq 1.$ Thus, Assumption \ref{ass:bassignment}.1 holds with $\gamma(s) = 0$ as well. In addition, as $p(s)>0$, based on the standard bootstrap results, we have $n^*(s)/n \convP p(s)$ and $n(s)/n \convP p(s)$. Therefore, Assumption \ref{ass:bassignment}.1 is sufficient for Assumption \ref{ass:bassignment}.2. Last, note that Assumption \ref{ass:bassignment}.2 implies Assumption \ref{ass:assignment1}.4. 
\begin{thm}
	\label{thm:cab}
	Suppose Assumptions \ref{ass:assignment1}.1, \ref{ass:assignment1}.2, \ref{ass:tau}, and \ref{ass:bassignment}.2 hold. Then, uniformly over $\tau \in \Upsilon$ and conditionally on data,  
	\begin{align*}
	\sqrt{n}(\hat{q}^*(\tau) - \hat{q}(\tau)) \convD \mathcal{B}_{ipw}(\tau), ~\text{as}~n\rightarrow \infty.
	\end{align*}
	If, in addition, Assumptions \ref{ass:assignment1}.3 and \ref{ass:bassignment}.1 hold, then 
	\begin{align*}
	\sqrt{n}\left(\hat{\beta}_1^*(\tau) - \hat{q}(\tau)\right) \convD \mathcal{B}_{sqr}(\tau),~\text{as}~n\rightarrow \infty.
	\end{align*}
	Here, $\mathcal{B}_{sqr}(\tau)$ and $\mathcal{B}_{ipw}(\tau)$ are two Gaussian processes defined in Theorem \ref{thm:qr} and \ref{thm:ipw}, respectively. 
\end{thm}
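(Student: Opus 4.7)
The plan is to mirror the proofs of Theorems \ref{thm:qr} and \ref{thm:ipw}, replacing the original sample with the bootstrap sample $\{Y_i^*,A_i^*,S_i^*\}_{i=1}^n$ and working ``conditionally on data, in probability.'' For each of the two estimators I would (i) establish $\sqrt{n}$-consistency of the bootstrap estimator by a convexity argument applied to the reweighted check-function objective, (ii) derive a uniform-in-$\tau$ Bahadur representation for $\sqrt{n}(\hat q^*(\tau)-q(\tau))$ and $\sqrt{n}(\hat\beta_1^*(\tau)-q(\tau))$, (iii) subtract the analogous (non-bootstrap) representations supplied by Theorems \ref{thm:ipw} and \ref{thm:qr} to obtain a linearized expression for $\sqrt{n}(\hat q^*-\hat q)$ and $\sqrt{n}(\hat\beta_1^*-\hat q)$, and (iv) show that these linearizations converge weakly, conditional on data, to $\mathcal{B}_{ipw}$ and $\mathcal{B}_{sqr}$ respectively. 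The hierarchical three-step construction of the bootstrap draw — $S_i^*$ from the empirical distribution of $S$, $A_i^*$ from the treatment assignment rule applied to $\{S_i^*\}$, and $Y_i^*$ from the empirical distribution of $Y$ inside cell $(S_i^*,A_i^*)$ — suggests decomposing each linearized sum along the three corresponding conditioning layers and handling them separately.

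\textbf{IPW part.} A subgradient calculation at the bootstrap argmin yields, uniformly in $\tau\in\Upsilon$,
\begin{align*}
\sqrt{n}\bigl(\hat q_1^*(\tau)-q_1(\tau)\bigr) = \frac{1}{f_1(q_1(\tau))}\cdot\frac{1}{\sqrt n}\sum_{i=1}^n \frac{A_i^*}{\hat\pi^*(S_i^*)}\bigl(\tau-1\{Y_i^*\le q_1(\tau)\}\bigr) + o_p(1),
\end{align*}
together with the symmetric statement for $\hat q_0^*$. I would then split the bootstrap score into its three conditional layers. Conditional on $\{S_i^*,A_i^*\}$ and the original data, the indicator $1\{Y_i^*\le q_1(\tau)\}$ has mean $\hat F_1(q_1(\tau)\mid S_i^*)$, where $\hat F_1(\cdot\mid s)$ is the empirical CDF of $Y$ in cell $(A{=}1,S{=}s)$; standard empirical-process arguments for the stratified indicator class give the $\zeta_Y^2(\pi,\tau)$ piece. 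The remaining $\hat F_1(q_1(\tau)\mid S_i^*)$-based term is a function of $(A_i^*,S_i^*)$ only, and because the IPW normalization exactly cancels $D_n^*(s)$ to first order — precisely the mechanism used in the proof of Theorem \ref{thm:ipw} — only $\sup_s|D_n^*(s)|/\sqrt{n^*(s)}=O_p(1)$ is needed, which is Assumption \ref{ass:bassignment}.2, and Assumption \ref{ass:bassignment}.1 is not invoked. The residual dependence on $\{S_i^*\}$ together with the analogous decomposition of $\hat q_1(\tau)-q_1(\tau)$ leaves the $\zeta_S^2(\tau)$ contribution via a multinomial CLT, using $n(s)/n,n^*(s)/n\convP p(s)$.

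\textbf{SQR part.} For the simple QR, no inverse-propensity weighting is available to absorb the stratum imbalance, so the analogous Bahadur expansion retains a term
\begin{align*}
\frac{1}{\sqrt n}\sum_{s\in\mathcal S}\left(\frac{m_1(s,\tau)}{\pi f_1(q_1(\tau))}+\frac{m_0(s,\tau)}{(1-\pi) f_0(q_0(\tau))}\right)D_n^*(s),
\end{align*}
after all Bahadur-type algebra has been carried out as in Theorem \ref{thm:qr}. By Assumption \ref{ass:bassignment}.1 this quantity converges conditionally on data to a centered Gaussian with covariance $\Sigma_D=\diag\{p(s)\gamma(s)\}$, reproducing exactly the $\zeta_A^2(\pi,\tau)$ contribution, while the other two pieces $\zeta_Y^2$ and $\zeta_S^2$ arise from the outcome and stratum layers as in the IPW analysis. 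Subtracting the (non-bootstrap) representation of $\hat\beta_1(\tau)-q(\tau)$ coming from Theorem \ref{thm:qr}, and using that its analogous $D_n(s)/\sqrt n$ term is asymptotically independent of the bootstrap draws given data, I recover the full covariance kernel of $\mathcal{B}_{sqr}$. This is also why Assumptions \ref{ass:assignment1}.3 and \ref{ass:bassignment}.1 are imposed here but were dispensable in the first half of the theorem.

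\textbf{Main obstacle.} The delicate step is upgrading pointwise conditional weak convergence to the uniform-in-$\tau$ statement, since the bootstrap data are only conditionally independent within each cell $(s,a)$ and inherit cross-sectional dependence through the rule-based draw $\{A_i^*\}$. I would control the bracketing entropy of the indicator class $\{1\{\cdot\le q_j(\tau)\}:\tau\in\Upsilon\}$ as in \cite{VW96}, apply conditional maximal inequalities in the spirit of \cite{CCK14} within each cell to obtain stochastic equicontinuity of the cell-wise processes, and then glue the cells together by continuous mapping, using the $o_p(1)$ negligibility of $D_n^*(s)/n^*(s)$ implied by Assumption \ref{ass:bassignment}.2 to linearize $1/\hat\pi^*(S_i^*)$ around $1/\pi$. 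Verifying that the resulting conditional weak convergence in fact holds in probability (rather than merely unconditionally in distribution) is the core technical work, and I would follow the template laid down in the proof of Theorem \ref{thm:b} for the weighted-bootstrap counterpart.
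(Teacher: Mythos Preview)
Your high-level plan---linearize the bootstrap estimator via a convexity/Knight-identity argument, decompose along the three bootstrap layers $(S^*,A^*,Y^*)$, then subtract the non-bootstrap linearization---is the same as the paper's. Two points, however, do not go through as written.

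\textbf{Centering for the SQR bootstrap.} You propose to reach $\sqrt{n}(\hat\beta_1^*-\hat q)$ by ``subtracting the (non-bootstrap) representation of $\hat\beta_1(\tau)-q(\tau)$ coming from Theorem~\ref{thm:qr}'' and treating ``its analogous $D_n(s)/\sqrt{n}$ term'' as ``asymptotically independent of the bootstrap draws given data.'' Conditionally on data, $D_n(s)$ is a deterministic number, so it cannot be ``independent'' of anything; it would shift the conditional mean, not add to the conditional variance. This is exactly why the theorem centers $\hat\beta_1^*$ at $\hat q$ rather than at $\hat\beta_1$: the expansion of $\hat q(\tau)-q(\tau)$ from Theorem~\ref{thm:ipw} carries no $D_n$ term, so after subtraction the $D_n^*(s)/\sqrt{n}$ piece of the bootstrap expansion survives intact and delivers the $\zeta_A^2$ contribution via Assumption~\ref{ass:bassignment}.1. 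The paper states this in the remarks following the theorem and, in the proof, works with an explicit IPW-influence-function center $\tilde q(\tau)$ that is first-order equivalent to $\hat q(\tau)$.

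\textbf{Poissonization of the within-cell bootstrap.} Your proposal to handle the $Y^*$ layer by ``conditional maximal inequalities in the spirit of \cite{CCK14} within each cell'' and then follow ``the template laid down in the proof of Theorem~\ref{thm:b}'' underestimates the difficulty. In Theorem~\ref{thm:b} the weights are exogenous i.i.d.\ with unit mean; here, within cell $(s,a)$ the bootstrap draws $n_a^*(s)$ times with replacement from $n_a(s)$ points, and generically $n_a^*(s)\neq n_a(s)$ (the gap is $O_p(\sqrt{n})$, controlled by $D_n^*(s)$, $D_n(s)$, and $n^*(s)-n(s)$). This multinomial-with-random-total structure does not reduce directly to a multiplier bootstrap. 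The paper's key device is a Poissonization coupling (Lemma~\ref{lem:Rsfestar}, following \citet[Section~3.6]{VW96}): it couples the within-cell multinomial vector to an i.i.d.\ Poisson(1) sequence $\{\xi_i^s\}$ independent of $\{A_i^*,S_i^*,Y_i,A_i,S_i\}$, bounds the discrepancy uniformly in $\tau$ via the exchangeable-weights argument of their Lemma~3.6.16, and thereby writes $\sum_i A_i^*1\{S_i^*=s\}\eta_{i,1}^*(s,\tau)=\sum_{i=N(s)+1}^{N(s)+n_1(s)}\xi_i^s\tilde\eta_{i,1}(s,\tau)+o_p(\sqrt{n})$. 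A second, independent Poisson sequence $\{\tilde\xi_i\}$ handles the $S^*$ layer. Only after this coupling are the three layers actually independent and is the uniform-in-$\tau$ conditional CLT a routine application; without it, the equicontinuity step you correctly flag as the ``main obstacle'' lacks a concrete argument.
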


Several remarks are in order. First, unlike the usual bootstrap estimator, the covariate-adaptive bootstrap SQR estimator is not centered around its corresponding counterpart from the original sample, but rather $\hat{q}(\tau)$. The reason is that the treatment status $A_i^*$ is not generated by bootstrap. In the linear expansion for the bootstrap estimator $\hat{\beta}_1^*(\tau)$, the part of the influence function that accounts for the variation generated by $A_i^*$ need not be centered. We also know from the proof of Theorem \ref{thm:ipw} that $\hat{q}(\tau)$ do not have an influence function that represents the variation generated by $A_i$. Thus, $\hat{q}(\tau)$ can be used to center $\hat{\beta}_1^*(\tau)$. 

Second, the choice of $\hat{q}(\tau)$ as the center is somehow ad-hoc. In fact, any estimator $\tilde{q}(\tau)$ that is first-order equivalent to $\hat{q}(\tau)$ in the sense that  
\begin{align*}
\sup_{ \tau \in \Upsilon}|\tilde{q}(\tau) - \hat{q}(\tau)| = o_p(1/\sqrt{n})
\end{align*} 
can serve as the center for the bootstrap estimators $\hat{q}^*(\tau)$ and $\hat{\beta}_1^*(\tau) $.

Third, when the treatment assignment rule achieves strong balance, $\hat{\beta}_1(\tau)$ and $\hat{q}(\tau)$ are first-order equivalent. In this case, $\hat{\beta}_1(\tau)$ can serve as the center for $\hat{\beta}_1^*(\tau)$ and various bootstrap inference methods are valid. On the other hand, when the treatment assignment rule does not achieve strong balance, $\hat{\beta}_1(\tau)$ and $\hat{q}(\tau)$ are not first-order equivalent. In this case, the asymptotic size of the percentile bootstrap for the SQR estimator using the quantiles of $\hat{\beta}_1^*(\tau)$ does not equal the nominal level. In the next section, we propose a way to compute the bootstrap standard error which does not depend on the choice of the center. Based on the bootstrap standard error, researchers can construct t-statistics and use standard normal critical values for inference.

Fourth, for ATE, we can use the same bootstrap sample to compute the standard errors for the simple and strata fixed effects estimators proposed in \cite{BCS17} as well as the IPW estimator. We expect that all the results in this paper hold for the ATE as well.

\section{Simulation}
\label{sec:sim}
We can summarize four bootstrap scenarios from the analysis in Sections \ref{sec:SBI} and \ref{sec:CABI}: (1) the SQR estimator with the weighted bootstrap, (2) the IPW estimator with either the weighted or covariate-adaptive bootstrap, (3) the SQR estimator with the covariate-adaptive bootstrap when the assignment rule achieves strong balance, and (4) the SQR estimator with the covariate-adaptive bootstrap when the assignment rule does not achieve strong balance. The results of Sections \ref{sec:SBI} and \ref{sec:CABI} imply that the bootstrap in scenario (1) produces conservative Wald-tests when the treatment assignment rule is not SRS. For scenarios (2) and (3), various bootstrap based inference methods are valid. However,  for scenario (4), researchers should be careful about the centering issue. In particular, the percentile bootstrap inference using the quantiles of $\hat{\beta}_1^*$ is invalid.  In the following, we propose one single bootstrap inference method that works for scenarios (2)--(4). In addition, the proposed method does not require the knowledge of the centering. 

We take the IPW estimator as an example. We can repeat the bootstrap estimation\footnote{For the IPW estimator, we can use either the weighted or covariate-adaptive bootstrap. For the SQR estimator, we can only use the covariate-adaptive bootstrap.} $B$ times and obtain $B$ bootstrap IPW estimates, denoted as $\{\hat{q}_b^*(\tau)\}_{b=1}^B$. Further denote $\hat{Q}(\alpha)$ as the $\alpha$-th empirical quantile of $\{\hat{q}_b^*(\tau)\}_{b=1}^B$. We can test the null hypothesis that $q(\tau) = q^0(\tau)$ via $1\left\{\left|\frac{\hat{q}(\tau) - q^0(\tau)}{\hat{\sigma}_n^*}\right| > z_{1-\alpha/2}\right\}$, where $\hat{q}(\tau)$, $z_{1-\alpha/2}$, and $\hat{\sigma}_n^*$ are the IPW estimator, the $(1-\alpha/2)$-th quantile of the standard normal distribution, and 
\begin{align*}
\hat{\sigma}_n^* = \frac{\hat{Q}(0.975) - \hat{Q}(0.025)}{z_{0.975} - z_{0.025}},
\end{align*}  
respectively. In scenarios (2)--(4), the asymptotic size of such test equals the nominal level $\alpha$. In scenarios (2) and (3), we recommend the t-statistic and confidence interval using this particular bootstrap standard error (i.e., $\hat{\sigma}_n^*$) over other bootstrap inference methods (e.g., bootstrap confidence interval, percentile bootstrap confidence interval, etc.) because based on unreported simulations, they have better finite sample performance.

\subsection{Data Generating Processes}
We consider two DGPs with parameters $\gamma =4$, $\sigma =2$, and $\mu$ which will be specified later. 
\begin{enumerate}
	\item Let $Z$ be standardized $\text{Beta}(2,2)$ distributed, $S_i = \sum_{j=1}^41\{Z_i \leq g_j\}$, and $(g_1,\cdots,g_4) = (-0.25\sqrt{20},0,0.25\sqrt{20},0.5\sqrt{20})$. The outcome equation is 
	$$Y_i = A_i \mu + \gamma Z_i + \eta_i,$$ 
	where $\eta_i = \sigma A_i \eps_{i,1} + (1-A_i)\eps_{i,2}$ and $(\eps_{i,1},\eps_{i,2})$ are jointly standard normal. 
	\item Let $Z$ be uniformly distributed on $[-2,2]$, $S_i = \sum_{j=1}^41\{Z_i \leq g_j\}$, and $(g_1,\cdots,g_4) = (-1,0,1,2)$. The outcome equation is 
	\begin{align*}
	Y_i = A_i \mu+A_i\nu_{i,1}   + (1-A_i)\nu_{i,0} + \eta_i,
	\end{align*}
	where $\nu_{i,0} = \gamma Z_i^2 1\{|Z_i|\geq 1\} + \frac{\gamma}{4}(2 - Z_i^2)1\{|Z_i|<1\}$, $\nu_{i,1} = -\nu_{i,0}$, $\eta_i = \sigma(1+Z_i^2)A_i\eps_{i,1} + (1+Z_i^2)(1-A_i)\eps_{i,2}$, and $(\eps_{i,1},\eps_{i,2})$ are mutually independent $T(3)/3$ distributed. 
\end{enumerate}

When $\pi = \frac{1}{2}$, for each DGP, we consider four randomization schemes:
\begin{enumerate}
	\item SRS: Treatment assignment is generated as in Example \ref{ex:srs}.
	\item WEI: Treatment assignment is generated as in Example \ref{ex:wei} with $\phi(x) = (1-x)/2$.
	\item BCD: Treatment assignment is generated as in Example \ref{ex:bcd} with $\lambda = 0.75$. 
	\item SBR: Treatment assignment is generated as in Example \ref{ex:sbr}. 
\end{enumerate}
When $\pi \neq 0.5$, BCD is not defined while WEI is not defined in the original paper (\citep{W78}). Recently, \cite{H16} generalizes the adaptive biased-coin design (i.e., WEI) to multiple treatment values and unequal target treatment ratios. Here, for $\pi \neq 0.5$, we only consider SRS and SBR as in \cite{BCS17}. We conduct the simulations with sample sizes $n=200$ and $400$. The numbers of simulation replications and bootstrap samples are 1000. Under the null, $\mu=0$ and we compute the true parameters of interest using simulations with $10^6$ sample size and $10^4$ replications. Under the alternative, we perturb the true values by $\mu = 1$ and $\mu = 0.75$ for $n = 200$ and $400$, respectively. We report the results for the median QTE. Section \ref{sec:addsim} contains additional simulation results for ATE and QTEs with $\tau = 0.25$ and $0.75$. All the observations made in this section still apply. 

\subsection{QTE, $\pi = 0.5$}
We consider the Wald test with six t-statistics and $95\%$ nominal rate. We construct the t-statistics using one of our two point estimates and some estimate of the standard error. We will reject the null hypothesis when the absolute value of the t-statistic is greater than 1.96. The details about the point estimates and standard errors are as follows: 
\begin{enumerate}
	\item ``s/naive": the point estimator is computed by the SQR and its standard error $\hat{\sigma}_{naive}(\tau)$ is computed as 
	\begin{align}
	\label{eq:stddev}
	\hat{\sigma}^2_{naive}(\tau) = & \frac{\tau(1-\tau) - \frac{1}{n}\sum_{i =1}^n\hat{m}^2_1(S_i,\tau) }{\pi\hat{f}_1^2(\hat{q}_1(\tau))}+\frac{\tau(1-\tau) - \frac{1}{n}\sum_{i =1}^n\hat{m}^2_0(S_i,\tau) }{(1-\pi)\hat{f}_0^2(\hat{q}_0(\tau))} \notag \\
	& + \frac{1}{n}\sum_{i =1}^n \pi(1-\pi)\left(\frac{\hat{m}_1(S_i,\tau)}{\pi \hat{f}_1(\hat{q}_1(\tau))} + \frac{\hat{m}_0(S_i,\tau)}{(1-\pi) \hat{f}_0(\hat{q}_0(\tau))}\right)^2 \notag \\
	& + \frac{1}{n}\sum_{i =1}^n \left(\frac{\hat{m}_1(S_i,\tau)}{ \hat{f}_1(\hat{q}_1(\tau))} - \frac{\hat{m}_0(S_i,\tau)}{ \hat{f}_0(\hat{q}_0(\tau))}\right)^2,
	\end{align}
	where $\hat{q}_j(\tau)$ is the $\tau$-th empirical quantile of $Y_i|A_i = j$, $$\hat{m}_{i,1}(s,\tau) = \frac{\sum_{i =1}^nA_i1\{S_i = s\}(\tau - 1\{Y_i \leq \hat{q}_1(\tau)\})}{n_1(s)},$$ 
	$$\hat{m}_{i,0}(s,\tau) = \frac{\sum_{i =1}^n(1-A_i)1\{S_i = s\}(\tau - 1\{Y_i \leq \hat{q}_0(\tau)\})}{n(s)-n_1(s)}.$$ 
	For $j=0,1$, $\hat{f}_j(\cdot)$ is computed by the kernel density estimation using the observations $Y_i$ provided that $A_i = j$, bandwidth $h_j = 1.06\hat{\sigma}_jn_j^{-1/5}$, Gaussian kernel function, standard deviation $\hat{\sigma}_j$ of the observations $Y_i$ provided that $A_i = j$, and $n_j = \sum_{i =1}^n 1\{A_i = j\}$. 
	\item ``s/adj": exactly the same as the ``s/naive" method with one difference: replacing $\pi(1-\pi)$ in \eqref{eq:stddev} by $\gamma(S_i)$.
	\item ``s/W": the point estimator is computed by the SQR and its standard error $\hat{\sigma}_W(\tau)$ is computed by the weighted bootstrap procedure. The bootstrap weights $\{\xi_i\}_{i=1}^n$ are generated from the standard exponential distribution. Denote $\{\hat{\beta}^w_{1,b}\}_{b=1}^B$ as the collection of $B$ weighted bootstrap SQR estimates. Then, 
	\begin{align*}
	\hat{\sigma}_{W}(\tau) = \frac{\hat{Q}(0.975) - \hat{Q}(0.025)}{z_{0.975} - z_{0.025}},
	\end{align*}
	where $\hat{Q}(\alpha)$ is the $\alpha$-th empirical quantile of $\{\hat{\beta}^w_{1,b}(\tau)\}_{b=1}^B$. 
	\item ``ipw/W": the same as above with one difference: the estimation method for both the original and bootstrap samples is the IPW QR.  
	\item  ``s/CA": the point estimator is computed by the SQR and its standard error $\hat{\sigma}_{CA}(\tau)$ is computed by the covariate-adaptive bootstrap procedure. Denote $\{\hat{\beta}^*_{1,b}\}_{b=1}^B$ as the collection of $B$ estimates obtained by the SQR applied to the samples generated by the covariate-adaptive bootstrap procedure. Then, 
	\begin{align*}
	\hat{\sigma}_{CA}(\tau) = \frac{\hat{Q}(0.975) - \hat{Q}(0.025)}{z_{0.975} - z_{0.025}},
	\end{align*}
	where $\hat{Q}(\alpha)$ is the $\alpha$-th empirical quantile of $\{\hat{\beta}^*_{1,b}(\tau)\}_{b=1}^B$. 
	\item ``ipw/CA": the same as above with one difference:  the estimation method for both the original and bootstrap samples is the IPW QR.  
\end{enumerate}

Tables \ref{tab:200_2} and \ref{tab:400_2} present the rejection probabilities (multiplied by 100) for the six t-tests under both the null hypothesis and the alternative hypothesis, with sample sizes $n=200$ and $400$, respectively. In these two tables, columns $M$ and $A$ represent DGPs and treatment assignment rules, respectively. From the rejection probabilities under the null, we can make five observations. First, the naive t-test (``s/naive") is conservative for WEI, BCD, and SBR, which is consistent with the findings for ATE estimators by \cite{SYZ10} and \cite{BCS17}. Second, although the asymptotic size of the adjusted t-test (``s/adj") is expected to equal the nominal level, it does not perform well for DGP2. The main reason is that, in order to analytically compute the standard error, we must compute nuisance parameters such as the unconditional densities of $Y(0)$ and $Y(1)$, which requires tuning parameters. We further compute the standard errors following \eqref{eq:stddev} with $\pi(1-\pi)$ and the tuning parameter $h_j$ replaced by $\gamma(S_i)$ and $1.06C_f\hat{\sigma}_jn_j^{-1/5}$, respectively, for some constant $C_f \in [0.5,1.5]$. Figure \ref{fig:BCD_50_200} plots the rejection probabilities of the ``s/adj" t-tests against $C_f$ for the BCD assignment rule with $n=200$, $\tau = 0.5$, and $\pi = 0.5$. We see that (1) the rejection probability is sensitive to the choice of bandwidth, (2) there is no universal optimal bandwidth across two DGPs, and (3) the covariate-adaptive bootstrap t-tests (``s/CA") represented by the dotted dash lines are quite stable across different DGPs and close to the nominal rate of rejection. Third, the weighted bootstrap t-test for the SQR estimator (``s/W") is conservative, especially for the BCD and SBR assignment rules which achieve strong balance. Fourth, the rejection probabilities of the weighted bootstrap t-test for the IPW estimator (``ipw/W") are close to the nominal rate even for sample size $n=200$, which is consistent with Theorem \ref{thm:b}. Last, the rejection rates for the two covariate-adaptive bootstrap t-tests (``s/CA" and ``ipw/CA") are close to the nominal rate, which is consistent with Theorem \ref{thm:cab}.

\begin{table}[H]
	\centering
	\caption{$n = 200,  \tau = 0.5, \pi=0.5$}
	\begin{adjustbox}{max width=\textwidth}
		\begin{tabular}{c|l|cccccc|c|cccccc}
			\multicolumn{8}{c|}{$H_0$}                                        &       & \multicolumn{6}{c}{$H_1$} \\ \hline 
			\multicolumn{1}{c}{M} & \multicolumn{1}{c}{A}   & \multicolumn{1}{l}{s/naive} & \multicolumn{1}{c}{s/adj} & \multicolumn{1}{c}{s/W} & \multicolumn{1}{c}{ipw/W} & \multicolumn{1}{c}{s/CA } & \multicolumn{1}{c|}{ipw/CA} & & \multicolumn{1}{l}{s/naive} & \multicolumn{1}{c}{s/adj} & \multicolumn{1}{c}{s/W} & \multicolumn{1}{c}{ipw/W} & \multicolumn{1}{c}{s/CA } & \multicolumn{1}{c}{ipw/CA} \\ \hline 
			1     & SRS   & 4.5   & 4.5   & 4.7   & 4.4   & 4.4   & 3.9   &       & 18.3  & 18.3  & 19.3  & 44.1  & 20.0  & 42.9 \\
			& WEI   & 1.2   & 4.0   & 1.4   & 4.3   & 3.7   & 3.5   &       & 11.6  & 29.5  & 13.8  & 44.7  & 29.8  & 43.6 \\
			& BCD   & 0.2   & 5.7   & 0.3   & 4.1   & 4.4   & 3.9   &       & 7.2   & 47.2  & 9.5   & 45.3  & 43.4  & 44.8 \\
			& SBR   & 0.1   & 5.7   & 0.1   & 4.6   & 4.5   & 4.4   &       & 8.5   & 48.5  & 9.9   & 46.0  & 45.7  & 44.8 \\ \hline 
			2     & SRS   & 0.4   & 0.4   & 4.7   & 5.2   & 5.2   & 5.3   &       & 79.7  & 79.7  & 90.4  & 91.6  & 90.2  & 91.3 \\ 
			& WEI   & 0.6   & 0.6   & 4.5   & 5.8   & 5.2   & 5.7   &       & 80.2  & 80.7  & 90.7  & 90.9  & 91.3  & 90.6 \\
			& BCD   & 1.0   & 1.0   & 4.5   & 5.1   & 5.0   & 5.3   &       & 79.6  & 80.4  & 90.2  & 91.1  & 90.8  & 90.6 \\
			& SBR   & 0.8   & 1.1   & 4.8   & 5.3   & 4.6   & 4.7   &       & 77.1  & 77.4  & 89.7  & 90.1  & 89.9  & 89.9 \\
		\end{tabular}%
	\end{adjustbox}
	\label{tab:200_2}%
\end{table}%

\begin{table}[H]
	\centering
	\caption{$n = 400,  \tau = 0.5, \pi=0.5$}
	\begin{adjustbox}{max width=\textwidth}
		\begin{tabular}{c|l|cccccc|c|cccccc}
			\multicolumn{8}{c|}{$H_0$}                                        &       & \multicolumn{6}{c}{$H_1$} \\ \hline 
			\multicolumn{1}{c}{M} & \multicolumn{1}{c}{A}   & \multicolumn{1}{l}{s/naive} & \multicolumn{1}{c}{s/adj} & \multicolumn{1}{c}{s/W} & \multicolumn{1}{c}{ipw/W} & \multicolumn{1}{c}{s/CA } & \multicolumn{1}{c|}{ipw/CA} & & \multicolumn{1}{l}{s/naive} & \multicolumn{1}{c}{s/adj} & \multicolumn{1}{c}{s/W} & \multicolumn{1}{c}{ipw/W} & \multicolumn{1}{c}{s/CA } & \multicolumn{1}{c}{ipw/CA} \\ \hline 		
			
			1     & SRS   & 4.2   & 4.2   & 5.4   & 4.0   & 4.6   & 4.1   &       & 21.8  & 21.8  & 23.2  & 50.2  & 23.5  & 50.2 \\
			& WEI   & 1.0   & 4.9   & 0.8   & 4.7   & 4.6   & 4.2   &       & 14.7  & 35.6  & 16.0  & 50.3  & 35.0  & 50.7 \\
			& BCD   & 0.3   & 4.5   & 0.2   & 4.3   & 3.5   & 4.0   &       & 8.9   & 52.6  & 11.7  & 50.2  & 49.3  & 49.6 \\
			& SBR   & 0.2   & 4.6   & 0.0   & 3.7   & 3.6   & 3.7   &       & 8.9   & 55.0  & 10.9  & 51.8  & 52.4  & 51.9 \\ \hline 
			2     & SRS   & 1.2   & 1.2   & 4.3   & 4.8   & 4.6   & 5.0   &       & 89.7  & 89.7  & 95.6  & 95.6  & 95.7  & 95.7 \\
			& WEI   & 1.4   & 1.6   & 5.7   & 6.0   & 5.5   & 5.7   &       & 89.2  & 89.2  & 95.4  & 94.8  & 95.1  & 94.8 \\
			& BCD   & 1.3   & 1.3   & 5.5   & 6.1   & 5.1   & 5.2   &       & 88.7  & 88.9  & 95.2  & 95.4  & 95.7  & 95.6 \\
			& SBR   & 0.6   & 0.6   & 4.0   & 3.9   & 3.8   & 3.8   &       & 90.0  & 90.2  & 95.4  & 95.4  & 95.8  & 95.7 \\
		\end{tabular}%
	\end{adjustbox}
	\label{tab:400_2}%
\end{table}

\begin{figure}[H]
	\centering
	\includegraphics[height=5cm,width = 10cm]{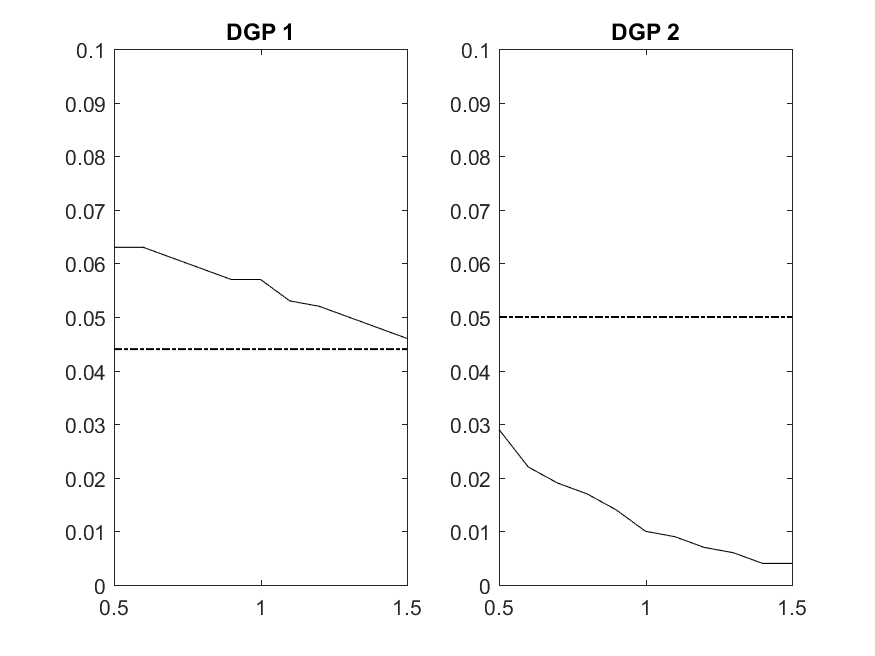}
	\begin{minipage}{0.75\linewidth}
		\footnotesize
		\flushleft Note: Rejection probabilities for BCD assignment rule with $n = 200$, $\pi = 0.5$, and $\tau = 0.5$. The X-axis is $C_f$. The solid lines are the rejection probabilities for ``s/adj". The densities of $Y_j$ is computed using the tuning parameters $h_j = 1.06C_f\hat{\sigma}_jn_j^{-1/5}$, for $j=0,1$. The dotted dash lines are the rejection probability for ``s/CA".   
	\end{minipage}
	\caption{Rejection Probabilities Across Different Bandwidth Values}
	\label{fig:BCD_50_200}
\end{figure}

Turning to the rejection rates under the alternative in Tables \ref{tab:200_2} and \ref{tab:400_2}, we can make two additional observations. First, for BCD and SBR, the rejection probabilities (power) for ``ipw/W", ``s/CA", and ``ipw/CA" are close. This is because both BCD and SBR achieve strong balance. In this case, the two estimators we propose are asymptotically first-order equivalent. Second, for DGP1 with SRS and WEI assignment rules, ``ipw/CA" is more powerful than ``s/CA". This confirms our theoretical finding that the IPW estimator is \textit{strictly} more efficient than the SQR estimator when the treatment assignment rule does \textit{not} achieve strong balance. For DGP2 the three t-tests, i.e., ``ipw/W", ``s/CA", and ``ipw/CA", have similar power.

\subsection{QTE, $\pi = 0.7$}
Tables \ref{tab:200_70_2} and \ref{tab:400_70_2} show the similar results with $\pi = 0.7$. The same comments for Tables \ref{tab:200_2} and \ref{tab:400_2} still apply.  
\begin{table}[H]
	\centering
	\caption{$n = 200,  \tau = 0.5, \pi=0.7$}
	\begin{adjustbox}{max width=\textwidth}
		\begin{tabular}{c|l|cccccc|c|cccccc}
			\multicolumn{8}{c|}{$H_0$}                                        &       & \multicolumn{6}{c}{$H_1$} \\ \hline 
			\multicolumn{1}{c}{M} & \multicolumn{1}{c}{A}   & \multicolumn{1}{l}{s/naive} & \multicolumn{1}{c}{s/adj} & \multicolumn{1}{c}{s/W} & \multicolumn{1}{c}{ipw/W} & \multicolumn{1}{c}{s/CA } & \multicolumn{1}{c|}{ipw/CA} & & \multicolumn{1}{l}{s/naive} & \multicolumn{1}{c}{s/adj} & \multicolumn{1}{c}{s/W} & \multicolumn{1}{c}{ipw/W} & \multicolumn{1}{c}{s/CA } & \multicolumn{1}{c}{ipw/CA} \\ \hline 	
			1     & SRS   & 4.8   & 4.8   & 5.2   & 4.7   & 3.4   & 4.4   &       & 17.0  & 17.0  & 17.2  & 42.5  & 16.7  & 40.6 \\
			& SBR   & 0.1   & 0.7   & 0.2   & 4.0   & 4.4   & 3.7   &       & 4.3   & 21.2  & 6.0   & 45.5  & 45.7  & 43.4 \\ \hline
			2     & SRS   & 1.6   & 1.6   & 5.2   & 5.4   & 5.1   & 5.3   &       & 77.1  & 77.1  & 89.1  & 90.3  & 89.5  & 89.4 \\
			& SBR   & 0.4   & 0.5   & 3.9   & 4.8   & 4.5   & 4.8   &       & 76.0  & 76.9  & 89.2  & 91.1  & 90.1  & 90.0 \\
		\end{tabular}%
	\end{adjustbox}
	\label{tab:200_70_2}%
\end{table}%

\begin{table}[H]
	\centering
	\caption{$n = 400,  \tau = 0.5, \pi=0.7$}
	\begin{adjustbox}{max width=\textwidth}
		\begin{tabular}{c|l|cccccc|c|cccccc}
			\multicolumn{8}{c|}{$H_0$}                                        &       & \multicolumn{6}{c}{$H_1$} \\ \hline 
			\multicolumn{1}{c}{M} & \multicolumn{1}{c}{A}   & \multicolumn{1}{l}{s/naive} & \multicolumn{1}{c}{s/adj} & \multicolumn{1}{c}{s/W} & \multicolumn{1}{c}{ipw/W} & \multicolumn{1}{c}{s/CA } & \multicolumn{1}{c|}{ipw/CA} & & \multicolumn{1}{l}{s/naive} & \multicolumn{1}{c}{s/adj} & \multicolumn{1}{c}{s/W} & \multicolumn{1}{c}{ipw/W} & \multicolumn{1}{c}{s/CA } & \multicolumn{1}{c}{ipw/CA} \\ \hline 		
			1     & SRS   & 4.4   & 4.4   & 5.1   & 3.9   & 4.8   & 3.7   &       & 18.4  & 18.4  & 18.7  & 47.9  & 19.4  & 46.6 \\
			& SBR   & 0.1   & 0.2   & 0     & 3.9   & 3.5   & 4     &       & 4.2   & 22    & 5.9   & 49.8  & 50.5  & 48.2 \\ \hline
			2     & SRS   & 0.7   & 0.7   & 3.9   & 4.2   & 4.2   & 4.7   &       & 86.7  & 86.7  & 93.9  & 93.3  & 94.1  & 93.6 \\
			& SBR   & 0.6   & 0.6   & 3.5   & 3.6   & 3.7   & 3.7   &       & 88.3  & 88.8  & 94.8  & 95.2  & 95.5  & 95.2 \\
		\end{tabular}%
	\end{adjustbox}
	\label{tab:400_70_2}%
\end{table}%

\subsection{Difference between Two QTEs}
Last, we consider to infer $q(0.25)-q(0.75)$ when $\pi=0.5$:
\begin{align*}
H_0:~q(0.25)-q(0.75) = \text{the true value} \quad \text{v.s.} \quad H_1:~q(0.25)-q(0.75) = \text{the true value}+\mu,
\end{align*}
where $\mu = 1$ and $0.75$ for sample sizes $200$ and $400$, respectively. The two estimators for QTEs at $\tau=0.25$ and $0.75$ are correlated. We can compute the naive and adjusted standard errors for the SQR estimator by taking this covariance structure into account.\footnote{The formulas for the covariances can be found in the proofs of Theorems \ref{thm:qr} and \ref{thm:ipw}.} On the other hand, in addition to avoiding the tuning parameters, another advantage of the bootstrap inference is it does not require the knowledge of this complicated covariance structure. Researchers may construct the t-statistic using the difference of two QTE estimators with the corresponding weighted and covariate-adaptive bootstrap standard errors, which are calculated using the exact same procedure as in Sections \ref{sec:SBI} and \ref{sec:CABI}. Taking the SQR estimator as an example, we estimate $q(0.25)-q(0.75)$ via $\hat{\beta}_1(0.25) - \hat{\beta}_1(0.75)$ and the corresponding covariate-adaptive bootstrap standard error is 
\begin{align*}
\hat{\sigma}_{CA} = \frac{\hat{Q}(0.975) - \hat{Q}(0.025)}{z_{0.975} - z_{0.025}},
\end{align*}
where  $\hat{Q}(\alpha)$ is the $\alpha$-th empirical quantile of $\{\hat{\beta}^*_{1,b}(0.25) - \hat{\beta}^*_{1,b}(0.75)\}_{b=1}^B$.

Based on the rejection rates reported in Tables \ref{tab:200_50_diff} and \ref{tab:400_50_diff}, the general observations for the previous simulation results still apply. Although under the null, the rejection rates for ``ipw/W", ``S/CA", ``ipw/CA" in DGP2 are below the nominal 5\%, they gradually increase as the sample size increases from 200 to 400.  
\begin{table}[H]
	\centering
	\caption{$n = 200,q(0.25)-q(0.75)$}
	\begin{adjustbox}{max width=\textwidth}
		\begin{tabular}{c|l|cccccc|c|cccccc}
			\multicolumn{8}{c|}{$H_0$}                                        &       & \multicolumn{6}{c}{$H_1$} \\ \hline 
			\multicolumn{1}{c}{M} & \multicolumn{1}{c}{A}   & \multicolumn{1}{l}{s/naive} & \multicolumn{1}{c}{s/adj} & \multicolumn{1}{c}{s/W} & \multicolumn{1}{c}{ipw/W} & \multicolumn{1}{c}{s/CA } & \multicolumn{1}{c|}{ipw/CA} & & \multicolumn{1}{l}{s/naive} & \multicolumn{1}{c}{s/adj} & \multicolumn{1}{c}{s/W} & \multicolumn{1}{c}{ipw/W} & \multicolumn{1}{c}{s/CA } & \multicolumn{1}{c}{ipw/CA} \\ \hline 		
			
			1     & SRS   & 4.0   & 4.0   & 3.6   & 3.8   & 3.5   & 3.5   &       & 15.6  & 15.6  & 14.9  & 19.4  & 16.0  & 19.4 \\
			& WEI   & 2.3   & 4.9   & 2.0   & 4.0   & 5.1   & 3.9   &       & 11.3  & 17.9  & 11.0  & 19.0  & 16.0  & 18.6 \\
			& BCD   & 1.0   & 4.1   & 1.1   & 4.4   & 3.7   & 4.2   &       & 9.9   & 20.7  & 10.1  & 22.0  & 20.6  & 21.4 \\
			& SBR   & 1.1   & 4.3   & 0.9   & 4.1   & 4.1   & 4.2   &       & 9.4   & 21.8  & 8.7   & 17.3  & 20.0  & 17.2 \\ \hline 
			2     & SRS   & 5.0   & 5.0   & 3.1   & 3.1   & 3.1   & 3.1   &       & 53.7  & 53.7  & 47.1  & 48.4  & 47.8  & 48.2 \\
			& WEI   & 3.6   & 3.6   & 2.1   & 2.8   & 2.9   & 2.9   &       & 57.0  & 57.7  & 47.6  & 49.8  & 50.3  & 50.0 \\
			& BCD   & 4.2   & 4.8   & 2.4   & 2.5   & 3.6   & 2.7   &       & 58.0  & 59.4  & 49.1  & 52.0  & 52.8  & 50.8 \\
			& SBR   & 5.1   & 5.3   & 2.4   & 3.4   & 4.1   & 3.4   &       & 55.5  & 57.0  & 46.5  & 46.5  & 50.5  & 45.6 \\
		\end{tabular}%
	\end{adjustbox}
	\label{tab:200_50_diff}%
\end{table}%

\begin{table}[H]
	\centering
	\caption{$n = 400,q(0.25)-q(0.75)$}
	\begin{adjustbox}{max width=\textwidth}
		\begin{tabular}{c|l|cccccc|c|cccccc}
			\multicolumn{8}{c|}{$H_0$}                                        &       & \multicolumn{6}{c}{$H_1$} \\ \hline 
			\multicolumn{1}{c}{M} & \multicolumn{1}{c}{A}   & \multicolumn{1}{l}{s/naive} & \multicolumn{1}{c}{s/adj} & \multicolumn{1}{c}{s/W} & \multicolumn{1}{c}{ipw/W} & \multicolumn{1}{c}{s/CA } & \multicolumn{1}{c|}{ipw/CA} & & \multicolumn{1}{l}{s/naive} & \multicolumn{1}{c}{s/adj} & \multicolumn{1}{c}{s/W} & \multicolumn{1}{c}{ipw/W} & \multicolumn{1}{c}{s/CA } & \multicolumn{1}{c}{ipw/CA} \\ \hline 		
			1     & SRS   & 3.8   & 3.8   & 3.9   & 5.1   & 3.7   & 5.0   &       & 17.2  & 17.2  & 15.9  & 21.5  & 16.8  & 21.2 \\
			& WEI   & 2.0   & 4.2   & 2.4   & 3.3   & 4.4   & 3.5   &       & 11.8  & 20.2  & 11.5  & 21.4  & 20.2  & 20.7 \\
			& BCD   & 1.4   & 4.4   & 1.4   & 4.3   & 4.4   & 4.1   &       & 10.5  & 21.8  & 10.2  & 20.7  & 21.5  & 20.6 \\
			& SBR   & 0.8   & 3.8   & 0.8   & 3.9   & 3.7   & 3.8   &       & 12.1  & 25.0  & 12.6  & 21.8  & 23.7  & 22.3 \\ \hline 
			2     & SRS   & 5.3   & 5.3   & 3.9   & 4.7   & 4.3   & 4.8   &       & 63.2  & 63.2  & 55.7  & 57.7  & 56.8  & 57.6 \\
			& WEI   & 5.4   & 5.8   & 3.4   & 3.7   & 4.1   & 3.5   &       & 63.6  & 64.4  & 55.6  & 58.0  & 58.0  & 58.5 \\
			& BCD   & 4.0   & 4.3   & 2.6   & 2.8   & 3.1   & 3.1   &       & 62.1  & 63.3  & 54.7  & 55.7  & 57.4  & 56.0 \\
			& SBR   & 5.1   & 5.7   & 4.0   & 4.5   & 4.4   & 4.5   &       & 61.1  & 62.0  & 52.4  & 51.3  & 56.0  & 53.0 \\
		\end{tabular}%
	\end{adjustbox}
	\label{tab:400_50_diff}%
\end{table}%

\section{Guidance for Practitioners}
\label{sec:guide}
We recommend employing the t-statistic (or equivalently, the confidence interval) constructed using the IPW estimator and its weighted bootstrap standard error for inference in covariate-adaptive randomization, for the following four reasons. First, its asymptotic size equals the nominal level. Second, the IPW estimator has a smaller asymptotic variance than the SQR estimator when the treatment assignment rule does not achieve strong balance and the stratification is relevant.\footnote{In this case, for ATE, the IPW estimator also has a strictly smaller asymptotic variance than the strata fixed effects estimator studied in \cite{BCS17}.} Third, compared with the covariate-adaptive bootstrap, the validity of the weighted bootstrap requires a weaker condition that $\sup_{s \in \mathcal{S} }|D_n(s)/n(s)| = o_p(1)$. Fourth, this method does not require the knowledge of the exact treatment assignment rule, thus is suitable in settings where such information is lacking, e.g., using someone else's RCT or subsample analysis. When the treatment assignment rule achieves strong balance, SQR estimator can also be used. But in this case, only the covariate-adaptive bootstrap standard error is valid. Last, the Wald test using SQR estimator and the weighted bootstrap standard error is not recommended, as it is conservative when the treatment assignment rule introduces negative dependence (i.e., $\gamma(s) < \pi(1-\pi)$) such as WEI, BCD, and SBR.

\section{Empirical Application}
\label{sec:app}

We illustrate our methods by estimating and inferring the average and quantile treatment effects of iron efficiency on educational attainment. The dataset we use is the same as the one analyzed by \citet{CCFNT16} and \citet{BCS17}. 

\subsection{Data Description}

The dataset consists of 215 students from one Peruvian secondary school during the 2009 school year. About two thirds of students were assigned to the treatment group ($A=1$ or $A=2$). The other one third of students were assigned to the control group ($A=0$). One half of the students in the treatment group were shown a video in which a physician encouraged iron supplements ($A=1$) and the other half were shown the same encouragement from a popular soccer player ($A=2$). Those assignments were stratified by the number of years of secondary school completed ($\mathcal{S}=\{1,\cdots,5\}$). The field experiment used a stratified block randomization scheme with fractions $(1/3, 1/3, 1/3)$ for each group, which achieves strong balance ($\gamma(s)=0$). 

In the following, we focus on the observations with $A=0$ and $A=1$, and estimate the treatment effect of the exposure to a video of encouraging iron supplements by a physician only. This practice was also implemented in \cite{BCS17}. In this case, the target proportions of treatment is $\pi = 1/2$. As in \cite{CCFNT16}, it is also possible to combine the two treatment groups, i.e., $A=1$ and $A=2$ and compute the treatment effects of exposure to a video of encouraging iron supplements by either a physician or a popular soccer player. Last, one can use the method developed in \cite{BCS18} to estimate the ATEs under multiple treatment status. However, in this setting, the estimation of QTE and the validity of bootstrap inference have not been investigated yet and are interesting topics for future research.

For each observation, we have three outcome variables: number of pills taken, grade point average, and cognitive ability measured by the average score across different Nintendo Wii games. For more details about the outcome variables, we refer interested readers to \cite{CCFNT16}. In the following, we focus on the grade point average only as the other two outcomes are discrete.

\subsection{Computation}
\label{sec:comp}
We consider three pairs of point estimates and their corresponding non-conservative standard errors: (1) the SQR estimator with the covariate-adaptive bootstrap standard error, (2) the IPW estimator with the covariate-adaptive bootstrap standard error, and (3) the IPW estimator with the weighted bootstrap standard error. We denote them as ``s/CA", ``ipw/CA", and ``ipw/W", respectively. For comparison, we also compute the SQR estimator with its  weighted bootstrap standard error, which is denoted as ``s/W" . The SQR estimator for the $\tau$-th QTE refers to $\hat{\beta}_1(\tau)$ as the second element of $\hat{\beta}(\tau) = (\hat{\beta}_0(\tau),\hat{\beta}_1(\tau))$, where 
\begin{align*}
\hat{\beta}(\tau) = \argmin_{b = (b_0,b_1)^\prime \in \Re^2} \sum_{i=1}^{n}\rho_\tau\left(Y_i - \dot{A}_i'b\right),
\end{align*}
$\dot{A}_i = (1,A_i)^\prime$, and $\rho_\tau(u) = u(\tau - 1\{u\leq 0\})$ is the standard check function. It is also just the difference between the $\tau$-th empirical quantiles of treatment and control groups. The IPW estimator refers to $\hat{q}(\tau) = \hat{q}_1(\tau) - \hat{q}_0(\tau)$, where
\begin{align*}
\hat{q}_1(\tau) = \argmin_q \frac{1}{n}\sum_{i=1}^n\frac{A_i}{\hat{\pi}(S_i)}\rho_\tau(Y_i - q), \quad \hat{q}_0(\tau) = \argmin_q\frac{1}{n}\sum_{i=1}^n \frac{1-A_i}{1-\hat{\pi}(S_i)}\rho_\tau(Y_i - q), 
\end{align*}
$\hat{\pi}(\cdot)$ denotes the propensity score estimator, $\hat{\pi}(s) = n_1(s)/n(s)$, $n_1(s) = \sum_{i=1}^n A_i 1\{S_i = s\}$, and $n(s) = \sum_{i=1}^n 1\{S_i = s\}$. The covariate-adaptive bootstrap standard error (``CA") refers to the standard error computed in Section \ref{sec:CABI}. In particular, we can draw the covariate-adaptive bootstrap sample $(Y_i^*,A_i^*,S_i^*)_{i=1}^n$ following the procedure in Section \ref{sec:CABI}. We then recompute the SQR and IPW estimates using the bootstrap sample. We repeat the bootstrap estimation $B$ times, and obtain 
$\{\hat{\beta}_{b,1}^*(\tau),\hat{q}_b^*(\tau)\}_{b=1}^B$. The standard errors for SQR and IPW estimates are computed as 
\begin{align*}
\hat{\sigma}_{sqr}(\tau) = \frac{\hat{Q}_{sqr}(0.975) - \hat{Q}_{sqr}(0.025)}{z_{0.975} - z_{0.025}} \quad \text{and} \quad \hat{\sigma}_{ipw}(\tau) = \frac{\hat{Q}_{ipw}(0.975) - \hat{Q}_{ipw}(0.025)}{z_{0.975} - z_{0.025}},
\end{align*}
respectively, where $\hat{Q}_{sqr}(\alpha)$ and $\hat{Q}_{ipw}(\alpha)$ are the $\alpha$-th empirical quantiles of $\{\hat{\beta}_{b,1}^*(\tau)\}_{b=1}^B$ and $\{\hat{q}_b^*(\tau)\}_{b=1}^B$, respectively, and $z_{\alpha}$ is the $\alpha$-th percentile of the standard normal distribution, i.e., $z_{0.975} \approx 1.96$ and $z_{0.025} \approx -1.96$. The weighted bootstrap standard error for the IPW estimate can be computed in the same manner with only one difference, the covariate-adaptive bootstrap estimator $\{\hat{q}_b^*(\tau)\}_{b=1}^B$ is replaced by the weighted bootstrap estimator $\{\hat{q}_b^w(\tau)\}_{b=1}^B$, where for the $b$-th replication,  $\hat{q}_b^w(\tau) = \hat{q}_{b,1}^w(\tau) - \hat{q}_{b,0}^w(\tau)$,
\begin{align*}
\hat{q}_{b,1}^w(\tau) = \argmin_q \frac{1}{n}\sum_{i=1}^n\frac{\xi_i^bA_i}{\hat{\pi}^w(S_i)}\rho_\tau(Y_i - q), \hat{q}_{b,1}^w(\tau) = \argmin_q\frac{1}{n}\sum_{i=1}^n \frac{\xi_i^b(1-A_i)}{1-\hat{\pi}^w(S_i)}\rho_\tau(Y_i - q),
\end{align*}
$\{\xi_i^b\}_{i=1}^n$ is a sequence of i.i.d. standard exponentially distributed random variables, $\hat{\pi}^w(s) = n_1^w(s)/n^w(s)$, $n_1^w(s) = \sum_{i =1}^n\xi_iA_i1\{S_i=s\}$, and $n^w(s) = \sum_{i =1}^n \xi_i 1\{S_i=s\}$. Similarly, we compute the weighted bootstrap SQR estimates $\{\beta_{b,1}^w(\tau)\}_{b=1}^B$ as the second element of $\{\beta_{b}^w(\tau)\}_{b=1}^B$, where 
\begin{align*}
\beta_{b}^w(\tau) = \argmin_{b = (b_0,b_1)^\prime \in \Re^2} \frac{1}{n}\sum_{i=1}^n \xi_i^b\rho_\tau(Y_i - \dot{A}_i'b).
\end{align*}

For the ATEs, we also compute the SQR estimator with the adjusted standard error based on the analytical formula derived by \cite{BCS17}, i.e., ``s/adj". For QTE estimates, we consider quantile indexes $\{0.1,0.15,\cdots,0.90\}$. The number of replications for the two bootstrap methods is $B=1000$.

\subsection{Main Results}
Table \ref{tab:gradesATE} shows the estimates with the corresponding standard errors in parentheses. From the table, we can make several remarks. First, for both ATE and QTE, the SQR and IPW estimates are very close to each other and so do their standard errors computed via the analytical formula, weighted bootstrap, and covariate-adaptive bootstrap. This is consistent with our theory that, under strong balance, the two estimators are first-order equivalent. Second, although in theory, the weighted bootstrap standard errors for the SQR estimators should be larger than those computed via the covariate-adaptive bootstrap, in this application, they are very close. This is consistent with the finding in \cite{BCS17} that their adjusted p-value for the ATE estimate is close to the naive one. It implies the stratification may be irrelevant for the full-sample analysis. Third, we do not compute the adjusted standard error for the QTEs as it requires tuning parameters. Fourth, the QTEs provide us a new insight that the impact of supplementation on grade promotion is only significantly positive at 25\% among the three quantiles. This may imply that the policy of reducing iron deficits is more effective for lower-ranked students.
\begin{table}[H]
	\centering
	\caption{Grades Points Average}
	\medskip
	\begin{tabular}{c|ccccc}\hline 
		& s/adj & s/W & s/CA&  ipw/W & ipw/CA  \\ \hline 
		ATE	&     0.35$^{**}$(0.16) &     0.35$^{**}$(0.16) &     0.35$^{**}$(0.17) &  0.37$^{**}$(0.16)     &     0.37$^{**}$(0.17)     \\ 
		QTE,25\%  &  &     0.43$^{***}$(0.15) &     0.43$^{***}$(0.15) &     0.43$^{***}$(0.15) &     0.43$^{***}$(0.15)  \\
		QTE,50\%  &  &     0.29(0.22) &     0.29(0.23) &     0.29(0.22) &     0.29(0.24) \\ 
		QTE,75\% &  &     0.35(0.25) &     0.35(0.24) &     0.36(0.25) &     0.36(0.25) \\  \hline
	\end{tabular}\\
	$^{*}$ $p<0.1$, $^{**}$ $p<0.05$, $^{***}$ $p<0.01$.
	\label{tab:gradesATE}%
\end{table}%

In order to provide more details on the QTE estimates, we plot the 95\% point-wise confidence band in Figure \ref{fig:quantileCI} with quantile index ranging from 0.1 to 0.9. The solid line and the shadow area represent the point estimate and its 95\% point-wise confidence interval, respectively. The confidence interval is constructed by $$[\hat{\beta}-1.96\hat{\sigma}(\hat{\beta}),\hat{\beta}+1.96\hat{\sigma}(\hat{\beta})],$$
where $\hat{\beta}$ and $\hat{\sigma}(\hat{\beta})$ are the point estimates and the corresponding standard errors described above. As we expected, all the four findings look the same and the estimates are only significantly positive at low quantiles (15\%--30\%). 

\begin{figure}[H]
	\centering
	\includegraphics[height=9cm,width=16cm]{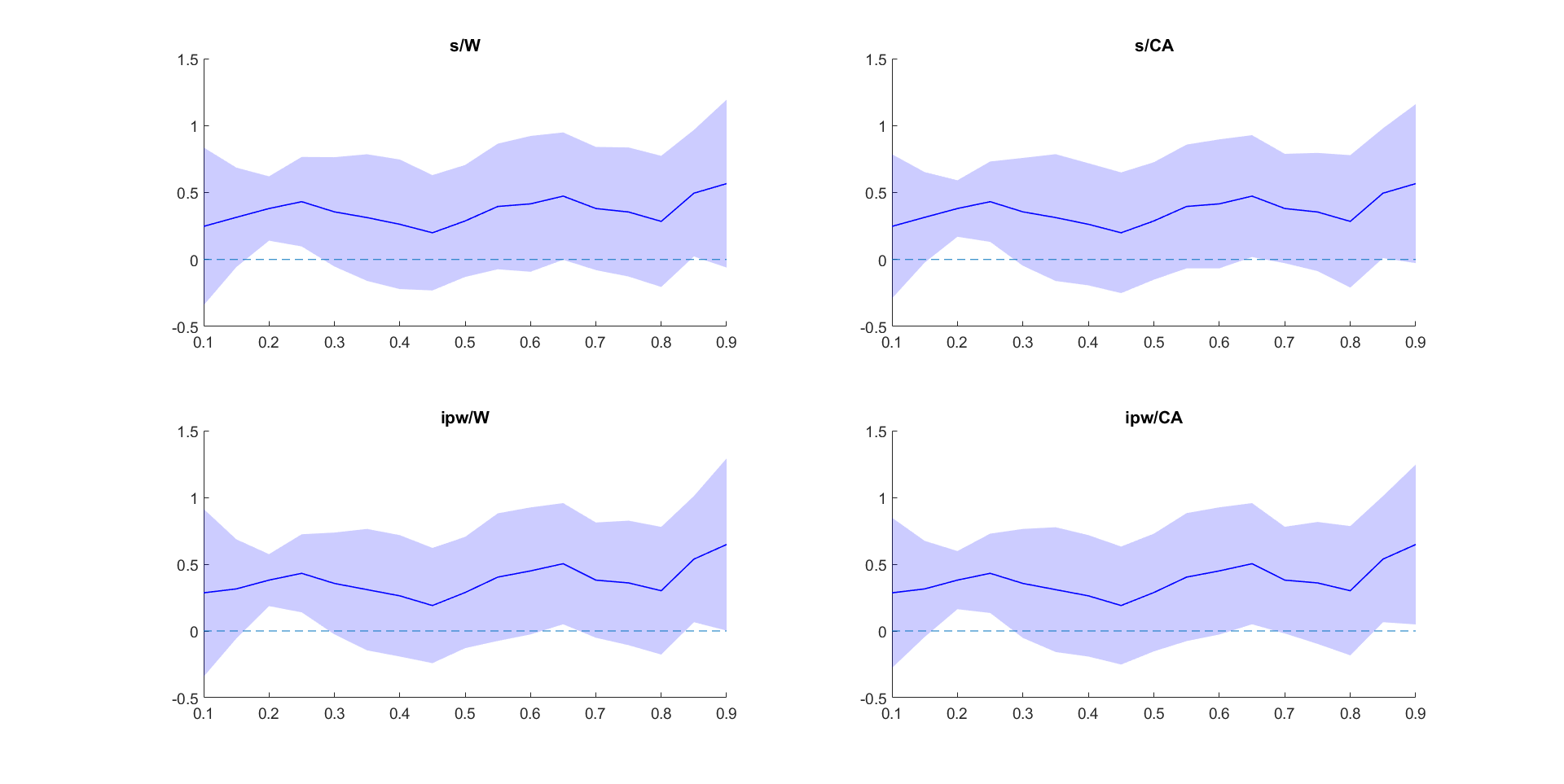}
	\begin{minipage}{0.75\linewidth}
		\footnotesize
	\end{minipage}
	\caption{95\% Point-wise Confidence Interval for Quantile Treatment Effects}
	\label{fig:quantileCI}
\end{figure}

\subsection{Subsample Results}
\label{sec:apply_sub}
Following \cite{CCFNT16}, we further split the sample into two based on whether the student is anemic, i.e., $Anem_i = 0$ or 1. We anticipate that there is no treatment effect for the nonanemic students and positive effects for anemic ones. In this subsample analysis, the covariate-adaptive bootstrap is infeasible, as in each sub-group, the strong-balance condition may be lost and the treatment assignment rule is not necessarily SBR and is generally unknown.\footnote{As the anonymous referee pointed out, it is possible to implement the covariate-adaptive bootstrap on the full sample and pick out the observations in the subsample to construct a bootstrap subsample. The analysis can then be repeated on this covariate-adaptive bootstrap subsample. Establishing the validity of this procedure is left as a topic for future research. } However, the weighted bootstrap is still feasible as it does not require the knowledge of the treatment assignment rule. According to Theorem \ref{thm:b}, the IPW estimator paired with the weighted bootstrap standard error is valid if
\begin{align}
\label{eq:D1}
\sup_{s\in\mathcal{S}}\left|\frac{D_n^{(1)}(s)}{n^{(1)}(s)}\right|\equiv&	\sup_{s\in\mathcal{S}}\left|\frac{\sum_{i =1}^n(A_i-\pi)1\{S_i=s\}1\{Anem_i=1\}}{\sum_{i =1}^n1\{S_i=s\}1\{Anem_i=1\}}\right|=o_p(1)
\end{align}
and 
\begin{align}
\label{eq:D0}
\sup_{s\in\mathcal{S}}\left|\frac{D_n^{(0)}(s)}{n^{(0)}(s)}\right|\equiv&	\sup_{s\in\mathcal{S}}\left|\frac{\sum_{i =1}^n(A_i-\pi)1\{S_i=s\}1\{Anem_i=0\}}{\sum_{i =1}^n1\{S_i=s\}1\{Anem_i=0\}}\right|=o_p(1).
\end{align}
We maintain this mild condition in this section. In our sample, 
\begin{align*}
\sup_{s\in\mathcal{S}}\left|\frac{D_n^{(1)}(s)}{n^{(1)}(s)}\right| = 0 \quad \text{and} \quad 	\sup_{s\in\mathcal{S}}\left|\frac{D_n^{(0)}(s)}{n^{(0)}(s)}\right| = 0.071, 
\end{align*}  
which indicate that \eqref{eq:D1} and \eqref{eq:D0} are plausible. 

\begin{table}[H]
	\centering
	\caption{Grades Points Average for Subsamples}
	\medskip
	\begin{tabular}{l|cc|cc}\hline 
		& \multicolumn{2}{c|}{Anemic} & \multicolumn{2}{c}{Nonanemic}  \\ \hline             
		& s/W & ipw/W  & s/W & ipw/W \\ \hline 
		ATE &     0.67$^{***}$(0.23) &     0.69$^{***}$(0.20) &     0.13(0.23) &     0.19(0.20)  \\  
		QTE, 25\%  &     0.74$^{***}$(0.24) &     0.76$^{***}$(0.22) &     0.14(0.28) &     0.22(0.26)  \\ 
		QTE, 50\%  &     1.05$^{***}$(0.29) &     1.05$^{***}$(0.27) &    -0.14(0.29) &    -0.14(0.27) \\ 
		QTE, 75\% &     0.71$^{**}$(0.36) &     0.76$^{**}$(0.32) &     0.14(0.39) &     0.14(0.37) \\    \hline  
	\end{tabular}\\
	$^{*}$ $p<0.1$, $^{**}$ $p<0.05$, $^{***}$ $p<0.01$.
	\label{tab:gradesATE'}%
\end{table}

From Table \ref{tab:gradesATE'} and Figure \ref{fig:quantileCIn}, we see that the QTE estimates are significantly positive for the anemic students when the quantile index is between around 20\%--75\%, but are insignificant for nonanemic students. The lack of significance at very low and high quantiles for the anemic subsample may be due to a poor asymptotic normal approximation at extreme quantiles. To extend the inference of extremal QTEs in \cite{Z18} to the context of covariate-adaptive randomization is an interesting topic for future research.  We also note that for both subsamples, the weighted bootstrap standard errors for the SQR estimators are larger than those for the IPW estimators, which is consistent with Theorem \ref{thm:b}. It implies, for both sub-groups, the stratification is relevant.

\begin{figure}[H]
	\centering
	\includegraphics[height=6cm,width=16cm]{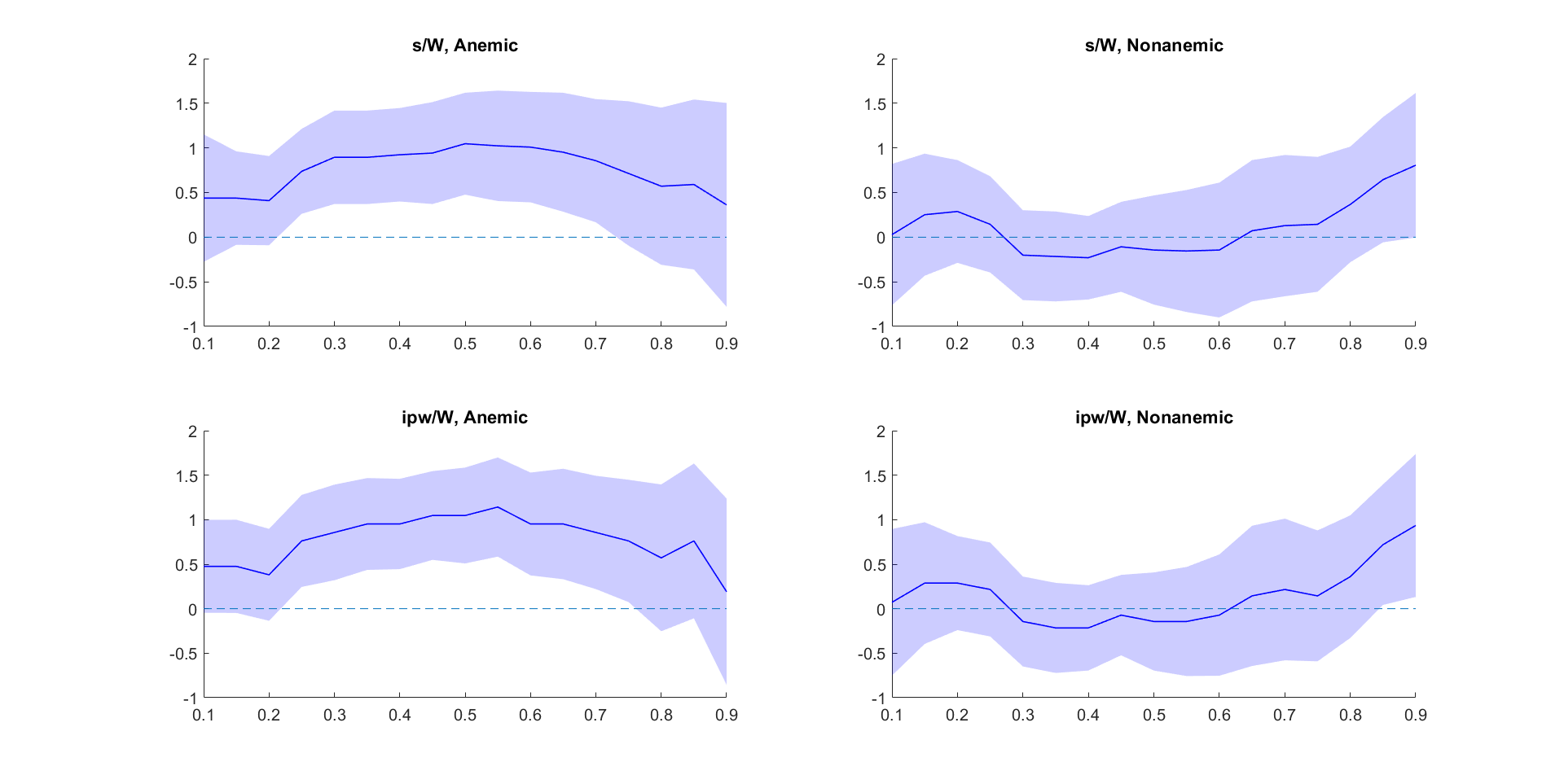}
	\caption{95\% Point-wise Confidence Interval for Anemic and Nonanemic Students}
	\label{fig:quantileCIn}
\end{figure}

\section{Conclusion}
\label{sec:concl}
This paper studies the estimation and bootstrap inference for QTEs under covariate-adaptive randomization. We show that the weighted bootstrap standard error is only valid for the IPW estimator while the covariate-adaptive bootstrap standard error is valid for both SQR and IPW estimators. In the empirical application, we find that the QTE of iron supplementation on grade promotion is trivial for nonanemic students, while the impact is significantly positive for middle-ranked anemic students.

\newpage
\appendix
\begin{center}
	\Large{Supplement to ``Quantile Treatment Effects and Bootstrap Inference under Covariate-Adaptive Randomization"}
\end{center}
\begin{abstract}
This paper  gathers the supplementary material to the original paper. Sections \ref{sec:qr}, \ref{sec:ipw_pf}, \ref{sec:b}, and \ref{sec:cab} contain the proofs for Theorems \ref{thm:qr}, \ref{thm:ipw}, \ref{thm:b}, and \ref{thm:cab}, respectively. Section \ref{sec:lem} contains the proofs for the technical lemmas. A separate supplement contains the analysis of strata fixed effects quantile regression estimator as well as additional simulation results.
\end{abstract}
\setcounter{page}{1} \renewcommand\thesection{\Alph{section}} %

\renewcommand{\thefootnote}{\arabic{footnote}} \setcounter{footnote}{0}

\setcounter{equation}{0}

\section{Proof of Theorem \ref{thm:qr}}
\label{sec:qr}
Let $u=(u_0, u_1)' \in \Re^2$ and 
\begin{align*}
L_n(u,\tau)  = \sum_{i=1}^n\left[\rho_\tau(Y_i - \dot{A}_i'\beta(\tau) - \dot{A}_i'u/\sqrt{n}) - \rho_\tau(Y_i - \dot{A}_i'\beta(\tau))\right].
\end{align*}  
Then, by the change of variable, we have that 
\begin{align*}
\sqrt{n}(\hat{\beta}(\tau) - \beta(\tau)) = \argmin_u L_n(u,\tau).
\end{align*}
Notice that $L_n(u,\tau)$ is convex in $u$ for each $\tau$ and bounded in $\tau$ for each $u$. In the following, we aim to show that there exists 
$$g_n(u,\tau) = - u'W_n(\tau) + \frac{1}{2}u'Q(\tau)u$$ 
such that (1) for each $u$, 
\begin{align*}
\sup_{\tau \in \Upsilon}|L_n(u,\tau) - g_n(u,\tau)| \convP 0;
\end{align*}
(2) the maximum eigenvalue of $Q(\tau)$ is bounded from above and the minimum eigenvalue of $Q(\tau)$ is bounded away from $0$, uniformly over $\tau \in \Upsilon$; (3) $W_n(\tau) \convD \tilde{\mathcal{B}}(\tau)$ uniformly over $\tau \in \Upsilon$, in which $\tilde{\mathcal{B}}(\cdot)$ is some Gaussian process. Then by \citet[Theorem 2]{K09}, we have 
\begin{align*}
\sqrt{n}(\hat{\beta}(\tau) - \beta(\tau)) = [Q(\tau)]^{-1}W_n(\tau) + r_n(\tau),
\end{align*}
where $\sup_{\tau \in \Upsilon}||r_n(\tau)|| = o_p(1)$. In addition, by (3), we have, uniformly over $\tau \in \Upsilon$, 
\begin{align*}
\sqrt{n}(\hat{\beta}(\tau) - \beta(\tau)) \convD [Q(\tau)]^{-1}\tilde{\mathcal{B}}(\tau) \equiv \mathcal{B}(\tau).
\end{align*}
The second element of $\mathcal{B}(\tau)$ is $\mathcal{B}_{sqr}(\tau)$ stated in Theorem \ref{thm:qr}. In the following, we prove requirements (1)--(3) in three steps. 

\textbf{Step 1.} By Knight's identity (\citep{K98}), we have 
\begin{align*}
& L_n(u,\tau) \\
= & -u'\sum_{i=1}^{n}\frac{1}{\sqrt{n}}\dot{A}_i\left(\tau- 1\{Y_i\leq \dot{A}_i^{\prime}\beta(\tau)\}\right) + \sum_{i=1}^{n}\int_0^{\frac{\dot{A}_iu}{\sqrt{n}}}\left(1\{Y_i -  \dot{A}_i^{\prime}\beta(\tau)\leq v\} - 1\{Y_i -  \dot{A}_i^{\prime}\beta(\tau)\leq 0\} \right)dv \\
\equiv & -u'W_n(\tau) +Q_n(u,\tau),
\end{align*}
where 
\begin{align*}
W_n(\tau) = \sum_{i=1}^{n}\frac{1}{\sqrt{n}}\dot{A}_i\left(\tau- 1\{Y_i\leq \dot{A}_i^{\prime}\beta(\tau)\}\right)
\end{align*}
and 
\begin{align*}
Q_n(u,\tau) = & \sum_{i=1}^{n}\int_0^{\frac{\dot{A}_i^\prime u}{\sqrt{n}}}\left(1\{Y_i -  \dot{A}_i^{\prime}\beta(\tau)\leq v\} - 1\{Y_i -  \dot{A}_i^{\prime}\beta(\tau)\leq 0\} \right)dv \\
= & \sum_{i=1}^n A_i\int_0^{\frac{u_0+u_1}{\sqrt{n}}}\left(1\{Y_i(1) - q_1(\tau)\leq v\} - 1\{Y_i(1) - q_1(\tau)\leq 0\}  \right)dv \\
& + \sum_{i=1}^n (1-A_i)\int_0^{\frac{u_0}{\sqrt{n}}}\left(1\{Y_i(0) - q_0(\tau)\leq v\} - 1\{Y_i(0) - q_0(\tau)\leq 0\}  \right)dv \\
\equiv & Q_{n,1}(u,\tau) + Q_{n,0}(u,\tau).
\end{align*}
We first consider $Q_{n,1}(u,\tau)$. Following \cite{BCS17}, we define $\{(Y_i^s(1),Y_i^s(0)): 1\leq i \leq n\}$ as a sequence of i.i.d. random variables with marginal distributions equal to the distribution of $(Y_i(1),Y_i(0))|S_i = s$. The distribution of $Q_{n,1}(u,\tau)$ is the same as the counterpart with units ordered by strata and then ordered by $A_i = 1$ first and $A_i = 0$ second within each stratum, i.e., 
\begin{align}
\label{eq:Qn1}
Q_{n,1}(u,\tau) \stackrel{d}{=} & \sum_{s \in \mathcal{S}} \sum_{i=N(s)+1}^{N(s)+n_1(s)}\int_0^{\frac{u_0+u_1}{\sqrt{n}}}\biggl(1\{Y_i^s(1) - q_1(\tau)\leq v\} - 1\{Y_i^s(1) - q_1(\tau)\leq 0\}  \biggr)dv \notag \\
= & \sum_{s \in \mathcal{S}}\biggl[\Gamma^s_{n}(N(s) + n_1(s),\tau) - \Gamma^s_{n}(N(s),\tau) \biggr],
\end{align} 
where $N(s) = \sum_{i=1}^n 1\{S_i < s\}$, $n_1(s) = \sum_{i=1}^n 1\{S_i = s\}A_i$, and
\begin{align*}
\Gamma^s_{n}(k,\tau) = \sum_{i=1}^{k}\int_0^{\frac{u_0+u_1}{\sqrt{n}}}\biggl(1\{Y_i^s(1) - q_1(\tau)\leq v\} - 1\{Y_i^s(1) - q_1(\tau)\leq 0\}  \biggr)dv.
\end{align*}

In addition, note that 
\begin{align}
\label{eq:Gamma}
& \mathbb{P}(\sup_{t \in (0,1),\tau \in \Upsilon}|\Gamma^s_{n}(\lfloor nt \rfloor,\tau) - \mathbb{E}\Gamma^s_{n}(\lfloor nt \rfloor,\tau)|> \eps) \notag \\
= & \mathbb{P}(\max_{1 \leq k \leq n}\sup_{\tau \in \Upsilon}|\Gamma^s_{n}(k,\tau) - \mathbb{E}\Gamma^s_{n}(k,\tau)|> \eps)  \notag \\
\leq & 3 \max_{1 \leq k \leq n}\mathbb{P}(\sup_{\tau \in \Upsilon}|\Gamma^s_{n}(k,\tau) - \mathbb{E}\Gamma^s_{n}(k,\tau)|> \eps/3) \notag \\
\leq & 9 \mathbb{P}(\sup_{\tau \in \Upsilon}|\Gamma^s_{n}(n,\tau) - \mathbb{E}\Gamma^s_{n}(n,\tau)|> \eps/30) \notag \\
\leq & \frac{270 \mathbb{E}\sup_{\tau \in \Upsilon}|\Gamma^s_{n}(n,\tau) - \mathbb{E}\Gamma^s_{n}(n,\tau)|}{\eps} = o(1).
\end{align}
The first inequality holds due to Lemma \ref{lem:S} with $S_k = \Gamma^s_{n}(k,\tau) - \mathbb{E}\Gamma^s_{n}(k,\tau)$ and $||S_k|| = \sup_{\tau \in \Upsilon} |\Gamma^s_{n}(k,\tau) - \mathbb{E}\Gamma^s_{n}(k,\tau)|.$ The second inequality holds due to \citet[Theorem 1]{M93}. To derive the last equality of \eqref{eq:Gamma}, we consider the class of functions
\begin{align*}
\mathcal{F} = \left\{\int_0^{\frac{u_0+u_1}{\sqrt{n}}}\biggl(1\{Y_i^s(1) - q_1(\tau)\leq v\} - 1\{Y_i^s(1) - q_1(\tau)\leq 0\}  \biggr)dv: \tau\in \Upsilon \right\}
\end{align*}
with envelope $\frac{|u_0+u_1|}{\sqrt{n}}$ and 
\begin{align*}
\sup_{f \in \mathcal{F}}\mathbb{E}f^2 \leq \sup_{\tau \in \Upsilon}\mathbb{E} \left[\frac{u_0+u_1}{\sqrt{n}}1\left\{|Y_i^s(1) - q_1(\tau)|\leq \frac{u_0+u_1}{\sqrt{n}}\right\} \right]^2 \lesssim n^{-3/2}. 
\end{align*}
Note that $\mathcal{F}$ is a VC-class with a fixed VC index. Therefore, by \citet[Corollary 5.1]{CCK14}, 
\begin{align*}
\mathbb{E}\sup_{\tau \in \Upsilon}|\Gamma^s_{n}(n,\tau) - \mathbb{E}\Gamma^s_{n}(n,\tau)| = n||\mathbb{P}_n  - \mathbb{P}||_\mathcal{F} \lesssim n\left[\sqrt{\frac{\log(n)}{n^{5/2}}} + \frac{\log(n)}{n^{3/2}}\right] = o(1). 
\end{align*} 
Then, \eqref{eq:Gamma} implies that 
\begin{align*}
\sup_{\tau \in \Upsilon}\left|Q_{n,1}(u,\tau) - \sum_{s\in \mathcal{S}}\mathbb{E}\biggl[\Gamma^s_{n}(\lfloor n(N(s)/n + n_1(s)/n)\rfloor,\tau) - \Gamma^s_{n}(\lfloor n(N(s)/n)\rfloor,\tau) \biggr]\right| = o_p(1),
\end{align*}
where following the convention in the empirical process literature,
\begin{align*}
\mathbb{E}\biggl[\Gamma^s_{n}(\lfloor n(N(s)/n + n_1(s)/n)\rfloor,\tau) - \Gamma^s_{n}(\lfloor n(N(s)/n)\rfloor,\tau) \biggr]
\end{align*}
is interpreted as 
\begin{align*}
\mathbb{E}\biggl[\Gamma^s_{n}(\lfloor nt_2\rfloor,\tau) - \Gamma^s_{n}(\lfloor nt_1\rfloor,\tau) \biggr]_{t_2 = \frac{N(s)}{n},t_2 = \frac{N(s) + n_1(s)}{n}}.
\end{align*} 
In addition,  $N(s)/n \convP F(s) = F(S_i < s)$ and $n_1(s)/n \convP \pi p(s)$. Thus, uniformly over $\tau \in \Upsilon$, 
\begin{align*}
& \mathbb{E}\biggl[\Gamma^s_{n}(\lfloor n(N(s)/n + n_1(s)/n)\rfloor,\tau) - \Gamma^s_{n}(\lfloor n(N(s)/n)\rfloor,\tau) \biggr] \\
= & n_1(s) \int_0^{\frac{u_0 + u_1}{\sqrt{n}}}(F_1(q_1(\tau)+v|s) - F_1(q_1(\tau)|s))dv \\
\convP & \frac{\pi p(s) f_1(q_1(\tau)|s)(u_0+u_1)^2}{2},
\end{align*}
where $F_1(\cdot|s)$ and $f_1(\cdot|s)$ are the conditional CDF and PDF of $Y_1$ given $S=s$, respectively. Then, uniformly over $\tau \in \Upsilon$,
\begin{align*}
Q_{n,1}(u,\tau) \convP \sum_{s \in \mathcal{S}}\frac{\pi p(s) f_1(q_1(\tau)|s)(u_0+u_1)^2}{2} = \frac{\pi f_1(q_1(\tau))(u_0+u_1)^2}{2}.
\end{align*}
Similarly, we can show that, uniformly over $\tau \in \Upsilon$, 
\begin{align*}
Q_{n,0}(u,\tau) \convP \frac{(1-\pi) f_0(q_0(\tau))u_0^2}{2}, 
\end{align*}
and thus
\begin{align*}
Q_n(u,\tau) \convP  \frac{1}{2}u'Q(\tau)u,
\end{align*}
where 
\begin{align}
\label{eq:Qqr}
Q(\tau) = \begin{pmatrix}
\pi f_1(q_1(\tau)) + (1-\pi)f_0(q_0(\tau)) & \pi f_1(q_1(\tau)) \\
\pi f_1(q_1(\tau)) & \pi f_1(q_1(\tau))
\end{pmatrix}.
\end{align}
Then, 
\begin{align*}
\sup_{\tau \in \Upsilon}|L_n(u,\tau) - g_n(u,\tau)| = \sup_{\tau \in \Upsilon}|Q_n(u,\tau) - \frac{1}{2}u'Q(\tau)u| = o_p(1). 
\end{align*}
This concludes the first step. 

\textbf{Step 2}. Note that $\text{det}(Q(\tau)) = \pi(1-\pi)f_1(q_1(\tau))f_0(q_0(\tau))$, which is bounded and bounded away from zero. In addition, it can be shown that the two eigenvalues of $Q$ are nonnegative. This leads to the desired result.

\textbf{Step 3.} Let $e_1 = (1,1)^\prime$ and $e_0 = (1,0)^{\prime}$. Then, we have 
\begin{align*}
W_n(\tau) = &e_1 \sum_{s \in \mathcal{S}} \sum_{i=1}^n \frac{1}{\sqrt{n}}A_i 1\{S_i = s\}(\tau - 1\{Y_i(1) \leq q_1(\tau)\})\\
& + e_0\sum_{s \in \mathcal{S}} \sum_{i=1}^n \frac{1}{\sqrt{n}} (1-A_i) 1\{S_i = s\}(\tau - 1\{Y_i(0) \leq q_0(\tau)\}).
\end{align*}
Let $m_j(s,\tau) = \mathbb{E}(\tau - 1\{Y_i(j)\leq q_j(\tau)\}|S_i = s)$ and $\eta_{i,j}(s,\tau) = (\tau - 1\{Y_i(j) \leq q_j(\tau)\}) - m_j(s,\tau)$, $j = 0,1$. Then, 
\begin{align}
\label{eq:W}
W_n(\tau) = & \biggl[e_1 \sum_{s \in \mathcal{S}} \sum_{i=1}^n \frac{1}{\sqrt{n}}A_i 1\{S_i = s\}\eta_{i,1}(s,\tau) + e_0\sum_{s \in \mathcal{S}} \sum_{i=1}^n \frac{1}{\sqrt{n}} (1-A_i) 1\{S_i = s\}\eta_{i,0}(s,\tau)\biggr] \notag \\
& + \biggl[e_1 \sum_{s \in \mathcal{S}} \sum_{i=1}^n \frac{1}{\sqrt{n}}(A_i-\pi) 1\{S_i = s\}m_1(s,\tau) - e_0\sum_{s \in \mathcal{S}} \sum_{i=1}^n \frac{1}{\sqrt{n}} (A_i-\pi) 1\{S_i = s\}m_0(s,\tau)\biggr] \notag \\
& + \biggl[e_1 \sum_{s \in \mathcal{S}} \sum_{i=1}^n \frac{1}{\sqrt{n}}\pi 1\{S_i = s\}m_1(s,\tau) + e_0\sum_{s \in \mathcal{S}} \sum_{i=1}^n \frac{1}{\sqrt{n}} (1-\pi) 1\{S_i = s\}m_0(s,\tau) \biggr] \notag \\
\equiv &  W_{n,1}(\tau) + W_{n,2}(\tau) + W_{n,3}(\tau).
\end{align}
By Lemma \ref{lem:Q}, uniformly over $\tau \in \Upsilon$, 
\begin{align*}
(W_{n,1}(\tau), W_{n,2}(\tau), W_{n,3}(\tau)) \convD (\mathcal{B}_1(\tau),\mathcal{B}_2(\tau),\mathcal{B}_3(\tau)),
\end{align*}
where $(\mathcal{B}_1(\tau),\mathcal{B}_2(\tau),\mathcal{B}_3(\tau))$ are three independent two-dimensional Gaussian processes with covariance kernels $\Sigma_1(\tau_1,\tau_2)$, $\Sigma_2(\tau_1,\tau_2)$, and $\Sigma_3(\tau_1,\tau_2)$, respectively. Therefore, uniformly over $\tau \in \Upsilon$,
\begin{align*}
W_n(\tau) \convD \tilde{\mathcal{B}}(\tau),
\end{align*}
where $\tilde{\mathcal{B}}(\tau)$ is a two-dimensional Gaussian process with covariance kernel
\begin{align*}
\tilde{\Sigma}(\tau_1,\tau_2) = \sum_{j=1}^3\Sigma_j(\tau_1,\tau_2). 
\end{align*}
Consequently, 
\begin{align*}
\sqrt{n}(\hat{\beta}(\tau) - \beta(\tau)) \convD [Q(\tau)]^{-1}\tilde{\mathcal{B}}(\tau) \equiv \mathcal{B}(\tau),
\end{align*}
where $\mathcal{B}(\tau)$ is a two-dimensional Gaussian process with covariance kernel
\begin{align*}
\Sigma(\tau_1,\tau_2) = & [Q(\tau_1)]^{-1}\tilde{\Sigma}(\tau_1,\tau_2)  [Q(\tau_2)]^{-1} \\
= & \frac{1}{\pi f_1(q_1(\tau_1))f_1(q_1(\tau_2))}[\min(\tau_1,\tau_2) - \tau_1\tau_2 - \mathbb{E}m_1(S,\tau_1)m_1(S,\tau_2)]\begin{pmatrix}
0 & 0 \\
0 & 1
\end{pmatrix} 
\\
& + \frac{1}{(1-\pi) f_0(q_0(\tau_1))f_0(q_0(\tau_2))}[\min(\tau_1,\tau_2) - \tau_1\tau_2 - \mathbb{E}m_0(S,\tau_1)m_0(S,\tau_2)]\begin{pmatrix}
1 & -1 \\
-1 & 1
\end{pmatrix} \\
& + \sum_{s \in \mathcal{S}}p(s)\gamma(s)\biggl[\frac{m_1(s,\tau_1)m_1(s,\tau_2)}{\pi^2 f_1(q_1(\tau_1))f_1(q_1(\tau_2))}\begin{pmatrix}
0 & 0 \\
0& 1
\end{pmatrix} - \frac{m_1(s,\tau_1)m_0(s,\tau_2)}{\pi(1-\pi) f_1(q_1(\tau_1))f_0(q_0(\tau_2))}\begin{pmatrix}
0 & 0 \\
1 & -1
\end{pmatrix} \\
& - \frac{m_0(s,\tau_1)m_1(s,\tau_2)}{\pi(1-\pi) f_0(q_0(\tau_1))f_1(q_1(\tau_2))}\begin{pmatrix}
0 & 1 \\
0 & -1
\end{pmatrix} + \frac{m_0(s,\tau_1)m_0(s,\tau_2)}{(1-\pi)^2 f_0(q_0(\tau_1))f_0(q_0(\tau_2))}\begin{pmatrix}
1 & -1 \\
-1 & 1
\end{pmatrix} \biggr] \\
& + \frac{\mathbb{E}m_1(S,\tau_1)m_1(S,\tau_2)}{f_1(q_1(\tau_1))f_1(q_1(\tau_2))}\begin{pmatrix}
0 & 0 \\
0 & 1
\end{pmatrix} + \frac{\mathbb{E}m_1(S,\tau_1)m_0(S,\tau_2)}{f_1(q_1(\tau_1))f_0(q_0(\tau_2))}\begin{pmatrix}
0 & 0 \\
1 & -1
\end{pmatrix} \\
& + \frac{\mathbb{E}m_0(S,\tau_1)m_1(S,\tau_2)}{f_0(q_0(\tau_1))f_1(q_1(\tau_2))}\begin{pmatrix}
0 & 1 \\
0 & -1
\end{pmatrix}+\frac{\mathbb{E}m_0(S,\tau_1)m_0(S,\tau_2)}{f_0(q_0(\tau_1))f_0(q_0(\tau_2))}\begin{pmatrix}
1 & -1 \\
-1 & 1
\end{pmatrix}.
\end{align*}

Focusing on the $(2,2)$-element of $\Sigma(\tau_1,\tau_2)$, we can conclude that 
\begin{align*}
\sqrt{n}(\hat{\beta}_1(\tau) - q(\tau)) \convD \mathcal{B}_{sqr}(\tau),
\end{align*}
where the Gaussian process $\mathcal{B}_{sqr}(\tau)$ has a covariance kernel 
\begin{align*}
& \Sigma_{sqr}(\tau_1,\tau_2) \\
= & \frac{\min(\tau_1,\tau_2) - \tau_1\tau_2 - \mathbb{E}m_1(S,\tau_1)m_1(S,\tau_2)}{\pi f_1(q_1(\tau_1))f_1(q_1(\tau_2))} + \frac{\min(\tau_1,\tau_2) - \tau_1\tau_2 - \mathbb{E}m_0(S,\tau_1)m_0(S,\tau_2)}{(1-\pi) f_0(q_0(\tau_1))f_0(q_0(\tau_2))} \\
& + \mathbb{E}\gamma(S)\biggl[\frac{m_1(S,\tau_1)m_1(S,\tau_2)}{\pi^2 f_1(q_1(\tau_1))f_1(q_1(\tau_2))}+\frac{m_1(S,\tau_1)m_0(S,\tau_2)}{\pi(1-\pi) f_1(q_1(\tau_1))f_0(q_0(\tau_2))} \\
&+ \frac{m_0(S,\tau_1)m_1(S,\tau_2)}{\pi(1-\pi) f_0(q_0(\tau_1))f_1(q_1(\tau_2))} +  \frac{m_0(S,\tau_1)m_0(S,\tau_2)}{(1-\pi)^2 f_0(q_0(\tau_1))f_0(q_0(\tau_2))}\biggr] \\
& +\mathbb{E}\biggl[\frac{m_1(S,\tau_1)}{f_1(q_1(\tau_1))} - \frac{m_0(S,\tau_1)}{f_0(q_0(\tau_1))}\biggr] \biggl[\frac{m_1(S,\tau_2)}{f_1(q_1(\tau_2))} - \frac{m_0(S,\tau_2)}{f_0(q_0(\tau_2))}\biggr].
\end{align*}

\section{Proof of Theorem \ref{thm:ipw}}
\label{sec:ipw_pf}
By Knight's identity, we have 
\begin{align*}
\sqrt{n}(\hat{q}_1(\tau) - q_1(\tau)) = & \argmin_u L_n(u,\tau),
\end{align*}
where 
\begin{align*}
L_n(u,\tau) \equiv & \sum_{i =1}^n \frac{A_i}{\hat{\pi}(S_i)}\left[\rho_\tau(Y_i - q_1(\tau) - \frac{u}{\sqrt{n}}) - \rho_\tau(Y_i - q_1(\tau))\right] \\
= & - L_{1,n}(\tau)u + L_{2,n}(u,\tau),
\end{align*}
\begin{align*}
L_{1,n}(\tau) = \frac{1}{\sqrt{n}}\sum_{i=1}^n \frac{A_i}{\hat{\pi}(S_i)}(\tau - 1\{Y_i \leq q_1(\tau)\})
\end{align*}
and 
\begin{align*}
L_{2,n}(u,\tau) = \sum_{i=1}^n \frac{A_i}{\hat{\pi}(S_i)}\int_0^{\frac{u}{\sqrt{n}}}(1\{Y_i \leq q_1(\tau)+v\} - 1\{Y_i \leq q_1(\tau)\})dv.
\end{align*}
We aim to show that there exists 
\begin{align}
\label{eq:gipw}
g_{ipw,n}(u,\tau) = - W_{ipw,n}(\tau) u + \frac{1}{2}Q_{ipw}(\tau)u^2
\end{align}
such that (1) for each $u$, 
\begin{align*}
\sup_{\tau \in \Upsilon}|L_n(u,\tau) - g_{ipw,n}(u,\tau)|\convP 0;
\end{align*}
(2) $Q_{ipw}(\tau)$ is bounded and bounded away from zero uniformly over $\tau \in \Upsilon$. In addition, as a corollary of claim (3) below, $\sup_{\tau \in \Upsilon}|W_{ipw,1,n}(\tau)| = O_p(1)$. Therefore, by \citet[Theorme 2]{K09}, we have 
\begin{align*}
\sqrt{n}(\hat{q}_1(\tau) - q_1(\tau)) = Q_{ipw,1}^{-1}(\tau)W_{ipw,1,n}(\tau) + R_{ipw,1,n}(\tau),
\end{align*}
where $\sup_{\tau \in \Upsilon}|R_{ipw,1,n}(\tau)| = o_p(1)$. Similarly, we can show that 
\begin{align*}
\sqrt{n}(\hat{q}_0(\tau) - q_0(\tau)) = Q_{ipw,0}^{-1}(\tau)W_{ipw,0,n}(\tau) + R_{ipw,0,n}(\tau),
\end{align*} 
where $\sup_{\tau \in \Upsilon}|R_{ipw,0,n}(\tau)| = o_p(1)$. Then, 
\begin{align*}
\sqrt{n}(\hat{q}(\tau) - q(\tau)) =  Q_{ipw,1}^{-1}(\tau)W_{ipw,1,n}(\tau) - Q_{ipw,0}^{-1}(\tau)W_{ipw,0,n}(\tau) + R_{ipw,1,n}(\tau) - R_{ipw,0,n}(\tau).
\end{align*}
Last, we aim to show that, (3) uniformly over $\tau \in \Upsilon$, 
\begin{align*}
Q_{ipw,1}^{-1}(\tau)W_{ipw,1,n}(\tau) - Q_{ipw,0}^{-1}(\tau)W_{ipw,0,n}(\tau) \convD \mathcal{B}_{ipw}(\tau),
\end{align*}
where $\mathcal{B}_{ipw}(\tau)$ is a scalar Gaussian process with covariance kernel $\Sigma_{ipw}(\tau_1,\tau_2)$. We prove claims (1)--(3) in three steps. 

\textbf{Step 1.} For $L_{1,n}(\tau)$, we have 
\begin{align*}
L_{1,n}(\tau) = & \frac{1}{\sqrt{n}}\sum_{i=1}^n \sum_{s \in \mathcal{S}}\frac{A_i}{\pi}1\{S_i=s\}(\tau - 1\{Y_i(1) \leq q_1(\tau)\}) \\
& - \sum_{i=1}^n \sum_{s \in \mathcal{S}}\frac{A_i1\{S_i = s\}(\hat{\pi}(s) - \pi)}{\sqrt{n}\hat{\pi}(s)\pi}(\tau - 1\{Y_i(1) \leq q_1(\tau)\}) \\
= & \frac{1}{\sqrt{n}}\sum_{i=1}^n \sum_{s \in \mathcal{S}}\frac{A_i}{\pi}1\{S_i=s\}(\tau - 1\{Y_i(1) \leq q_1(\tau)\}) \\
& - \sum_{i=1}^n \sum_{s \in \mathcal{S}}\frac{A_i1\{S_i = s\}D_n(s)}{n(s)\sqrt{n}\hat{\pi}(s)\pi}\eta_{i,1}(s,\tau) -  \sum_{s \in \mathcal{S}} \frac{D_n(s)m_1(s,\tau)}{n(s)\sqrt{n}\hat{\pi}(s)\pi} D_n(s) - \sum_{s \in \mathcal{S}} \frac{D_n(s)m_1(s,\tau)}{\sqrt{n}\hat{\pi}(s)} \\ 
= & \sum_{s \in \mathcal{S}} \frac{1}{\sqrt{n}}\sum_{i =1}^n \frac{A_i 1\{S_i = s\}}{\pi}\eta_{i,1}(s,\tau) + \sum_{s \in \mathcal{S}} \frac{D_n(s)}{\sqrt{n}\pi}m_{1}(s,\tau) + \sum_{i=1}^n \frac{m_1(S_i,\tau)}{\sqrt{n}} \\
& - \sum_{i=1}^n \sum_{s \in \mathcal{S}}\frac{A_i1\{S_i = s\}D_n(s)}{n(s)\sqrt{n}\hat{\pi}(s)\pi}\eta_{i,1}(s,\tau) -  \sum_{s \in \mathcal{S}} \frac{D_n(s)m_1(s,\tau)}{n(s)\sqrt{n}\hat{\pi}(s)\pi} D_n(s) - \sum_{s \in \mathcal{S}} \frac{D_n(s)m_1(s,\tau)}{\sqrt{n}\hat{\pi}(s)} \\
= & W_{ipw,1,n}(\tau)+ R_{ipw}(\tau),
\end{align*}
where 
\begin{align}
\label{eq:wipw}
W_{ipw,1,n}(\tau) =  \sum_{s \in \mathcal{S}} \frac{1}{\sqrt{n}}\sum_{i =1}^n \frac{A_i 1\{S_i = s\}}{\pi}\eta_{i,1}(s,\tau)  + \sum_{i=1}^n \frac{m_1(S_i,\tau)}{\sqrt{n}} 
\end{align}
and 
\begin{align*}
& R_{ipw}(\tau) \\
= & - \sum_{i=1}^n \sum_{s \in \mathcal{S}}\frac{A_i1\{S_i = s\}D_n(s)}{n(s)\sqrt{n}\hat{\pi}(s)\pi}\eta_{i,1}(s,\tau) -  \sum_{s \in \mathcal{S}} \frac{D_n(s)m_1(s,\tau)}{n(s)\sqrt{n}\hat{\pi}(s)\pi} D_n(s) + \sum_{s \in \mathcal{S}} \frac{D_n(s)m_1(s,\tau)}{\sqrt{n}}\left(\frac{1}{\pi} - \frac{1}{\hat{\pi}(s)}\right) \\
= & - \sum_{i=1}^n \sum_{s \in \mathcal{S}}\frac{A_i1\{S_i = s\}D_n(s)}{n(s)\sqrt{n}\hat{\pi}(s)\pi}\eta_{i,1}(s,\tau),
\end{align*}
where we use the fact that $\hat{\pi}(s) - \pi = \frac{D_n(s)}{n(s)}$. By the same argument in Claim (1) of the proof of Lemma \ref{lem:Q}, we have, for every $s\in\mathcal{S}$,
\begin{align}
\label{eq:eta1}
\sup_{\tau \in \Upsilon}\left|\frac{1}{\sqrt{n}}\sum_{i =1}^n A_i 1\{S_i=s\}\eta_{i,1}(s,\tau)\right| \stackrel{d}{=} \sup_{\tau \in \Upsilon}\left|\frac{1}{\sqrt{n}}\sum_{i =N(s)+1}^{N(s)+n(s)} \tilde{\eta}_{i,1}(s,\tau)\right| = O_p(1),
\end{align}
where $\tilde{\eta}_{i,j}(s,\tau) = \tau - 1\{Y_i^s(j) \leq q_j(\tau) \} - m_j(s,\tau)$, for $j = 0,1$, where $\{Y_i^s(0),Y_i^s(1)\}_{i \geq 1}$ are the same as defined in Step 1 in the proof of Theorem \ref{thm:qr}.

Because of \eqref{eq:eta1} and the fact that $\frac{D_n(s)}{n(s)} = o_p(1)$, we have 
\begin{align*}
\sup_{\tau \in \Upsilon}|R_{ipw}(\tau)| = o_p(1).
\end{align*}
For $L_{2,n}(u,\tau)$, we have 
\begin{align*}
L_{2,n}(u,\tau) = & \sum_{s \in \mathcal{S}} \frac{1}{\hat{\pi}(s)}\sum_{i=N(s)+1}^{N(s)+n_1(s)} \int_0^{\frac{u}{\sqrt{n}}}(1\{Y_i^s(1) \leq q_1(\tau)+v\} - 1\{Y_i^s(1) \leq q_1(\tau)+v\})dv \\
= & \sum_{s \in \mathcal{S}} \frac{1}{\hat{\pi}(s)} \left[\Gamma_n^s(N(s)+n_1(s),\tau) - \Gamma_n^s(N(s),\tau) \right],
\end{align*}
where 
\begin{align*}
\Gamma_n^s(k,\tau) = \sum_{i=1}^k \int_0^{\frac{u}{\sqrt{n}}}(1\{Y_i^s(1) \leq q_1(\tau)+v\} - 1\{Y_i^s(1) \leq q_1(\tau)+v\})dv.
\end{align*}
By the same argument in \eqref{eq:Gamma}, we can show that 
\begin{align*}
\sup_{t \in (0,1),\tau \in \Upsilon}|\Gamma_n^s(\lfloor nt \rfloor,\tau) - \mathbb{E}\Gamma_n^s(\lfloor nt \rfloor,\tau)| = o_p(1).
\end{align*}
In addition,  
\begin{align*}
\mathbb{E}\Gamma_n^s(N(s)+n_1(s),\tau) - \mathbb{E}\Gamma_n^s(N(s),\tau) \convP \frac{\pi p(s)f_1(q_1(\tau)|s)u^2}{2}.
\end{align*}
Therefore, 
\begin{align*}
\sup_{\tau \in \Upsilon}\left|L_{2,n}(u,\tau) - \frac{f_1(q_1(\tau))u^2}{2}\right| = o_p(1),
\end{align*}
where we use the fact that $\hat{\pi}(s)-\pi = \frac{D_{n}(s)}{n(s)} = o_p(1)$ and 
\begin{align*}
\sum_{s \in \mathcal{S}}p(s)f_1(q_1(\tau)|s) = f_1(q_1(\tau)). 
\end{align*}
This establishes \eqref{eq:gipw} with $Q_{ipw,1}(\tau) = f_1(q_1(\tau))$ and $W_{ipw,n}(\tau)$ defined in \eqref{eq:wipw}.

\textbf{Step 2.} Statement (2) holds by Assumption \ref{ass:tau}. 

\textbf{Step 3.} By a similar argument in Step 1, we have 
\begin{align*}
W_{ipw,0,n}(\tau) = \sum_{s \in \mathcal{S}} \frac{1}{\sqrt{n}}\sum_{i =1}^n \frac{(1-A_i) 1\{S_i = s\}}{1-\pi}\eta_{i,0}(s,\tau)  + \sum_{i=1}^n \frac{m_0(S_i,\tau)}{\sqrt{n}} 
\end{align*}
and $Q_{ipw,0}(\tau) = f_0(q_0(\tau))$. Therefore, 
\begin{align}
\label{eq:wipw2}
\sqrt{n}(\hat{q} - q) = & \frac{1}{\sqrt{n}}\sum_{s \in \mathcal{S}} \sum_{i =1}^n \left[\frac{A_i 1\{S_i = s\}\eta_{i,1}(s,\tau)}{\pi f_1(q_1(\tau))} - \frac{(1-A_i)1\{S_i = s\}\eta_{i,0}(s,\tau)}{(1-\pi) f_0(q_0(\tau))}\right] \notag \\
& + \left[\frac{1}{\sqrt{n}}\sum_{i=1}^n\left(\frac{m_1(S_i,\tau)}{f_1(q_1(\tau))} -  \frac{m_0(S_i,\tau)}{f_0(q_0(\tau))}\right)\right] + R_{ipw,n}(\tau) \notag \\
= & \mathcal{W}_{n,1}(\tau)+\mathcal{W}_{n,2}(\tau)+ R_{ipw,n}(\tau)
\end{align}
where $\sup_{\tau \in \Upsilon}|R_{ipw,n}(\tau)| = o_p(1)$. Last, Lemma \ref{lem:Qipw} establishes that 
\begin{align*}
(\mathcal{W}_{n,1}(\tau),\mathcal{W}_{n,2}(\tau)) \convD (\mathcal{B}_{ipw,1}(\tau),\mathcal{B}_{ipw,2}(\tau)),
\end{align*}
where $(\mathcal{B}_{ipw,1}(\tau),\mathcal{B}_{ipw,2}(\tau))$ are two mutually independent scalar Gaussian processes with covariance kernels 
\begin{align*}
\Sigma_{ipw,1}(\tau_1,\tau_2) = \frac{\min(\tau_1,\tau_2) - \tau_1\tau_2 - \mathbb{E}m_1(S,\tau_1)m_1(S,\tau_2)  }{\pi f_1(q_1(\tau_1))f_1(q_1(\tau_2))} + \frac{\min(\tau_1,\tau_2) - \tau_1\tau_2 - \mathbb{E}m_0(S,\tau_1)m_0(S,\tau_2)}{(1-\pi)f_0(q_0(\tau_1))f_0(q_0(\tau_2)) }
\end{align*}
and 
\begin{align*}
\Sigma_{ipw,2}(\tau_1,\tau_2) = \mathbb{E}\left(\frac{m_1(S,\tau_1)}{f_1(q_1(\tau_1))} - \frac{m_0(S,\tau_1)}{f_0(q_0(\tau_1))}\right)\left(\frac{m_1(S,\tau_2)}{f_1(q_1(\tau_2))} - \frac{m_0(S,\tau_2)}{f_0(q_0(\tau_2))}\right),
\end{align*}
respectively. In particular, the asymptotic variance for $\hat{q}$ is 
\begin{align*}
\zeta_Y^2(\pi,\tau) + \zeta_S^2(\tau), 
\end{align*}
where $\zeta_Y^2(\pi,\tau)$ and $\zeta_S^2(\tau)$ are the same as those in the proof of Theorem \ref{thm:qr}.

\section{Proof of Theorem \ref{thm:b}}
\label{sec:b}
First, we consider the weighted bootstrap for the SQR estimator. Note that 
\begin{align*}
\sqrt{n}(\hat{\beta}^w(\tau) - \beta(\tau)) = \argmin_{u}L_n^w(u,\tau),
\end{align*}
where 
\begin{align*}
L_n^w(u,\tau) = \sum_{i=1}^n \xi_i\left[\rho_\tau(Y_i - \dot{A}_i'\beta(\tau) - \dot{A}_i'u/\sqrt{n}) - \rho_\tau(Y_i - \dot{A}_i'\beta(\tau))\right]. 
\end{align*}
Similar to the proof of Theorem \ref{thm:qr}, we can show that 
\begin{align*}
\sup_{ \tau \in \Upsilon}|L_n^w(u,\tau) - g_n^w(u,\tau)| \rightarrow 0, 
\end{align*}
where 
\begin{align*}
g_n^w(u,\tau) = - u'W_n^w(\tau) + \frac{1}{2}u'Q(\tau)u, 
\end{align*}
\begin{align*}
W_n^w(\tau) = \sum_{i=1}^n \frac{\xi_{i}}{\sqrt{n}}\dot{A}_i\left(\tau - 1\{Y_i \leq \dot{A}'\beta(\tau)\}\right),
\end{align*}
and $Q(\tau)$ is defined in \eqref{eq:Qqr}. Therefore, by \citet[Theorem 2]{K09}, we have 
\begin{align*}
\sqrt{n}(\hat{\beta}^w(\tau) - \beta(\tau)) = [Q(\tau)]^{-1}W_n^w(\tau) + r_n^w(\tau),
\end{align*}
where $\sup_{\tau \in \Upsilon}||r_n^w(\tau)|| = o_p(1)$. By Theorem \ref{thm:qr}, 
\begin{align*}
\sqrt{n}(\hat{\beta}^w(\tau) - \hat{\beta}(\tau)) =  [Q(\tau)]^{-1}\sum_{i=1}^n \frac{\xi_{i}-1}{\sqrt{n}}\dot{A}_i\left(\tau - 1\{Y_i \leq \dot{A}'\beta(\tau)\}\right) + o_p(1),
\end{align*}
where the $o_p(1)$ term holds uniformly over $\tau \in \Upsilon$. In addition, Lemma \ref{lem:Qwqr} shows that, conditionally on data, the second element of 
$[Q(\tau)]^{-1}\sum_{i=1}^n \frac{\xi_{i}-1}{\sqrt{n}}\dot{A}_i\left(\tau - 1\{Y_i \leq \dot{A}'\beta(\tau)\}\right)$ converges to $\tilde{\mathcal{B}}_{sqr}(\tau)$ uniformly over $\tau \in \Upsilon$. This leads to the desired result for the weighted bootstrap simple quantile regression estimator.

Next, we turn to the IPW estimator. Denote $\hat{q}_j^w(\tau)$, $j=0,1$ the weighted bootstrap counterpart of $\hat{q}_j(\tau)$. We have
\begin{align*}
\sqrt{n}(\hat{q}_1^w(\tau) - q_1(\tau)) = \argmin_u L_n^w(u,\tau),
\end{align*}
where 
\begin{align*}
L_n^w(u,\tau) = & \sum_{i =1}^n \frac{\xi_iA_i}{\hat{\pi}^w(S_i)}\left[\rho_\tau(Y_i - q_1(\tau) - \frac{u}{\sqrt{n}}) - \rho_\tau(Y_i -q_1(\tau))\right] \\
\equiv & - L_{1,n}^w(\tau) u + L_{2,n}^w(u,\tau),
\end{align*}
where 
\begin{align*}
L_{1,n}^w(\tau) = \frac{1}{\sqrt{n}}\sum_{i =1}^n \frac{\xi_iA_i}{\hat{\pi}^w(S_i)}(\tau - 1\{Y_i \leq q_1(\tau)\})
\end{align*}
and
\begin{align*}
L_{2,n}^w(\tau) = \sum_{i=1}^n \frac{\xi_iA_i}{\hat{\pi}^w(S_i)}\int_0^{\frac{u}{\sqrt{n}}}(1\{Y_i \leq q_1(\tau)+v\} - 1\{Y_i \leq q_1(\tau)\})dv.
\end{align*}
Recall 
$$D_n^w(s) = \sum_{i =1}^n \xi_i(A_i - \pi)1\{S_i = s\}, \quad n^w(s) = \sum_{i =1}^n \xi_i1\{S_i = s\},$$ 
and 
$$\hat{\pi}^w(s) = \frac{\sum_{i=1}^n \xi_iA_i1\{S_i=s\}}{n^w(s)} = \pi + \frac{D_n^w(s)}{n^w(s)}.$$ 
Then, for $L_{1,n}^w(\tau)$, we have 
\begin{align*}
L_{1,n}^w(\tau) = & \frac{1}{\sqrt{n}}\sum_{i=1}^n \sum_{s \in \mathcal{S}}\frac{\xi_iA_i}{\pi}1\{S_i=s\}(\tau - 1\{Y_i(1) \leq q_1(\tau)\}) \\
& - \sum_{i=1}^n \sum_{s \in \mathcal{S}}\frac{\xi_iA_i1\{S_i = s\}(\hat{\pi}^w(s) - \pi)}{\sqrt{n}\hat{\pi}^w(s)\pi}(\tau - 1\{Y_i(1) \leq q_1(\tau)\}) \\
= & \frac{1}{\sqrt{n}}\sum_{i=1}^n \sum_{s \in \mathcal{S}}\frac{\xi_iA_i}{\pi}1\{S_i=s\}(\tau - 1\{Y_i(1) \leq q_1(\tau)\}) \\
& - \sum_{i=1}^n \sum_{s \in \mathcal{S}}\frac{\xi_iA_i1\{S_i = s\}D^w_n(s)}{n^w(s)\sqrt{n}\hat{\pi}(s)\pi}\eta_{i,1}(s,\tau) -  \sum_{s \in \mathcal{S}} \frac{D^w_n(s)m_1(s,\tau)}{n^w(s)\sqrt{n}\hat{\pi}^w(s)\pi} D^w_n(s) - \sum_{s \in \mathcal{S}} \frac{D^w_n(s)m_1(s,\tau)}{\sqrt{n}\hat{\pi}^w(s)} \\ 
= & \sum_{s \in \mathcal{S}} \frac{1}{\sqrt{n}}\sum_{i =1}^n \frac{\xi_iA_i 1\{S_i = s\}}{\pi}\eta_{i,1}(s,\tau) + \sum_{s \in \mathcal{S}} \frac{D^w_n(s)}{\sqrt{n}\pi}m_{1}(s,\tau) + \sum_{i=1}^n \frac{\xi_im_1(S_i,\tau)}{\sqrt{n}} \\
& - \sum_{s \in \mathcal{S}}D^w_n(s)\sum_{i=1}^n \frac{\xi_iA_i1\{S_i = s\}}{n^w(s)\sqrt{n}\hat{\pi}^w(s)\pi}\eta_{i,1}(s,\tau) -  \sum_{s \in \mathcal{S}} \frac{D^w_n(s)m_1(s,\tau)}{n^w(s)\sqrt{n}\hat{\pi}^w(s)\pi} D^w_n(s) - \sum_{s \in \mathcal{S}} \frac{D^w_n(s)m_1(s,\tau)}{\sqrt{n}\hat{\pi}^w(s)} \\
= & W_{ipw,1,n}^w(\tau)+ R_{ipw}^w(\tau),
\end{align*}
where 
\begin{align}
\label{eq:wipwc}
W_{ipw,1,n}^w(\tau) =  \sum_{s \in \mathcal{S}} \frac{1}{\sqrt{n}}\sum_{i =1}^n \frac{\xi_iA_i 1\{S_i = s\}}{\pi}\eta_{i,1}(s,\tau)  + \sum_{i=1}^n \frac{\xi_im_1(S_i,\tau)}{\sqrt{n}} 
\end{align}
and 
\begin{align*}
& R_{ipw}^w(\tau) \\
= & - \sum_{s \in \mathcal{S}}D^w_n(s)\sum_{i=1}^n \frac{\xi_iA_i1\{S_i = s\}}{n^w(s)\sqrt{n}\hat{\pi}^w(s)\pi}\eta_{i,1}(s,\tau) -  \sum_{s \in \mathcal{S}} \frac{D^w_n(s)m_1(s,\tau)}{n^w(s)\sqrt{n}\hat{\pi}^w(s)\pi} D^w_n(s) + \sum_{s \in \mathcal{S}} \frac{D^w_n(s)m_1(s,\tau)}{\sqrt{n}}(\frac{1}{\pi} - \frac{1}{\hat{\pi}^w(s)}) \\
= & - \sum_{s \in \mathcal{S}}D^w_n(s)\sum_{i=1}^n \frac{\xi_iA_i1\{S_i = s\}}{n^w(s)\sqrt{n}\hat{\pi}^w(s)\pi}\eta_{i,1}(s,\tau).
\end{align*}
In the following, we aim to show $D_n^w(s)/n^w(s) = o_p(1)$ and 
\begin{align*}
\sup_{\tau \in \Upsilon, s \in \mathcal{S}}|\sum_{i=1}^{n}\xi_iA_i1\{S_i =s\}\eta_{i,1}(s,\tau)| = O_p(\sqrt{n}). 
\end{align*}
For the first claim, we note that $n^w(s)/n(s) \convP 1$ and $D_n(s)/n(s) \convP 0$. Therefore, we only need to show 
\begin{align*}
\frac{D_n^w(s) - D_n(s)}{n(s)} = \sum_{i=1}^n\frac{(\xi_i-1)(A_i-\pi)1\{S_i=s\}}{n(s)} \convP 0. 
\end{align*}
As $n(s) \rightarrow \infty$ a.s., given data, 

\begin{align*}
\frac{1}{n(s)}\sum_{i=1}^n(A_i - \pi)^21\{S_i = s\} = & \frac{1}{n}\sum_{i=1}^n\left(A_i - \pi - 2\pi(A_i - \pi) + \pi - \pi^2\right)1\{S_i=s\} \\
= & \frac{D_n(s) - 2\pi D_n(s)}{n(s)} + \pi(1-\pi) \convP \pi (1-\pi).  
\end{align*}
Then, by the Lindeberg CLT, conditionally on data, 
\begin{align*}
\frac{1}{\sqrt{n(s)}} \sum_{i =1}^n (\xi_i - 1)(A_i - \pi)1\{S_i = s\} \convD N(0,\pi(1-\pi)) = O_p(1),  
\end{align*}
and thus 
\begin{align*}
\frac{D_n^w(s) - D_n(s)}{n(s)} = O_p(n^{-1/2}(s)) = o_p(1). 
\end{align*}
This leads to the first claim. For the second claim, we note that 
\begin{align*}
\sum_{i=1}^{n}\xi_iA_i1\{S_i =s\}\eta_{i,1}(s,\tau) = \sum_{i = N(s)+1}^{N(s)+n_1(s)}\xi_i \tilde{\eta}_{i,1}(s,\tau). 
\end{align*}
We can show the RHS of the above display is $O_p(\sqrt{n})$ for all $s \in \mathcal{S}$ following the same argument used in Claim (1) of the proof of Lemma \ref{lem:Q}. Given these two claims and by noticing that 
\begin{align*}
\hat{\pi}^w(s) - \pi = \frac{D^w_n(s)}{n^w(s)} = o_p(1),
\end{align*}
we have 
\begin{align*}
\sup_{\tau \in \Upsilon}|R_{ipw}^w(\tau)| = o_p(1). 
\end{align*}

Similar to the argument used to derive the limit of $L_{2,n}(\tau)$ in the proof of Theorem \ref{thm:ipw}, we can show that 
\begin{align*}
\sup_{ \tau \in \Upsilon}|L_{2,n}^w(u,\tau) - \frac{f_1(q_1(\tau))u^2}{2}| = o_p(1). 
\end{align*}
Therefore, 
\begin{align*}
\sqrt{n}(\hat{q}_1^w(\tau) - q_1(\tau)) = \frac{W_{ipw,1,n}^w(\tau)}{f_1(q_1(\tau))} + R_1^w(\tau),
\end{align*}
where $\sup_{ \tau \in \Upsilon}|R_1^w(\tau)| = o_p(1)$. Similarly, 
\begin{align*}
\sqrt{n}(\hat{q}_0^w(\tau) - q_0(\tau)) = \frac{W_{ipw,0,n}^w(\tau)}{f_0(q_0(\tau))} + R_0^w(\tau),
\end{align*}
where 
$$W_{ipw,0,n}^w(\tau) = \sum_{s \in \mathcal{S}} \frac{1}{\sqrt{n}}\sum_{i =1}^n \frac{\xi_i(1-A_i)1\{S_i = s\}}{1-\pi}\eta_{i,0}(s,\tau) + \sum_{i =1}^n\frac{\xi_im_0(S_i,\tau)}{\sqrt{n}}$$
and $\sup_{ \tau \in \Upsilon}|R_0^w(\tau)| = o_p(1)$. Therefore, 
\begin{align*}
& \sqrt{n}(\hat{q}^w(\tau) - \hat{q}(\tau)) \\
= & \sum_{s \in \mathcal{S}}\frac{1}{\sqrt{n}}\sum_{i =1}^n (\xi_i-1)\biggl\{\frac{A_i1\{S_i = s\}\eta_{i,1}(s,\tau)}{\pi f_1(q_1(\tau))}-\frac{(1-A_i)1\{S_i = s\}\eta_{i,0}(s,\tau)}{(1-\pi)f_0(q_0(\tau))}  \\
& + \left[\frac{m_1(s,\tau)}{f_1(q_1(\tau))} - \frac{m_0(s,\tau)}{f_0(q_0(\tau))}\right]1\{S_i=s\}\biggr\} + o_p(1),
\end{align*}
where the $o_p(1)$ term holds uniformly over $\tau \in \Upsilon.$ In order to show the conditional weak convergence, we only need to show the conditionally stochastic equicontinuity and finite-dimensional convergence. The former can be shown in the same manner as Lemma \ref{lem:Qwqr}. For the latter, we note that 
\begin{align*}
& \frac{1}{n}\sum_{s \in \mathcal{S}}\sum_{i =1}^n \biggl\{\frac{A_i1\{S_i = s\}\eta_{i,1}(s,\tau)}{\pi f_1(q_1(\tau))}-\frac{(1-A_i)1\{S_i = s\}\eta_{i,0}(s,\tau)}{(1-\pi)f_0(q_0(\tau))} + \left[\frac{m_1(s,\tau)}{f_1(q_1(\tau))} - \frac{m_0(s,\tau)}{f_0(q_0(\tau))}\right]1\{S_i=s\}\biggr\}^2 \\
= & \sum_{s \in \mathcal{S}}\frac{1}{n}\sum_{i =1}^n \biggl\{\frac{(1-A_i)1\{S_i = s\}\eta_{i,0}(s,\tau)}{(1-\pi)f_0(q_0(\tau))}\biggr\}^2 + \sum_{s \in \mathcal{S}}\frac{1}{n}\sum_{i =1}^n \biggl\{\frac{A_i1\{S_i = s\}\eta_{i,1}(s,\tau)}{\pi f_1(q_1(\tau))}\biggr\}^2 \\
& + \sum_{s \in \mathcal{S}}\frac{1}{n}\sum_{i =1}^n  \biggl\{\left[\frac{m_1(s,\tau)}{f_1(q_1(\tau))} - \frac{m_0(s,\tau)}{f_0(q_0(\tau))}\right]1\{S_i=s\}\biggr\}^2 \\
& + \sum_{s \in \mathcal{S}}\frac{2}{n}\sum_{i =1}^n \biggl\{\frac{A_i1\{S_i = s\}\eta_{i,1}(s,\tau)}{\pi f_1(q_1(\tau))}\biggr\}\left[\frac{m_1(s,\tau)}{f_1(q_1(\tau))} - \frac{m_0(s,\tau)}{f_0(q_0(\tau))}\right] \\
& - \sum_{s \in \mathcal{S}}\frac{2}{n}\sum_{i =1}^n\biggl\{\frac{(1-A_i)1\{S_i = s\}\eta_{i,0}(s,\tau)}{(1-\pi)f_0(q_0(\tau))}\biggr\} \left[\frac{m_1(s,\tau)}{f_1(q_1(\tau))} - \frac{m_0(s,\tau)}{f_0(q_0(\tau))}\right] \\
\convP & \zeta_Y^2(\pi,\tau) + \zeta_S^2(\tau). 
\end{align*}
Note that the RHS of the above display is the same as the asymptotic variance of the original estimator $\hat{q}(\tau)$. By the CLT conditional on data, we can establish the one-dimensional weak convergence. Then, by the Cram\'{e}r-Wold Theorem, we can extend such result to any finite dimension. This concludes the proof. 

\section{Proof of Theorem \ref{thm:cab}}
\label{sec:cab}
It suffices to prove the theorem with 
\begin{align*}
\tilde{q}(\tau)  =  &  q(\tau)+\biggl[\sum_{s \in \mathcal{S}} \sum_{i = \lfloor n F(s) \rfloor+1}^{\lfloor n (F(s)+\pi p(s)) \rfloor}\frac{\tilde{\eta}_{i,1}(s,\tau)}{n\pi f_1(q_1(\tau))} -  \sum_{s \in \mathcal{S}} \sum_{i = \lfloor n (F(s)+\pi p(s)) \rfloor+1}^{\lfloor n (F(s)+ p(s)) \rfloor}\frac{\tilde{\eta}_{i,0}(s,\tau)}{n(1-\pi)f_0(q_0(\tau))}\biggr] \\
& + \biggl[\sum_{i=1}^n \frac{1}{n} \left(\frac{m_1(S_i,\tau) }{f_1(q_1(\tau))} - \frac{m_0(S_i,\tau) }{f_0(q_0(\tau))}\right) \biggr],
\end{align*}
as we have shown in Theorem \ref{thm:ipw} that 
\begin{align*}
\sup_{ \tau \in \Upsilon}|\tilde{q}(\tau) - \hat{q}(\tau)| = o_p(1/\sqrt{n}). 
\end{align*} 

We first consider the SQR estimator. Note that 
\begin{align*}
\sqrt{n}(\hat{\beta}^*(\tau) - \beta(\tau)) = \argmin_{u} L^*_n(u,\tau),
\end{align*}
where $L^*_n(u,\tau) = \sum_{i=1}^n \left[\rho_\tau(Y_i^* - \dot{A}^{*'}_i\beta(\tau) - \dot{A}^{*'}_iu/\sqrt{n}) - \rho_\tau(Y_i^* - \dot{A}^{*'}_i\beta(\tau))\right]$. Then, $\hat{\beta}_1^*(\tau)$, the bootstrap counterpart of the SQR estimator, is just the second element of $\hat{\beta}^*(\tau)$. Similar to the proof of Theorem \ref{thm:qr},
\begin{align*}
L^*_n(u,\tau) = -u'W_n^*(\tau) + Q_n^*(u,\tau),
\end{align*}
where 
\begin{align*}
W_n^*(\tau) = \sum_{i =1}^n \frac{1}{\sqrt{n}}\dot{A}_i^*(\tau - 1\{Y_i^* \leq \dot{A}_i^{*'}\beta(\tau) \})
\end{align*}
and 
\begin{align}
\label{eq:Qstar}
Q_n^*(u,\tau) = & \sum_{i=1}^{n}\int_0^{\frac{\dot{A}_i^{*\prime} u}{\sqrt{n}}}\left(1\{Y_i^* -  \dot{A}_i^{*\prime}\beta(\tau)\leq v\} - 1\{Y_i^* -  \dot{A}_i^{*\prime}\beta(\tau)\leq 0\} \right)dv \notag \\
= & \sum_{i=1}^n A^*_i\int_0^{\frac{u_0+u_1}{\sqrt{n}}}\left(1\{Y_i^*(1) - q_1(\tau)\leq v\} - 1\{Y_i^*(1) - q_1(\tau)\leq 0\}  \right)dv \notag \\
& + \sum_{i=1}^n (1-A^*_i)\int_0^{\frac{u_0}{\sqrt{n}}}\left(1\{Y_i^*(0) - q_0(\tau)\leq v\} - 1\{Y_i^*(0) - q_0(\tau)\leq 0\}  \right)dv \notag \\
\equiv & Q^*_{n,1}(u,\tau) + Q^*_{n,0}(u,\tau).
\end{align} 

Define $\eta_{i,j}^*(s,\tau) = (\tau - 1\{Y_i^*(j)\leq q_j(\tau) \}) - m_j(s,\tau)$ and $\tilde{\eta}_{i,j}(s,\tau) = \tau - 1\{Y^s_i(j) \leq q_j(\tau)\} - m_j(s,\tau)$, $j = 0,1$, where $Y^s_i(j)$ is defined in the proof of Theorem \ref{thm:qr}. Then, we have 
\begin{align*}
W_n^*(\tau) = &e_1 \sum_{s \in \mathcal{S}} \sum_{i=1}^n \frac{1}{\sqrt{n}}A^*_i 1\{S^*_i = s\}(\tau - 1\{Y^*_i(1) \leq q_1(\tau)\})\\
& + e_0\sum_{s \in \mathcal{S}} \sum_{i=1}^n \frac{1}{\sqrt{n}} (1-A^*_i) 1\{S^*_i = s\}(\tau - 1\{Y^*_i(0) \leq q_0(\tau)\}) \\
= & \biggl[e_1 \sum_{s \in \mathcal{S}} \sum_{i=1}^n \frac{1}{\sqrt{n}}A^*_i 1\{S^*_i = s\}\eta^*_{i,1}(s,\tau) + e_0\sum_{s \in \mathcal{S}} \sum_{i=1}^n \frac{1}{\sqrt{n}} (1-A^*_i) 1\{S^*_i = s\}\eta^*_{i,0}(s,\tau)\biggr] \notag \\
& + \biggl[e_1 \sum_{s \in \mathcal{S}} \sum_{i=1}^n \frac{1}{\sqrt{n}}(A^*_i-\pi) 1\{S^*_i = s\}m_1(s,\tau) - e_0\sum_{s \in \mathcal{S}} \sum_{i=1}^n \frac{1}{\sqrt{n}} (A_i^*-\pi) 1\{S_i^* = s\}m_0(s,\tau)\biggr] \notag \\
& + \biggl[e_1 \sum_{s \in \mathcal{S}} \sum_{i=1}^n \frac{1}{\sqrt{n}}\pi 1\{S^*_i = s\}m_1(s,\tau) + e_0\sum_{s \in \mathcal{S}} \sum_{i=1}^n \frac{1}{\sqrt{n}} (1-\pi) 1\{S^*_i = s\}m_0(s,\tau) \biggr] \notag \\
\equiv &  W^*_{n,1}(\tau) + W^*_{n,2}(\tau) + W^*_{n,3}(\tau).
\end{align*}

By Lemma \ref{lem:Rsfestar}, there exists a sequence of independent Poisson(1) random variables $\{\xi_i^s\}_{i \geq 1,s\in \mathcal{S}}$ such that $\{\xi_i^s\}_{i \geq 1,s\in \mathcal{S}} \indep \{A_i^*,S_i^*,Y_i,A_i,S_i\}_{i\geq 1}$,
\begin{align*}
\sum_{i =1}^nA_i^*1\{S_i^*=s\}\eta_{i,1}^*(s,\tau) = \sum_{i = N(s)+1}^{N(s)+n_1(s)}\xi_i^s\tilde{\eta}_{i,1}(s,\tau) + R^*_1(s,\tau),
\end{align*}
and
\begin{align*}
\sum_{i =1}^n(1-A_i^*)1\{S_i^*=s\}\eta_{i,1}^*(s,\tau) = \sum_{i = N(s)+n_1(s)+1}^{N(s)+n(s)}\xi_i^s\tilde{\eta}_{i,0}(s,\tau) + R^*_0(s,\tau),
\end{align*}
where $\sup_{\tau \in \Upsilon}(|R^*_1(s,\tau)| + |R^*_0(s,\tau)|) = o_p(\sqrt{n(s)}) = o_p(\sqrt{n})$ for all $s \in \mathcal{S}$. Therefore, 
\begin{align*}
(W^*_{n,1}(\tau),W^*_{n,2}(\tau),W^*_{n,3}(\tau)) \stackrel{d}{=} (\tilde{W}^*_{n,1}(\tau) + R(\tau),W^*_{n,2}(\tau), W^*_{n,3}(\tau))
\end{align*}
where $\sup_{\tau \in \Upsilon}||R(\tau)|| = o_p(1)$ and 
\begin{align*}
\tilde{W}^*_{n,1}(\tau) =  e_1 \sum_{s \in \mathcal{S}} \sum_{i = N(s)+1}^{N(s)+n_1(s)}\frac{\xi_i^s}{\sqrt{n}}\tilde{\eta}_{i,1}(s,\tau) + e_0\sum_{s \in \mathcal{S}} \sum_{i = N(s)+n_1(s)+1}^{N(s)+n(s)}\frac{\xi_i^s}{\sqrt{n}}\tilde{\eta}_{i,0}(s,\tau)
\end{align*}

In addition, following the same argument in the proof of Lemma \ref{lem:Q}, we can further show that 
\begin{align*}
\tilde{W}^*_{n,1}(\tau) = W^{*\ast}_{n,1}(\tau) + R^\ast_n(\tau),
\end{align*}
where $\sup_{\tau \in \Upsilon}||R^\ast_n(\tau)||= o_p(1)$ and 
\begin{align*}
W^{*\ast}_{n,1}(\tau) = e_1 \sum_{s \in \mathcal{S}} \sum_{i = \lfloor n F(s) \rfloor+1}^{\lfloor n (F(s)+\pi p(s)) \rfloor}\frac{\xi_i^s}{\sqrt{n}}\tilde{\eta}_{i,1}(s,\tau) + e_0\sum_{s \in \mathcal{S}} \sum_{i = \lfloor n (F(s)+\pi p(s)) \rfloor+1}^{\lfloor n (F(s)+ p(s)) \rfloor}\frac{\xi_i^s}{\sqrt{n}}\tilde{\eta}_{i,0}(s,\tau).
\end{align*}
By construction, $W^{*\ast}_{n,1}(\tau) \indep (W^{*}_{n,2}(\tau),W^{*}_{n,3}(\tau))$. Also note that $\{S_i^*\}_{i=1}^n$ are the nonparametric bootstrap draws based on the empirical CDF of $\{S_i\}_{i=1}^n$. Then, by \citet[Section 3.6]{VW96}, there exists a sequence of independent Poisson(1) random variables $\{\tilde{\xi}_i\}_{i \geq 1}$ that is independent of data, $\{A_i^*\}$ and $\{\xi^s_i\}_{i \geq 1,s\in \mathcal{S}}$ such that 
\begin{align*}
\sup_{\tau \in \Upsilon}||W_{n,3}^*(\tau) - W_{n,3}^{**}(\tau)|| = o_p(1),
\end{align*}
where 
\begin{align*}
W_{n,3}^{**}(\tau) = e_1 \sum_{s \in \mathcal{S}} \sum_{i=1}^n \frac{\tilde{\xi}_i}{\sqrt{n}}\pi 1\{S_i = s\}m_1(s,\tau) + e_0\sum_{s \in \mathcal{S}} \sum_{i=1}^n \frac{\tilde{\xi}_i}{\sqrt{n}} (1-\pi) 1\{S_i = s\}m_0(s,\tau) 
\end{align*}
By Lemma \ref{lem:Qstar}, 
\begin{align*}
Q_n^*(u,\tau) \convP \frac{1}{2}u'Q(\tau)u,
\end{align*}
where $Q(\tau)$ is defined in \eqref{eq:Qqr}. Then, by the same argument in the proof of Theorem \ref{thm:qr}, we have 
\begin{align*}
\sqrt{n}(\hat{\beta}^*(\tau) - \beta(\tau)) = Q^{-1}(\tau)(W^{*\ast}_{n,1}(\tau)+W^{*}_{n,2}(\tau)+W^{**}_{n,3}(\tau)) + R^*(\tau),
\end{align*}
where $\sup_{\tau \in \Upsilon}||R^*(\tau)|| = o_p(1)$. Focusing on the second element of $\hat{\beta}^*(\tau)$, we have 
\begin{align*}
\sqrt{n}(\hat{\beta}^*_1(\tau) - q(\tau)) = & \biggl[\sum_{s \in \mathcal{S}} \sum_{i = \lfloor n F(s) \rfloor+1}^{\lfloor n (F(s)+\pi p(s)) \rfloor}\frac{\xi_i^s\tilde{\eta}_{i,1}(s,\tau)}{\sqrt{n}\pi f_1(q_1(\tau))} -  \sum_{s \in \mathcal{S}} \sum_{i = \lfloor n (F(s)+\pi p(s)) \rfloor+1}^{\lfloor n (F(s)+ p(s)) \rfloor}\frac{\xi_i^s\tilde{\eta}_{i,0}(s,\tau)}{\sqrt{n}(1-\pi)f_0(q_0(\tau))}\biggr] \\
& + \biggl[ \sum_{s \in \mathcal{S}} \frac{D_n^*(s)}{\sqrt{n}}\left(\frac{m_1(s,\tau)}{\pi f_1(q_1(\tau))} + \frac{m_0(s,\tau)}{\pi f_0(q_0(\tau))}\right) \biggr] \\
& + \biggl[\sum_{i=1}^n \frac{\tilde{\xi}_i}{\sqrt{n}} \left(\frac{m_1(S_i,\tau) }{f_1(q_1(\tau))} - \frac{m_0(S_i,\tau) }{f_0(q_0(\tau))}\right) \biggr] + R_1^*(\tau),
\end{align*}
where $\sup_{ \tau \in \Upsilon}|R_1^*(\tau)| = o_p(1)$. In addition, by definition, we have 
\begin{align*}
\sqrt{n}(\tilde{q}(\tau) - q(\tau)) =  & \biggl[\sum_{s \in \mathcal{S}} \sum_{i = \lfloor n F(s) \rfloor+1}^{\lfloor n (F(s)+\pi p(s)) \rfloor}\frac{\tilde{\eta}_{i,1}(s,\tau)}{\sqrt{n}\pi f_1(q_1(\tau))} -  \sum_{s \in \mathcal{S}} \sum_{i = \lfloor n (F(s)+\pi p(s)) \rfloor+1}^{\lfloor n (F(s)+ p(s)) \rfloor}\frac{\tilde{\eta}_{i,0}(s,\tau)}{\sqrt{n}(1-\pi)f_0(q_0(\tau))}\biggr] \\
& + \biggl[\sum_{i=1}^n \frac{1}{\sqrt{n}} \left(\frac{m_1(S_i,\tau) }{f_1(q_1(\tau))} - \frac{m_0(S_i,\tau) }{f_0(q_0(\tau))}\right) \biggr].
\end{align*}
By taking difference of the two displays above, we have 
\begin{align}
\label{eq:betastar}
\sqrt{n}(\hat{\beta}^*_1(\tau) - \tilde{q}(\tau)) = & \biggl[\sum_{s \in \mathcal{S}} \sum_{i = \lfloor n F(s) \rfloor+1}^{\lfloor n (F(s)+\pi p(s)) \rfloor}\frac{(\xi_i^s-1)\tilde{\eta}_{i,1}(s,\tau)}{\sqrt{n}\pi f_1(q_1(\tau))} -  \sum_{s \in \mathcal{S}} \sum_{i = \lfloor n (F(s)+\pi p(s)) \rfloor+1}^{\lfloor n (F(s)+ p(s)) \rfloor}\frac{(\xi_i^s-1)\tilde{\eta}_{i,0}(s,\tau)}{\sqrt{n}(1-\pi)f_0(q_0(\tau))}\biggr] \notag \\
& + \biggl[ \sum_{s \in \mathcal{S}} \frac{D_n^*(s)}{\sqrt{n}}\left(\frac{m_1(s,\tau)}{\pi f_1(q_1(\tau))} + \frac{m_0(s,\tau)}{\pi f_0(q_0(\tau))}\right) \biggr] \notag \\
& + \biggl[\sum_{i=1}^n \frac{\tilde{\xi}_i-1}{\sqrt{n}} \left(\frac{m_1(S_i,\tau) }{f_1(q_1(\tau))} - \frac{m_0(S_i,\tau) }{f_0(q_0(\tau))}\right) \biggr] + R_1^*(\tau).
\end{align}
Note that, conditionally on data, the first and third brackets on the RHS of the above display converge to Gaussian processes with covariance kernels 
\begin{align*}
\frac{\min(\tau_1,\tau_2) - \tau_1\tau_2 - \mathbb{E}m_1(S,\tau_1)m_1(S,\tau_2)}{\pi f_1(q_1(\tau_1))f_1(q_1(\tau_2))} + \frac{\min(\tau_1,\tau_2) - \tau_1\tau_2 - \mathbb{E}m_0(S,\tau_1)m_0(S,\tau_2)}{(1-\pi) f_0(q_0(\tau_1))f_0(q_0(\tau_2))}
\end{align*}
and
\begin{align*}
\mathbb{E}\biggl[\frac{m_1(S,\tau_1)}{f_1(q_1(\tau_1))} - \frac{m_0(S,\tau_1)}{f_0(q_0(\tau_1))}\biggr] \biggl[\frac{m_1(S,\tau_2)}{f_1(q_1(\tau_2))} - \frac{m_0(S,\tau_2)}{f_0(q_0(\tau_2))}\biggr], 
\end{align*}
uniformly over $\tau \in \Upsilon$, respectively. In addition, by Assumption \ref{ass:bassignment}.1, conditionally data (and thus $\{S_i\}_{i=1}^n$), the second bracket on the RHS of \eqref{eq:betastar} converges to a Gaussian process with a covariance kernel 
\begin{align*}
\mathbb{E}\gamma(S)\biggl[\frac{m_1(S,\tau_1)m_1(S,\tau_2)}{\pi^2 f_1(q_1(\tau_1))f_1(q_1(\tau_2))}+\frac{m_1(S,\tau_1)m_0(S,\tau_2)}{\pi(1-\pi) f_1(q_1(\tau_1))f_0(q_0(\tau_2))}\biggr],
\end{align*}
uniformly over $\tau \in \Upsilon$. Furthermore, we notice that these three Gaussian processes are independent. Therefore, we have, conditionally on data and uniformly over $\tau \in \Upsilon$, 
\begin{align*}
\sqrt{n}(\hat{\beta}^*_1(\tau) - \tilde{q}(\tau)) \convD \mathcal{B}_{sqr}(\tau),
\end{align*}
where $\mathcal{B}_{sqr}(\tau)$ is defined in Theorem \ref{thm:qr}. This leads to the desired result for the simple quantile regression estimator. 

Next, we briefly describe the derivation for the IPW estimator. Following the proof of Theorem \ref{thm:ipw},  we have 
\begin{align*}
\sqrt{n}(\hat{q}^*_1(\tau) - q_1(\tau)) = & \argmin_u L^*_n(u,\tau),
\end{align*}
where 
\begin{align*}
L^*_n(u,\tau) \equiv & \sum_{i =1}^n \frac{A^*_i}{\hat{\pi}^*(S^*_i)}\left[\rho_\tau(Y^*_i - q_1(\tau) - \frac{u}{\sqrt{n}}) - \rho_\tau(Y^*_i - q_1(\tau))\right] \\
= & - L^*_{1,n}(\tau)u + L^*_{2,n}(u,\tau),
\end{align*}
and $\hat{\pi}^*(s) = \frac{n_1^*(s)}{n^*(s)}$. Then, we have 
\begin{align*}
L^*_{1,n}(\tau) = W_{ipw,1,n}^*(\tau) + R_{ipw,1}^*(\tau),
\end{align*}
where 
\begin{align*}
W_{ipw,1,n}^*(\tau) = \sum_{s\in \mathcal{S}}\frac{1}{\sqrt{n}}\sum_{i=1}^n \frac{A_i^*1\{S_i^*=s\}\eta_{i,1}^*(s,\tau)}{\pi} + \sum_{i=1}^n \frac{m_1(S_i^*,\tau)}{\sqrt{n}},
\end{align*} 
and 
\begin{align*}
R_{ipw,1}^*(\tau) = - \sum_{i=1}^n \sum_{s \in \mathcal{S}}\frac{A_i^* 1\{S_i^*=s\}D_n^*(s)}{n^*(s)\sqrt{n}\hat{\pi}^*(s)\pi}\eta_{i,1}^*(s,\tau). 
\end{align*}
By Lemma \ref{lem:Rsfestar}, $\sup_{ \tau \in \Upsilon}|R_{ipw,1}^*(\tau)| = o_p(1)$. In addition, same as above, we can show that 
\begin{align*}
\sup_{ \tau \in \Upsilon}|W_{ipw,1,n}^*(\tau) - W_{ipw,1,n}^{**}(\tau)| = o_p(1),
\end{align*}
where 
\begin{align*}
W_{ipw,1,n}^{**}(\tau) =  & \sum_{s \in \mathcal{S}} \sum_{i = \lfloor n F(s) \rfloor+1}^{\lfloor n (F(s)+\pi p(s)) \rfloor}\frac{\xi_i^s\tilde{\eta}_{i,1}(s,\tau)}{\sqrt{n}\pi} + \sum_{i=1}^n \frac{\tilde{\xi}_im_1(S_i,\tau) }{\sqrt{n}}.  
\end{align*}
Similar to Lemma \ref{lem:Qstar}, we can show that, uniformly over $\tau \in \Upsilon$, 
\begin{align*}
L_{2,n}^*(\tau) \convP \frac{f_1(q_1(\tau))u^2}{2}.
\end{align*}
Therefore, 
\begin{align*}
\sqrt{n}(\hat{q}^*_1(\tau) - q_1(\tau)) = \frac{W_{ipw,1,n}^{**}(\tau)}{f_1(q_1(\tau))} + R_{ipw,1}^{**}(\tau),
\end{align*}
where $\sup_{ \tau \in \Upsilon}|R_{ipw,1}^{**}(\tau)| = o_p(1)$. Similarly, we can show 
\begin{align*}
\sqrt{n}(\hat{q}^*_0(\tau) - q_0(\tau)) = \frac{W_{ipw,0,n}^{**}(\tau)}{f_0(q_0(\tau))} + R_{ipw,0}^{**}(\tau),
\end{align*}
where $\sup_{ \tau \in \Upsilon}|R_{ipw,0}^{**}(\tau)| = o_p(1)$ and 
\begin{align*}
W_{ipw,0,n}^{**}(\tau) =  & \sum_{s \in \mathcal{S}} \sum_{i = \lfloor n (F(s) \pi p(s))\rfloor+1}^{\lfloor n (F(s)+p(s)) \rfloor}\frac{\xi_i^s\tilde{\eta}_{i,0}(s,\tau)}{\sqrt{n}\pi} + \sum_{i=1}^n \frac{\tilde{\xi}_im_0(S_i,\tau) }{\sqrt{n}}.  
\end{align*}
Therefore, 
\begin{align*}
\sqrt{n}(\hat{q}^*(\tau) - \tilde{q}(\tau)) = & \biggl[\sum_{s \in \mathcal{S}} \sum_{i = \lfloor n F(s) \rfloor+1}^{\lfloor n (F(s)+\pi p(s)) \rfloor}\frac{(\xi_i^s-1)\tilde{\eta}_{i,1}(s,\tau)}{\sqrt{n}\pi f_1(q_1(\tau))} -  \sum_{s \in \mathcal{S}} \sum_{i = \lfloor n (F(s)+\pi p(s)) \rfloor+1}^{\lfloor n (F(s)+ p(s)) \rfloor}\frac{(\xi_i^s-1)\tilde{\eta}_{i,0}(s,\tau)}{\sqrt{n}(1-\pi)f_0(q_0(\tau))}\biggr] \notag \\
& + \biggl[\sum_{i=1}^n \frac{\tilde{\xi}_i-1}{\sqrt{n}} \left(\frac{m_1(S_i,\tau) }{f_1(q_1(\tau))} - \frac{m_0(S_i,\tau) }{f_0(q_0(\tau))}\right) \biggr] + R_{ipw}^*(\tau),
\end{align*}
where $\sup_{ \tau \in \Upsilon}|R_{ipw}^{*}(\tau)| = o_p(1)$. Last, we can show that, conditionally on data and uniformly over $\tau \in \Upsilon$, the RHS of the above display weakly converges to the Gaussian process $\mathcal{B}_{ipw}(\tau)$, where $\mathcal{B}_{ipw}(\tau)$ is defined in Theorem \ref{thm:ipw}.

\section{Technical Lemmas}
\label{sec:lem}
\begin{lem}
	\label{lem:S}
	Let $S_k$ be the $k$-th partial sum of Banach space valued independent identically distributed random variables, then 
	\begin{align*}
	\mathbb{P}(\max_{1 \leq k \leq n}||S_k|| \geq \eps) \leq 3\max_{1 \leq k \leq n}\mathbb{P}(||S_k||\geq \eps/3).
	\end{align*}
\end{lem}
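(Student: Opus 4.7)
The plan is to combine an Ottaviani-style first-passage decomposition with an additional triangle inequality: the natural bound would put $\mathbb{P}(||S_n||\geq \eps/2)$ on the right-hand side, and the second triangle inequality is what converts that into the supremum $\max_{k}\mathbb{P}(||S_k||\geq \eps/3)$ appearing in the statement. Write $S_k = X_1+\cdots+X_k$ with the $X_i$ i.i.d., set $\tau = \inf\{k\geq 1 : ||S_k||\geq \eps\}$ (with $\inf\emptyset = \infty$), and let $A_k = \{\tau = k\}$. The events $\{A_k\}_{k=1}^n$ partition $\{\max_{1\leq k\leq n}||S_k||\geq \eps\}$, and each $A_k$ lies in $\sigma(X_1,\ldots,X_k)$, hence is independent of $S_n - S_k$.

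The crucial decomposition is the triangle inequality on $A_k$: since $\eps \leq ||S_k|| \leq ||S_n||+||S_n - S_k||$, either $||S_n||\geq \eps/3$ or $||S_n - S_k||\geq 2\eps/3$ must hold. This yields
\begin{equation*}
\mathbb{P}(A_k) \leq \mathbb{P}(A_k\cap\{||S_n||\geq \eps/3\}) + \mathbb{P}(A_k)\,\mathbb{P}(||S_n - S_k||\geq 2\eps/3),
\end{equation*}
where the second term factorizes by independence. To eliminate $||S_n - S_k||$ I would apply the reverse triangle inequality $||S_n - S_k||\leq ||S_n||+||S_k||$ to obtain $\mathbb{P}(||S_n - S_k||\geq 2\eps/3) \leq \mathbb{P}(||S_n||\geq \eps/3)+\mathbb{P}(||S_k||\geq \eps/3) \leq 2M$, where $M := \max_{1\leq k\leq n}\mathbb{P}(||S_k||\geq \eps/3)$.

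Summing over $k$ and using disjointness of the $A_k$ gives the self-bounding inequality
\begin{equation*}
\mathbb{P}\!\left(\max_{1\leq k\leq n}||S_k||\geq \eps\right) \leq M + 2M\,\mathbb{P}\!\left(\max_{1\leq k\leq n}||S_k||\geq \eps\right).
\end{equation*}
Finishing is routine case analysis: if $M\geq 1/3$ then $3M\geq 1$ and the claim holds trivially since the left-hand side is a probability; if $M<1/3$ then $1-2M>0$ and rearrangement gives the bound $M/(1-2M)\leq 3M$, which is exactly the lemma. I anticipate no serious obstacle; the only point requiring care is the independence of $A_k$ from $S_n - S_k$, which relies on $A_k$ being determined by $X_1,\ldots,X_k$ and $S_n - S_k$ by $X_{k+1},\ldots,X_n$, so the full i.i.d.\ assumption (and not merely identical distribution) is used here.
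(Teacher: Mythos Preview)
Your proof is correct and follows essentially the same Ottaviani-type argument as the paper: the same first-passage events $A_k$, the same triangle-inequality split into $\{||S_n||\geq \eps/3\}$ and $\{||S_n-S_k||\geq 2\eps/3\}$, the same use of independence, and the same union bound $\mathbb{P}(||S_n-S_k||\geq 2\eps/3)\leq \mathbb{P}(||S_n||\geq\eps/3)+\mathbb{P}(||S_k||\geq\eps/3)$. The only cosmetic difference is the order of the case split: the paper branches on whether $\max_k\mathbb{P}(||S_n-S_k||\geq 2\eps/3)\leq 2/3$ and invokes the union bound only in the complementary case, whereas you apply the union bound first to get $\mathbb{P}(||S_n-S_k||\geq 2\eps/3)\leq 2M$ uniformly and then branch on $M\lessgtr 1/3$; your organization is arguably a touch cleaner, but the two arguments are the same in substance.
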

When $S_k$ takes values on $\Re$, Lemma \ref{lem:S} is \citet[Exercise 2.3]{PLS08}.

\begin{proof}
	First suppose $\max_k \mathbb{P}(||S_n - S_k||\geq 2\eps/3)\leq 2/3$. In addition, define 
	\begin{align*}
	A_k = \{ ||S_k|| \geq \eps, ||S_j||< \eps, 1\leq j < k\}. 
	\end{align*}
	
	Then, 
	\begin{align*}
	\mathbb{P}(\max_k ||S_k|| \geq \eps) \leq & \mathbb{P}(||S_n|| \geq \eps/3) +  \sum_{k=1}^n\mathbb{P}(||S_n|| \leq \eps/3,A_k) \\
	\leq & \mathbb{P}(||S_n|| \geq \eps/3) +  \sum_{k=1}^n\mathbb{P}(||S_n-S_k|| \geq 2\eps/3)\mathbb{P}(A_k) \\
	\leq & \mathbb{P}(||S_n|| \geq \eps/3) + \frac{2}{3}\mathbb{P}(\max_k ||S_k|| \geq \eps).
	\end{align*}
	This implies, 
	\begin{align*}
	\mathbb{P}(\max_k ||S_k|| \geq \eps) \leq 3\mathbb{P}(||S_n|| \geq \eps/3).
	\end{align*}
	On the other hand, if $\max_k \mathbb{P}(||S_n - S_k||\geq 2\eps/3) > 2/3$, then there exists $k_0$ such that $ \mathbb{P}(||S_n - S_{k_0}||\geq 2\eps/3) > 2/3$. Thus, 
	\begin{align*}
	\mathbb{P}(||S_n|| \geq \eps/3) + \mathbb{P}(||S_{k_0}|| \geq \eps/3) \geq 2/3. 
	\end{align*}
	This implies, 
	\begin{align*}
	3\max_{1 \leq k \leq n}\mathbb{P}(||S_k|| \geq \eps/3) \geq 3 \max(\mathbb{P}(||S_n|| \geq \eps/3), \mathbb{P}(||S_{k_0}|| \geq \eps/3)) \geq 1 \geq \mathbb{P}(\max_{1 \leq k \leq n}||S_k|| \geq \eps).
	\end{align*}
	This concludes the proof. 
\end{proof}

\begin{lem}
	\label{lem:Q}
	Let $W_{n,j}(\tau)$, $j = 1,2,3$ be defined as in \eqref{eq:W}. If Assumptions in Theorem \ref{thm:qr} hold, then uniformly over $\tau \in \Upsilon$, 
	\begin{align*}
	(W_{n,1}(\tau), W_{n,2}(\tau), W_{n,3}(\tau)) \convD (\mathcal{B}_1(\tau),\mathcal{B}_2(\tau),\mathcal{B}_3(\tau)),
	\end{align*}
	where $(\mathcal{B}_1(\tau),\mathcal{B}_2(\tau),\mathcal{B}_3(\tau))$ are three independent two-dimensional Gaussian processes with covariance kernels $\Sigma_1(\tau_1,\tau_2)$, $\Sigma_2(\tau_1,\tau_2)$, and $\Sigma_3(\tau_1,\tau_2)$, respectively. The expressions for the three kernels are derived in the proof below. 
\end{lem}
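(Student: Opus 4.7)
The plan exploits the nested measurability structure of the three processes: $W_{n,3}(\tau)$ is $\sigma(\{S_i\}_{i=1}^n)$-measurable; $W_{n,2}(\tau)=\sum_{s\in\mathcal{S}}(e_1 m_1(s,\tau)-e_0 m_0(s,\tau))D_n(s)/\sqrt{n}$ is $\sigma(\{S_i,A_i\}_{i=1}^n)$-measurable; and, conditional on $\sigma(\{S_i,A_i\}_{i=1}^n)$, $W_{n,1}(\tau)$ is a sum of independent mean-zero terms in the potential outcomes. I would establish joint weak convergence by (i) finite-dimensional convergence via iterated conditioning in characteristic functions, yielding three asymptotically independent Gaussian limits with deterministic covariances, and then (ii) uniform tightness of each component separately via empirical-process arguments.

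For the finite-dimensional step, fix a grid $\tau_1,\dots,\tau_k\in\Upsilon$ and vectors $c_1,c_2,c_3$ and evaluate $\mathbb{E}\exp\{i(c_1'\mathbf{W}_{n,1}+c_2'\mathbf{W}_{n,2}+c_3'\mathbf{W}_{n,3})\}$ by iterated expectation. Conditioning first on $\{S_i,A_i\}_{i=1}^n$, the re-ordering device from Step 1 of the proof of Theorem~\ref{thm:qr} identifies $W_{n,1}(\tau)$ in distribution with
\[
\frac{1}{\sqrt{n}}\Bigl[e_1\sum_{s}\sum_{i=N(s)+1}^{N(s)+n_1(s)}\tilde\eta_{i,1}(s,\tau)+e_0\sum_{s}\sum_{i=N(s)+n_1(s)+1}^{N(s)+n(s)}\tilde\eta_{i,0}(s,\tau)\Bigr],
\]
where $\tilde\eta_{i,j}(s,\cdot)$ is i.i.d.\ across $i$ and independent of $\{S_i,A_i\}$. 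A Lindeberg CLT, combined with $n_1(s)/n\convP \pi p(s)$ and $n(s)/n\convP p(s)$, gives that the conditional c.f.\ of $c_1'\mathbf{W}_{n,1}$ converges in probability to $\exp\{-\tfrac12 c_1'\Sigma_1 c_1\}$ with the \emph{deterministic} kernel $\Sigma_1(\tau_1,\tau_2)=\sum_s p(s)[\pi\, e_1 e_1'\, C_{1,s}(\tau_1,\tau_2)+(1-\pi)e_0 e_0'\, C_{0,s}(\tau_1,\tau_2)]$, where $C_{j,s}(\tau_1,\tau_2)=\mathrm{Cov}(\tilde\eta_{i,j}(s,\tau_1),\tilde\eta_{i,j}(s,\tau_2))$. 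Taking out this deterministic factor and conditioning further on $\{S_i\}_{i=1}^n$, Assumption~\ref{ass:assignment1}.3 delivers that the c.f.\ of $c_2'\mathbf{W}_{n,2}$ converges a.s.\ to $\exp\{-\tfrac12 c_2'\Sigma_2 c_2\}$ with $\Sigma_2(\tau_1,\tau_2)=\sum_s p(s)\gamma(s)(e_1 m_1(s,\tau_1)-e_0 m_0(s,\tau_1))(e_1 m_1(s,\tau_2)-e_0 m_0(s,\tau_2))'$; again deterministic. What remains is an ordinary i.i.d.\ CLT for $W_{n,3}(\tau)=n^{-1/2}\sum_i(e_1\pi m_1(S_i,\tau)+e_0(1-\pi)m_0(S_i,\tau))$ (mean zero since $\mathbb{E}m_j(S,\tau)=0$), producing $\Sigma_3(\tau_1,\tau_2)=\mathbb{E}(e_1\pi m_1(S,\tau_1)+e_0(1-\pi)m_0(S,\tau_1))(e_1\pi m_1(S,\tau_2)+e_0(1-\pi)m_0(S,\tau_2))'$. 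The tower property then yields the product Gaussian characteristic function, i.e.\ asymptotic independence.

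For uniform tightness, $W_{n,3}$ is a scaled sum of i.i.d.\ random vectors in the Lipschitz class $\{e_1\pi m_1(\cdot,\tau)+e_0(1-\pi)m_0(\cdot,\tau):\tau\in\Upsilon\}$ on the finite set $\mathcal{S}$, hence Donsker. $W_{n,2}$ is a finite linear combination of the $O_p(1)$ quantities $D_n(s)/\sqrt{n}$ with coefficients Lipschitz in $\tau$ (Assumption~\ref{ass:tau}.2), so equicontinuity is immediate. For $W_{n,1}$ the reordered sum is indexed by the VC class $\{1\{Y\leq q_j(\tau)\}:\tau\in\Upsilon\}$ with square-integrable envelope, and \citet[Corollary~5.1]{CCK14} (applied conditionally on $\{S_i,A_i\}$) yields the needed uniform stochastic equicontinuity. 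The main obstacle is the sequential conditioning step: to run the tower property cleanly one must verify that the conditional c.f.\ convergences hold in probability in a mode strong enough to pass under the outer expectation, and that both conditional limits have \emph{non-random} covariance — this non-randomness is precisely what forces the three limits to be independent rather than merely uncorrelated.
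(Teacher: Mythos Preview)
Your approach is correct and shares the paper's core insight---exploit the nested measurability structure so that each conditional limit has a \emph{deterministic} covariance, forcing asymptotic independence. The paper, however, executes the independence of $\mathcal{B}_1$ from $(\mathcal{B}_2,\mathcal{B}_3)$ by a different technical device: rather than iterating characteristic functions through the tower property, it first proves stochastic equicontinuity of the partial-sum process $\Gamma_{n,j}(s,t,\tau)=n^{-1/2}\sum_{i=1}^{\lfloor nt\rfloor}\tilde\eta_{i,j}(s,\tau)$ jointly in $(t,\tau)$, then uses this to replace the random index bounds $N(s),n_1(s),n(s)$ in the reordered $W_{n,1}$ by their deterministic probability limits $\lfloor nF(s)\rfloor,\lfloor n\pi p(s)\rfloor,\lfloor np(s)\rfloor$. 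The resulting surrogate $W_{n,1}^\star$ depends only on the $\{Y_i^s(j)\}$ sequences and is therefore \emph{literally} independent of $(W_{n,2},W_{n,3})$, so joint convergence reduces to marginal convergence plus independence. For $\mathcal{B}_2\indep\mathcal{B}_3$ the paper does essentially what you propose (conditioning on $\{S_i\}$).

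What each approach buys: your characteristic-function iteration is more direct for this single lemma and avoids the two-parameter equicontinuity work in $t$; the ``main obstacle'' you flag is handled by bounded convergence, since conditional characteristic functions are bounded by $1$. The paper's index-replacement device costs more upfront (the $(t,\tau)$-equicontinuity in Claim~(1) requires the Ottaviani-type maximal inequality of Lemma~\ref{lem:S} plus \citet{M93} and \citet{CCK14}), but it yields a reusable coupling $W_{n,1}^\star$ that is invoked again in the bootstrap proofs (Theorem~\ref{thm:cab} and Lemma~\ref{lem:Qsfestar}), where one needs an explicit independent representation rather than just a factored limiting characteristic function. Your tightness argument for $W_{n,1}$ via conditional CCK14 is fine---the summands form a VC class with bounded envelope and the conditional maximal-inequality constants do not depend on the realization of $(A_i,S_i)$ beyond the total sample size $n$---but note that the paper's $t$-equicontinuity gives you the same conclusion and is what makes the random-to-deterministic index swap rigorous.
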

\begin{proof}
	We follow the general argument in the proof of \citet[Lemma B.2]{BCS17}. We divide the proof into two steps. In the first step, we show that 
	\begin{align*}
	(W_{n,1}(\tau), W_{n,2}(\tau), W_{n,3}(\tau)) \stackrel{d}{=}( W^\star_{n,1}(\tau), W_{n,2}(\tau), W_{n,3}(\tau)) + o_p(1), 
	\end{align*}
	where the $o_p(1)$ term holds uniformly over $\tau \in \Upsilon$, $ W^\star_{n,1}(\tau) \indep (W_{n,2}(\tau),W_{n,3}(\tau))$, and, uniformly over $\tau \in \Upsilon$, 
	\begin{align*}
	W^\star_{n,1}(\tau) \convD \mathcal{B}_1(\tau).
	\end{align*}
	In the second step, we show that 
	\begin{align*}
	(W_{n,2}(\tau),W_{n,3}(\tau)) \convD (\mathcal{B}_2(\tau),\mathcal{B}_3(\tau))
	\end{align*}
	uniformly over $\tau \in \Upsilon$ and $\mathcal{B}_2(\tau) \indep \mathcal{B}_3(\tau)$. 
	
	\textbf{Step 1.} Let $\tilde{\eta}_{i,j}(s,\tau) = \tau - 1\{Y_i^s(j) \leq q_j(\tau) \} - m_j(s,\tau)$, for $j = 0,1$, where $\{Y_i^s(0),Y_i^s(1)\}_{i \geq 1}$ are the same as defined in Step 1 in the proof of Theorem \ref{thm:qr}. In addition, denote 
	\begin{align*}
	\tilde{W}_{n,1}(\tau) =  e_1 \sum_{s \in \mathcal{S}} \sum_{i=N(s) + 1}^{N(s) + n_1(s)} \frac{1}{\sqrt{n}}\tilde{\eta}_{i,1}(s,\tau) + e_0\sum_{s \in \mathcal{S}} \sum_{i=N(s) + n_1(s)+1}^{N(s) + n(s)} \frac{1}{\sqrt{n}}\tilde{\eta}_{i,0}(s,\tau).
	\end{align*}
	Then, we have 
	\begin{align*}
	\{W_{n,1}(\tau)|\{A_i,S_i\}_{i=1}^n\} \stackrel{d}{=} \{\tilde{W}_{n,1}(\tau)|\{A_i,S_i\}_{i=1}^n\}. 
	\end{align*}
	Because both $W_{n,2}(\tau)$ and $W_{n,3}(\tau)$ are only functions of $\{A_i,S_i\}_{i=1}^n$, we have 
	\begin{align*}
	(W_{n,1}(\tau),W_{n,2}(\tau),W_{n,3}(\tau)) \stackrel{d}{=} 
	(\tilde{W}_{n,1}(\tau),W_{n,2}(\tau),W_{n,3}(\tau)).
	\end{align*}
	Let 
	\begin{align*}
	W_{n,1}^\star(\tau) =  e_1 \sum_{s \in \mathcal{S}} \sum_{i=\lfloor nF(s) \rfloor + 1}^{\lfloor n(F(s)+\pi p(s)) \rfloor} \frac{1}{\sqrt{n}}\tilde{\eta}_{i,1}(s,\tau) + e_0\sum_{s \in \mathcal{S}} \sum_{i=\lfloor n(F(s)+\pi p(s)) \rfloor+1}^{\lfloor n(F(s)+ p(s)) \rfloor} \frac{1}{\sqrt{n}}\tilde{\eta}_{i,0}(s,\tau).
	\end{align*}
	Note that $W_{n,1}^\star(\tau)$ is a function of $(Y_i^s(1),Y_i^s(0))_{i \geq 1}$ only, which is independent of $\{A_i,S_i\}_{i=1}^n$ by construction. Therefore, $W_{n,1}^\star(\tau) \indep (W_{n,2}(\tau),W_{n,3}(\tau))$. 
	
	Furthermore, note that 
	\begin{align*}
	\frac{N(s)}{n} \convP F(s), \quad \frac{n_1(s)}{n} \convP \pi p(s), \quad \text{and} \quad \frac{n(s)}{n} \convP p(s).
	\end{align*}
	Denote $\Gamma_{n,j}(s,t,\tau) = \sum_{i =1}^{\lfloor nt \rfloor}\frac{1}{\sqrt{n}}\tilde{\eta}_{i,j}(s,\tau)$. In order to show $\sup_{\tau \in \Upsilon}|\tilde{W}_{n,1}(\tau) -  W^\star_{n,1}(\tau)| = o_p(1)$ and $ W^\star_{n,1}(\tau) \convD \mathcal{B}_1(\tau)$, it suffices to show that, (1) for $j = 0,1$ and $s \in \mathcal{S}$, the stochastic processes 
	\begin{align*}
	\{\Gamma_{n,j}(s,t,\tau): t\in (0,1), \tau \in \Upsilon \}
	\end{align*}
	in stochastically equicontinuous; and (2) $ W^\star_{n,1}(\tau)$ converges to $\mathcal{B}_1(\tau)$ in finite dimension. 
	
	\textbf{Claim (1).} We want to bound
	\begin{align*}
	\sup|\Gamma_{n,j}(s,t_2,\tau_2) - \Gamma_{n,j}(s,t_1,\tau_1)|,
	\end{align*}
	where supremum is taken over $0 < t_1<t_2<t_1+\eps < 1$ and $\tau_1<\tau_2<\tau_1+\eps$ such that $\tau_1,\tau_1+\eps \in \Upsilon.$ Note that,
	\begin{align}
	\label{eq:GW}
	& \sup|\Gamma_{n,j}(s,t_2,\tau_2) - \Gamma_{n,j}(s,t_1,\tau_1)| \notag \\
	\leq & \sup_{0 < t_1<t_2<t_1+\eps < 1,\tau \in \Upsilon}|\Gamma_{n,j}(s,t_2,\tau) - \Gamma_{n,j}(s,t_1,\tau)| +  \sup_{t \in (0,1),\tau_1,\tau_2 \in \Upsilon, \tau_1 < \tau_2 < \tau_1+\eps}|\Gamma_{n,j}(s,t,\tau_2) - \Gamma_{n,j}(s,t,\tau_1)|. 
	\end{align}
	Let $m = \lfloor nt_2 \rfloor - \lfloor nt_1 \rfloor \leq \lfloor n\eps \rfloor+1$. Then,  for an arbitrary $\delta>0$, by taking $\eps = \delta^4$,  we have
	\begin{align*}
	& \mathbb{P}(\sup_{0 < t_1<t_2<t_1+\eps < 1,\tau \in \Upsilon}|\Gamma_{n,j}(s,t_2,\tau) - \Gamma_{n,j}(s,t_1,\tau)| \geq \delta) \\
	= & \mathbb{P}(\sup_{0 < t_1<t_2<t_1+\eps < 1,\tau \in \Upsilon} |\sum_{i=\lfloor nt_1 \rfloor+1}^{i=\lfloor nt_2 \rfloor}\tilde{\eta}_{i,j}(s,\tau)| \geq \sqrt{n}\delta) \\
	= & \mathbb{P}(\sup_{0 < t \leq \eps,\tau \in \Upsilon} |\sum_{i=1}^{\lfloor nt \rfloor}\tilde{\eta}_{i,j}(s,\tau)| \geq \sqrt{n}\delta) \\
	\leq & \mathbb{P}(\max_{1 \leq k \leq \lfloor n\eps \rfloor}\sup_{\tau \in \Upsilon} |S_k(\tau)| \geq \sqrt{n}\delta) \\
	\leq & \frac{270\mathbb{E}\sup_{\tau \in \Upsilon} |\sum_{i=1}^{\lfloor n\eps \rfloor}\tilde{\eta}_{i,j}(s,\tau)|}{\sqrt{n}\delta} \\
	\lesssim & \frac{\sqrt{n\eps}}{\sqrt{n} \delta} \lesssim \delta,
	\end{align*}
	where in the first inequality, $S_k(\tau) = \sum_{i=1}^k\tilde{\eta}_{i,j}(s,\tau)$ and the second inequality holds due to the same argument in \eqref{eq:Gamma}. For the third inequality, denote 
	\begin{align*}
	\mathcal{F} = \{\tilde{\eta}_{i,j}(s,\tau): \tau \in \Upsilon \}
	\end{align*}
	with an envelope function $F = 2$. In addition, because $\mathcal{F}$ is a VC-class with a fixed VC-index, we have
	\begin{align*}
	J(1,\mathcal{F}) < \infty,
	\end{align*}
	where 
	\begin{align*}
	J(\delta,\mathcal{F}) = \sup_Q \int_0^\delta \sqrt{1 + \log N(\eps||F||_{Q,2},\mathcal{F},L_2(Q))}d\eps,
	\end{align*} 
	$N(\eps||F||_{Q,2},\mathcal{F},L_2(Q))$ is the covering number, and the supremum is taken over all discrete probability measures $Q$. Therefore, by \citet[Theorem 2.14.1]{VW96}
	\begin{align*}
	\frac{270\mathbb{E}\sup_{\tau \in \Upsilon} |\sum_{i=1}^{\lfloor n\eps \rfloor}\tilde{\eta}_{i,j}(s,\tau)|}{\sqrt{n}\delta} \lesssim \frac{\sqrt{\lfloor n\eps \rfloor}\left[ \mathbb{E}\sqrt{\lfloor n\eps \rfloor}||\mathbb{P}_{\lfloor n\eps \rfloor} - \mathbb{P}||_{\mathcal{F}}\right]}{\sqrt{n}\delta} \lesssim \frac{\sqrt{\lfloor n\eps \rfloor}J(1,\mathcal{F})}{\sqrt{n}\delta}.
	\end{align*}
	For the second term on the RHS of \eqref{eq:GW}, by taking $\eps = \delta^4$, we have 
	\begin{align*}
	& \mathbb{P}(\sup_{t \in (0,1),\tau_1,\tau_2 \in \Upsilon, \tau_1 < \tau_2 < \tau_1+\eps}|\Gamma_{n,j}(s,t,\tau_2) - \Gamma_{n,j}(s,t,\tau_1)| \geq \delta) \\
	= & \mathbb{P}(\max_{1 \leq k \leq n}\sup_{\tau_1,\tau_2 \in \Upsilon, \tau_1 < \tau_2 < \tau_1+\eps}|S_k(\tau_1,\tau_2)| \geq \sqrt{n}\delta) \\
	\leq & \frac{270\mathbb{E}\sup_{\tau_1,\tau_2 \in \Upsilon, \tau_1 < \tau_2 < \tau_1+\eps} |\sum_{i=1}^{n}(\tilde{\eta}_{i,j}(s,\tau_2) - \tilde{\eta}_{i,j}(s,\tau_1))|}{\sqrt{n}\delta} \lesssim \delta \sqrt{\log(\frac{C}{\delta^2})},
	\end{align*}
	where in the first equality, $S_k(\tau_1,\tau_2) = \sum_{i=1}^k (\tilde{\eta}_{i,j}(s,\tau_2) - \tilde{\eta}_{i,j}(s,\tau_1))$ and the first inequality follows the same argument as in \eqref{eq:Gamma}. For the last inequality, denote 
	\begin{align*}
	\mathcal{F} = \{\tilde{\eta}_{i,j}(s,\tau_2) - \tilde{\eta}_{i,j}(s,\tau_1): \tau_1,\tau_2 \in \Upsilon, \tau_1 < \tau_2 < \tau_1+\eps\}
	\end{align*}
	with a constant envelope function $F = C$ and 
	\begin{align*}
	\sigma^2 = \sup_{f \in \mathcal{F}}\mathbb{E}f^2 \in [c_1\eps,c_2\eps], 
	\end{align*}
	for some constant $0<c_1<c_2<\infty$. Last, $\mathcal{F}$ is nested by some VC class with a fixed VC index. Therefore, by \citet[Corollary 5.1]{CCK14}, 
	\begin{align*}
	& \frac{270\mathbb{E}\sup_{\tau_1,\tau_2 \in \Upsilon, \tau_1 < \tau_2 < \tau_1+\eps} |\sum_{i=1}^{n}(\tilde{\eta}_{i,j}(s,\tau_2) - \tilde{\eta}_{i,j}(s,\tau_1))|}{\sqrt{n}\delta} \\
	\lesssim & \frac{\sqrt{n}\mathbb{E}||\mathbb{P}_n - \mathbb{P}||_\mathcal{F}}{\delta} \lesssim \sqrt{\frac{\sigma^2 \log(\frac{C}{\sigma})}{ \delta^2}} + \frac{C\log(\frac{C}{\sigma})}{\sqrt{n}\delta} \lesssim \delta \sqrt{\log(\frac{C}{\delta^2})},
	\end{align*}
	where the last inequality holds by letting $n$ be sufficiently large. Note that $\delta \sqrt{\log(\frac{C}{\delta^2})} \rightarrow 0$ as $\delta \rightarrow 0$. This concludes the proof of Claim (1). 
	
	\textbf{Claim (2).} For a single $\tau$, by the triangular CLT, 
	\begin{align*}
	W_{n,1}^\star(\tau) \convD N(0,\Sigma_1(\tau)),
	\end{align*}
	where $\Sigma_1(\tau) = \pi[\tau(1-\tau) - \mathbb{E}m^2_1(S,\tau)] e_1  e_1^\prime +  (1-\pi)[\tau(1-\tau) -\mathbb{E}m^2_0(S,\tau)]e_0e_0^\prime.$ The convergence in finite dimension can be proved by using the Cram\'{e}r-Wold device. In particular, we can show that the covariance kernel is 
	\begin{align*}
	\Sigma_1(\tau_1,\tau_2) = & \pi[\min(\tau_1,\tau_2) - \tau_1\tau_2 - \mathbb{E} m_1(S,\tau_1)m_1(S,\tau_2)] e_1  e_1^\prime\\
	& + (1-\pi)[\min(\tau_1,\tau_2) - \tau_1\tau_2 - \mathbb{E}m_0(S,\tau_1)m_0(S,\tau_2)]e_0e_0^\prime.
	\end{align*}
	This concludes the proof of Claim (2), and thus leads to the desired results in Step 1.   
	
	\textbf{Step 2.}
	We first consider the marginal distributions for $W_{n,2}(\tau)$ and $W_{n,3}(\tau)$. For $W_{n,2}(\tau)$, by Assumption \ref{ass:assignment1} and the fact that $m_j(s,\tau)$ is continuous in $\tau \in \Upsilon$ $j=0,1$, we have, conditionally on $\{S_i\}_{i=1}^n$, 
	\begin{align}
	\label{eq:B2}
	W_{n,2}(\tau) =  \sum_{s \in \mathcal{S}}\frac{D_n(s)}{\sqrt{n}}[ e_1 m_1(s,\tau) - e_0m_0(s,\tau)] \convD \mathcal{B}_2(\tau),
	\end{align}
	where $\mathcal{B}_2(\tau)$ is a two-dimensional Gaussian process with covariance kernel 
	\begin{align*}
	& \Sigma_2(\tau_1,\tau_2) \\
	= &\sum_{s \in \mathcal{S}}p(s)\gamma(s)\biggl[ e_1 e_1^\prime m_1(s,\tau_1)m_1(s,\tau_2) -  e_1 e_0^\prime m_1(s,\tau_1)m_0(s,\tau_2) \\
	&- e_0  e_1^\prime m_0(s,\tau_1)m_1(s,\tau_2)+e_0e_0^\prime m_0(s,\tau_1)m_0(s,\tau_2) \biggr].
	\end{align*}
	
	For $W_{n,3}(\tau)$, by the fact that $m_j(s,\tau)$ is continuous in $\tau \in \Upsilon$ $j=0,1$, we have that, uniformly over $\tau \in \Upsilon$, 
	\begin{align}
	\label{eq:B3}
	W_{n,3}(\tau) =  \frac{1}{\sqrt{n}}\sum_{i=1}^n[ e_1 \pi m_1(S_i,\tau) + e_0 (1-\pi)m_0(S_i,\tau)] \convD \mathcal{B}_3(\tau),
	\end{align}
	where $\mathcal{B}_3(\tau)$ a two-dimensional Gaussian process with covariance kernel
	\begin{align*}
	\Sigma_3(\tau_1,\tau_2) = &  e_1  e_1^\prime \pi^2 \mathbb{E}m_1(S,\tau_1)m_1(S,\tau_2) +  e_1 e_0^\prime \pi(1-\pi) \mathbb{E}m_1(S,\tau_1)m_0(S,\tau_2) \\
	&+ e_0  e_1^\prime \pi(1-\pi) \mathbb{E}m_0(S,\tau_1)m_1(S,\tau_2) + e_0 e_0^\prime (1-\pi)^2 \mathbb{E}m_0(S,\tau_1)m_0(S,\tau_2).   
	\end{align*}
	In addition, we note that, for any fixed $\tau$, 
	\begin{align*}
	\mathbb{P}(W_{n,2}(\tau) \leq w_1,W_{n,3}(\tau) \leq w_2) = &  \mathbb{E}\mathbb{P}(W_{n,2}(\tau) \leq w_1|\{S_i\}_{i=1}^n)1\{W_{n,3}(\tau) \leq w_2 \} \\
	= & \mathbb{E}\mathbb{P}(N(0,\Sigma_2(\tau,\tau))\leq w_1)1\{W_{n,3}(\tau) \leq w_2 \} + o(1) \\
	= & \mathbb{P}(N(0,\Sigma_3(\tau,\tau))\leq w_2)\mathbb{P}(N(0,\Sigma_2(\tau,\tau))\leq w_1) + o(1).
	\end{align*}
	This implies $\mathcal{B}_2(\tau) \indep \mathcal{B}_3(\tau)$. By the Cram\'{e}r-Wold device, we can show that 
	\begin{align*}
	(W_{n,2}(\tau),W_{n,3}(\tau)) \convD (\mathcal{B}_2(\tau),\mathcal{B}_3(\tau))
	\end{align*}
	jointly in finite dimension, where by an abuse of notation, $\mathcal{B}_2(\tau)$ and $\mathcal{B}_3(\tau)$ have the same marginal distributions of those in \eqref{eq:B2} and \eqref{eq:B3}, respectively, and $\mathcal{B}_2(\tau) \indep \mathcal{B}_3(\tau)$. Last, because both $W_{n,2}(\tau)$ and $W_{n,3}(\tau)$ are tight marginally, so be the joint process $(W_{n,2}(\tau),W_{n,3}(\tau))$. This concludes the proof of Step 2, and thus the whole lemma. 
\end{proof}

\begin{lem}
	\label{lem:Qipw}
	Let $\mathcal{W}_{n,j}(\tau)$, $j = 1,2$ be defined as in \eqref{eq:wipw2}. If Assumptions in Theorem \ref{thm:ipw} hold, then uniformly over $\tau \in \Upsilon$, 
	\begin{align*}
	(\mathcal{W}_{n,1}(\tau),\mathcal{W}_{n,2}(\tau)) \convD (\mathcal{B}_{ipw,1}(\tau),\mathcal{B}_{ipw,2}(\tau)),
	\end{align*}
	where $(\mathcal{B}_{ipw,1}(\tau),\mathcal{B}_{ipw,2}(\tau))$ are two independent two-dimensional Gaussian processes with covariance kernels $\Sigma_{ipw,1}(\tau_1,\tau_2)$ and $\Sigma_{ipw,2}(\tau_1,\tau_2)$, respectively. The expressions for  $\Sigma_{ipw,1}(\tau_1,\tau_2)$ and $\Sigma_{ipw,2}(\tau_1,\tau_2)$ are derived in the proof below. 
\end{lem}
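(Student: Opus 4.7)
My plan is to mirror the two-step strategy used for Lemma \ref{lem:Q}, but simplified: here there is no $D_n(s)$-type contribution in $\mathcal{W}_{n,1}$ because the IPW construction has already absorbed the within-stratum treatment imbalance (as can be seen from the derivation of $W_{ipw,1,n}$ and $W_{ipw,0,n}$ in the proof of Theorem \ref{thm:ipw}). The goal is to show (i) $\mathcal{W}_{n,1}$ is asymptotically independent of the treatment-assignment sigma field $\sigma(\{A_i,S_i\})$, (ii) $\mathcal{W}_{n,1}$ converges uniformly to a Gaussian process with the kernel $\Sigma_{ipw,1}$, and (iii) $\mathcal{W}_{n,2}$ converges uniformly to a Gaussian process with the kernel $\Sigma_{ipw,2}$. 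Independence of the limits will then be automatic.

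\textbf{Step 1 (decoupling from $\{A_i,S_i\}$).} Reusing the i.i.d.\ stratum-specific arrays $\{(Y_i^s(1),Y_i^s(0))\}_{i\ge 1}$ from the proof of Theorem \ref{thm:qr}, reorder observations stratum by stratum with treated units first, so that, conditionally on $\{A_i,S_i\}_{i=1}^n$,
\begin{align*}
\mathcal{W}_{n,1}(\tau) \stackrel{d}{=} \frac{1}{\sqrt n}\sum_{s\in\mathcal{S}} \Biggl[\frac{1}{\pi f_1(q_1(\tau))} \sum_{i=N(s)+1}^{N(s)+n_1(s)}\tilde{\eta}_{i,1}(s,\tau) - \frac{1}{(1-\pi) f_0(q_0(\tau))} \sum_{i=N(s)+n_1(s)+1}^{N(s)+n(s)}\tilde{\eta}_{i,0}(s,\tau)\Biggr],
\end{align*}
where $\tilde{\eta}_{i,j}(s,\tau)=\tau-1\{Y_i^s(j)\le q_j(\tau)\}-m_j(s,\tau)$. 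Define the deterministic-index version
\begin{align*}
\mathcal{W}_{n,1}^\star(\tau) = \frac{1}{\sqrt n}\sum_{s\in\mathcal{S}} \Biggl[\frac{1}{\pi f_1(q_1(\tau))} \sum_{i=\lfloor nF(s)\rfloor+1}^{\lfloor n(F(s)+\pi p(s))\rfloor}\tilde{\eta}_{i,1}(s,\tau) - \frac{1}{(1-\pi) f_0(q_0(\tau))} \sum_{i=\lfloor n(F(s)+\pi p(s))\rfloor+1}^{\lfloor n(F(s)+p(s))\rfloor}\tilde{\eta}_{i,0}(s,\tau)\Biggr].
\end{align*}
By construction $\mathcal{W}_{n,1}^\star \indep \{A_i,S_i\}_{i=1}^n$, and since $\mathcal{W}_{n,2}$ is a measurable function of $\{S_i\}$ alone, $\mathcal{W}_{n,1}^\star \indep \mathcal{W}_{n,2}$.

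\textbf{Step 2 (uniform approximation of $\mathcal{W}_{n,1}$ by $\mathcal{W}_{n,1}^\star$).} Combine $N(s)/n \convP F(s)$, $n_1(s)/n \convP \pi p(s)$, $n(s)/n \convP p(s)$ with stochastic equicontinuity of the partial-sum process $\sum_{i=1}^{\lfloor nt\rfloor}\tilde{\eta}_{i,j}(s,\tau)/\sqrt n$ in $(t,\tau)$. The latter is obtained by the same two-piece argument used in Claim (1) of Lemma \ref{lem:Q}: on one side apply Lemma \ref{lem:S} together with \citet[Theorem 1]{M93} and a maximal inequality for VC classes \citep[Corollary 5.1]{CCK14} to control increments in $t$; on the other side control increments in $\tau$ using the fact that $\{\tilde{\eta}_{i,j}(s,\cdot)\}$ is a VC class with $\mathcal{L}^2$ modulus of order $\sqrt{|\tau_2-\tau_1|}$. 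Consequently $\sup_{\tau\in\Upsilon}|\mathcal{W}_{n,1}(\tau)-\mathcal{W}_{n,1}^\star(\tau)|=o_p(1)$.

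\textbf{Step 3 (finite-dimensional and joint convergence).} Finite-dimensional convergence of $\mathcal{W}_{n,1}^\star$ follows from the triangular Lindeberg CLT with a Cram\'er--Wold device, yielding the kernel $\Sigma_{ipw,1}(\tau_1,\tau_2)$ displayed in the statement after expanding $\mathrm{Var}(\tilde{\eta}_{i,1}(s,\tau_1)\tilde{\eta}_{i,1}(s,\tau_2)\mid S=s)=\min(\tau_1,\tau_2)-\tau_1\tau_2-(m_1(s,\tau_1)+\text{cross terms})$ and averaging over $s$ using $p(s)$. Finite-dimensional convergence of $\mathcal{W}_{n,2}$ is a standard i.i.d.\ CLT giving $\Sigma_{ipw,2}$; tightness follows from the continuity of $m_j(s,\tau)/f_j(q_j(\tau))$ in $\tau\in\Upsilon$ (Assumption \ref{ass:tau}) together with a pointwise CLT. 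Independence of the two limits is inherited from the exact independence of $\mathcal{W}_{n,1}^\star$ and $\mathcal{W}_{n,2}$ shown in Step 1, exactly as in Step 2 of Lemma \ref{lem:Q}. The main obstacle is the uniform-in-$\tau$ approximation in Step 2: the empirical-process machinery must be applied delicately because the upper summation indices depend jointly on $t$ and $\tau$, which is why the bound has to be split into a $t$-increment piece (handled by a Hajek--Renyi type maximal inequality) and a $\tau$-increment piece (handled by the VC-entropy bound).
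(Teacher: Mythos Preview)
Your proposal is correct and follows essentially the same route as the paper: the paper's proof of this lemma explicitly defers the weak convergence and independence arguments to Lemma \ref{lem:Q} (``similar to that in Lemma \ref{lem:Q}, and thus, are omitted'') and only records the covariance-kernel computations, which you have reproduced via the same reordering/decoupling to $\mathcal{W}_{n,1}^\star$ and the same equicontinuity-plus-CLT argument. One small slip in your Step 3: what you call $\mathrm{Var}(\tilde{\eta}_{i,1}(s,\tau_1)\tilde{\eta}_{i,1}(s,\tau_2)\mid S=s)$ should be the covariance $\mathbb{E}[\tilde{\eta}_{i,1}(s,\tau_1)\tilde{\eta}_{i,1}(s,\tau_2)]$, and the summation indices in Step 2 depend on $t$ but not on $\tau$ (only the summands do), so the ``joint dependence on $t$ and $\tau$'' remark is inaccurate though harmless for the argument.
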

\begin{proof}
	The proofs of weak convergence and the independence between $(\mathcal{B}_{ipw,1}(\tau),\mathcal{B}_{ipw,2}(\tau))$ are similar to that in Lemma \ref{lem:Q}, and thus, are omitted. Next, we focus on deriving the covariance
	kernels. 
	
	First, similar to the argument in the proof of Lemma \ref{lem:Q}, 
	\begin{align*}
	\mathcal{W}_{n,1}(\tau) \stackrel{d}{=}\sum_{s \in \mathcal{S}} \sum_{i=N(s)+1}^{N(s)+n_1(s)}\frac{1}{\sqrt{n}f_1(q_1(\tau))}\tilde{\eta}_{i,1}(s,\tau)-\sum_{s \in \mathcal{S}} \sum_{i=N(s)+n_1(s)+1}^{N(s)+n(s)}\frac{1}{\sqrt{n}f_0(q_0(\tau))}\tilde{\eta}_{i,0}(s,\tau).
	\end{align*}
	Because $(\tilde{\eta}_{i,1}(s,\tau),\tilde{\eta}_{i,0}(s,\tau))$ are independent across $i$, $n_1(s)/n \convP \pi p(s)$, and $(n(s) - n_1(s))/n \convP (1-\pi)p(s)$, we have 
	\begin{align*}
	\Sigma_{ipw,1}(\tau_1,\tau_2) = \frac{\min(\tau_1,\tau_2) - \tau_1\tau_2 - \mathbb{E}m_1(S,\tau_1)m_1(S,\tau_2)  }{\pi f_1(q_1(\tau_1))f_1(q_1(\tau_2))} + \frac{\min(\tau_1,\tau_2) - \tau_1\tau_2 - \mathbb{E}m_0(S,\tau_1)m_0(S,\tau_2)}{(1-\pi)f_0(q_0(\tau_1))f_0(q_0(\tau_2)) }.
	\end{align*} 
	Obviously, 
	\begin{align*}
	\Sigma_{ipw,2}(\tau_1,\tau_2) = \mathbb{E}\left(\frac{m_1(S,\tau_1)}{f_1(q_1(\tau_1))} - \frac{m_0(S,\tau_1)}{f_0(q_0(\tau_1))}\right)\left(\frac{m_1(S,\tau_2)}{f_1(q_1(\tau_2))} - \frac{m_0(S,\tau_2)}{f_0(q_0(\tau_2))}\right),
	\end{align*}
\end{proof}

\begin{lem}
	\label{lem:Qwqr}
	If Assumptions \ref{ass:assignment1} and \ref{ass:tau} hold, then conditionally on data,  the second element of $[Q(\tau)]^{-1}\sum_{i=1}^n \frac{\xi_{i}-1}{\sqrt{n}}\dot{A}_i\left(\tau - 1\{Y_i \leq \dot{A}'\beta(\tau)\}\right)$ weakly converges to $\tilde{\mathcal{B}}_{sqr}(\tau)$, 	where $\tilde{\mathcal{B}}_{sqr}(\tau)$ is a Gaussian process with covariance kernel $\tilde{\Sigma}_{sqr}(\cdot,\cdot)$ defined in Theorem \ref{thm:b}. 
\end{lem}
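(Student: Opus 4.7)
The plan is to reduce the target to a scalar multiplier empirical process, apply a conditional CLT plus a multiplier equicontinuity argument, and verify that the resulting covariance matches $\tilde{\Sigma}_{sqr}$ after an algebraic cancellation. First, I would compute $[Q(\tau)]^{-1}$ from \eqref{eq:Qqr} directly: its second row is $\left(-\frac{1}{(1-\pi)f_0(q_0(\tau))},\, \frac{1}{(1-\pi)f_0(q_0(\tau))} + \frac{1}{\pi f_1(q_1(\tau))}\right)$. Using $\dot{A}_i=(1,A_i)'$ and $Y_i = A_iY_i(1) + (1-A_i)Y_i(0)$, the second element of the object of interest is $T_n(\tau) = \sum_{i=1}^n \frac{\xi_i-1}{\sqrt{n}} g_i(\tau)$ with
\begin{align*}
g_i(\tau) = \frac{A_i\,\psi_{i,1}(\tau)}{\pi f_1(q_1(\tau))} - \frac{(1-A_i)\,\psi_{i,0}(\tau)}{(1-\pi) f_0(q_0(\tau))}, \qquad \psi_{i,j}(\tau) = \tau - 1\{Y_i(j)\le q_j(\tau)\}.
\end{align*}

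Next, condition on the data $\mathcal{D}_n = \{(Y_i,A_i,S_i)\}_{i=1}^n$. Under Assumption \ref{ass:weight} the weights $\xi_i - 1$ are i.i.d.\ with mean zero, variance one, sub-exponential tails, and are independent of $\mathcal{D}_n$. Thus, given $\mathcal{D}_n$, $T_n(\cdot)$ is a scalar multiplier empirical process with independent summands. For finite-dimensional convergence I invoke the Lindeberg--Feller CLT (the Lindeberg condition follows from the uniform boundedness of $g_i$ given Assumption \ref{ass:tau}) to obtain, conditionally in probability, that $(T_n(\tau_1),\dots,T_n(\tau_K))$ converges to a centered Gaussian with covariance $\Omega(\tau_k,\tau_l) = \lim_n \tfrac{1}{n}\sum_i g_i(\tau_k)g_i(\tau_l)$. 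Because $A_i(1-A_i)=0$, all cross products drop out and, using the stratification coupling trick from Step~1 of the proof of Theorem \ref{thm:qr} (representing $\{\psi_{i,j}: A_i=j,S_i=s\}$ via the i.i.d.\ sequences $\tilde{\eta}^s_{i,j}$), a LLN within each stratum gives
\begin{align*}
\Omega(\tau_k,\tau_l) = \frac{\min(\tau_k,\tau_l)-\tau_k\tau_l}{\pi f_1(q_1(\tau_k))f_1(q_1(\tau_l))} + \frac{\min(\tau_k,\tau_l)-\tau_k\tau_l}{(1-\pi) f_0(q_0(\tau_k))f_0(q_0(\tau_l))}.
\end{align*}
I would then verify algebraically that this equals $\tilde{\Sigma}_{sqr}(\tau_k,\tau_l)$: substituting $\gamma(s)\mapsto \pi(1-\pi)$ into the expression for $\Sigma_{sqr}$ in the proof of Theorem \ref{thm:qr}, the coefficients of $\mathbb{E}[m_1(S,\tau_k)m_1(S,\tau_l)]$, $\mathbb{E}[m_1(S,\tau_k)m_0(S,\tau_l)]$, $\mathbb{E}[m_0(S,\tau_k)m_1(S,\tau_l)]$, and $\mathbb{E}[m_0(S,\tau_k)m_0(S,\tau_l)]$ all cancel exactly (each reduces to $\tfrac{-1+(1-\pi)+\pi}{\pi}=0$ or its analogue), leaving only the two Brownian-bridge terms above.

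For uniform-in-$\tau$ tightness, the class $\mathcal{G} = \{g_i(\cdot;y,a)\colon \tau\in\Upsilon\}$ is a VC-subgraph class with a uniformly bounded envelope (by Assumption \ref{ass:tau} the denominators $f_j(q_j(\tau))$ are bounded away from zero, and the numerators are indicators). Given $\mathcal{D}_n$, $T_n$ is a multiplier process with i.i.d.\ sub-exponential weights, so the conditional multiplier inequality \citet[Lemma 2.9.1, Theorem 2.9.6]{VW96}, together with the Donsker property of $\mathcal{G}$, yields asymptotic equicontinuity in probability conditional on data. Combined with the finite-dimensional convergence, this gives conditional weak convergence of $T_n(\cdot)$ in $\ell^\infty(\Upsilon)$ to a Gaussian process with covariance kernel $\tilde{\Sigma}_{sqr}$.

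The main obstacle is verifying the multiplier equicontinuity in a setting where the ``data'' values $g_i(\tau)$ depend on the non-i.i.d.\ treatment assignments $A_i$. Standard multiplier CLTs assume i.i.d.\ observations, so I would bypass this by stratifying: within each cell $\{i: S_i=s, A_i=a\}$ the $g_i(\cdot)$ are conditionally i.i.d., and since $|\mathcal{S}|<\infty$ a finite union bound across the $2|\mathcal{S}|$ cells reduces the problem to finitely many standard multiplier empirical processes. Once this stratified equicontinuity is established, combining with the covariance computation and the cancellation identity concludes the proof.
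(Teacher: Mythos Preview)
Your argument is correct but follows a genuinely different route from the paper. The paper decomposes $g_i(\tau)$ into three pieces $\mathcal{J}_{i,1}+\mathcal{J}_{i,2}+\mathcal{J}_{i,3}$ (the centered $\eta$-part, the $(A_i-\pi)$-part, and the strata-mean part), computes the six-term variance expansion, shows the cross terms $\sigma_{12},\sigma_{13},\sigma_{23}$ vanish, and identifies the three diagonal terms with $\zeta_Y^2(\pi,\tau)+\tilde{\zeta}_A^2(\pi,\tau)+\zeta_S^2(\tau)$; equicontinuity is established piecewise by the stratified partial-sum maximal inequalities already developed in Lemma~\ref{lem:Q}. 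You instead keep $g_i$ intact, exploit $A_i(1-A_i)=0$ to collapse the covariance to the two Brownian-bridge terms, and then verify algebraically that substituting $\gamma(s)\mapsto\pi(1-\pi)$ in $\Sigma_{sqr}$ annihilates all $m_j$-contributions---a nice identity the paper does not state. Your equicontinuity argument, via finite cell decomposition and the classical conditional multiplier CLT, is also valid and more off-the-shelf. What your approach buys is a shorter, more transparent covariance calculation and the observation that $\tilde{\Sigma}_{sqr}$ has a remarkably simple closed form; what the paper's decomposition buys is a direct, component-by-component correspondence with the three variance pieces of Theorem~\ref{thm:qr}, making the mechanism of the $\gamma(s)\to\pi(1-\pi)$ replacement explicit without any algebraic cancellation.
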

\begin{proof}
	We denote the second element of 	$[Q(\tau)]^{-1}\sum_{i=1}^n \frac{\xi_{i}-1}{\sqrt{n}}\dot{A}_i\left(\tau - 1\{Y_i \leq \dot{A}'\beta(\tau)\}\right)$ as 
	\begin{align*}
	\frac{1}{\sqrt{n}}\sum_{i=1}^n (\xi_i - 1)\mathcal{J}_i(s,\tau), 
	\end{align*}
	where 
	\begin{align*}
	\mathcal{J}_i(s,\tau) = \mathcal{J}_{i,1}(s,\tau) + \mathcal{J}_{i,2}(s,\tau)  + \mathcal{J}_{i,3}(s,\tau), 
	\end{align*}
	\begin{align*}
	\mathcal{J}_{i,1}(s,\tau) = \frac{A_i 1\{S_i = s\}\eta_{i,1}(s,\tau)}{\pi f_1(q_1(\tau))} - \frac{(1-A_i)1\{S_i = s\}\eta_{i,0}(s,\tau)}{(1-\pi) f_0(q_0(\tau))},
	\end{align*}
	\begin{align*}
	\mathcal{J}_{i,2}(s,\tau) = F_1(s,\tau) (A_i - \pi)1\{S_i = s\},
	\end{align*}
	$$F_1(s,\tau) =  \frac{m_1(s,\tau)}{\pi f_1(q_1(\tau))}+\frac{m_0(s,\tau)}{(1-\pi) f_0(q_0(\tau))},$$
	and
	\begin{align*}
	\mathcal{J}_{i,3}(s,\tau) =  \left(\frac{m_1(s,\tau)}{f_1(q_1(\tau))} - \frac{m_0(s,\tau)}{f_0(q_0(\tau))}\right)1\{S_i = s\}.
	\end{align*}

	In order to show the weak convergence, we only need to show (1) conditionally stochastic equicontinuity and (2) conditional convergence in finite dimension. We divide the proof into two steps accordingly. 
	
	\textbf{Step 1.} In order to show the conditionally stochastic equicontinuity, it suffices to show that, for any $\eps>0$, as $n \rightarrow \infty$ followed by $\delta \rightarrow 0$, 
	\begin{align*}
	\mathbb{P}_\xi\left( \sup_{\tau_1,\tau_2 \in \Upsilon, \tau_1 < \tau_2 < \tau_1+\delta,s \in \mathcal{S}}\left|\frac{1}{\sqrt{n}}\sum_{i=1}^n (\xi_i - 1)(\mathcal{J}_i(s,\tau_2) - \mathcal{J}_i(s,\tau_1))\right| \geq \eps\right) \convP 0,
	\end{align*}
	where $\mathbb{P}_\xi(\cdot)$ means that the probability operator is with respect to $\xi_1,\cdots,\xi_n$ and conditional on data. Note
	\begin{align*}
	& \mathbb{E}\mathbb{P}_\xi\left( \sup_{\tau_1,\tau_2 \in \Upsilon, \tau_1 < \tau_2 < \tau_1+\delta,s \in \mathcal{S}}\left|\frac{1}{\sqrt{n}}\sum_{i=1}^n (\xi_i - 1)(\mathcal{J}_i(s,\tau_1) - \mathcal{J}_i(s,\tau_1))\right| \geq \eps\right) \\
	= & \mathbb{P}\left( \sup_{\tau_1,\tau_2 \in \Upsilon, \tau_1 < \tau_2 < \tau_1+\delta,s \in \mathcal{S}}\left|\frac{1}{\sqrt{n}}\sum_{i=1}^n (\xi_i - 1)(\mathcal{J}_i(s,\tau_2) - \mathcal{J}_i(s,\tau_1))\right| \geq \eps\right)\\
	\leq &  \mathbb{P}\left( \sup_{\tau_1,\tau_2 \in \Upsilon, \tau_1 < \tau_2 < \tau_1+\delta,s \in \mathcal{S}}\left|\frac{1}{\sqrt{n}}\sum_{i=1}^n (\xi_i - 1)(\mathcal{J}_{i,1}(s,\tau_2) - \mathcal{J}_{i,1}(s,\tau_1))\right| \geq \eps/3\right) \\
	& + \mathbb{P}\left( \sup_{\tau_1,\tau_2 \in \Upsilon, \tau_1 < \tau_2 < \tau_1+\delta,s \in \mathcal{S}}\left|\frac{1}{\sqrt{n}}\sum_{i=1}^n (\xi_i - 1)(\mathcal{J}_{i,2}(s,\tau_2) - \mathcal{J}_{i,2}(s,\tau_1))\right| \geq \eps/3\right)\\
	& + \mathbb{P}\left( \sup_{\tau_1,\tau_2 \in \Upsilon, \tau_1 < \tau_2 < \tau_1+\delta,s \in \mathcal{S}}\left|\frac{1}{\sqrt{n}}\sum_{i=1}^n (\xi_i - 1)(\mathcal{J}_{i,3}(s,\tau_2) - \mathcal{J}_{i,3}(s,\tau_1))\right| \geq \eps/3\right).
	\end{align*}
	
	Further note that 
	\begin{align*}
	\sum_{i =1}^n (\xi_i-1)\mathcal{J}_{i,1}(s,\tau) \stackrel{d}{=} \sum_{i=N(s) + 1}^{N(s)+n_1(s)}\frac{(\xi_i-1)\tilde{\eta}_{i,1}(s,\tau)}{\pi f_1(q_1(\tau))} - \sum_{i=n(s) +n_1(s)+ 1}^{N(s)+n(s)}\frac{(\xi_i-1)\tilde{\eta}_{i,0}(s,\tau)}{(1-\pi) f_0(q_0(\tau))}
	\end{align*}
	By the same argument in Claim (1) in the proof of Lemma \ref{lem:Q}, we have 
	\begin{align*}
	& \mathbb{P}\left( \sup_{\tau_1,\tau_2 \in \Upsilon, \tau_1 < \tau_2 < \tau_1+\delta,s \in \mathcal{S}}\left|\frac{1}{\sqrt{n}}\sum_{i=1}^n (\xi_i - 1)(\mathcal{J}_{i,1}(s,\tau_2) - \mathcal{J}_i(s,\tau_1))\right| \geq \eps/3\right) \\
	\leq & \frac{3\mathbb{E}\sup_{\tau_1,\tau_2 \in \Upsilon, \tau_1 < \tau_2 < \tau_1+\delta,s \in \mathcal{S}}\left|\frac{1}{\sqrt{n}}\sum_{i=1}^n (\xi_i - 1)(\mathcal{J}_{i,1}(s,\tau_2) - \mathcal{J}_{i,1}(s,\tau_1))\right|}{\eps} \\
	\leq & \frac{3\sqrt{c_2\delta\log(\frac{C}{c_1\delta})} + \frac{3C\log(\frac{C}{c_1\delta})}{\sqrt{n}}}{\eps},
	\end{align*}	
	where $C$, $c_1< c_2$  are some positive constants that are independent of $(n,\eps,\delta)$. By letting $n\rightarrow \infty$ followed by $\delta \rightarrow 0$, the RHS vanishes. 	
	
	For $\mathcal{J}_{i,2}$, we note that $F_1(s,\tau)$ is Lipschitz in $\tau$. Therefore, 
	\begin{align*}
	& \mathbb{P}\left( \sup_{\tau_1,\tau_2 \in \Upsilon, \tau_1 < \tau_2 < \tau_1+\delta,s \in \mathcal{S}}\left|\frac{1}{\sqrt{n}}\sum_{i=1}^n (\xi_i - 1)(\mathcal{J}_{i,2}(s,\tau_2) - \mathcal{J}_{i,2}(s,\tau_1))\right| \geq \eps/3\right) \\
	\leq & \sum_{s \in \mathcal{S}}\mathbb{P}\left(C\delta\left|\frac{1}{\sqrt{n}} \sum_{i =1}^n (\xi_i - 1)(A_i - \pi)1\{S_i = s\}\right| \geq \eps/3 \right) \rightarrow 0 
	\end{align*}
	as $n \rightarrow \infty$ followed by $\delta \rightarrow 0$, where we use the fact that
	\begin{align*}
	\sup_{s \in \mathcal{S} }\left|\frac{1}{\sqrt{n}} \sum_{i =1}^n (\xi_i - 1)(A_i - \pi)1\{S_i = s\}\right| = O_p(1). 
	\end{align*}
	To see this claim, we note that, conditionally on data, 
	\begin{align*}
	\frac{1}{n}\sum_{i=1}^n(A_i - \pi)^21\{S_i = s\} = & \frac{1}{n}\sum_{i=1}^n\left(A_i - \pi - 2\pi(A_i - \pi) + \pi - \pi^2\right)1\{S_i=s\} \\
	= & \frac{D_n(s) - 2\pi D_n(s)}{n} + \pi(1-\pi)\frac{n(s)}{n} \convP \pi (1-\pi)p(s).  
	\end{align*}
	Then, by the Lindeberg CLT, conditionally on data, 
	\begin{align*}
	\frac{1}{\sqrt{n}} \sum_{i =1}^n (\xi_i - 1)(A_i - \pi)1\{S_i = s\} \convD N(0,\pi(1-\pi)p(s)) = O_p(1). 
	\end{align*}
	Last, by the standard maximal inequality (e.g., \citet[Theorem 2.14.1]{VW96}) and the fact that 
	\begin{align*}
	\left(\frac{m_1(s,\tau)}{f_1(q_1(\tau))} - \frac{m_0(s,\tau)}{f_0(q_0(\tau))}\right)
	\end{align*}
	is Lipschitz in $\tau$, we have, as $n \rightarrow \infty$ followed by $\delta \rightarrow 0$,  
	\begin{align*}
	\mathbb{P}\left( \sup_{\tau_1,\tau_2 \in \Upsilon, \tau_1 < \tau_2 < \tau_1+\delta,s \in \mathcal{S}}\left|\frac{1}{\sqrt{n}}\sum_{i=1}^n (\xi_i - 1)(\mathcal{J}_{i,3}(s,\tau_2) - \mathcal{J}_{i,3}(s,\tau_1))\right| \geq \eps/3\right) \rightarrow 0.
	\end{align*}
	This concludes the proof of the conditionally stochastic equicontinuity.

	\textbf{Step 2.}
	We focus on the one-dimension case and aim to show that, conditionally on data, for fixed $\tau \in \Upsilon$,  
	\begin{align*}
	\frac{1}{\sqrt{n}}\sum_{s \in \mathcal{S}} \sum_{i =1}^n (\xi_i-1) \mathcal{J}_i(s,\tau) \convD \N(0,\tilde{\Sigma}_{sqr}(\tau,\tau)).
	\end{align*}
	The finite-dimensional convergence can be established similarly by the Cram\'{e}r-Wold device. In view of Lindeberg-Feller central limit theorem, we only need to show that (1)
	\begin{align*}
	\frac{1}{n}\sum_{i =1}^n[\sum_{s \in \mathcal{S}}\mathcal{J}_i(s,\tau)]^2 \convP \zeta_Y^2(\pi,\tau) + \tilde{\xi}_A^{2}(\pi,\tau) + \xi_S^2(\pi,\tau)
	\end{align*}
	and (2)
	\begin{align*}
	\frac{1}{n}\sum_{i =1}^n[\sum_{s \in \mathcal{S}}\mathcal{J}_i(s,\tau)]^2 \mathbb{E}_\xi(\xi-1)^21\{|\sum_{s \in \mathcal{S}}(\xi_i - 1)\mathcal{J}_i(s,\tau)| \geq \eps \sqrt{n}\} \rightarrow 0.
	\end{align*}
	(2) is obvious as $|\mathcal{J}_i(s,\tau)|$ is bounded and $\max_i|\xi_i-1| \lesssim \log(n)$ as $\xi_i$ is sub-exponential. Next, we focus on (1). We have 
	\begin{align*}
	& \frac{1}{n}\sum_{i =1}^n[\sum_{s \in \mathcal{S}}\mathcal{J}_i(s,\tau)]^2 \\
	= & \frac{1}{n}\sum_{i=1}^n\sum_{s \in \mathcal{S}}\biggl\{\biggl[\frac{A_i 1\{S_i = s\}\eta_{i,1}(s,\tau)}{\pi f_1(q_1(\tau))} - \frac{(1-A_i)1\{S_i = s\}\eta_{i,0}(s,\tau)}{(1-\pi) f_0(q_0(\tau))}\biggr] \\
	& + F_1(s,\tau)(A_i -\pi)1\{S_i=s\} + \biggl[ \left(\frac{m_1(s,\tau)}{f_1(q_1(\tau))} - \frac{m_0(s,\tau)}{f_0(q_0(\tau))}\right)1\{S_i = s\}\biggr]\biggr\}^2 \\
	\equiv & \sigma_1^2 + \sigma_2^2 + \sigma_3^2 + 2\sigma_{12} + 2\sigma_{13} + 2 \sigma_{23},
	\end{align*}
	where 
	\begin{align*}
	\sigma_1^2 = \frac{1}{n}\sum_{s \in \mathcal{S}}\sum_{i =1}^n \biggl[\frac{A_i 1\{S_i = s\}\eta_{i,1}(s,\tau)}{\pi f_1(q_1(\tau))} - \frac{(1-A_i)1\{S_i = s\}\eta_{i,0}(s,\tau)}{(1-\pi) f_0(q_0(\tau))}\biggr]^2,
	\end{align*}
	\begin{align*}
	\sigma_2^2 = \frac{1}{n}\sum_{s \in \mathcal{S}}F^2_1(s,\tau)\sum_{i =1}^n (A_i-\pi)^21\{S_i = s\},
	\end{align*}
	\begin{align*}
	\sigma_3^2 = \frac{1}{n}\sum_{i =1}^n \biggl[ \left(\frac{m_1(S_i,\tau)}{f_1(q_1(\tau))} - \frac{m_0(S_i,\tau)}{f_0(q_0(\tau))}\right)\biggr]^2,
	\end{align*}
	\begin{align*}
	\sigma_{12} = \frac{1}{n}\sum_{i=1}^n\sum_{s \in \mathcal{S}}\biggl[\frac{A_i 1\{S_i = s\}\eta_{i,1}(s,\tau)}{\pi f_1(q_1(\tau))} - \frac{(1-A_i)1\{S_i = s\}\eta_{i,0}(s,\tau)}{(1-\pi) f_0(q_0(\tau))}\biggr]F_1(s,\tau)(A_i - \pi)1\{S_i=s\},
	\end{align*}
	
	\begin{align*}
	\sigma_{13} = \frac{1}{n}\sum_{i=1}^n\sum_{s \in \mathcal{S}}\biggl[\frac{A_i 1\{S_i = s\}\eta_{i,1}(s,\tau)}{\pi f_1(q_1(\tau))} - \frac{(1-A_i)1\{S_i = s\}\eta_{i,0}(s,\tau)}{(1-\pi) f_0(q_0(\tau))}\biggr] \biggl[ \left(\frac{m_1(s,\tau)}{f_1(q_1(\tau))} - \frac{m_0(s,\tau)}{f_0(q_0(\tau))}\right)\biggr],
	\end{align*}
	and 
	\begin{align*}
	\sigma_{23} = \sigma_{12} = \frac{1}{n}\sum_{i=1}^n\sum_{s \in \mathcal{S}}F_1(s,\tau)(A_i - \pi)1\{S_i=s\}\biggl[ \left(\frac{m_1(s,\tau)}{f_1(q_1(\tau))} - \frac{m_0(s,\tau)}{f_0(q_0(\tau))}\right)\biggr].
	\end{align*}
	For $\sigma_1^2$, we have 
	\begin{align*}
	\sigma_1^2 = & \frac{1}{n}\sum_{s \in \mathcal{S}}\sum_{i =1}^n \biggl[\frac{A_i 1\{S_i = s\}\eta^2_{i,1}(s,\tau)}{\pi^2 f^2_1(q_1(\tau))} - \frac{(1-A_i)1\{S_i = s\}\eta^2_{i,0}(s,\tau)}{(1-\pi)^2 f^2_0(q_0(\tau))}\biggr] \\
	\stackrel{d}{=} & \frac{1}{n}\sum_{s \in \mathcal{S}} \sum_{i=N(s)+1}^{N(s)+n_1(s)}\frac{\tilde{\eta}^2_{i,1}(s,\tau)}{\pi^2 f^2_1(q_1(\tau))} +  \frac{1}{n}\sum_{s \in \mathcal{S}} \sum_{i=N(s)+n_1(s)+1}^{N(s)+n(s)}\frac{\tilde{\eta}^2_{i,0}(s,\tau)}{(1-\pi)^2 f^2_0(q_0(\tau))} \\
	\convP & \frac{\tau(1-\tau) - \mathbb{E}m_1^s(S,\tau)}{\pi f_1^2(q_1(\tau))} + \frac{\tau(1-\tau) - \mathbb{E}m_0^s(S,\tau)}{(1-\pi) f_0^2(q_0(\tau))} = \zeta_Y^2(\pi,\tau),
	\end{align*}
	where the second equality holds due to the rearrangement argument in Lemma \ref{lem:Q} and the convergence in probability holds due to uniform convergence of the partial sum process. 
	
	For $\sigma_2^2$, by Assumption \ref{ass:assignment1},  
	\begin{align*}
	\sigma_2^2 = \frac{1}{n}\sum_{s \in \mathcal{S}}F_1^2(s,\tau)(D_n(s) - 2\pi D_n(s) + \pi(1-\pi)1\{S_i =s\}) \convP \pi(1-\pi)\mathbb{E}F_1^2(S_i,\tau) = \tilde{\xi}_{A}^{2}(\pi,\tau).
	\end{align*}
	
	For $\sigma_3^2$, by the law of large number, 
	\begin{align*}
	\sigma_3^2 \convP \mathbb{E}\biggl[ \left(\frac{m_1(S_i,\tau)}{f_1(q_1(\tau))} - \frac{m_0(S_i,\tau)}{f_0(q_0(\tau))}\right)\biggr]^2 = \xi_S^2(\pi,\tau).
	\end{align*}
	
	For $\sigma_{12}$, we have 
	\begin{align*}
	\sigma_{12} = & \frac{1}{n}\sum_{s \in \mathcal{S}}(1-\pi)F_1(s,\tau)\sum_{i=1}^n\frac{A_i 1\{S_i = s\}\eta_{i,1}(s,\tau)}{\pi f_1(q_1(\tau))} - \frac{1}{n}\sum_{s \in \mathcal{S}}\pi F_1(s,\tau)\sum_{i=1}^{n}\frac{(1-A_i)1\{S_i = s\}\eta_{i,0}(s,\tau)}{(1-\pi) f_0(q_0(\tau))} \\
	\stackrel{d}{=} & \frac{1}{n}\sum_{s \in \mathcal{S}}(1-\pi)F_1(s,\tau)\sum_{i=N(s)+1}^{N(s)+n_1(s)}\frac{\tilde{\eta}_{i,1}(s,\tau)}{\pi f_1(q_1(\tau))} - \frac{1}{n}\sum_{s \in \mathcal{S}}\pi F_1(s,\tau)\sum_{i=N(s)+n_1(s)+1}^{N(s)+n(s)}\frac{\tilde{\eta}_{i,0}(s,\tau)}{(1-\pi) f_0(q_0(\tau))} \convP 0,
	\end{align*}
	where the last convergence holds because by Lemma \ref{lem:Q}, 
	\begin{align*}
	\frac{1}{n}\sum_{i=N(s)+1}^{N(s)+n_1(s)}\tilde{\eta}_{i,1}(s,\tau) \convP 0, \quad \text{and} \quad \frac{1}{n}\sum_{i=N(s)+n_1(s)+1}^{N(s)+n(s)}\tilde{\eta}_{i,0}(s,\tau) \convP 0.
	\end{align*}
	By the same argument, we can show that 
	\begin{align*}
	\sigma_{13} \convP 0.
	\end{align*}
	
	Last, for $\sigma_{23}$, by Assumption \ref{ass:assignment1},  
	\begin{align*}
	\sigma_{23} = \sum_{s \in \mathcal{S}}F_1(s,\tau)\biggl[ \left(\frac{m_1(s,\tau)}{f_1(q_1(\tau))} - \frac{m_0(s,\tau)}{f_0(q_0(\tau))}\right)\biggr]\frac{D_n(s)}{n} \convP 0. 
	\end{align*}
	Therefore, conditionally on data, 
	\begin{align*}
	\frac{1}{n}\sum_{i =1}^n[\sum_{s \in \mathcal{S}}\mathcal{J}_i(s,\tau)]^2 \convP \zeta_Y^2(\pi,\tau) + \tilde{\xi}_A^{2}(\pi,\tau) + \xi_S^2(\pi,\tau). 
	\end{align*}
\end{proof}


\begin{lem}
	\label{lem:Rsfestar}
	If Assumptions \ref{ass:assignment1}.1 and \ref{ass:assignment1}.2 hold, $\sup_{ s \in \mathcal{S}}\frac{|D_n^*(s)|}{\sqrt{n^*(s)}} = O_p(1)$, $\sup_{ s \in \mathcal{S}}\frac{|D_n(s)|}{\sqrt{n(s)}} = O_p(1)$, and $n(s) \rightarrow \infty$ for all $s \in \mathcal{S}$, a.s., then there exists a sequence of Poisson(1) random variables $\{\xi_i^s\}_{i \geq 1,s\in \mathcal{S}}$ independent of $\{A_i^*,S_i^*,Y_i,A_i,S_i\}_{i\geq 1}$ such that
	\begin{align*}
	\sum_{i =1}^nA_i^*1\{S_i^*=s\}\eta_{i,1}^*(s,\tau) = \sum_{i = N(s)+1}^{N(s)+n_1(s)}\xi_i^s\tilde{\eta}_{i,1}(s,\tau) + R^*_1(s,\tau),
	\end{align*}
	where $\sup_{\tau \in \Upsilon, s \in \mathcal{S}}|R^*_1(s,\tau)/\sqrt{n(s)}| = o_p(1).$ In addition, 
	\begin{align}
	\label{eq:targetF9}
	\sup_{s \in \mathcal{S},\tau \in \Upsilon}|\sum_{i =1}^nA_i^*1\{S_i^*=s\}\eta_{i,1}^*(s,\tau)|/\sqrt{n(s)} = O_p(1).
	\end{align}
\end{lem}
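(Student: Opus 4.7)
The plan is to reduce the bootstrap multinomial resampling within each cell $(s,1)$ to an i.i.d.\ Poisson(1)-weighted sum, following the Poissonization trick of \citet[Section 3.6]{VW96}. First, I condition on $\{(S_i,A_i,Y_i)\}_{i=1}^n$ together with the bootstrap allocation $\{(S_i^*,A_i^*)\}_{i=1}^n$. By Step 3 of the covariate-adaptive bootstrap, the outcomes $\{Y_i^*: A_i^*=1, S_i^*=s\}$ are $n_1^*(s)$ i.i.d.\ draws with replacement from the empirical distribution of $\{Y_j : A_j=1, S_j=s\}$. Using the i.i.d.\ representation $\{Y_j^s(1)\}_{j\ge 1}$ from Step 1 of the proof of Theorem \ref{thm:qr} and the usual rearrangement, I can identify the original cell observations with $\{Y_j^s(1): N(s)+1 \le j \le N(s)+n_1(s)\}$ and rewrite
\begin{equation*}
\sum_{i=1}^n A_i^* 1\{S_i^*=s\}\,\eta_{i,1}^*(s,\tau) \stackrel{d}{=} \sum_{j=N(s)+1}^{N(s)+n_1(s)} M_j^s\, \tilde{\eta}_{j,1}(s,\tau),
\end{equation*}
where $(M_j^s)_j$ is $\text{Multinomial}(n_1^*(s);\,1/n_1(s))$, independent of $\{Y_j^s(1)\}$.

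Next, on an enlarged probability space, I introduce i.i.d.\ Poisson(1) random variables $\{\xi_j^s\}_{j\ge 1,\,s\in\mathcal{S}}$, independent of the data and of $\{A_i^*,S_i^*\}$, and couple them to $M_j^s$ in the standard way: letting $T_s = \sum_{j=N(s)+1}^{N(s)+n_1(s)} \xi_j^s \sim \text{Poisson}(n_1(s))$, I adjust by $|T_s - n_1^*(s)|$ uniformly chosen $\pm 1$ modifications to match the multinomial marginal. This construction produces $\sum_j |M_j^s-\xi_j^s| \le |T_s - n_1^*(s)|$ a.s., and I define
\begin{equation*}
R_1^*(s,\tau) \;=\; \sum_{j=N(s)+1}^{N(s)+n_1(s)} (M_j^s - \xi_j^s)\,\tilde{\eta}_{j,1}(s,\tau).
\end{equation*}

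The main work is to show $\sup_{s\in\mathcal{S},\,\tau\in\Upsilon} |R_1^*(s,\tau)|/\sqrt{n(s)} = o_p(1)$. First, I control the gap $|T_s - n_1^*(s)| \le |T_s - n_1(s)| + |n_1^*(s) - n_1(s)|$: the first summand is $O_p(\sqrt{n(s)})$ because $T_s\sim\text{Poisson}(n_1(s))$, and the second is $O_p(\sqrt{n(s)})$ via $n_1^*(s)-n_1(s) = D_n^*(s)-D_n(s)+\pi(n^*(s)-n(s))$, using the hypothesized bounds on $D_n^*(s), D_n(s)$ together with the standard bootstrap CLT for $n^*(s)-n(s)$. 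The key observation is that the naive bound $|R_1^*(s,\tau)| \le 2|T_s-n_1^*(s)| = O_p(\sqrt{n(s)})$ is \emph{not} good enough: I must exploit that $(M_j^s,\xi_j^s)_j$ is independent of $(\tilde{\eta}_{j,1}(s,\cdot))_j$ and that the $\tilde{\eta}_{j,1}(s,\tau)$ are bounded mean-zero (conditionally on $S_j=s$). Conditioning on $(M^s,\xi^s)$, $R_1^*(s,\tau)$ is a sum of at most $|T_s-n_1^*(s)|$ i.i.d.\ bounded centered terms, hence of conditional standard deviation $O_p(\sqrt{|T_s-n_1^*(s)|}) = O_p(n(s)^{1/4}) = o_p(\sqrt{n(s)})$ pointwise in $\tau$. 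To upgrade this to uniformity in $\tau$ I would apply \citet[Corollary 5.1]{CCK14} to the VC-type class $\{\tilde{\eta}_{j,1}(s,\tau):\tau\in\Upsilon\}$ with fixed VC index, reducing to bounding an empirical process over a random ``subsample'' of size $|T_s-n_1^*(s)|$. The finite cardinality of $\mathcal{S}$ then extends uniformity over $s$ by a union bound.

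The second claim \eqref{eq:targetF9} follows by combining this remainder bound with the uniform bound $\sup_{\tau,s}|\sum_j \xi_j^s \tilde{\eta}_{j,1}(s,\tau)|/\sqrt{n(s)} = O_p(1)$, which I would obtain by decomposing $\sum_j\xi_j^s\tilde{\eta}_{j,1} = \sum_j(\xi_j^s-1)\tilde{\eta}_{j,1} + \sum_j\tilde{\eta}_{j,1}$; each piece is a centered i.i.d.\ sum of bounded terms indexed by a VC class, so the maximal inequality of \citet[Theorem 2.14.1]{VW96} delivers the $O_p(\sqrt{n(s)})$ bound uniformly in $\tau$. The main obstacle throughout is Step 4: bridging from the naive coupling bound $O_p(\sqrt{n(s)})$ to the required $o_p(\sqrt{n(s)})$ by carefully exploiting the independence of the coupling from the data and the mean-zero, VC-indexed structure of $\tilde{\eta}_{j,1}(s,\cdot)$.
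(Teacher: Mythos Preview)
Your plan follows the same Poissonization strategy as the paper: represent the within-cell bootstrap as a multinomial sum, couple it to i.i.d.\ Poisson(1) weights via the \citet[Section 3.6]{VW96} construction, control the gap $|T_s-n_1^*(s)|$ by the same decomposition $D_n^*(s)-D_n(s)+\pi(n^*(s)-n(s))$, and finish \eqref{eq:targetF9} by a maximal inequality on the Poisson-weighted sum. All of this matches.

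The one substantive difference is in how you bound the remainder $R_1^*(s,\tau)=\sum_j(M_j^s-\xi_j^s)\tilde\eta_{j,1}(s,\tau)$. The paper writes this as $\text{sign}(T_s-n_1^*(s))\sum_{j\ge1}[\#I_n^j(s)/\sqrt{n(s)}]\cdot[\tfrac{1}{\#I_n^j(s)}\sum_{I_n^j(s)}\tilde\eta_{i,1}]$, shows $\max_i|\xi_i^s-M_{ni}|\le 2$ w.p.a.1 so only two layers contribute, and then proves that the \emph{average} over each random index set $I_n^j(s)$ is $o_p(1)$ uniformly in $\tau$ by treating the normalized indicators $\omega_{ni}=1\{i\in I_n^j\}/\#I_n^j$ as exchangeable weights independent of $\{\tilde\eta_{i,1}\}$ and invoking the permutation/reverse-submartingale inequality behind \citet[Lemma 3.6.16]{VW96}. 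Your route is more direct: you observe that the coupling $(M^s,\xi^s)$ is independent of the i.i.d.\ sequence $\{\tilde\eta_{j,1}(s,\cdot)\}$, so conditionally on the coupling the remainder is a fixed-coefficient sum over at most $K=|T_s-n_1^*(s)|=O_p(\sqrt{n(s)})$ i.i.d.\ bounded mean-zero terms, and a VC maximal inequality gives $\sup_\tau|R_1^*|=O_p(\sqrt{K})=O_p(n(s)^{1/4})$. This shortcut is valid precisely because the set of modified indices is independent of the data; it avoids the exchangeable-bootstrap machinery at the cost of needing the $\max_i|c_i|\le 2$ control to reduce to genuine i.i.d.\ subsums (the paper establishes this via a binomial tail bound, which you should also record). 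Both arguments land in the same place; yours is a bit more elementary, while the paper's exchangeability route is the canonical one and is reused verbatim in later lemmas (e.g., the $L_{2,n}^*$ analysis).
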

\begin{proof}
	Recall $\{Y^s_i(0),Y^s_i(1)\}_{i=1}^n $ as defined in the proof of Theorem \ref{thm:qr} and 
	$$\tilde{\eta}_{i,j}(s,\tau) = \tau - 1\{Y^s_i(j) \leq q_j(\tau)\} - m_j(s,\tau),$$ 
	$j = 0,1$. In addition, let $\Psi_n = \{\eta_{i,1}(s,\tau)\}_{i=1}^n$, 
	$$\mathbb{N}_n = \{n(s)/n,n_1(s)/n, n^*(s)/n,n_1^*(s)/n\}_{s \in \mathcal{S}}$$
	and given $\mathbb{N}_n$, $\{M_{ni}\}_{i=1}^n$ be a sequence of random variables such that the $n_1(s) \times 1$ vector 
	$$M_n^1(s) = (M_{n,N(s)+1}, \cdots,M_{n,N(s)+n_1(s)})$$ 
	and the $(n(s) - n_1(s)) \times 1$ vector 
	$$M_n^0(s) = (M_{n,N(s)+n_1(s)+1}, \cdots,M_{n,N(s)+n(s)})$$ 
	satisfy: 
	\begin{enumerate}
		\item $M_n^1(s) = \sum_{i=1}^{n_1^*(s)}m_i$ and $M_n^0(s) = \sum_{i=1}^{n^*(s) - n_1^*(s)}m_i'$, where $\{m_i\}_{i=1}^{n_1^*(s)}$ and $\{m_i'\}_{i=1}^{n^*(s) - n_1^*(s)}$ are $n_1^*(s)$ i.i.d. multinomial$(1,n_1^{-1}(s),\cdots,n_1^{-1}(s))$ random vectors and $n^*(s) - n_1^*(s)$ i.i.d. multinomial $(1,(n(s) -n_1(s))^{-1},\cdots,(n(s) -n_1(s))^{-1})$ random vectors, respectively;
		\item $M_n^0(s) \indep M_n^1(s)|\mathbb{N}_n;$ and 
		\item $\{M_n^0(s),M_n^1(s)\}_{s \in \mathcal{S}}$ are independent across $s$ given  $\mathbb{N}_n$ and are independent of $\Psi_n$.
	\end{enumerate}
	
	Recall that, by \cite{BCS17}, the original observations can be rearranged according to $s \in \mathcal{S}$ and then within strata, treatment group first and then the control group. Then, given $\mathbb{N}_n$, Step 3 in Section \ref{sec:CABI} implies that the bootstrap observations $\{Y_i^*\}_{i=1}^n$ can be generated by drawing with replacement from the empirical distribution of the outcomes in each $(s,a)$ cell for $(s,a) \in \mathcal{S} \times \{0,1\}$, $n^*_{a}(s)$ times, $a = 0,1$, where $n_0^*(s) = n^*(s) - n_1^*(s)$. Therefore,  
	\begin{align}
	\label{eq:Astar}
	\sum_{i =1}^n A_i^*1\{S_i^*=s\}\eta^*_{i,1}(s,\tau) = \sum_{i=N(s)+1}^{N(s)+n_1(s)} M_{ni} \tilde{\eta}_{i,1}(s,\tau). 
	\end{align}
	
	Following the standard approach in dealing with the nonparametric bootstrap, we want to approximate $$M_{ni}, i=N(s)+1,\cdots,N(s)+n_1(s)$$ by a sequence of i.i.d. Poisson(1) random variables. We construct this sequence as follows. Let $\widetilde{M}_n^1(s) =\sum_{i=1}^{\widetilde{N}(n_1(s))}m_i$, where $\widetilde{N}(k)$ is a Poisson number with mean $k$ and is independent of $\mathbb{N}_n$. The $n_1(s)$ elements of vector $\widetilde{M}_n^1(s)$ is denoted as $\{\widetilde{M}_{ni}\}_{i=N(s)+1}^{N(s)+n_1(s)}$, which is a sequence of i.i.d. Poisson(1) random variables, given $\mathbb{N}_n$. Therefore, 
	\begin{align*}
	\{\widetilde{M}_{ni}, i=N(s)+1,\cdots,N(s)+n_1(s) |\mathbb{N}_n\} \equiv
	\{\xi_i^s, i=N(s)+1,\cdots,N(s)+n_1(s)|\mathbb{N}_n\}
	\end{align*}
	where $\{\xi^s_i\}_{i=1}^n$, $s \in \mathcal{S}$ are i.i.d. sequences of Poisson(1) random variables such that  $\{\xi^s_i\}_{i=1}^n$ are independent across $s \in \mathcal{S}$ and against $\mathbb{N}_n$. 
	
	Following the argument in \citet[Section 3.6]{VW96}, given $n_1(s)$, $n_1^*(s)$, and $\widetilde{N}(n_1(s)) = k$, $|\xi_i^s - M_{ni}|$ is binomially $(|k-n_1^*(s)|,n_1(s)^{-1})$-distributed. In addition, there exists a sequence $\ell_n = O(\sqrt{n(s)})$ such that 
	\begin{align*}
	\mathbb{P}(|\widetilde{N}(n_1(s)) - n_1^*(s)| \geq \ell_n) \leq & \mathbb{P}(|\widetilde{N}(n_1(s)) - n_1(s)| \geq \ell_n/3) + \mathbb{P}(|n_1^*(s) - n_1(s)| \geq 2\ell_n/3) \\
	\leq & \mathbb{E}\mathbb{P}(|N(n_1(s)) - n_1(s)| \geq \ell_n/3|n_1(s)) + \mathbb{P}(|n_1^*(s) - n_1(s)| \geq 2\ell_n/3) \\
	\leq & \eps/3 + \mathbb{P}(|n_1^*(s) - n_1(s)| \geq 2\ell_n/3)  \\
	\leq & \eps/3 + \mathbb{P}(|D_n^*(s)| + |D_n(s)| + \pi|n^*(s) - n(s)| \geq 2\ell_n/3) \\
	\leq & 2\eps/3 + \mathbb{P}(\pi|n^*(s) - n(s)| \geq \ell_n/3) \\
	\leq & \eps, 
	\end{align*} 
	where the first inequality holds due to the union bound inequality, the second inequality holds by the law of iterated expectation, the third inequality holds because (1) conditionally on data, $\widetilde{N}(n_1(s)) - n_1(s) = O_p(\sqrt{n_1(s)})$ and (2) $n_1(s)/n(s) = \pi+ \frac{D_n(s)}{n(s)} \rightarrow \pi>0$ as $n(s) \rightarrow \infty$ , the fourth inequality holds by the fact that 
	$$n_1^*(s) - n_1(s) = D_n^*(s) - D_n(s) + \pi(n^*(s) - n(s)),$$
	the fifth inequality holds because by Assumptions \ref{ass:assignment1} and \ref{ass:bassignment},  $|D_n^*(s)| + |D_n(s)| = O_p(\sqrt{n(s)})$, and the sixth inequality holds because $\{S_i^*\}_{i=1}^n$ is generated from $\{S_i\}_{i=1}^n$ by the standard bootstrap procedure, and thus, by \citet[Theorem 3.6.1]{VW96}, 
	\begin{align*}
	n^*(s) - n(s) = \sum_{i=1}^n(M^w_{ni}-1)(1\{S_i = s\} - p(s)) = O_p(\sqrt{n(s)}), 
	\end{align*}
	where $(M^w_{n1},\cdots,M^w_{nn})$ is independent of $\{S_i\}_{i=1}^n$ and multinomially distributed with parameters $n$ and (probabilities) $1/n,\cdots,1/n$. Therefore, by direct calculation, as $n \rightarrow \infty$,
	\begin{align*}
	& \mathbb{P}(\max_{N(s)+1 \leq i \leq N(s)+n_1(s)}|\xi_{i}^s - M_{ni}|>2) \notag \\
	\leq & \mathbb{P}(\max_{N(s)+1 \leq i \leq N(s)+n_1(s)}|\xi_{i}^s - M_{ni}|>2, n_1(s) \geq n(s)\eps) + \mathbb{P}(n_1(s) \leq n(s)\eps) \notag \\
	\leq & \eps + \mathbb{E}\sum_{i=N(s)+1}^{N(s)+n_1(s)}\mathbb{P}(|\xi_{i}^s - M_{ni}|>2,|N(n_1(s)) - n_1^*(s)| \leq \ell_n, n_1(s) \geq n(s) \eps|n_1(s),n_1^*(s),n(s)) + \eps  \notag \\
	\leq & 2\eps +  \mathbb{E}n_1(s)\mathbb{P}(\text{bin}(\ell_n,n^{-1}_1(s))>2|n_1(s),n_1^*(s),n(s))1\{n_1(s) \geq n(s) \eps\} \rightarrow 2\eps,  
	\end{align*} 
	where we use the fact that 
	\begin{align*}
	& n_1(s)\mathbb{P}(\text{bin}(\ell_n,n^{-1}_1(s))>2|n_1(s),n_1^*(s),n(s))1\{n_1(s) \geq n(s) \eps\} \\
	\lesssim & n_1(s)\left(\frac{\ell_n}{n(s)}\right)^3\left(\frac{n(s)}{n_1(s)}\right)^31\{n_1(s) \geq n(s) \eps\} \lesssim \frac{1}{\sqrt{n(s)}\eps^3} \rightarrow 0.
	\end{align*}
	Because $\eps$ is arbitrary, we have  
	\begin{align}
	\label{eq:F92}
	\mathbb{P}\left(\max_{N(s)+1 \leq i \leq N(s)+n_1(s)}|\xi_{i}^s - M_{ni}|>2\right) \rightarrow 0. 
	\end{align}
	
	Note that $|\xi_{i}^s - M_{ni}| = \sum_{j=1}^\infty 1\{|\xi_i^s - M_{ni}|\geq j\}$. Let $I_n^j(s)$ be the set of indexes $i \in \{N(s)+1,\cdots,N(s)+n_1(s)\}$ such that $|\xi_i^s - M_{ni}|\geq j$. Then, $\xi_i^s - M_{ni} = \text{sign}(\widetilde{N}(n_1(s))-n^*_1(s))\sum_{j=1}^\infty 1\{i \in I_n^j(s)\}$. Thus, 
	\begin{align}
	\label{eq:boot}
	\frac{1}{\sqrt{n(s)}}\sum_{i=N(s) + 1}^{N(s)+n_1(s)}(\xi_i^s-M_{ni})\tilde{\eta}_{i,1}(s,\tau) = \text{sign}(\widetilde{N}(n_1(s))-n^*_1(s))\sum_{j=1}^\infty \left[ \frac{\#I_n^j(s)}{\sqrt{n(s)}}\frac{1}{\#I_n^j(s)}\sum_{i \in I_n^j(s)} \tilde{\eta}_{i,1}(s,\tau)\right].
	\end{align}
	
	In the following, we aim to show that the RHS of \eqref{eq:boot} converges to zero in probability uniformly over $s \in \mathcal{S},\tau \in \Upsilon$. First, note that, by \eqref{eq:F92}, $\max_{N(s)+1 \leq i \leq N(s)+n_1(s)}|\xi_i^s - M_{ni}| \leq 2$ occurs with probability approaching one. In the event set that $\max_{N(s)+1 \leq i \leq N(s)+n_1(s)}|\xi_i^s - M_{ni}| \leq 2$, only the first two terms of the first summation on the RHS of \eqref{eq:boot} can be nonzero. In addition, for any $j$, we have $j (\#I_n^j(s)) \leq |\widetilde{N}(n_1(s)) - n_1(s)| = O_p(\sqrt{n(s)})$, and thus, $\frac{\#I_n^j(s)}{\sqrt{n(s)}} = O_p(1)$ for $j = 1,2$. Therefore, it suffices to show that, for $j = 1,2$,  
	
	\begin{align*}
	\sup_{s \in \mathcal{S},\tau \in \Upsilon}\left|\frac{1}{\#I_n^j(s)}\sum_{i \in I_n^j(s)} \tilde{\eta}_{i,1}(s,\tau)\right| = o_p(1). 
	\end{align*} 
	Note that 
	\begin{align}
	\label{eq:omega}
	\frac{1}{\#I_n^j(s)}\sum_{i \in I_n^j(s)} \tilde{\eta}_{i,1}(s,\tau) = \sum_{i=N(s) + 1}^{N(s)+n_1(s)}\omega_{ni}\tilde{\eta}_{i,1}(s,\tau),
	\end{align}
	where $\omega_{ni} = \frac{1\{|\xi_i^s - M_{ni}|\geq j\}}{\#I_n^j(s)}$, $i=N(s) + 1,\cdots,N(s)+n_1(s)$ and by construction, $\{\omega_{ni}\}_{i=N(s) + 1}^{N(s)+n_1(s)}$ is independent of $\{\eta_{i,1}(s,\tau)\}_{i=1}^{n}$. In addition, because $\{\omega_{ni}\}_{i=N(s) + 1}^{N(s)+n_1(s)}$ is exchangeable conditional on $\mathbb{N}_n$, so be it unconditionally. Third, $\sum_{i=N(s) + 1}^{N(s)+n_1(s)}\omega_{ni} = 1$ and $\max_{i=N(s) + 1,\cdots,N(s)+n_1(s)}|\omega_{ni}| \leq 1/\#I_n^j(s) \convP 0$. Then, by the same argument in the proof of \citet[Lemma 3.6.16]{VW96}, for some $r \in (0,1)$ and any $n_0 = N(s)+1,\cdots,N(s)+n_1(s)$, we have 
	\begin{align}
	\label{eq:3617}
	& \mathbb{E}\left(\sup_{\tau \in \Upsilon, s\in \mathcal{S}}\left|\sum_{i=N(s) + 1}^{N(s)+n_1(s)}\omega_{ni}\tilde{\eta}_{i,1}(s,\tau)\right|^r|\Psi_n,\mathbb{N}_n\right) \notag \\
	\leq & (n_0 -1)\mathbb{E}\left[\max_{N(s)+n_0 \leq i \leq N(s)+n_1(s)}\omega_{ni}^r|\mathbb{N}_n\right] \left[\frac{1}{n_1(s)}\sum_{i=N(s) + 1}^{N(s)+n_1(s)}\sup_{\tau \in \Upsilon, s\in \mathcal{S}}|\tilde{\eta}_{i,1}^r(s,\tau)|\right] \notag \\
	& + (n_1(s)\mathbb{E}(\omega_{ni}|\mathbb{N}_n))^r\max_{n_0 \leq k \leq n_1(s)}\mathbb{E}\left[\sup_{\tau \in \Upsilon, s\in \mathcal{S}}\left|\frac{1}{k}\sum_{j=N(s)+n_0}^{N(s)+k} \tilde{\eta}_{R_j(N(s),n_1(s)),1}(s,\tau)\right|^r|\mathbb{N}_n,\Psi_n\right],
	\end{align}
	where $(R_{k_1+1}(k_1,k_2),\cdots,R_{k_1+k_2}(k_1,k_2))$ is uniformly distributed on the set of all permutations of $k_1+1,\cdots,k_1+k_2$ and independent of $\mathbb{N}_n$ and $\Psi_n$. First note that $\sup_{s \in \mathcal{S},\tau \in \Upsilon}|\eta_{i,1}(s,\tau)|$ is bounded and $$\max_{N(s)+1 \leq i \leq N(s)+n_1(s)}\omega_{ni}^r \leq 1/(\#I_n^j(s))^r \convP 0.$$ Therefore, the first term on the RHS of \eqref{eq:3617} converges to zero in probability for every fixed $n_0$. For the second term, because $\omega_{ni}|\mathbb{N}_n$ is exchangeable, 
	\begin{align*}
	n_1(s)\mathbb{E}(\omega_{ni}|\mathbb{N}_n) = \sum_{i=N(s) + 1}^{N(s)+n_1(s)}\mathbb{E}(\omega_{ni}|\mathbb{N}_n) = 1.
	\end{align*}
	
	In addition, let $\mathbb{S}_n(k_1,k_2)$ be the $\sigma$-field generated by all functions of $\{\tilde{\eta}_{i,1}(s,\tau)\}_{i\geq 1}$ that are symmetric in their $k_1+1$ to $k_1+k_2$ arguments. Then, 
	\begin{align*}
	& \max_{n_0 \leq k \leq n_1(s)}\mathbb{E}\left[\sup_{\tau \in \Upsilon, s\in \mathcal{S}}\left|\frac{1}{k}\sum_{j=N(s)+n_0}^{N(s)+k} \tilde{\eta}_{R_j(N(s),n_1(s)),1}(s,\tau)\right|^r|\mathbb{N}_n, \Psi_n\right] \\
	= & \max_{n_0 \leq k \leq n_1(s)}\mathbb{E}\left[\sup_{\tau \in \Upsilon, s\in \mathcal{S}}\left|\frac{1}{k}\sum_{j=N(s)+n_0}^{N(s)+k} \tilde{\eta}_{j,1}(s,\tau)\right|^r|\mathbb{N}_n,\mathbb{S}_n(N(s),n_1(s))\right] \\
	\leq & 2\mathbb{E}\left\{\max_{n_0 \leq k }\left[\sup_{\tau \in \Upsilon, s\in \mathcal{S}}\left|\frac{1}{k}\sum_{j=N(s)+1}^{N(s)+k} \tilde{\eta}_{j,1}(s,\tau)\right|^r\right]|\mathbb{N}_n,\mathbb{S}_n(N(s),n_1(s))\right\} \\
	= & 2\mathbb{E}\left\{\max_{n_0 \leq k }\left[\sup_{\tau \in \Upsilon, s\in \mathcal{S}}\left|\frac{1}{k}\sum_{j=1}^{k} \tilde{\eta}_{j,1}(s,\tau)\right|^r\right]|\mathbb{N}_n,\mathbb{S}_n(0,n_1(s))\right\},
	\end{align*}
	where the inequality holds by the Jansen's inequality and the triangle inequality and the last equality holds because $\{\tilde{\eta}_{j,1}(s,\tau)\}_{j\geq 1}$ is an i.i.d. sequence. Apply expectation on both sides, we obtain that 
	\begin{align}
	\label{eq:kk}
	& \mathbb{E}\max_{n_0 \leq k \leq n_1(s)}\mathbb{E}\left[\sup_{\tau \in \Upsilon, s\in \mathcal{S}}\left|\frac{1}{k}\sum_{j=N(s)+n_0}^{N(s)+k} \tilde{\eta}_{R_j(N(s),n_1(s)),1}(s,\tau)\right|^r|\mathbb{N}_n, \Psi_n\right] \notag \\
	\leq & 2\mathbb{E}\max_{n_0 \leq k \leq n}\left[\sup_{\tau \in \Upsilon, s\in \mathcal{S}}\left|\frac{1}{k}\sum_{j=1}^{k} \tilde{\eta}_{j,1}(s,\tau)\right|^r\right].
	\end{align}
	By the usual maximal inequality, as $k \rightarrow \infty$,
	\begin{align*}
	\sup_{\tau \in \Upsilon, s\in \mathcal{S}}\left|\frac{1}{k}\sum_{j=1}^{k} \tilde{\eta}_{j,1}(s,\tau)\right| \stackrel{a.s.}{\longrightarrow} 0,
	\end{align*}
	which implies that as $n_0 \rightarrow \infty$
	\begin{align*}
	\max_{n_0 \leq k \leq n}\left[\sup_{\tau \in \Upsilon, s\in \mathcal{S}}\left|\frac{1}{k}\sum_{j=1}^{k} \tilde{\eta}_{j,1}(s,\tau)\right|^r\right] \leq \max_{n_0 \leq k }\left[\sup_{\tau \in \Upsilon, s\in \mathcal{S}}\left|\frac{1}{k}\sum_{j=1}^{k} \tilde{\eta}_{j,1}(s,\tau)\right|^r\right] \stackrel{a.s.}{\longrightarrow} 0.
	\end{align*}
	In addition, $\sup_{\tau \in \Upsilon, s\in \mathcal{S}}\left|\frac{1}{k}\sum_{j=1}^{k} \tilde{\eta}_{j,1}(s,\tau)\right|$ is bounded. Then, by the bounded convergence theorem, we have, as $n_0 \rightarrow \infty$, 
	\begin{align*}
	\mathbb{E}\max_{n_0 \leq k \leq n}\left[\sup_{\tau \in \Upsilon, s\in \mathcal{S}}\left|\frac{1}{k}\sum_{j=1}^{k} \tilde{\eta}_{j,1}(s,\tau)\right|^r\right] \rightarrow 0. 
	\end{align*}
	
	which implies that, 
	\begin{align*}
	\mathbb{E}\max_{n_0 \leq k \leq n_1(s)}\mathbb{E}\left[\sup_{\tau \in \Upsilon, s\in \mathcal{S}}\left|\frac{1}{k}\sum_{j=N(s)+n_0}^{N(s)+k} \tilde{\eta}_{R_j(N(s),n_1(s)),1}(s,\tau)\right|^r|\mathbb{N}_n, \Psi_n\right] \convP 0.
	\end{align*}
	Therefore, the second term on the RHS of \eqref{eq:3617} converges to zero in probability as $n_0 \rightarrow \infty$. Then, as $n \rightarrow \infty$ followed by $n_0 \rightarrow \infty$, 
	\begin{align*}
	\mathbb{E}\left(\sup_{\tau \in \Upsilon, s\in \mathcal{S}}\left|\sum_{i=N(s) + 1}^{N(s)+n_1(s)}\omega_{ni}\tilde{\eta}_{i,1}(s,\tau)\right|^r|\Psi_n,\mathbb{N}_n\right) \convP 0.
	\end{align*}
	Hence, by the Markov inequality and \eqref{eq:omega}, we have 
	\begin{align*}
	\sup_{s \in \mathcal{S},\tau \in \Upsilon}\left|\frac{1}{\#I_n^j(s)}\sum_{i \in I_n^j(s)} \tilde{\eta}_{i,1}(s,\tau)\right| \convP 0.
	\end{align*}
	
	Consequently, following \eqref{eq:boot}
	\begin{align}
	\label{eq:Mtilde-M}
	\sup_{s \in \mathcal{S},\tau \in \Upsilon}\left\vert\sum_{i=N(s) + 1}^{N(s)+n_1(s)}(\xi_i^s-M_{ni})\tilde{\eta}_{i,1}(s,\tau)\right \vert  = o_p(\sqrt{n(s)}). 
	\end{align}
	This concludes the first part of this Lemma. For the second part, we note 
	\begin{align*}
	\sum_{i=N(s) + 1}^{N(s)+n_1(s)}\widetilde{M}_{ni}\tilde{\eta}_{i,1}(s,\tau) \stackrel{d}{=} \sum_{i=N(s) + 1}^{N(s)+n_1(s)}\xi^s_i\tilde{\eta}_{i,1}(s,\tau) \stackrel{d}{=} \sum_{i= 1}^{n_1(s)}\xi^s_i\tilde{\eta}_{i,1}(s,\tau),
	\end{align*}
	where the second equality holds because $\{\xi_i^s,\tilde{\eta}_{i,1}(s,\tau)\}_{i \geq 1} \indep \{N(s),n_1(s),n(s)\}$. Then, conditionally on $\{N(s),n_1(s),n(s)\}$ and uniformly over $s \in \mathcal{S}$, the usual maximal inequality (\citet[Theorem 2.14.1]{VW96}) implies  
	\begin{align}
	\label{eq:xi}
	\sup_{\tau \in \Upsilon}|\sum_{i=N(s) + 1}^{N(s)+n_1(s)}\widetilde{M}_{ni}\tilde{\eta}_{i,1}(s,\tau)|  \stackrel{d}{=}  \sup_{\tau \in \Upsilon}|\sum_{i= 1}^{n_1(s)}\xi^s_i\tilde{\eta}_{i,1}(s,\tau)| = O_p(\sqrt{n(s)}).
	\end{align}
	
	Combining \eqref{eq:Astar}, \eqref{eq:Mtilde-M}, and \eqref{eq:xi}, we establish \eqref{eq:targetF9}. This concludes the proof. 
\end{proof}

\begin{lem}
	\label{lem:Qstar}
	If Assumptions \ref{ass:assignment1}.1 and \ref{ass:assignment1}.2 hold, $\sup_{ s \in \mathcal{S}}\frac{|D_n^*(s)|}{\sqrt{n^*(s)}} = O_p(1)$, $\sup_{ s \in \mathcal{S}}\frac{|D_n(s)|}{\sqrt{n(s)}} = O_p(1)$, and $n(s) \rightarrow \infty$ for all $s \in \mathcal{S}$, a.s., then, uniformly over $\tau \in \Upsilon$, 
	\begin{align*}
	Q^*_{n}(u,\tau) \convP \frac{1}{2}u'Qu.
	\end{align*}
\end{lem}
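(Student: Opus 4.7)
My plan is to mimic Step 1 of the proof of Theorem \ref{thm:qr}, replacing each sample quantity by its covariate-adaptive bootstrap counterpart. Writing $Q^*_n(u,\tau)=Q^*_{n,1}(u,\tau)+Q^*_{n,0}(u,\tau)$ as in \eqref{eq:Qstar}, I will focus on $Q^*_{n,1}(u,\tau)$; the argument for $Q^*_{n,0}(u,\tau)$ is symmetric. The target is to show $Q^*_{n,1}(u,\tau)\convP \pi f_1(q_1(\tau))(u_0+u_1)^2/2$ uniformly over $\tau\in\Upsilon$, which together with the analogous limit $(1-\pi)f_0(q_0(\tau))u_0^2/2$ for $Q^*_{n,0}$ yields $\tfrac{1}{2}u'Q(\tau)u$ with $Q(\tau)$ as in \eqref{eq:Qqr}.

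Using Step 3 of the covariate-adaptive bootstrap procedure together with the rearrangement argument employed throughout the paper, I would write
\[
Q^*_{n,1}(u,\tau) \stackrel{d}{=} \sum_{s\in\mathcal{S}} \sum_{i=N(s)+1}^{N(s)+n_1(s)} M_{ni}\, h_{i,1}^{s}(\tau),
\]
where $M_{ni}$ are the multinomial weights introduced in the proof of Lemma \ref{lem:Rsfestar} and
\[
h_{i,1}^{s}(\tau)=\int_{0}^{(u_0+u_1)/\sqrt{n}}\left[1\{Y_i^{s}(1)-q_1(\tau)\leq v\}-1\{Y_i^{s}(1)-q_1(\tau)\leq 0\}\right]dv.
\]
Note $|h_{i,1}^{s}(\tau)|\leq |u_0+u_1|/\sqrt{n}$ and $\mathbb{E}|h_{i,1}^{s}(\tau)|=O(1/n)$. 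I would then invoke the Poisson coupling from Lemma \ref{lem:Rsfestar} to replace $M_{ni}$ by independent Poisson(1) weights $\xi_i^{s}$, producing the decomposition
\[
\sum_{i=N(s)+1}^{N(s)+n_1(s)} M_{ni}\, h_{i,1}^{s}(\tau) = \sum_{i=N(s)+1}^{N(s)+n_1(s)} h_{i,1}^{s}(\tau) + \sum_{i=N(s)+1}^{N(s)+n_1(s)} (\xi_i^{s}-1)h_{i,1}^{s}(\tau)+R_n(s,\tau).
\]
The first sum converges in probability to $\pi p(s) f_1(q_1(\tau)|s)(u_0+u_1)^2/2$ uniformly in $\tau$ by exactly the empirical-process argument used in Step 1 of the proof of Theorem \ref{thm:qr}, applied to the VC class $\{h_{i,1}^{s}(\tau):\tau\in\Upsilon\}$ with envelope $|u_0+u_1|/\sqrt{n}$ and variance of order $n^{-3/2}$, together with $n_1(s)/n\convP \pi p(s)$. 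Summing over $s\in\mathcal{S}$ gives $\sum_s \pi p(s)f_1(q_1(\tau)|s)(u_0+u_1)^2/2 = \pi f_1(q_1(\tau))(u_0+u_1)^2/2$.

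What remains is to show that the Poisson mean-zero correction and the coupling remainder $R_n(s,\tau)$ are both $o_p(1)$ uniformly in $\tau$. For the correction, conditional on $\{Y_i^{s}(1)\}_{i}$ it has variance $\sum_i [h_{i,1}^{s}(\tau)]^2 = O_p(n^{-1/2})$, so it is $o_p(1)$ pointwise; uniformity follows from \citet[Corollary 5.1]{CCK14} applied to the same VC class. For the coupling remainder I would write $|\xi_i^{s}-M_{ni}|=\sum_{j\geq 1}1\{i\in I_n^{j}(s)\}$, use from Lemma \ref{lem:Rsfestar} that $\max_i|\xi_i^{s}-M_{ni}|\leq 2$ with probability tending to one, and bound $|R_n(s,\tau)|\leq 2\sum_{j=1}^{2}\bigl|\sum_{i\in I_n^{j}(s)}h_{i,1}^{s}(\tau)\bigr|$. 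Since $I_n^{j}(s)$ is independent of $\{Y_i^{s}(1)\}$ with $\#I_n^{j}(s)=O_p(\sqrt{n})$, the average $\frac{1}{\#I_n^{j}(s)}\sum_{i\in I_n^{j}(s)}h_{i,1}^{s}(\tau)$ equals $O_p(1/n)$ by a conditional LLN (mean $O(1/n)$, variance $O(n^{-2})$), so the product is $O_p(n^{-1/2})=o_p(1)$. The main obstacle is again uniformity in $\tau$: each of these pointwise bounds must be promoted to a uniform one, which requires applying the VC-type maximal inequalities with carefully chosen envelopes and variance bounds, exactly in the spirit of the arguments in the proofs of Lemmas \ref{lem:Q} and \ref{lem:Rsfestar}.
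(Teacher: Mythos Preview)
Your argument is correct and follows a route close to, but not identical with, the paper's. Both proofs start from the multinomial representation $\sum_{i=N(s)+1}^{N(s)+n_1(s)}M_{ni}h_{i,1}^s(\tau)$ and both invoke the Poisson coupling of Lemma~\ref{lem:Rsfestar}. The paper, however, first centers $h_{i,1}^s$ by its expectation, so the main contribution comes from $\sum_i M_{ni}\mathbb{E}h_{i,1}^s=n_1^*(s)\,\mathbb{E}h_{i,1}^s$ and the limit is read off via $n_1^*(s)/n\to\pi p(s)$; the centered piece is then killed by the partial-sum argument and the coupling remainder by the exchangeable-weight/dyadic machinery (as in \eqref{eq:3617}). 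You instead extract the \emph{original}-sample sum $\sum_{i=N(s)+1}^{N(s)+n_1(s)}h_{i,1}^s$ (i.e.\ the stratum-$s$ part of $Q_{n,1}$) as the main term and treat the Poisson correction $\sum(\xi_i^s-1)h_{i,1}^s$ and the coupling error $R_n$ separately. Your direct bound on $R_n$ via $\sum_{i\in I_n^j}h_{i,1}^s$ is more elementary than the paper's exchangeable-weight argument and works precisely because $|h_{i,1}^s|\le|u_0+u_1|/\sqrt n$ with $\mathbb{E}h_{i,1}^{s\,2}=O(n^{-3/2})$; the heavier machinery in Lemma~\ref{lem:Rsfestar} was needed there because $\tilde\eta$ is only $O(1)$. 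One small correction: your variance claim $O(n^{-2})$ for the average implicitly assumes $\#I_n^j\asymp\sqrt n$; it is cleaner to bound the sum directly as $\#I_n^j\cdot\mathbb{E}h_{i,1}^s+O_p\bigl((\#I_n^j)^{1/2}n^{-3/4}\bigr)=O_p(n^{-1/2})$, which gives the same conclusion and lifts to uniformity in $\tau$ via the CCK maximal inequality exactly as you indicate.
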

\begin{proof}
	Recall $Q^*_{n,1}(u,\tau)$ and $Q^*_{n,0}(u,\tau)$ defined in \eqref{eq:Qstar}. We focus on $Q^*_{n,1}(u,\tau)$. Recall the definition of $M_{ni}$ in the proof of Lemma \ref{lem:Rsfestar}. We have  
	\begin{align}
	\label{eq:l21nstar'qr}
	Q^*_{n,1}(u,\tau) = & \sum_{s \in \mathcal{S}} \sum_{i = N(s)+1}^{N(s)+n_1(s)}M_{ni}\int_0^{\frac{u_0 + u_1}{\sqrt{n}}}(1\{Y^s_i(1) - q_1(\tau) \leq v\} - 1\{Y^s_i(1) - q_1(\tau) \leq 0\})dv \notag \\
	= & \sum_{s \in \mathcal{S}} \sum_{i = N(s)+1}^{N(s)+n_1(s)}M_{ni}[\phi_i(u,\tau,s) - \mathbb{E}\phi_i(u,\tau,s)] + \sum_{s \in \mathcal{S}} \sum_{i = N(s)+1}^{N(s)+n_1(s)}M_{ni}\mathbb{E}\phi_i(u,\tau,s),
	\end{align}
	where $\phi_i(u,\tau,s) = \int_0^{\frac{u_0 + u_1}{\sqrt{n}}}(1\{Y^s_i(1) - q_1(\tau) \leq v\} - 1\{Y^s_i(1) - q_1(\tau) \leq 0\})dv$.
	
	Similar to \eqref{eq:Mtilde-M}, we have
	\begin{align}
	\label{eq:term1qr}
	& \sum_{s \in \mathcal{S}} \sum_{i=N(s) + 1}^{N(s)+n_1(s)}M_{ni}\left[\phi_i(u,\tau,s) - \mathbb{E}\phi_i(u,\tau,s) \right] \notag \\
	= & \sum_{s \in \mathcal{S}} \sum_{i=N(s) + 1}^{N(s)+n_1(s)}\xi_i^s\left[\phi_i(u,\tau,s) - \mathbb{E}\phi_i(u,\tau,s) \right] + \sum_{s \in \mathcal{S}}r_n(u,\tau,s),
	\end{align}
	where $\{\xi_i^s\}_{i=1}^n$ is a sequence of i.i.d. Poisson(1) random variables and is independent of everything else, and 
	\begin{align*}
	r_n(u,\tau,s) = \text{sign}(\widetilde{N}(n_1(s)) - n_1^*(s))\sum_{j=1}^\infty\frac{\#I_{n}^j(s)}{\sqrt{n(s)}}\frac{1}{\#I_{n}^j(s)}\sum_{i \in I_n^j(s)}\sqrt{n(s)}\left[\phi_i(u,\tau,s) - \mathbb{E}\phi_i(u,\tau,s) \right].
	\end{align*}
	We aim to show 
	\begin{align}
	\label{eq:rnqr}
	\sup_{\tau \in \Upsilon, s\in \mathcal{S}}|r_n(u,\tau,s)| = o_p(1), 
	\end{align}

	Recall that the proof of Lemma \ref{lem:Rsfestar} relies on \eqref{eq:kk} and the fact that 
	$$\mathbb{E}\sup_{n(s) \geq k \geq n_0}\sup_{\tau \in \Upsilon, s \in \mathcal{S}}\left|\frac{1}{k}\sum_{j=1}^k \tilde{\eta}_{j,1}(s,\tau)\right| \rightarrow 0.$$ 
	Using the same argument and replacing $\tilde{\eta}_{j,1}(s,\tau)$ by $\sqrt{n(s)}\left[\phi_i(u,\tau,s) - \mathbb{E}\phi_i(u,\tau,s) \right]$, in order to show \eqref{eq:rnqr}, we only need to verify that, as $n \rightarrow \infty$ followed by $n_0 \rightarrow \infty$, 
	\begin{align*}
	\mathbb{E}\sup_{n(s) \geq k \geq n_0}\sup_{\tau \in \Upsilon, s\in \mathcal{S}}\left|\frac{1}{k}\sum_{i=1}^k\sqrt{n(s)}\left[\phi_i(u,\tau,s) - \mathbb{E}\phi_i(u,\tau,s) \right]\right| \rightarrow 0.
	\end{align*}
	Note $\sup_{\tau \in \Upsilon, s\in \mathcal{S}}\left|\frac{1}{k}\sum_{i=1}^k\sqrt{n(s)}\left[\phi_i(u,\tau,s) - \mathbb{E}\phi_i(u,\tau,s) \right]\right|$ is bounded by $|u_0| + |u_1|$. It suffices to show that, for any $\eps>0$, as $n(s) \rightarrow \infty$ followed by $n_0 \rightarrow \infty$,  
	\begin{align}
	\label{eq:targetp_qr}
	\mathbb{P}\left(\sup_{n(s) \geq k \geq n_0}\sup_{\tau \in \Upsilon, s\in \mathcal{S}}\left|\frac{1}{k}\sum_{i=1}^k\sqrt{n(s)}\left[\phi_i(u,\tau,s) - \mathbb{E}\phi_i(u,\tau,s) \right]\right| \geq \eps \right) \rightarrow 0. 
	\end{align}
	Define the class of functions $\mathcal{F}_n$ as
	\begin{align*}
	\mathcal{F}_n = \{\sqrt{n(s)}\left[\phi_i(u,\tau,s) - \mathbb{E}\phi_i(u,\tau,s) \right]: \tau \in \Upsilon, s\in \mathcal{S} \}.
	\end{align*} 
	Then, $\mathcal{F}_n$ is nested by a VC-class with fixed VC-index. In addition, for fixed $u$, $\mathcal{F}_n$ has a bounded (and independent of $n$) envelope function $F = |u_0| + |u_1|$. Last, define $\mathcal{I}_l = \{2^l,2^l+1,\cdots,2^{l+1}-1\}$. Then, 
	\begin{align*}
	& \mathbb{P}\left(\sup_{n(s) \geq k \geq n_0}\sup_{\tau \in \Upsilon, s\in \mathcal{S}}\left|\frac{1}{k}\sum_{i=1}^k\sqrt{n(s)}\left[\phi_i(u,\tau,s,e) - \mathbb{E}\phi_i(u,\tau,s,e) \right]\right| \geq \eps \biggl|n(s)\right) \\
	\leq & \sum_{l=\lfloor \log_2(n_0) \rfloor}^{\lfloor \log_2(n(s)) \rfloor+1}\mathbb{P}\left(\sup_{k \in \mathcal{I}_l}\sup_{\tau \in \Upsilon, s\in \mathcal{S}}\left|\frac{1}{k}\sum_{i=1}^k\sqrt{n(s)}\left[\phi_i(u,\tau,s) - \mathbb{E}\phi_i(u,\tau,s) \right]\right| \geq \eps \biggl|n(s)\right) \\
	\leq & \sum_{l=\lfloor \log_2(n_0) \rfloor}^{\lfloor \log_2(n(s)) \rfloor+1}\mathbb{P}\left(\sup_{k \leq 2^{l+1}}\sup_{\tau \in \Upsilon, s\in \mathcal{S}}\left|\sum_{i=1}^k\sqrt{n(s)}\left[\phi_i(u,\tau,s) - \mathbb{E}\phi_i(u,\tau,s) \right]\right| \geq \eps 2^l \biggl| n(s)\right) \\ 
	\leq & \sum_{l=\lfloor \log_2(n_0) \rfloor}^{\lfloor \log_2(n(s)) \rfloor+1}9\mathbb{P}\left(\sup_{\tau \in \Upsilon, s\in \mathcal{S}}\left|\sum_{i=1}^{2^{l+1}}\sqrt{n(s)}\left[\phi_i(u,\tau,s) - \mathbb{E}\phi_i(u,\tau,s) \right]\right| \geq \eps 2^l/30 \biggl| n(s) \right) \\ 
	\leq & \sum_{l=\lfloor \log_2(n_0) \rfloor}^{\lfloor \log_2(n(s)) \rfloor+1}\frac{270 \mathbb{E}\left(\sup_{\tau \in \Upsilon, s\in \mathcal{S}}\left|\sum_{i=1}^{2^{l+1}}\sqrt{n(s)}\left[\phi_i(u,\tau,s) - \mathbb{E}\phi_i(u,\tau,s) \right] \right|\biggl| n(s)\right)}{\eps 2^l} \\
	\leq & \sum_{l=\lfloor \log_2(n_0) \rfloor}^{\lfloor \log_2(n(s)) \rfloor+1}\frac{C_1}{\eps 2^{l/2}} \\
	\leq & \frac{2C_1}{\eps \sqrt{n_0}} \rightarrow 0,
	\end{align*}
	where the first inequality holds by the union bound, the second inequality holds because on $\mathcal{I}_l$, $2^{l+1} \geq k \geq 2^l$, the third inequality follows the same argument in the proof of Theorem \ref{thm:qr}, the fourth inequality is due to the Markov inequality, the fifth inequality follows the standard maximal inequality such as \citet[Theorem 2.14.1]{VW96} and the constant $C_1$ is independent of $(l,\eps,n)$, and the last inequality holds by letting $n \rightarrow \infty$. Because $\eps$ is arbitrary, we have established \eqref{eq:targetp_qr}, and thus, \eqref{eq:rnqr}, which further implies that 
	\begin{align*}
	\sup_{\tau \in \Upsilon, s\in \mathcal{S}}|r_n(u,\tau,s)| = o_p(1). 
	\end{align*}
	In addition, for the leading term of \eqref{eq:term1qr}, we have 
	\begin{align*}
	& \sum_{s \in \mathcal{S}} \sum_{i=N(s) + 1}^{N(s)+n_1(s)}\xi_i^s\left[\phi_i(u,\tau,s) - \mathbb{E}\phi_i(u,\tau,s) \right] \\
	= & \sum_{s \in \mathcal{S}}\left[ \Gamma_n^{s*}(N(s)+n_1(s),\tau)- \Gamma_n^{s*}(N(s),\tau)\right],
	\end{align*}
	where 
	\begin{align*}
	\Gamma_n^{s*}(k,\tau,e) = & \sum_{i=1}^k \xi_i^s\int_0^{\frac{u_0+u_1}{\sqrt{n}}}\left(1\{Y_i^s(1) \leq q_1(\tau)+v\} - 1\{Y_i^s(1) \leq q_1(\tau)\} \right)dv \\
	& - k \mathbb{E}\left[\int_0^{\frac{u_0+u_1}{\sqrt{n}}}\left(1\{Y_i^s(1) \leq q_1(\tau)+v\} - 1\{Y_i^s(1) \leq q_1(\tau)\} \right)dv\right].
	\end{align*}
	By the same argument in \eqref{eq:Qn1}, we can show that 
	\begin{align*}
	\sup_{0 < t \leq 1,\tau \in \Upsilon}|\Gamma_n^{s*}(k,\tau,e) | = o_p(1), 
	\end{align*}
	where we need to use the fact that the Poisson(1) random variable has an exponential tail and thus
	\begin{align*}
	\mathbb{E}\sup_{i\in\{1,\cdots,n\},s\in \mathcal{S}} \xi_i^s = O(\log(n)). 
	\end{align*}
	Therefore, 
	\begin{align}
	\label{eq:l21nstar''qr}
	\sup_{\tau \in \Upsilon}\left|\sum_{s \in \mathcal{S}} \sum_{i=N(s) + 1}^{N(s)+n_1(s)}M_{ni}\left[\phi_i(u,\tau,s) - \mathbb{E}\phi_i(u,\tau,s) \right]\right| = o_p(1).
	\end{align}
	For the second term on the RHS of \eqref{eq:l21nstar'qr}, we have 
	\begin{align}
	\label{eq:l21nstar'''qr}
	\sum_{s \in \mathcal{S}} \sum_{i=N(s) + 1}^{N(s)+n_1(s)}M_{ni}\mathbb{E}\phi_i(u,\tau,s) = & \sum_{s \in \mathcal{S}}n_1^*(s)\mathbb{E}\phi_i(u,\tau,s) \notag \\
	= & \sum_{s \in \mathcal{S}} \pi p(s) \frac{f_1(q_1(\tau)|s)}{2}(u_0+u_1)^2 +o(1) \notag \\
	= &  \frac{\pi f_1(q_1(\tau))(u_0+u_1)^2}{2} +o(1),
	\end{align}
	where the $o(1)$ term holds uniformly over $\tau \in \Upsilon$, the first equality holds because $\sum_{i=N(s) + 1}^{N(s)+n_1(s)}M_{ni} = n^*_1(s)$ and the second equality holds by the same calculation in \eqref{eq:Qn1} and the facts that $n^*(s)/n \convP p(s)$ and 
	\begin{align*}
	\frac{n^*_1(s)}{n} = \frac{D_n^*(s)+\pi n^*(s)}{n} \convP \pi p(s).
	\end{align*}
	Combining \eqref{eq:l21nstar'qr}--\eqref{eq:rnqr}, \eqref{eq:l21nstar''qr}, and \eqref{eq:l21nstar'''qr}, we have 
	\begin{align*}
	Q^*_{n,1}(u,\tau) \convP \frac{\pi f_1(q_1(\tau))(u_0+u_1)^2}{2}, 
	\end{align*}
	uniformly over $\tau \in \Upsilon.$ By the same argument, we can show that, uniformly over $\tau \in \Upsilon$, 
	\begin{align*}
	Q^*_{n,0}(u,\tau) \convP \frac{(1-\pi) f_0(q_0(\tau))u^2_0}{2}. 
	\end{align*}
	This concludes the proof. 
\end{proof}

\newpage
\begin{center}
	\Large{Second Supplement to ``Quantile Treatment Effects and Bootstrap Inference under Covariate-Adaptive Randomization": Strata Fixed Effects Quantile Regression Estimation and Additional Simulation Results}
\end{center}
\begin{abstract}
This paper gathers the theories for the strata fixed effects quantile regression estimator and additional simulation results. Section \ref{sec:sfe} describes the estimation, weighted bootstrap, and covariate-adaptive bootstrap inference procedures for the strata fixed effects quantile regression estimator. Sections \ref{sec:proofsfe}--\ref{sec:proofcabsfe} prove Theorems \ref{thm:sfe}--\ref{thm:cabsfe}, respectively. Section \ref{sec:lem2} contains the proofs of the technical lemmas. Section \ref{sec:addsim} contains additional simulation results.  
\end{abstract}
\setcounter{page}{1} \renewcommand\thesection{\Alph{section}} %

\renewcommand{\thefootnote}{\arabic{footnote}} \setcounter{footnote}{0}

\setcounter{equation}{0}

\section{Quantile Regression with Strata Fixed Effects}
\label{sec:sfe}
The strata fixed effects estimator for the ATE is obtained by a linear regression of outcome $Y_i$ on the treatment status $A_i$, controlling for strata dummies $\{1\{S_i=s\}_{s \in \mathcal{S}}\}$. \cite{BCS17} point out that, due to the Frisch-Waugh-Lovell theorem, this estimator is equal to the linear coefficient in the regression of $Y_i$ on $\tilde{A}_i$, in which $\tilde{A}_i$ is the residual of the projection of $A_i$ on the strata dummies. Unlike the expectation, the quantile operator is nonlinear. Therefore, we cannot consistently estimate QTEs by a linear QR of $Y_i$ on $A_i$ and strata dummies. Instead, based on the equivalence relationship, we propose to run the QR of $Y_i$ on $\tilde{A}_i$. Formally, let $\tilde{A}_i = A_i - \hat{\pi}(S_i)$ and $\dot{\tilde{A}}_i = (1,\tilde{A}_i)'$, where $\hat{\pi}(s) = n_1(s)/n(s)$, $n_1(s) = \sum_{i=1}^n A_i 1\{S_i = s\}$, and $n(s) = \sum_{i=1}^n 1\{S_i = s\}$. Then, the strata fixed effects (SFE) estimator for the QTE is $\hat{\beta}_{sfe,1}(\tau)$, where   
\begin{align*}
\hat{\beta}_{sfe}(\tau) \equiv \left(\hat{\beta}_{sfe,0}(\tau),\hat{\beta}_{sfe,1}(\tau) \right)'= \argmin_{b = (b_0,b_1)^\prime \in \Re^2}\sum_{i =1}^n \rho_\tau\left(Y_i - \dot{\tilde{A}}_i^\prime b\right). 
\end{align*} 

\begin{thm}
	\label{thm:sfe}
	If Assumptions \ref{ass:assignment1}.1--\ref{ass:assignment1}.3 and \ref{ass:tau} hold and $p(s)>0$ for $s \in \mathcal{S}$, then, uniformly over $\tau \in \Upsilon$,  
	\begin{equation*}
	\sqrt{n}\left(\hat{\beta}_{sfe,1}(\tau)-q(\tau)\right)\convD \mathcal{B}_{sfe}(\tau),~\text{as}~n\rightarrow \infty,
	\end{equation*}
	where $\mathcal{B}_{sfe}(\cdot)$ is a Gaussian process with covariance kernel $\Sigma_{sfe}(\cdot,\cdot)$. The expression for $\Sigma_{sfe}(\cdot,\cdot)$ can be found in the proof of this theorem. 	
\end{thm}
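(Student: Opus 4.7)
The plan is to follow the convex-objective approach of Theorem \ref{thm:qr} applied to the criterion $L_n(u,\tau)=\sum_{i=1}^n[\rho_\tau(Y_i-\dot{\tilde A}_i'\beta_{sfe}(\tau)-\dot{\tilde A}_i'u/\sqrt n)-\rho_\tau(Y_i-\dot{\tilde A}_i'\beta_{sfe}(\tau))]$, so that $\sqrt n(\hat\beta_{sfe}(\tau)-\beta_{sfe}(\tau))=\argmin_u L_n(u,\tau)$. The population coefficient is identified by the first-order condition $\mathbb{E}[\dot{\tilde A}_i(\tau-1\{Y_i\le \dot{\tilde A}_i'b\})]\to 0$ combined with $\hat\pi(S_i)\convP\pi$; this yields $\beta_{sfe}(\tau)=(\pi q_1(\tau)+(1-\pi)q_0(\tau),\,q(\tau))'$, so the second entry of $\beta_{sfe}(\tau)$ is exactly $q(\tau)$. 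As in Theorem \ref{thm:qr}, applying Knight's identity gives $L_n(u,\tau)=-u'W_n(\tau)+Q_n(u,\tau)$, and once we show (i) $\sup_{\tau\in\Upsilon}|Q_n(u,\tau)-\tfrac{1}{2}u'Q_{sfe}(\tau)u|\convP 0$ for each $u$ with $Q_{sfe}(\tau)$ bounded and positive-definite uniformly in $\tau$, and (ii) $W_n(\tau)\convD\tilde{\mathcal B}_{sfe}(\tau)$ uniformly in $\tau$ for some Gaussian process, then \citet[Theorem~2]{K09} delivers $\sqrt n(\hat\beta_{sfe}(\tau)-\beta_{sfe}(\tau))=Q_{sfe}^{-1}(\tau)W_n(\tau)+o_p(1)$ uniformly in $\tau$, and reading off the second component yields the claim with $\Sigma_{sfe}(\tau_1,\tau_2)=\bigl([Q_{sfe}(\tau_1)]^{-1}\tilde\Sigma_{sfe}(\tau_1,\tau_2)[Q_{sfe}(\tau_2)]^{-1}\bigr)_{22}$.

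For the quadratic part, I would split $\dot{\tilde A}_i=(1,A_i-\pi)'-(0,\hat\pi(S_i)-\pi)'$ and exploit $\sup_s|\hat\pi(s)-\pi|=O_p(n^{-1/2})$. Inside the integrand of $Q_n$, replacing $\hat\pi(S_i)$ by $\pi$ only shifts $\dot{\tilde A}_i'u/\sqrt n$ and $\dot{\tilde A}_i'\beta_{sfe}(\tau)$ by $O_p(n^{-1})$, which is negligible after rearranging observations by strata and by treatment status (the same trick as in Step~1 of the proof of Theorem \ref{thm:qr}, using the i.i.d.\ sequences $\{Y_i^s(0),Y_i^s(1)\}$ and the VC-based uniform-in-$\tau$ empirical-process bound from \citet[Corollary 5.1]{CCK14}). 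Collecting the two treatment-group contributions then delivers
\[
Q_{sfe}(\tau)=\pi f_1(q_1(\tau))\begin{pmatrix}1&1-\pi\\ 1-\pi&(1-\pi)^2\end{pmatrix}+(1-\pi)f_0(q_0(\tau))\begin{pmatrix}1&-\pi\\ -\pi&\pi^2\end{pmatrix},
\]
whose determinant is $\pi(1-\pi)f_1(q_1(\tau))f_0(q_0(\tau))\cdot\bigl[(1-\pi)\pi\cdot\text{something positive}\bigr]$, bounded and bounded away from zero uniformly on $\Upsilon$ by Assumption \ref{ass:tau}.

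For the linear part, I would write $\tau-1\{Y_i\le\dot{\tilde A}_i'\beta_{sfe}(\tau)\}=\tau-1\{Y_i(A_i)\le q_{A_i}(\tau)-(\hat\pi(S_i)-\pi)q(\tau)\}$, linearize the indicator around $q_{A_i}(\tau)$ using stochastic equicontinuity of the VC class $\{1\{Y(j)\le q_j(\tau)+\delta\}:\tau\in\Upsilon,\,|\delta|\le Cn^{-1/2}\}$, and split $\tilde A_i=(A_i-\pi)-(\hat\pi(S_i)-\pi)$ in front. After regrouping and using $\sum_{i:S_i=s}(\hat\pi(s)-\pi)=D_n(s)$, $W_n(\tau)$ becomes the sum of three pieces perfectly analogous to $W_{n,1},W_{n,2},W_{n,3}$ in \eqref{eq:W}, plus lower-order correction terms generated by the $\hat\pi$-centering that are shown to be $o_p(1)$ uniformly in $\tau$ by the same reordering/maximal-inequality arguments as in Lemma \ref{lem:Q}. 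The joint weak convergence and mutual independence of the three pieces then carry over verbatim from Lemma \ref{lem:Q}, giving $W_n(\tau)\convD\tilde{\mathcal B}_{sfe}(\tau)$ with an explicit bivariate-Gaussian covariance kernel $\tilde\Sigma_{sfe}$ whose second-diagonal evaluation, sandwiched by $Q_{sfe}^{-1}$, defines $\Sigma_{sfe}(\cdot,\cdot)$.

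The main obstacle is the interplay between the density-based linearization of the indicator and the $\hat\pi(S_i)-\pi=D_n(S_i)/n(S_i)$ centering: both produce correction terms of order $1/\sqrt n$ that must be tracked carefully to verify that the deterministic $\hat\pi$-induced shifts recombine with the quadratic-part $Q_{sfe}$ without biasing the limit, and that the random $D_n(s)/\sqrt n$ pieces end up pooled into the $W_{n,2}$-type component (which then contributes the $\gamma(s)$-dependent part of $\Sigma_{sfe}$). Verifying this cleanly—so that the final $W_n$ limit decomposes into three \emph{mutually independent} Gaussian pieces driven respectively by $\{(Y_i^s(0),Y_i^s(1))\}$, by $\{D_n(s)/\sqrt n\}_{s\in\mathcal S}$, and by $\{S_i\}$—is the technical crux and is the direct analogue of Steps~1--2 in Lemma \ref{lem:Q}.
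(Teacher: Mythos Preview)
Your plan is correct and closely parallels the paper's, with one slip and one genuine difference in bookkeeping. The slip: replacing $\hat\pi(S_i)$ by $\pi$ shifts $\dot{\tilde A}_i'\beta_{sfe}(\tau)$ by $(\hat\pi(S_i)-\pi)q(\tau)=O_p(n^{-1/2})$, not $O_p(n^{-1})$. Your conclusion that $Q_n\to\tfrac12 u'Q_{sfe}u$ is nonetheless correct---the base-point perturbation enters the quadratic only at order $O_p(n^{-3/2})$ per term after Taylor expansion---but the claimed order is wrong, and you yourself treat the same shift as $O_p(n^{-1/2})$ two lines later when linearizing the indicator in $W_n$. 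Also, $\det Q_{sfe}(\tau)=\pi(1-\pi)f_1(q_1(\tau))f_0(q_0(\tau))$ exactly, without the extra factor you indicate.

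The bookkeeping difference is where Knight's identity is centered. You center at $\dot{\tilde A}_i'\beta_{sfe}(\tau)$ (built from $\hat\pi$), so the $(\hat\pi-\pi)q(\tau)$ shift lives inside the score's indicator and you recover the conditional-density piece
\[
q(\tau)\bigl[\pi\iota_1 f_1(q_1(\tau)\mid s)+(1-\pi)\iota_0 f_0(q_0(\tau)\mid s)\bigr]\,D_n(s)/\sqrt n
\]
by linearizing that indicator. The paper instead centers at $\breve A_i'\beta_{sfe}(\tau)$ with $\breve A_i=(1,A_i-\pi)'$; the shift then enters the upper limit of the quadratic's integral, whose second-order expansion produces a $u$-linear cross term carrying the same conditional densities, at the price of a $u$-independent remainder $h_{sfe,n}(\tau)$ that must be carried along. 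Both routes deliver the identical $W_{sfe,n,2}$ and hence the same $\Sigma_{sfe}$; yours avoids the $h$-terms but needs an extra stochastic-equicontinuity argument for the indicator linearization, while the paper's avoids that linearization but must control the quadratic uniformly over the random argument $E_n(s)=\sqrt n(\hat\pi(s)-\pi)$.
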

In particular, the asymptotic variance for $\hat{\beta}_{sfe,1}(\tau)$ is 
\begin{align*}
\zeta_Y^2(\pi,\tau) + \zeta_A^{\prime 2}(\pi,\tau) + \zeta_S^2(\tau),
\end{align*}
where $\zeta_Y^2(\pi,\tau)$ and $\zeta_S^2(\tau)$ are the same as those defined below Theorem \ref{thm:qr},
\begin{align*}
\zeta_A^{\prime 2}(\pi,\tau)= & \mathbb{E}\gamma(S)\biggl[(m_1(S,\tau)-m_0(S,\tau))\left(\frac{1-\pi}{\pi f_1(q_1(\tau))} - \frac{\pi}{(1-\pi) f_0(q_0(\tau))} \right) \\
& + q(\tau)\left(\frac{f_1(q_1(\tau)|S)}{f_1(q_1(\tau))}-\frac{f_0(q_0(\tau)|S)}{f_0(q_0(\tau))}\right) \biggr]^2.
\end{align*}

Three remarks are in order. First, if the treatment assignment rule achieves strong balance, then $\zeta_A^{\prime 2}(\pi,\tau) = 0$ and the asymptotic variances for  $\hat{\beta}_{1}(\tau)$ and $\hat{\beta}_{sfe,1}(\tau)$ are the same. Second, if the treatment assignment rule does not achieve strong balance, then it is difficult to compare the asymptotic variances of $\hat{\beta}_{1}(\tau)$ and $\hat{\beta}_{sfe,1}(\tau)$. Based on our simulation results in Section \ref{sec:addsim}, the SFE estimator usually has a smaller standard error. Third, in order to analytically compute the asymptotic variance $\hat{\beta}_{sfe,1}(\tau)$, one needs to nonparametrically estimate not only the unconditional densities $f_j(\cdot)$ but also the conditional densities $f_j(\cdot|s)$ for $j = 0,1$ and $s \in \mathcal{S}.$ However, such difficulty can be avoided by the covariate-adaptive bootstrap inference considered in Section \ref{sec:CABI}.

We can compute the weighted bootstrap counterpart of strata fixed effects estimator: 
\begin{align*}
\hat{\beta}_{sfe}^w(\tau) = \argmin_b \sum_{i=1}^n \xi_i \rho_\tau\left(Y_i - \dot{\tilde{A}}_i^{w'}b\right),
\end{align*}
where $\dot{\tilde{A}}^w_i = (1,\tilde{A}^w_i)'$, $\tilde{A}^w_i = A_i - \hat{\pi}^w(S_i)$, and $\hat{\pi}^w(\cdot)$ is defined in Section \ref{sec:SBI}. The second element of $\hat{\beta}_{sfe}^w(\tau)$ is our bootstrap estimator of the QTE. 
\begin{thm}
	\label{thm:sfeb}
	If Assumptions \ref{ass:assignment1}--\ref{ass:weight} hold and $p(s)>0$ for all $s \in \mathcal{S}$, then uniformly over $\tau \in \Upsilon$ and conditionally on data,
	\begin{align*}
	\sqrt{n}\left(\hat{\beta}_{sfe,1}^w(\tau) - \hat{\beta}_{sfe,1}(\tau)\right) \convD \tilde{\mathcal{B}}_{sfe}(\tau),~\text{as}~n\rightarrow \infty,
	\end{align*}
	where $\tilde{\mathcal{B}}_{sfe}(\tau)$ is a Gaussian process with covariance kernel being equal to that of $\mathcal{B}_{sfe}(\tau)$ defined in Theorem \ref{thm:sfe} with $\gamma(s)$ being replaced by $\pi(1-\pi)$.
\end{thm}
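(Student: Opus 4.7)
The plan is to mirror the proof of Theorem~\ref{thm:b} for the SQR weighted bootstrap, with the added complication that the design vector $\dot{\tilde{A}}_i^w = (1, A_i - \hat{\pi}^w(S_i))'$ depends on the bootstrap weights through $\hat{\pi}^w(S_i)$. Writing $\beta(\tau) = (q_0(\tau) + \pi q(\tau), q(\tau))'$ for the probability limit of $\hat{\beta}_{sfe}(\tau)$ identified in the proof of Theorem~\ref{thm:sfe}, and using the change of variable $u = \sqrt{n}(b - \beta(\tau))$, I would invoke Knight's identity to split the $\xi$-weighted convex objective into a linear piece $-u' W_n^w(\tau)$ and a quadratic remainder $Q_n^w(u,\tau)$. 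The quadratic term is handled by a cross-stratum rearrangement as in Step~1 of the proof of Theorem~\ref{thm:qr}, together with $\sup_{s\in\mathcal{S}}|\hat{\pi}^w(s) - \pi| = o_p(1)$, to show $\sup_{\tau \in \Upsilon}|Q_n^w(u,\tau) - \tfrac{1}{2} u' Q_{sfe}(\tau) u| = o_p(1)$, with the same Hessian $Q_{sfe}(\tau)$ as in the non-bootstrap case.

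Next I would linearize $W_n^w(\tau)$ by writing $\hat{\pi}^w(S_i) = \pi + (\hat{\pi}^w(S_i) - \pi)$ inside the indicator arguments. This substitution is delicate: a term proportional to $\sqrt{n}(\hat{\pi}^w(s) - \pi)\cdot q(\tau)\cdot[f_j(q_j(\tau)|s)/f_j(q_j(\tau))]$ survives and is precisely what generates the $q(\tau)[f_1(q_1(\tau)|S)/f_1(q_1(\tau)) - f_0(q_0(\tau)|S)/f_0(q_0(\tau))]$ piece of $\zeta_A^{\prime 2}$. After collecting terms, $W_n^w(\tau)$ admits a decomposition in analogy with \eqref{eq:W} but with $\xi_i$ in place of $1$. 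Subtracting the expansion of $\sqrt{n}(\hat{\beta}_{sfe}(\tau) - \beta(\tau))$ from Theorem~\ref{thm:sfe} and applying Kato's convexity lemma, I obtain that $\sqrt{n}(\hat{\beta}_{sfe,1}^w(\tau) - \hat{\beta}_{sfe,1}(\tau))$ equals a $(\xi_i - 1)$-weighted linear functional of a uniformly bounded influence function, plus a remainder $o_p(1)$ uniformly in $\tau \in \Upsilon$.

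The final step is the conditional weak convergence. Conditional stochastic equicontinuity follows from the maximal-inequality argument used in Lemma~\ref{lem:Qwqr}, exploiting the sub-exponential tail of $\xi_i$ in Assumption~\ref{ass:weight}, after splitting the influence function into a $Y$-part, an $A$-part, and an $S$-part. The finite-dimensional conditional CLT is Lindeberg--Feller; the key identity is
\[
\frac{1}{n} \sum_{i=1}^n (A_i - \pi)^2 1\{S_i = s\} \convP \pi(1-\pi) p(s),
\]
which is exactly what replaces $\gamma(s)$ by $\pi(1-\pi)$ in the covariance kernel, because the conditional independence of the $\xi_i$ across $i$ washes out the negative cross-sectional dependence induced by the covariate-adaptive assignment. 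The main obstacle will be the $\hat{\pi}^w$-substitution in $W_n^w(\tau)$: the surviving $f_j(q_j(\tau)|S)/f_j(q_j(\tau))$ contribution must be tracked carefully so that its cross-interaction with the $(\xi_i - 1)(A_i - \pi)$ pieces of the $A$-part assembles the full $\zeta_A^{\prime 2}$ kernel with $\gamma(s)$ replaced by $\pi(1-\pi)$, rather than collapsing into a single $\zeta_A^{2}$-type expression.
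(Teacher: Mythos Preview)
Your proposal is correct and lands on the same $(\xi_i-1)$-weighted influence function $\mathcal{J}_i(s,\tau)$ and the same Lindeberg--Feller argument (with the key identity $\tfrac{1}{n}\sum_i (A_i-\pi)^2 1\{S_i=s\}\convP \pi(1-\pi)p(s)$) that the paper uses in Lemma~\ref{lem:Qsfec}.

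The one genuine bookkeeping difference is where the conditional-density contribution $q(\tau)[f_1(q_1(\tau)|s)/f_1(q_1(\tau)) - f_0(q_0(\tau)|s)/f_0(q_0(\tau))]$ enters. You center Knight's identity at $\dot{\tilde{A}}_i^{w\prime}\tilde{\beta}(\tau)$, so the quadratic piece is cleanly $\tfrac{1}{2}u'Q_{sfe}(\tau)u$ and the density term must be extracted by linearizing the indicator $1\{Y_i \le \dot{\tilde{A}}_i^{w\prime}\tilde{\beta}(\tau)\}$ in $\hat{\pi}^w(s)-\pi$ inside $W_n^w(\tau)$; this requires its own uniform-in-$\tau$ stochastic-equicontinuity step for the shifted indicator process. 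The paper instead centers at $\breve{A}_i'\tilde{\beta}(\tau)$ with $\breve{A}_i=(1,A_i-\pi)'$, so the indicator is exactly $1\{Y_i(j)\le q_j(\tau)\}$ and the $f_j(q_j(\tau)|s)$ piece emerges from the cross term in the expansion of the integral remainder $L_{2,n}^w$ (Lemma~\ref{lem:L2c}); the price is an extra $u$-independent term $h_{sfe,n}^w(\tau)$ that drops out of the $\argmin$. Both routes invoke the same VC-class maximal inequalities, and your route is arguably more transparent about why the density ratio appears in $\zeta_A^{\prime 2}$, while the paper's route avoids a separate linearization lemma by folding that step into the already-needed analysis of $L_{2,n}^w$.
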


Similar to the SQR estimator, the weighted bootstrap fails to capture the cross-sectional dependence due to the covariate-adaptive randomization, and thus, overestimates the asymptotic variance of the SFE estimator.

We can also implement the covariate-adaptive bootstrap. Let 
\begin{align*}
\hat{\beta}_{sfe}^*(\tau) = \argmin_b \sum_{i=1}^n \rho_\tau\left(Y_i^* - \dot{\tilde{A}}_i^{*'}b\right),
\end{align*}
where $\dot{\tilde{A}}_i^* = (1,\tilde{A}_i^*)'$, $\tilde{A}_i^* = A_i^* - \hat{\pi}^*(S_i^*)$, $\hat{\pi}^*(s) = \frac{n_1^*(s)}{n^*(s)}$, and $(Y_i^*,A_i^*,S_i^*)_{i=1}^n$ is the covariate-adaptive bootstrap sample generated via the procedure mentioned in Section \ref{sec:CABI}. The the second element $\hat{\beta}_{sfe,1}^*(\tau)$ of $\hat{\beta}_{sfe}^*(\tau)$ is the covariate-adaptive SFE estimator. 

\begin{thm}
	\label{thm:cabsfe}
	If Assumptions \ref{ass:assignment1}, \ref{ass:tau}, and \ref{ass:bassignment} hold and $p(s)>0$ for all $s \in \mathcal{S}$, then, uniformly over $\tau \in \Upsilon$ and conditionally on data,   
	\begin{align*}
	\sqrt{n}\left(\hat{\beta}_{sfe,1}^*(\tau) - \hat{q}(\tau)\right) \convD \mathcal{B}_{sfe}(\tau),~\text{as}~n\rightarrow \infty.
	\end{align*}
\end{thm}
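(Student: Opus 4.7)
\medskip

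The plan is to mirror the proof strategy of Theorem \ref{thm:cab}, adapting each step to the SFE regressor $\dot{\tilde{A}}^*_i = (1, A^*_i - \hat{\pi}^*(S^*_i))'$. First I would write $\sqrt{n}(\hat{\beta}^*_{sfe}(\tau) - \beta(\tau)) = \argmin_u L^*_n(u,\tau)$ where $\beta(\tau) = (q_0(\tau), q(\tau))'$ and, by Knight's identity,
\begin{align*}
L^*_n(u,\tau) = -u'W^*_{sfe,n}(\tau) + Q^*_{sfe,n}(u,\tau),
\end{align*}
with $W^*_{sfe,n}(\tau) = \frac{1}{\sqrt{n}}\sum_i \dot{\tilde{A}}^*_i \bigl(\tau - 1\{Y^*_i \leq \dot{\tilde{A}}^{*\prime}_i \beta(\tau)\}\bigr)$ and $Q^*_{sfe,n}(u,\tau)$ the usual convex remainder. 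Note the novelty: the residual inside the indicator is not $Y^*_i - \dot{A}^{*\prime}_i\beta(\tau)$ but $Y^*_i - q_0(\tau) - \tilde{A}^*_i q(\tau)$, which differs from the corresponding SQR expression by the deterministic-in-stratum shift $\hat{\pi}^*(S^*_i)q(\tau)$. I would absorb this shift using $\hat{\pi}^*(s) - \pi = D^*_n(s)/n^*(s) = o_p(1)$ (Assumption \ref{ass:bassignment}.2) and a Taylor expansion of $F_1(\cdot|s)$, $F_0(\cdot|s)$, which produces an additional $D^*_n(s)$-driven term in the score.

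Second, following Lemma \ref{lem:Qstar}, I would show that uniformly over $\tau \in \Upsilon$, $Q^*_{sfe,n}(u,\tau) \convP \frac{1}{2} u' Q_{sfe}(\tau) u$, where $Q_{sfe}(\tau)$ is the SFE Hessian matrix derived in the proof of Theorem \ref{thm:sfe}; the argument is identical once the $\hat{\pi}^*(S^*_i)$ recentering is carried through, since $\tilde{A}^*_i - (A^*_i - \pi) = o_p(1)$ uniformly. By \citet[Theorem 2]{K09} this then yields
\begin{align*}
\sqrt{n}(\hat{\beta}^*_{sfe}(\tau) - \beta(\tau)) = [Q_{sfe}(\tau)]^{-1} W^*_{sfe,n}(\tau) + r^*_n(\tau)
\end{align*}
with $\sup_{\tau \in \Upsilon}\|r^*_n(\tau)\| = o_p(1)$.

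Third, I would decompose $W^*_{sfe,n}(\tau)$ into three asymptotically independent pieces in exact analogy to the proof of Theorem \ref{thm:cab}: (i) a piece involving the centered residuals $\eta^*_{i,j}(s,\tau)$, which Lemma \ref{lem:Rsfestar} replaces by independent Poisson(1) weighted sums $\sum \xi^s_i \tilde{\eta}_{i,j}(s,\tau)$ over the canonical index ranges $\lfloor nF(s)\rfloor + 1, \ldots, \lfloor n(F(s) + \pi p(s))\rfloor$ and its complement; (ii) a piece $\sum_s D^*_n(s)/\sqrt{n}$ multiplying the SFE-specific stratum coefficients that appear in $\zeta^{\prime 2}_A(\pi,\tau)$, whose conditional limit is Gaussian with the kernel $\mathbb{E}\gamma(S)[\cdot]^2$ by Assumption \ref{ass:bassignment}.1; and (iii) a piece $\sum_i \tilde{\xi}_i \big[m_1(S_i,\tau)/f_1(q_1(\tau)) - m_0(S_i,\tau)/f_0(q_0(\tau))\big]/\sqrt{n}$ obtained by the Poisson(1) coupling of \citet[Section 3.6]{VW96} applied to the nonparametric bootstrap of $\{S^*_i\}$. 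Crucially, when I read off the limiting influence function for the second coordinate of $[Q_{sfe}(\tau)]^{-1}W^*_{sfe,n}(\tau)$, the $D^*_n(s)$ piece and the $\tilde{\xi}_i m_j(S_i,\tau)$ piece do \emph{not} get recentered around their original-sample analogs (the former because $A^*_i$ is freshly generated by the assignment rule rather than resampled; the latter because it contributes to both $\hat{\beta}^*_{sfe,1}(\tau)$ and $\hat{q}(\tau)$ and cancels in the difference). This is exactly why the correct center is $\hat{q}(\tau)$, not $\hat{\beta}_{sfe,1}(\tau)$; combining with the $\tilde{q}(\tau)$ identity used in the proof of Theorem \ref{thm:cab} gives
\begin{align*}
\sqrt{n}(\hat{\beta}^*_{sfe,1}(\tau) - \hat{q}(\tau)) &= \mathcal{U}^*_{1,n}(\tau) + \mathcal{U}^*_{2,n}(\tau) + \mathcal{U}^*_{3,n}(\tau) + o_p(1),
\end{align*}
where $\mathcal{U}^*_{1,n}$ and $\mathcal{U}^*_{3,n}$ are the $(\xi^s_i - 1)$- and $(\tilde{\xi}_i - 1)$-weighted sums from parts (i) and (iii), and $\mathcal{U}^*_{2,n}$ is the $D^*_n(s)$ term from part (ii).

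Finally, conditional weak convergence to $\mathcal{B}_{sfe}(\tau)$ follows from conditional stochastic equicontinuity (via the VC/maximal inequality arguments of Lemma \ref{lem:Qwqr}) and a conditional Lindeberg–Feller CLT for the three independent components, which sum to give exactly the covariance kernel $\Sigma_{sfe}(\tau_1,\tau_2)$ identified in Theorem \ref{thm:sfe}. The main obstacle I anticipate is bookkeeping the extra $\hat{\pi}^*(S^*_i)$-induced terms in both $W^*_{sfe,n}$ and $Q^*_{sfe,n}$, and verifying that they combine to produce the SFE-specific coefficient $(m_1(s,\tau) - m_0(s,\tau))\bigl(\frac{1-\pi}{\pi f_1} - \frac{\pi}{(1-\pi)f_0}\bigr) + q(\tau)\bigl(\frac{f_1(q_1|S)}{f_1} - \frac{f_0(q_0|S)}{f_0}\bigr)$ appearing in $\zeta^{\prime 2}_A(\pi,\tau)$, rather than the simpler SQR coefficient; this is a linearization-around-$\hat{\pi}^*(s) = \pi$ exercise that uses Assumption \ref{ass:tau}.2 and the pointwise $o_p(1)$ rate of $D^*_n(s)/n^*(s)$.
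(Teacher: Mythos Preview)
There is a genuine gap: your centering point is wrong. For the SFE regressor $\dot{\tilde{A}}^*_i = (1, A^*_i - \hat{\pi}^*(S^*_i))'$, the population intercept is \emph{not} $q_0(\tau)$ but $\tilde{\beta}_0(\tau) = \pi q_1(\tau) + (1-\pi)q_0(\tau)$, exactly as in the proof of Theorem~\ref{thm:sfe}. With your choice $\beta(\tau) = (q_0(\tau), q(\tau))'$, the residual $Y^*_i - \dot{\tilde{A}}^{*\prime}_i\beta(\tau)$ for a treated observation equals $Y^*_i(1) - q_1(\tau) + \hat{\pi}^*(S^*_i)q(\tau)$. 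The shift $\hat{\pi}^*(S^*_i)q(\tau)$ is $O_p(1)$ (it is close to $\pi q(\tau)$, not to zero), so it cannot be ``absorbed'' by the $o_p(1)$ rate of $\hat{\pi}^*(s)-\pi$. Concretely, the score $W^*_{sfe,n}(\tau)$ you write down is $O_p(\sqrt{n})$ in both components, the Knight remainder $Q^*_{sfe,n}(u,\tau)$ does not converge to a finite quadratic, and \citet[Theorem 2]{K09} does not apply because $\sqrt{n}(\hat{\beta}^*_{sfe,0}(\tau)-q_0(\tau))$ is itself $O_p(\sqrt{n})$.

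The paper fixes this by centering at $\tilde{\beta}(\tau)$ and, more subtly, by writing the Knight identity relative to $\breve{A}^{*\prime}_i\tilde{\beta}(\tau)$ with $\breve{A}^*_i = (1, A^*_i - \pi)'$ rather than $\dot{\tilde{A}}^{*\prime}_i\tilde{\beta}(\tau)$. This makes the reference residual exactly $Y^*_i(j) - q_j(\tau)$, and the discrepancy $\dot{\tilde{A}}^{*\prime}_i\tilde{\beta}(\tau) - \breve{A}^{*\prime}_i\tilde{\beta}(\tau) = (\pi - \hat{\pi}^*(S^*_i))q(\tau) = -E^*_n(S^*_i)q(\tau)/\sqrt{n}$ is now of the right $O_p(n^{-1/2})$ order. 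It is this $E^*_n(s)$ term that, after the second-order expansion of $L^*_{2,n}$, delivers the conditional-density pieces $f_j(q_j(\tau)|s)$ you correctly anticipated in the SFE coefficient; it also produces a $u$-free but non-negligible term $h^*_{sfe,n}(\tau)$ (quadratic in $E^*_n(s)$) that the paper carries along and which Kato's theorem tolerates. Once the centering is repaired, the remainder of your plan---the three-piece decomposition of $W^*_{sfe,n}$, the Poisson coupling via Lemma~\ref{lem:Rsfestar} for the $\eta^*_{i,j}$ part, Assumption~\ref{ass:bassignment}.1 for the $D^*_n(s)$ part, the multinomial/Poisson coupling of \citet[Section 3.6]{VW96} for the $\{S^*_i\}$ part, and the $\hat{q}(\tau)$ re-centering---is correct and coincides with the paper's argument (Lemmas~\ref{lem:L2star} and~\ref{lem:Qsfestar}).
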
 

Unlike the weighted bootstrap, the covariate-adaptive bootstrap can mimic the cross-sectional dependence, and thus, produces an asymptotically valid standard error for the SFE estimator.  
\section{Proof of Theorem \ref{thm:sfe}}
\label{sec:proofsfe}
Define $\tilde{\beta}_1(\tau) = q(\tau)$, $\tilde{\beta}_0(\tau) = \pi q_1(\tau) + (1-\pi)q_0(\tau)$, $\tilde{\beta}(\tau) = (\tilde{\beta}_0(\tau),\tilde{\beta}_1(\tau))^\prime$, and $\breve{A}_i = (1,A_i - \pi)^\prime$. For arbitrary $b_0$ and $b_1$, let $u_0 = \sqrt{n}(b_0-\tilde{\beta}_0(\tau))$, $u_1 = \sqrt{n}(b_1-\tilde{\beta}_1(\tau))$, $u=(u_0, u_1)' \in \Re^2$, and 
\begin{align*}
L_{sfe,n}(u,\tau)  = \sum_{i=1}^n\left[\rho_\tau(Y_i - \breve{A}'_i\tilde{\beta}(\tau) - (\dot{\tilde{A}}_i'b  - \breve{A}'_i\tilde{\beta}(\tau))) - \rho_\tau(Y_i - \breve{A}'_i\tilde{\beta}(\tau))\right].
\end{align*}  
Then, by the change of variable, we have that 
\begin{align*}
\sqrt{n}(\hat{\beta}_{sfe}(\tau) - \tilde{\beta}(\tau)) = \argmin_{u} L_{sfe,n}(u,\tau).
\end{align*}
Notice that $L_{sfe,n}(u,\tau)$ is convex in $u$ for each $\tau$ and bounded in $\tau$ for each $u$. In the following, we aim to show that there exists 
$$g_{sfe,n}(u,\tau) = - u'W_{sfe,n}(\tau) + \frac{1}{2}u'Q_{sfe}(\tau)u$$ 
such that (1) for each $u$, 
\begin{align*}
\sup_{\tau \in \Upsilon}|L_{sfe,n}(u,\tau) - g_{sfe,n}(u,\tau)-h_{sfe,n}(\tau)| \convP 0,
\end{align*}
where $h_{sfe,n}(\tau)$ does not depend on $u$; 
(2) the maximum eigenvalue of $Q_{sfe}(\tau)$ is bounded from above and the minimum eigenvalue of $Q_{sfe}(\tau)$ is bounded away from $0$ uniformly over $\tau \in \Upsilon$; (3) $W_{sfe,n}(\tau) \convD \tilde{\mathcal{B}}(\tau)$ uniformly over $\tau \in \Upsilon$ for some $\tilde{\mathcal{B}}(\tau)$.\footnote{We abuse the notation and denote the weak limit of $W_{sfe,n}(\tau)$ as $\tilde{\mathcal{B}}(\tau)$. This limit is different from the weak limit of $W_n(\tau)$ in the proof of Theorem \ref{thm:qr}.} Then by \citet[Theorem 2]{K09}, we have 
\begin{align*}
\sqrt{n}(\hat{\beta}_{sfe}(\tau) - \tilde{\beta}(\tau)) = [Q_{sfe}(\tau)]^{-1}W_{sfe,n}(\tau) + r_{sfe,n}(\tau),
\end{align*}
where $\sup_{\tau \in \Upsilon}||r_{sfe,n}(\tau)|| = o_p(1)$. In addition, by (3), we have, uniformly over $\tau \in \Upsilon$, 
\begin{align*}
\sqrt{n}(\hat{\beta}_{sfe}(\tau) - \tilde{\beta}(\tau)) \convD [Q_{sfe}(\tau)]^{-1}\tilde{\mathcal{B}}(\tau) \equiv \mathcal{B}(\tau).
\end{align*}
The second element of $\mathcal{B}(\tau)$ is $\mathcal{B}_{sfe}(\tau)$ stated in Theorem \ref{thm:sfe}. Next, we prove requirements (1)--(3) in three steps. 

\textbf{Step 1.} By Knight's identity (\citep{K98}), we have 
\begin{align*}
& L_{sfe,n}(u,\tau) \\
= & -\sum_{i=1}^{n}(\dot{\tilde{A}}_i'(\tilde{\beta}(\tau) + \frac{u}{\sqrt{n}})  - \breve{A}'_i\tilde{\beta}(\tau))\left(\tau- 1\{Y_i\leq \dot{\tilde{A}}_i'\tilde{\beta}(\tau)\}\right) \\
& + \sum_{i=1}^{n}\int_0^{\dot{\tilde{A}}_i'(\tilde{\beta}(\tau) + \frac{u}{\sqrt{n}})  - \breve{A}'_i\tilde{\beta}(\tau)}\left(1\{Y_i -  \dot{\tilde{A}}_i'\tilde{\beta}(\tau)\leq v\} - 1\{Y_i -  \dot{\tilde{A}}_i'\tilde{\beta}(\tau)\leq 0\} \right)dv\\
\equiv & -L_{1,n}(u,\tau) + L_{2,n}(u,\tau).
\end{align*}

\textbf{Step 1.1.} We first consider $L_{1,n}(u,\tau)$. Note that $\tilde{\beta}_1(\tau) = q(\tau)$ and 
\begin{align}
\label{eq:l1}
& L_{1,n}(u,\tau) \notag \\
= & \sum_{i=1}^{n} \sum_{s \in \mathcal{S}} A_i 1\{S_i = s\}\left(\frac{u_0}{\sqrt{n}} + (1-\hat{\pi}(s))\frac{u_1}{\sqrt{n}} + (\pi - \hat{\pi}(s))q(\tau) \right)\left(\tau- 1\{Y_i(1)\leq q_1(\tau)\}\right) \notag \\
& +  \sum_{i=1}^{n} \sum_{s \in \mathcal{S}} (1-A_i) 1\{S_i = s\}\left(\frac{u_0}{\sqrt{n}}  -\hat{\pi}(s)\frac{u_1}{\sqrt{n}} + (\pi - \hat{\pi}(s))q(\tau) \right)\left(\tau- 1\{Y_i(0)\leq q_0(\tau)\}\right)  \notag \\
\equiv &  L_{1,1,n}(u,\tau) + L_{1,0,n}(u,\tau). 
\end{align}
Let $\iota_1 = (1,1-\pi)'$ and $\iota_0 = (1,-\pi)'$. Note that $\hat{\pi}(s) - \pi = \frac{D_n(s)}{n(s)}$. Then, for $ L_{1,1,n}(u,\tau)$, we have 
\begin{align}
\label{eq:l11}
& L_{1,1,n}(u,\tau) \notag \\
= & \sum_{i=1}^{n} \sum_{s \in \mathcal{S}} A_i 1\{S_i = s\}\left[\frac{u'\iota_1}{\sqrt{n}} + (\pi - \hat{\pi}(s))\left(q(\tau)+\frac{u_1}{\sqrt{n}}\right) \right]\left(\tau- 1\{Y_i(1)\leq q_1(\tau)\}\right) \notag \\
= & \frac{u'\iota_1}{\sqrt{n}}\sum_{i=1}^{n} \sum_{s \in \mathcal{S}} A_i 1\{S_i = s\}\left(\tau- 1\{Y_i(1)\leq q_1(\tau)\}\right) \notag \\
&- \sum_{s \in \mathcal{S}} \frac{D_n(s)}{\sqrt{n}}\frac{u_1}{n(s)}\sum_{i =1}^nA_i1\{S_i = s\}\left(\tau- 1\{Y_i(1)\leq q_1(\tau)\}\right) \notag \\
& + \sum_{s \in \mathcal{S}} (\pi - \hat{\pi}(s))q(\tau)\sum_{i =1}^nA_i1\{S_i = s\}\left(\tau- 1\{Y_i(1)\leq q_1(\tau)\}\right) \notag \\
= & \sum_{s \in \mathcal{S}}\frac{u'\iota_1}{\sqrt{n}}\sum_{i=1}^{n} \biggl[ A_i 1\{S_i = s\}\eta_{i,1}(s,\tau) + (A_i-\pi) 1\{S_i = s\}m_{1}(s,\tau) + \pi 1\{S_i = s\}m_{1}(s,\tau)\biggr] \notag \\
& - 
\sum_{s \in \mathcal{S}}\frac{D_n(s)}{\sqrt{n}}\frac{u_1}{n(s)}\sum_{i =1}^n\biggl[A_i 1\{S_i=s\}\eta_{i,1}(s,\tau)+ (A_i - \pi)1\{S_i=s\}m_1(s,\tau) + \pi1\{S_i=s\}m_1(s,\tau)\biggr] + h_{1,1}(\tau) \notag \\
= & \sum_{s \in \mathcal{S}}\frac{u'\iota_1}{\sqrt{n}}\sum_{i=1}^{n} \biggl[ A_i 1\{S_i = s\}\eta_{i,1}(s,\tau) + (A_i-\pi) 1\{S_i = s\}m_{1}(s,\tau) + \pi 1\{S_i = s\}m_{1}(s,\tau)\biggr] \notag \\
& - \sum_{s \in \mathcal{S}} \frac{u_1D_n(s) \pi m_1(s,\tau)}{\sqrt{n}} + h_{1,1}(\tau) + R_{sfe,1,1}(u,\tau),
\end{align}
where 
\begin{align*}
h_{1,1}(\tau) =  \sum_{s \in \mathcal{S}} (\pi - \hat{\pi}(s))q(\tau)\sum_{i =1}^nA_i1\{S_i = s\}\left(\tau- 1\{Y_i(1)\leq q_1(\tau)\}\right)
\end{align*}
and 
\begin{align*}
R_{sfe,1,1}(u,\tau) = - \sum_{s \in \mathcal{S}}\frac{u_1D_n(s)}{\sqrt{n} n(s)}\sum_{i =1}^n\biggl[A_i 1\{S_i=s\}\eta_{i,1}(s,\tau)+ (A_i - \pi)1\{S_i=s\}m_1(s,\tau)\biggr].
\end{align*}
By the same argument in Lemma \ref{lem:Q} and Assumption \ref{ass:assignment1}.3, we have for every $s\in\mathcal{S}$,
\begin{align}
\label{eq:eta1_sfe}
\sup_{\tau \in \Upsilon}\left|\frac{1}{\sqrt{n}}\sum_{i =1}^n A_i 1\{S_i=s\}\eta_{i,1}(s,\tau)\right| = O_p(1)
\end{align}
and 
\begin{align*}
\sup_{\tau \in \Upsilon}\left|\frac{1}{\sqrt{n}}\sum_{i =1}^n\biggl[(A_i - \pi)1\{S_i=s\}m_1(s,\tau)\biggr]\right| = \sup_{\tau \in \Upsilon}\left|\frac{D_n(s)m_1(s,\tau)}{\sqrt{n}}\right| = O_p(1). 
\end{align*}
In addition, note that $n(s)/n \convP p(s)$. Therefore, 
\begin{align*}
\sup_{\tau \in \Upsilon}|R_{sfe,1,1}(u,\tau)| = O_p(\frac{1}{\sqrt{n}}) = o_p(1).
\end{align*}
Similarly, we have 
\begin{align}
\label{eq:l10}
& L_{1,0,n}(u,\tau) \notag  \\
= & \sum_{s \in \mathcal{S}}\frac{u'\iota_0}{\sqrt{n}}\sum_{i=1}^{n} \biggl[ (1-A_i) 1\{S_i = s\}\eta_{i,0}(s,\tau) - (A_i-\pi) 1\{S_i = s\}m_{0}(s,\tau) + (1-\pi) 1\{S_i = s\}m_{0}(s,\tau)\biggr] \notag \\
& - \sum_{s \in \mathcal{S}} \frac{u_1D_n(s) (1-\pi) m_0(s,\tau)}{\sqrt{n}} + h_{1,0}(\tau) + R_{sfe,1,0}(u,\tau),
\end{align}
where 
\begin{align*}
h_{1,0}(\tau) =  \sum_{s \in \mathcal{S}} (\pi - \hat{\pi}(s))q(\tau)\sum_{i =1}^n(1-A_i)1\{S_i = s\}\left(\tau- 1\{Y_i(0)\leq q_0(\tau)\}\right),
\end{align*}
\begin{align*}
R_{sfe,1,0}(u,\tau) = - \sum_{s \in \mathcal{S}}\frac{u_1D_n(s)}{\sqrt{n} n(s)}\sum_{i =1}^n\biggl[(1-A_i) 1\{S_i=s\}\eta_{i,0}(\tau)- (A_i - \pi)1\{S_i=s\}m_0(s,\tau)\biggr],
\end{align*}
and 
\begin{align*}
\sup_{\tau \in \Upsilon}|R_{sfe,1,0}(\tau)| = O_p(\frac{1}{\sqrt{n}}) = o_p(1).
\end{align*}

Combining \eqref{eq:l1}, \eqref{eq:l11}, \eqref{eq:l10} and letting $\iota_2 = (1,1-2\pi)'$, we have 
\begin{align}
\label{eq:l1end}
L_{1,n}(u,\tau) = & \frac{1}{\sqrt{n}}\sum_{s \in \mathcal{S}}\sum_{i =1}^n\biggl[u'\iota_1 A_i 1\{S_i=s\}\eta_{i,1}(s,\tau) + u'\iota_0(1-A_i)1\{S_i=s\}\eta_{i,0}(s,\tau) \biggr] \notag \\
& + \sum_{s \in \mathcal{S}}u'\iota_2 \frac{D_n(s)}{\sqrt{n}}\left(m_1(s,\tau) - m_0(s,\tau)\right) \notag \\
& + \frac{1}{\sqrt{n}}\sum_{i =1}^n\left(u'\iota_1 \pi m_1(S_i,\tau) + u'\iota_0 (1-\pi) m_0(S_i,\tau)\right) \notag \\
& + R_{sfe,1,1}(u,\tau)+R_{sfe,1,0}(u,\tau) + h_{1,1}(\tau) + h_{1,0}(\tau).
\end{align}

\textbf{Step 1.2.} Next, we consider $L_{2,n}(u,\tau)$. Denote $E_n(s) = \sqrt{n}(\hat{\pi}(s) -\pi)$. Then, 
\begin{align*}
\{E_n(s)\}_{s \in \mathcal{S}} = \left\{\frac{D_n(s)}{\sqrt{n}}\frac{n}{n(s)}\right\}_{s \in \mathcal{S}} \convD \N(0,\Sigma_D') = O_p(1), 
\end{align*}
where $\Sigma_D'=\diag(\gamma(s)/p(s):s\in\mathcal{S})$. In addition,  
\begin{align}
\label{eq:l2}
& L_{2,n}(u,\tau) \notag \\
= & \sum_{s \in \mathcal{S}}\sum_{i=1}^n A_i 1\{S_i = s\}\int_0^{\frac{u'\iota_1}{\sqrt{n}} - \frac{E_n(s)}{\sqrt{n}}\left(q(\tau)+\frac{u_1}{\sqrt{n}}\right)}\left(1\{Y_i(1) \leq q_1(\tau) + v\} - 1\{Y_i(1) \leq q_1(\tau)\}\right)dv \notag \\
& + \sum_{s \in \mathcal{S}}\sum_{i=1}^n (1-A_i) 1\{S_i = s\}\int_0^{\frac{u'\iota_0}{\sqrt{n}} - \frac{E_n(s)}{\sqrt{n}}\left(q(\tau)+\frac{u_1}{\sqrt{n}}\right)}\left(1\{Y_i(0) \leq q_0(\tau) + v\} - 1\{Y_i(0) \leq q_0(\tau)\}\right)dv \notag \\ 
\equiv &  L_{2,1,n}(u,\tau) + L_{2,0,n}(u,\tau). 
\end{align}
By the same argument in \eqref{eq:Qn1}, we have 
\begin{align}
\label{eq:L2sfe}
L_{2,1,n}(u,\tau) \stackrel{d}{=} & \sum_{s \in \mathcal{S}} \sum_{i=N(s) + 1}^{N(s)+n_1(s)}\int_0^{\frac{u'\iota_1}{\sqrt{n}} - \frac{E_n(s)}{\sqrt{n}}\left(q(\tau)+\frac{u_1}{\sqrt{n}}\right)}\left(1\{Y_i^s(1) \leq q_1(\tau) + v\} - 1\{Y_i^s(1) \leq q_1(\tau)\}\right)dv \notag \\
\equiv & \sum_{s \in \mathcal{S}}\left[\Gamma_n^s(N(s)+n_1(s),\tau,E_n(s)) - \Gamma_n^s(N(s),\tau,E_n(s))\right],
\end{align}
where 
\begin{align*}
\Gamma_n^s(k,\tau,e) = \sum_{i=1}^k \int_0^{\frac{u' \iota_1 - e(q(\tau)+\frac{u_1}{\sqrt{n}})}{\sqrt{n}}}\left(1\{Y_i^s(1) \leq q_1(\tau) + v\} - 1\{Y_i^s(1) \leq q_1(\tau)\} \right)dv.
\end{align*}
We want to show, for some any sufficiently large constant $M$, 
\begin{align}
\label{eq:Gamma_sfe}
\sup_{0 < t \leq 1,\tau \in \Upsilon, |e| \leq M}|\Gamma_n^s(\lfloor nt \rfloor,\tau,e) - \mathbb{E}\Gamma_n^s(\lfloor nt \rfloor,\tau,e)| = o_p(1). 
\end{align}
By the same argument in \eqref{eq:Gamma}, it suffices to show that 
\begin{align*}
\sup_{\tau \in \Upsilon, |e| \leq M}n||\mathbb{P}_n - \mathbb{P}||_\mathcal{F} = o_p(1),
\end{align*}
where 
\begin{align*}
\mathcal{F} = \left\{\int_0^{\frac{u' \iota_1 - e(q(\tau)+\frac{u_1}{\sqrt{n}})}{\sqrt{n}}}\left(1\{Y_i^s(1) \leq q_1(\tau) + v\} - 1\{Y_i^s(1) \leq q_1(\tau)\} \right)dv: \tau \in \Upsilon, |e| \leq M \right\}
\end{align*}
with an envelope $F=\frac{|u_0| + |u_1| + M\sup_{ \tau \in \Upsilon}|q(\tau)| + \frac{|u_1|}{\sqrt{n}}}{\sqrt{n}}$. Note that  
\begin{align*}
\sup_{f \in \mathcal{F}}\mathbb{E}f^2 \leq & \sup_{\tau \in \Upsilon}\mathbb{E} \left[\frac{|u_0| + |u_1| + M|q(\tau)| + \frac{|u_1|}{\sqrt{n}}}{\sqrt{n}}1\left\{|Y_i^s(1) - q_1(\tau)|\leq \frac{|u_0| + |u_1| + M|q(\tau)| + \frac{|u_1|}{\sqrt{n}}}{\sqrt{n}}\right\} \right]^2 \\
\lesssim & n^{-3/2}, 
\end{align*}
and $\mathcal{F}$ is a VC-class with a fixed VC index. Then, by \citet[Corollary 5.1]{CCK14}, 
\begin{align}
\label{eq:gammasfe}
\mathbb{E}\sup_{\tau \in \Upsilon, |e| \leq M}|\Gamma^s_{n}(n,\tau,e) - \mathbb{E}\Gamma^s_{n}(n,\tau,e)| = n||\mathbb{P}_n  - \mathbb{P}||_\mathcal{F} \lesssim n\left[\sqrt{\frac{\log(n)}{n^{5/2}}} + \frac{\log(n)}{n^{3/2}}\right] = o(1). 
\end{align} 
In addition, we have 
\begin{align}
\label{eq:Gamma_sfe1}
\mathbb{E}\Gamma_n^s(\lfloor nt \rfloor,\tau,e) = & \lfloor nt \rfloor \int_0^{\frac{u' \iota_1 - e(q(\tau)+\frac{u_1}{\sqrt{n}})}{\sqrt{n}}} [F_1(q_1(\tau)+v|s) - F_1(q_1(\tau)|s)]dv \notag \\
= & t \frac{f_1(q_1(\tau)|s)}{2}(u' \iota_1 - eq(\tau))^2 + o(1),
\end{align}
where $F_j(\cdot|s)$ and $f_j(\cdot|s)$, $j=0,1$ are the conditional CDF and PDF for $Y(j)$ given $S=s$, respectively, and the $o(1)$ term holds uniformly over $\{\tau \in \Upsilon, |e| \leq M\}$. Combining \eqref{eq:Gamma_sfe} and \eqref{eq:Gamma_sfe1} with the fact that $\frac{n_1(s)}{n} \convP \pi p(s)$, we have
\begin{align}
\label{eq:l21}
L_{2,1,n}(u,\tau) = & \sum_{s \in \mathcal{S}}\pi p(s)\frac{f_1(q_1(\tau)|s)}{2}(u' \iota_1 - E_n(s)q(\tau))^2 + R'_{sfe,2,1}(u,\tau) \notag \\
= & \frac{\pi f_1(q_1(\tau))}{2}(u' \iota_1)^2 - \sum_{s \in \mathcal{S}}f_1(q_1(\tau)|s)\frac{\pi D_n(s)u' \iota_{1}}{\sqrt{n}}q(\tau) + h_{2,1}(\tau) + R_{sfe,2,1}(u,\tau), 
\end{align}
where 
\begin{align*}
\sup_{\tau \in \Upsilon}|R'_{sfe,2,1}(u,\tau)| = o_p(1), \quad \sup_{\tau \in \Upsilon}|R_{sfe,2,1}(u,\tau)| = o_p(1),
\end{align*}
and
\begin{align*}
h_{2,1}(\tau) =  \sum_{s \in \mathcal{S}}\frac{\pi f_1(q_1(\tau)|s)}{2}p(s)E^2_n(s)\tilde{\beta}^2_1(\tau).
\end{align*}

Similarly, we have 
\begin{align}
\label{eq:l20}
L_{2,0,n}(u,\tau) = & \frac{(1-\pi)f_0(q_0(\tau))}{2}(u' \iota_0)^2 - \sum_{s \in \mathcal{S}}(1-\pi)f_0(q_0(\tau)|s)\frac{D_n(s)u'\iota_0}{\sqrt{n}}q(\tau) \notag \\
&  + h_{2,0}(\tau) + R_{sfe,2,0}(u,\tau), 
\end{align}
where 
\begin{align*}
\sup_{\tau \in \Upsilon}|R_{sfe,2,0}(u,\tau)| = o_p(1) \quad \text{and} \quad h_{2,0}(\tau) =  \sum_{s \in \mathcal{S}}\frac{(1-\pi)f_0(q_0(\tau)|s)}{2}p(s)E^2_n(s)\tilde{\beta}^2_1(\tau).
\end{align*}

Combining \eqref{eq:l2}, \eqref{eq:l21}, and \eqref{eq:l20}, we have 
\begin{align}
\label{eq:l2end}
L_{2,n}(u,\tau) = & \frac{1}{2}u'Q_{sfe}(\tau)u - \sum_{s \in \mathcal{S}} q(\tau) \left[f_1(q_1(\tau)|s)\pi u'\iota_1 +f_0(q_0(\tau)|s)(1-\pi) u'\iota_0 \right]\frac{D_n(s)}{\sqrt{n}} \notag \\
& + R_{sfe,2,1}(u,\tau)+R_{sfe,2,0}(u,\tau) + h_{2,1}(\tau) + h_{2,0}(\tau).
\end{align}
where 
\begin{align*}
Q_{sfe} = & \pi f_1(q_1(\tau)) \iota_1 \iota_1' + (1-\pi)f_0(q_0(\tau))\iota_0 \iota_0' \\
= & \begin{pmatrix}
\pi f_1(q_1(\tau)) + (1-\pi) f_0(q_0(\tau)) & \pi(1-\pi)(f_1(q_1(\tau))  - f_0(q_0(\tau)) ) \\
\pi(1-\pi)(f_1(q_1(\tau))  - f_0(q_0(\tau)) ) & \pi(1-\pi)((1-\pi)f_1(q_1(\tau)) + \pi f_0(q_0(\tau)))
\end{pmatrix}.
\end{align*}

\textbf{Step 1.3.} Last, by combining \eqref{eq:l1end} and \eqref{eq:l2end}, we have 
\begin{align*}
L_{sfe,n}(u,\tau) = - u'W_{sfe,n}(\tau) + \frac{1}{2}u'Q_{sfe}(\tau)u + R_{sfe}(u,\tau) + h_{sfe,n}(\tau),
\end{align*}
where 
\begin{align}
\label{eq:wsfe}
& W_{sfe,n}(\tau) \notag \\
= &\frac{1}{\sqrt{n}}\sum_{s \in \mathcal{S}}\sum_{i =1}^n\biggl[\iota_1 A_i 1\{S_i=s\}\eta_{i,1}(s,\tau) + \iota_0(1-A_i)1\{S_i=s\}\eta_{i,0}(s,\tau) \biggr] \notag \\
& + \sum_{s \in \mathcal{S}}\biggl\{\iota_2 \left(m_1(s,\tau) - m_0(s,\tau)\right) + q(\tau)\biggl[f_1(q_1(\tau)|s)\pi \iota_1 + f_0(q_0(\tau)|s)(1-\pi) \iota_0\biggr]\biggr\}\frac{D_n(s)}{\sqrt{n}}\notag \\
& + \frac{1}{\sqrt{n}}\sum_{i =1}^n\left(\iota_1 \pi m_1(S_i,\tau) + \iota_0 (1-\pi) m_0(S_i,\tau)\right) \notag \\
\equiv & W_{sfe,n,1}(\tau) + W_{sfe,n,2}(\tau)+W_{sfe,n,3}(\tau),
\end{align}
\begin{align*}
R_{sfe}(u,\tau) = R_{sfe,1,1}(u,\tau) + R_{sfe,1,0}(u,\tau) + R_{sfe,2,1}(u,\tau) + R_{sfe,2,0}(u,\tau)
\end{align*}
such that $\sup_{\tau \in \Upsilon}|R_{sfe}(u,\tau)| = o_p(1)$, and 
\begin{align*}
h_{sfe,n}(\tau) = h_{1,1}(\tau)+h_{1,0}(\tau)+h_{2,1}(\tau)+h_{2,0}(\tau).
\end{align*}
This concludes the proof of Step 1.

\textbf{Step 2}. Note that $\text{det}(Q_{sfe}(\tau)) = \pi(1-\pi) f_0(q_0(\tau))f_1(q_1(\tau))$, which is bounded and bounded away from zero. In addition, it can be shown that the two eigenvalues of $Q_{sfe}(\tau)$ are nonnegative. This leads to the desired result. 

\textbf{Step 3}. 
Lemma \ref{lem:Qsfe} establishes the weak convergence that 
\begin{align*}
(W_{sfe,1,n}(\tau), W_{sfe,2,n}(\tau), W_{sfe,3,n}(\tau)) \convD (\mathcal{B}_{sfe,1}(\tau),\mathcal{B}_{sfe,2}(\tau),\mathcal{B}_{sfe,3}(\tau)), 
\end{align*}
where $(\mathcal{B}_{sfe,1}(\tau),\mathcal{B}_{sfe,2}(\tau),\mathcal{B}_{sfe,3}(\tau))$ are three independent two-dimensional Gaussian processes with covariance kernels $\Sigma_{1}(\tau_1,\tau_2)$, $\Sigma_{2}(\tau_1,\tau_2)$, and $\Sigma_{3}(\tau_1,\tau_2)$, respectively. Therefore, uniformly over $\tau \in \Upsilon$, 
\begin{align*}
W_{sfe,n}(\tau) \convD \tilde{\mathcal{B}}(\tau),
\end{align*}
where $\tilde{\mathcal{B}}(\tau)$ is a two-dimensional Gaussian process with covariance kernel 
$$\tilde{\Sigma}(\tau_1,\tau_2) = \sum_{j=1}^3\Sigma_{j}(\tau_1,\tau_2).$$
Consequently, 
\begin{align*}
\sqrt{n}(\hat{\beta}_{sfe}(\tau) - \tilde{\beta}(\tau)) \convD \mathcal{B}(\tau) \equiv Q_{sfe}^{-1}(\tau)\tilde{\mathcal{B}}(\tau),
\end{align*}
where $\Sigma(\tau_1,\tau_2)$, the covariance kernel of $\mathcal{B}(\tau)$, has the expression that 
\begin{align*}
& \Sigma(\tau_1,\tau_2) \\
= & Q_{sfe}^{-1}(\tau_1)\tilde{\Sigma}(\tau_1,\tau_2)Q_{sfe}^{-1}(\tau_2) \\
= & \biggl\{ \frac{1}{\pi f_1(q_1(\tau_1))f_1(q_1(\tau_2))}\left[\min(\tau_1,\tau_2) - \tau_1\tau_2 - \mathbb{E}m_1(S,\tau_1)m_1(S,\tau_2)\right]\begin{pmatrix}
\pi^2 & \pi \\
\pi & 1
\end{pmatrix} \\
& + \frac{1}{(1-\pi) f_0(q_0(\tau_1))f_0(q_0(\tau_2))}\left[\min(\tau_1,\tau_2) - \tau_1\tau_2 - \mathbb{E}m_0(S,\tau_1)m_0(S,\tau_2)\right]\begin{pmatrix}
(1-\pi)^2 & \pi-1 \\
\pi-1 & 1
\end{pmatrix} \biggr\}\\
& + \biggl\{\mathbb{E}\gamma(S)\biggl[(m_1(S,\tau_1) - m_0(S,\tau_1))\begin{pmatrix}
\frac{\pi}{f_0(q_0(\tau_1))} + \frac{1-\pi}{f_1(q_1(\tau_1))} \\
\frac{1-\pi}{\pi f_1(q_1(\tau_1))} - \frac{\pi}{(1-\pi)f_0(q_0(\tau_1))}
\end{pmatrix} 
+q(\tau_1)\frac{f_1(q_1(\tau_1)|S)}{f_1(q_1(\tau_1))}\begin{pmatrix}
\pi \\
1
\end{pmatrix} \\
& + q(\tau_1)\frac{f_0(q_0(\tau_1)|S)}{f_0(q_0(\tau_1))}\begin{pmatrix}
1-\pi \\
-1
\end{pmatrix}\biggr] \times \biggl[(m_1(S,\tau_2) - m_0(S,\tau_2))\begin{pmatrix}
\frac{\pi}{f_0(q_0(\tau_2))} + \frac{1-\pi}{f_1(q_1(\tau_2))} \\
\frac{1-\pi}{\pi f_1(q_1(\tau_2))} - \frac{\pi}{(1-\pi)f_0(q_0(\tau_2))}
\end{pmatrix}  \\
& +q(\tau_2)\frac{f_1(q_1(\tau_2)|S)}{f_1(q_1(\tau_2))}\begin{pmatrix}
\pi \\
1
\end{pmatrix}
+ q(\tau_2)\frac{f_0(q_0(\tau_2)|S)}{f_0(q_0(\tau_2))}\begin{pmatrix}
1-\pi \\
-1
\end{pmatrix}\biggr] \biggr\}\\
& + \biggl\{\mathbb{E}\biggl[\frac{m_1(S,\tau_1)}{f_1(q_1(\tau_1))}\begin{pmatrix}
\pi \\
1
\end{pmatrix} + \frac{m_0(S,\tau_1)}{f_0(q_0(\tau_1))}\begin{pmatrix}
1-\pi \\
-1
\end{pmatrix} \biggr] \biggl[\frac{m_1(S,\tau_2)}{f_1(q_1(\tau_2))}\begin{pmatrix}
\pi \\
1
\end{pmatrix} + \frac{m_0(S,\tau_2)}{f_0(q_0(\tau_2))}\begin{pmatrix}
1-\pi \\
-1
\end{pmatrix} \biggr]' \biggr\}.
\end{align*}
By checking the $(2,2)$-element of $\Sigma(\tau_1,\tau_2)$, we have 
\begin{align*}
& \Sigma_{sfe}(\tau_1,\tau_2) \\
= &\frac{\min(\tau_1,\tau_2) - \tau_1\tau_2 - \mathbb{E}m_1(S,\tau_1)m_1(S,\tau_2)}{\pi f_1(q_1(\tau_1))f_1(q_1(\tau_2))} + \frac{\min(\tau_1,\tau_2) - \tau_1\tau_2 - \mathbb{E}m_0(S,\tau_1)m_0(S,\tau_2)}{(1-\pi) f_0(q_0(\tau_1))f_0(q_0(\tau_2))} \\
& + \mathbb{E}\gamma(S)\biggl[(m_1(S,\tau_1) - m_0(S,\tau_1))\left(\frac{1-\pi}{\pi f_1(q_1(\tau_1))} - \frac{\pi}{(1-\pi)f_0(q_0(\tau_1))}\right) + q(\tau_1)\left(\frac{f_1(q(\tau_1)|S)}{f_1(q_1(\tau_1))} -\frac{f_0(q(\tau_1)|S)}{f_0(q_0(\tau_1))}\right)\biggr] \\
&\times \biggl[(m_1(S,\tau_2) - m_0(S,\tau_2))\left(\frac{1-\pi}{\pi f_1(q_1(\tau_2))} - \frac{\pi}{(1-\pi)f_0(q_0(\tau_2))}\right) + q(\tau_2)\left(\frac{f_1(q(\tau_2)|S)}{f_1(q_2(\tau_2))} -\frac{f_0(q(\tau_2)|S)}{f_0(q_0(\tau_2))}\right)\biggr] \\
& +\mathbb{E}\biggl[\frac{m_1(S,\tau_1)}{f_1(q_1(\tau_1))} - \frac{m_0(S,\tau_1)}{f_0(q_0(\tau_1))}\biggr] \biggl[\frac{m_1(S,\tau_2)}{f_1(q_1(\tau_2))} - \frac{m_0(S,\tau_2)}{f_0(q_0(\tau_2))}\biggr].
\end{align*}

\section{Proof of Theorem \ref{thm:sfeb}}
\label{sec:proofsfeb}
Note that 
\begin{align*}
\sqrt{n}(\hat{\beta}_{sfe}^w(\tau)-\tilde{\beta}(\tau)) = \argmin_u L_{sfe,n}^w(u,\tau), 
\end{align*} 
where 
\begin{align*}
L_{sfe,n}^w(u,\tau) = \sum_{i=1}^n \xi_{i}\biggl[\rho_\tau(Y_i - \dot{\tilde{A}}_i^{w \prime}(\tilde{\beta}(\tau) + \frac{u}{\sqrt{n}})) - \rho_\tau(Y_i - \breve{A}_i^{\prime}\tilde{\beta}(\tau)) \biggr], 
\end{align*}
$\dot{\tilde{A}}_i^w = (1,\tilde{A}_i^w)'$, $\tilde{A}_i^w = A_i - \hat{\pi}^w(S_i)$, and $$\hat{\pi}^w(s) = \frac{\sum_{i =1}^n \xi_i A_i1\{S_i = s\}}{\sum_{i =1}^n \xi_i 1\{S_i = s\}}.$$

Similar to the proof of Theorem \ref{thm:sfe}, we divide the proof into two steps. In the first step, we show that there exists 
$$g_{sfe,n}^w(u,\tau) = - u'W_{sfe,n}^w(\tau) + \frac{1}{2}u'Q_{sfe}(\tau)u$$
and $h_{sfe,n}^w(\tau)$ independent of $u$
such that for each $u$ 
\begin{align*}
\sup_{\tau \in \Upsilon}|L^w_{sfe,n}(u,\tau) - g_{sfe,n}^w(u,\tau) - h_{sfe,n}^w(\tau)|\convP 0.
\end{align*}
In addition, we will show that $\sup_{\tau \in \Upsilon}||W_{sfe,n}^w(\tau)|| = O_p(1)$. Then, by \citet[Theorem 2]{K09}, we have 
\begin{align*}
\sqrt{n}(\hat{\beta}^w_{sfe}(\tau)-\tilde{\beta}(\tau)) = [Q_{sfe}(\tau)]^{-1}W_{sfe,n}^w(\tau)  + R_{sfe,n}^w(\tau),
\end{align*}
where 
\begin{align*}
\sup_{\tau \in \Upsilon}||R_{sfe,n}^w(\tau)|| = o_p(1). 
\end{align*}
In the second step, we show that, conditionally on data,  
\begin{align*}
\sqrt{n}(\hat{\beta}^w_{sfe,1}(\tau)-\hat{\beta}_{sfe,1}(\tau)) \convD \tilde{\mathcal{B}}_{sfe}(\tau). 
\end{align*}

\textbf{Step 1.} Following Step 1 in the proof of Theorem \ref{thm:sfe}, we have 
\begin{align*}
L^w_{sfe,n}(u,\tau) \equiv - L_{1,n}^w(u,\tau) + L_{2,n}^w(u,\tau),
\end{align*}
where 
\begin{align*}
& L_{1,n}^w(u,\tau) \\
= & \sum_{i =1}^n \sum_{s \in \mathcal{S}}\xi_iA_i 1\{S_i=s\}\left(\frac{u_0}{\sqrt{n}} + (1 - \hat{\pi}^w(s))\frac{u_1}{\sqrt{n}} + (\pi - \hat{\pi}^w(s))q(\tau)\right)(\tau - 1\{Y_i \leq q_1(\tau)\}) \\
& + \sum_{i =1}^n \sum_{s \in \mathcal{S}}\xi_i(1-A_i)1\{S_i=s\}\left(\frac{u_0}{\sqrt{n}}-\hat{\pi}^w(s)\frac{u_1}{\sqrt{n}} + (\pi - \hat{\pi}^w(s))q(\tau)\right)(\tau - 1\{Y_i \leq q_0(\tau)\}) \\
\equiv & L_{1,1,n}^w(u,\tau) + L_{1,0,n}^w(u,\tau),
\end{align*}
\begin{align*}
& L_{2,n}^w(u,\tau) \\
= & \sum_{s \in \mathcal{S}} \sum_{i =1}^n \xi_iA_i 1\{S_i=s\}\int_0^{\frac{u'\iota_1}{\sqrt{n}} - \frac{E_n^w(s)}{\sqrt{n}}\left(q(\tau)+\frac{u_1}{\sqrt{n}}\right) }\left(1\{Y_i \leq q_1(\tau)+v \} - 1\{Y_i \leq q_1(\tau)\}\right)dv \\
& + \sum_{s \in \mathcal{S}} \sum_{i =1}^n \xi_i(1-A_i) 1\{S_i=s\}\int_0^{\frac{u'\iota_0}{\sqrt{n}} - \frac{E_n^w(s)}{\sqrt{n}}\left(q(\tau)+\frac{u_1}{\sqrt{n}}\right) }\left(1\{Y_i \leq q_0(\tau)+v \} - 1\{Y_i \leq q_0(\tau)\}\right)dv \\
\equiv & L_{2,1,n}^w(u,\tau) + L_{2,0,n}^w(u,\tau), 
\end{align*}
and $E_n^w(s) = \sqrt{n}(\hat{\pi}^w(s) - \pi)$.   

\textbf{Step 1.1.} Recall that $\iota_1 = (1,1-\pi)'$ and $\iota_0=(1,-\pi)'$. In addition, denote $\hat{\pi}^w(s) - \pi = \frac{D_n^w(s)}{n^w(s)}$, where 
\begin{align*}
D_n^w(s) = \sum_{i =1}^n \xi_i(A_i-\pi)1\{S_i = s\} \quad \text{and} \quad n^w(s) = \sum_{i =1}^n \xi_i1\{S_i = s\}.
\end{align*}
Then, we have 
\begin{align}
\label{eq:l11nc}
& L_{1,1,n}^w(u,\tau) \notag \\
= & \sum_{s \in \mathcal{S}} \frac{u'\iota_1}{\sqrt{n}} \sum_{i=1}^n \xi_i \left[A_i 1\{S_i =s\}\eta_{i,1}(s,\tau)+ \pi1\{S_i=s\}m_1(s,\tau)\right] + \sum_{s \in \mathcal{S}}\frac{u'\iota_2 D_n^w(s)m_1(s,\tau)}{\sqrt{n}} \notag \\
&+h_{1,1}^w(\tau) + R_{sfe,1,1}^w(u,\tau),
\end{align}
where $\eta_{i,1}(s,\tau) = (\tau - 1\{Y_i(1) \leq q_1(\tau)\}) - m_1(s,\tau)$, 
\begin{align*}
h_{1,1}^w(\tau) = \sum_{s\in \mathcal{S}}(\pi - \hat{\pi}^w(s))q(\tau)\left(\sum_{i =1}^n\xi_i A_i 1\{S_i=s\}(\tau - 1\{Y_i \leq q_1(\tau)\})\right),
\end{align*}
and 
\begin{align}
\label{eq:Rsfe11c}
R_{sfe,1,1}^w(u,\tau)= - \sum_{s \in \mathcal{S}}\frac{u_1D_n^w(s)}{\sqrt{n}n^w(s)}\left\{\sum_{i =1}^n \xi_i\left[A_i 1\{S_i=s\}\eta_{i,1}(s,\tau)+(A_i-\pi)1\{S_i=s\}m_1(s,\tau)\right] \right\}.
\end{align}
By Lemma \ref{lem:Rsfe11c}, we have 
\begin{align*}
\sup_{\tau \in \Upsilon}|R_{sfe,1,1}^w(u,\tau)| = o_p(1). 
\end{align*} 

Similarly, we have 
\begin{align}
\label{eq:l10nc}
& L_{1,0,n}^w(u,\tau) \notag \\
= & \sum_{s \in \mathcal{S}}  \sum_{i=1}^n \xi_i \biggl\{\frac{u'\iota_0}{\sqrt{n}}\left[(1-A_i) 1\{S_i =s\}\eta_{i,0}(s,\tau)+ \pi1\{S_i=s\}m_1(s,\tau)\right]  - \frac{u'\iota_2}{\sqrt{n}}(A_i - \pi)1\{S_i = s\}m_0(s,\tau)\biggr\} \notag \\
&+h_{1,0}^w(\tau) + R_{sfe,1,0}^w(u,\tau),
\end{align}
where 
\begin{align*}
\sup_{\tau \in \Upsilon}|R_{sfe,1,0}^w(u,\tau)| = o_p(1). 
\end{align*}
Combining \eqref{eq:l11nc} and \eqref{eq:l10nc}, we have 
\begin{align*}
& L^w_{1,n}(u,\tau) \\
= & \frac{1}{\sqrt{n}}\sum_{s \in \mathcal{S}}\sum_{i =1}^n \xi_i\biggl[u'\iota_1A_i1\{S_i = s\}\eta_{i,1}(u,\tau) + u'\iota_0(1-A_i)1\{S_i = s\}\eta_{i,0}(u,\tau) \\
& + u'\iota_2 (A_i - \pi)1\{S_i = s\}(m_1(s,\tau) - m_0(s,\tau)) + 1\{S_i = s\}(u'\iota_1 \pi m_1(s,\tau) + u'\iota_0(1-\pi)m_0(s,\tau)) \biggr] \\
& +R_{sfe,1,1}^w(u,\tau)+R_{sfe,1,0}^w(u,\tau) + h_{1,1}^w(\tau)  +h_{1,0}^w(\tau).  
\end{align*}

Furthermore, by Lemma \ref{lem:L2c}, we have 
\begin{align}
\label{eq:l21nc}
L^w_{2,1,n}(u,\tau) = \frac{\pi f_1(q_1(\tau))}{2}(u'\iota_1)^2 - \sum_{s \in \mathcal{S}}f_1(q_1(\tau)|s)\frac{\pi D_n^w(s) u'\iota_1}{\sqrt{n}}q(\tau) + h_{2,1}^w(\tau) + R_{sfe,2,1}^w(u,\tau)
\end{align}
and 
\begin{align}
\label{eq:l20nc}
L^w_{2,0,n}(u,\tau) = \frac{(1-\pi) f_0(q_0(\tau))}{2}(u'\iota_0)^2 - \sum_{s \in \mathcal{S}}f_0(q_0(\tau)|s)\frac{(1-\pi) D_n^w(s) u'\iota_0}{\sqrt{n}}q(\tau) + h_{2,0}^w(\tau) + R_{sfe,2,0}^w(u,\tau),
\end{align}
where 
\begin{align*}
h_{2,1}^w(\tau)  = \sum_{s \in \mathcal{S}} \frac{\pi f_1(q_1(\tau)|s)}{2}p(s)(E_n^w(s))^2q^2(\tau),
\end{align*}
\begin{align*}
h_{2,0}^w(\tau)  = \sum_{s \in \mathcal{S}} \frac{(1-\pi) f_0(q_0(\tau)|s)}{2}p(s)(E_n^w(s))^2q^2(\tau),
\end{align*}
\begin{align*}
\sup_{\tau \in \Upsilon}|R_{sfe,2,1}^w(u,\tau)| = o_p(1),
\end{align*}
and
\begin{align*}
\sup_{\tau \in \Upsilon}|R_{sfe,2,0}^w(u,\tau)| = o_p(1).
\end{align*}
Therefore, 
\begin{align*}
L_{2,n}^w(u,\tau) = & \frac{1}{2}u'Q_{sfe}(\tau)u - \sum_{s \in \mathcal{S}}q(\tau)\left[f_1(q_1(\tau)|s)\pi u'\iota_1 + f_0(q_0(\tau)|s)(1-\pi) u'\iota_0 \right]\frac{D_n^w(s)}{\sqrt{n}} \\
& + R_{sfe,2,1}^w(u,\tau) + R_{sfe,2,0}^w(u,\tau) + h_{2,1}^w(\tau)+h_{2,0}^w(\tau). 
\end{align*}
Combining \eqref{eq:l11nc}, \eqref{eq:l10nc},  \eqref{eq:l21nc}, and \eqref{eq:l20nc}, we have 
\begin{align*}
L_{sfe,n}^w(u,\tau) = - u'\tilde{W}_{sfe,n}^w(\tau) + \frac{1}{2}u'Q_{sfe}u + \tilde{R}_{sfe,n}^w(u,\tau) + h_{sfe,n}^w(\tau), 
\end{align*}
where 
\begin{align*}
& W_{sfe,n}^w(\tau) \\
= & \frac{1}{\sqrt{n}}\sum_{s \in \mathcal{S}} \sum_{i =1}^n \xi_i \biggl[\iota_1A_i 1\{S_i=s\}\eta_{i,1}(s,\tau) + \iota_0(1-A_i)1\{S_i=s\}\eta_{i,0}(s,\tau)\biggr] \\
& + \frac{1}{\sqrt{n}}\sum_{s \in \mathcal{S}}\sum_{i =1}^n \xi_i\biggl\{\iota_2(m_1(s,\tau)- m_0(s,\tau)) + q(\tau)\biggl[f_1(q_1(\tau)|s)\pi \iota_1 + f_0(q_0(\tau)|s)(1-\pi)\iota_0 \biggr]  \biggr\}\\
&\times(A_i - \pi)1\{S_i = s\} + \frac{1}{\sqrt{n}}\sum_{i=1}^n\xi_i(\iota_1 \pi m_1(S_i,\tau) + \iota_0(1-\pi)m_0(S_i,\tau)), 
\end{align*}
\begin{align*}
h_{sfe,n}^w(\tau) = h_{1,1}^w(\tau)+h_{1,0}^w(\tau)+h_{2,1}^w(\tau)+h_{2,0}^w(\tau),
\end{align*}
and
$$\sup_{\tau \in \Upsilon}|\tilde{R}_{sfe,n}^w(u,\tau)| = o_p(1).$$

In addition, by Lemma \ref{lem:wboot}, $\sup_{\tau \in \Upsilon}|W_{sfe,n}^w(\tau)| = O_p(1)$. Then, by \citet[Theorem 2]{K09}, we have 
\begin{align*}
\sqrt{n}(\hat{\beta}^w_{sfe}(\tau)-\tilde{\beta}(\tau)) = [Q_{sfe}(\tau)]^{-1}W_{sfe,n}^w(\tau)  + R_{sfe,n}^w(\tau),
\end{align*}
where 
\begin{align*}
\sup_{\tau \in \Upsilon}||R_{sfe,n}^w(\tau)|| = o_p(1). 
\end{align*}
This concludes Step 1.

\textbf{Step 2.} We now focus on the second element of $\hat{\beta}^w_{sfe}(\tau)$. From Step 1, we know that  
\begin{align*}
& \sqrt{n}(\hat{\beta}^w_{sfe,1}(\tau) - q(\tau)) = \frac{1}{\sqrt{n}}\sum_{s \in \mathcal{S}} \sum_{i =1}^n \xi_i \mathcal{J}_i(s,\tau) + 
R_{sfe,n,1}^w(\tau), 
\end{align*}
where 
\begin{align*}
\mathcal{J}_i(s,\tau) = & \left[\frac{A_i 1\{S_i = s\}\eta_{i,1}(s,\tau)}{\pi f_1(q_1(\tau))} - \frac{(1-A_i)1\{S_i = s\}\eta_{i,0}(s,\tau)}{(1-\pi) f_0(q_0(\tau))}\right] \\
& + \biggl\{\left(\frac{1-\pi}{\pi f_1(q_1(\tau))} - \frac{\pi}{(1-\pi)f_0(q_0(\tau))}\right)(m_1(s,\tau)- m_0(s,\tau)) \\
& + q(\tau)\biggl[\frac{f_1(q_1(\tau)|s)}{f_1(q_1(\tau))} - \frac{f_0(q_0(\tau)|s)}{f_0(q_0(\tau))} \biggr]  \biggr\}(A_i - \pi)1\{S_i = s\} \\
& + \left(\frac{m_1(s,\tau)}{f_1(q_1(\tau))} - \frac{m_0(s,\tau)}{f_0(q_0(\tau))}\right)1\{S_i = s\}
\end{align*}
and
\begin{align*}
\sup_{\tau \in \Upsilon}|R_{sfe,n,1}^w(\tau)| = o_p(1).
\end{align*}
By \eqref{eq:wsfe}, we have  
\begin{align*}
& \sqrt{n}(\hat{\beta}_{sfe,1}(\tau) - q(\tau)) = \frac{1}{\sqrt{n}}\sum_{s \in \mathcal{S}} \sum_{i =1}^n \mathcal{J}_i(s,\tau) + R_{sfe,n,1}(\tau), 
\end{align*}
where 
\begin{align*}
\sup_{\tau \in \Upsilon}|R_{sfe,n,1}(\tau)|= o_p(1).
\end{align*}

Taking the difference of the above two equations, we have 
\begin{align*}
\sqrt{n}(\hat{\beta}^w_{sfe,1}(\tau) - \hat{\beta}_{sfe,1}(\tau)) =  \frac{1}{\sqrt{n}}\sum_{s \in \mathcal{S}} \sum_{i =1}^n (\xi_i-1) \mathcal{J}_i(s,\tau) + R^w(\tau),
\end{align*}
where 
\begin{align*}
\sup_{\tau \in \Upsilon}|R^w(\tau)| = o_p(1).
\end{align*}
Lemma \ref{lem:Qsfec} shows that, conditionally on data,  
\begin{align*}
\frac{1}{\sqrt{n}}\sum_{s \in \mathcal{S}} \sum_{i =1}^n (\xi_i-1) \mathcal{J}_i(s,\tau) \convD \tilde{\mathcal{B}}_{sfe}(\tau),
\end{align*}
where $\tilde{\mathcal{B}}_{sfe}(\tau)$ is a Gaussian process with covariance kernel

\begin{align}
\label{eq:sigma_sfe_tilde}
& \tilde{\Sigma}_{sfe}(\tau_1,\tau_2) \notag\\
= &\frac{\min(\tau_1,\tau_2) - \tau_1\tau_2 - \mathbb{E}m_1(S,\tau_1)m_1(S,\tau_2)}{\pi f_1(q_1(\tau_1))f_1(q_1(\tau_2))} + \frac{\min(\tau_1,\tau_2) - \tau_1\tau_2 - \mathbb{E}m_0(S,\tau_1)m_0(S,\tau_2)}{(1-\pi) f_0(q_0(\tau_1))f_0(q_0(\tau_2))} \notag\\
& + \mathbb{E}\pi(1-\pi)\biggl[(m_1(S,\tau_1) - m_0(S,\tau_1))\left(\frac{1-\pi}{\pi f_1(q_1(\tau_1))} - \frac{\pi}{(1-\pi)f_0(q_0(\tau_1))}\right) \notag\\
& + q(\tau_1)\left(\frac{f_1(q(\tau_1)|S)}{f_1(q_1(\tau_1))} -\frac{f_0(q(\tau_1)|S)}{f_0(q_0(\tau_1))}\right)\biggr] \notag\\
&\times \biggl[(m_1(S,\tau_2) - m_0(S,\tau_2))\left(\frac{1-\pi}{\pi f_1(q_1(\tau_2))} - \frac{\pi}{(1-\pi)f_0(q_0(\tau_2))}\right) + q(\tau_2)\left(\frac{f_1(q(\tau_2)|S)}{f_1(q_2(\tau_2))} -\frac{f_0(q(\tau_2)|S)}{f_0(q_0(\tau_2))}\right)\biggr] \notag\\
& +\mathbb{E}\biggl[\frac{m_1(S,\tau_1)}{f_1(q_1(\tau_1))} - \frac{m_0(S,\tau_1)}{f_0(q_0(\tau_1))}\biggr] \biggl[\frac{m_1(S,\tau_2)}{f_1(q_1(\tau_2))} - \frac{m_0(S,\tau_2)}{f_0(q_0(\tau_2))}\biggr].
\end{align}
This concludes the proof for the SFE estimator.

\section{Proof of Theorem \ref{thm:cabsfe}}
\label{sec:proofcabsfe}
Recall the definition of $\tilde{\beta}(\tau) = (\tilde{\beta}_0(\tau),\tilde{\beta}_1(\tau))'$ in the proof of Theorem \ref{thm:sfe}. Let $u_0 = \sqrt{n}(b_0-\tilde{\beta}_0(\tau))$, $u_1 = \sqrt{n}(b_1-\tilde{\beta}_1(\tau))$ and $u=(u_0, u_1)' \in \Re^2$. Then, 
\begin{align*}
\sqrt{n}(\hat{\beta}^*_{sfe}(\tau)-\tilde{\beta}(\tau)) = \argmin_u L_{sfe,n}^*(u,\tau),
\end{align*}
where 
\begin{align*}
L_{sfe,n}^*(u,\tau) = \sum_{i =1}^n \left[ \rho_\tau(Y_i^* - \dot{\tilde{A}}_i^{* \prime}(\tilde{\beta}(\tau) + \frac{u}{\sqrt{n}})) - \rho_\tau(Y_i^* - \breve{A}_i^{* \prime}\tilde{\beta}(\tau))\right]
\end{align*}
and $\breve{A}_i^* = (1, A_i^* - \pi)'$. Following the proof of Theorem \ref{thm:sfe}, we divide the current proof into two steps. In the first step, we show that there exist 
$$g_{sfe,n}^*(u,\tau) = - u'W_{sfe,n}^*(\tau) + \frac{1}{2}u'Q_{sfe}(\tau)u$$
and $h_{sfe,n}^*(\tau)$ independent of $u$
such that for each $u$ 
\begin{align*}
\sup_{\tau \in \Upsilon}|L^*_{sfe,n}(u,\tau) - g_{sfe,n}^*(u,\tau) - h_{sfe,n}^*(\tau)|\convP 0.
\end{align*}
In addition, we show that $\sup_{\tau \in \Upsilon}||W_{sfe,n}^*(\tau)|| = O_p(1)$. Then, by \citet[Theorem 2]{K09}, we have 
\begin{align*}
\sqrt{n}(\hat{\beta}^*_{sfe}(\tau)-\tilde{\beta}(\tau)) = [Q_{sfe}(\tau)]^{-1}W_{sfe,n}^*(\tau)  + R_{sfe,n}^*(\tau),
\end{align*}
where 
\begin{align*}
\sup_{\tau \in \Upsilon}||R_{sfe,n}^*(\tau)|| = o_p(1). 
\end{align*}
In the second step, we show that, conditionally on data,  
\begin{align*}
\sqrt{n}(\hat{\beta}^*_{sfe,1}(\tau)-\hat{q}(\tau)) \convD \mathcal{B}_{sfe}(\tau). 
\end{align*}

\textbf{Step 1.} Following Step 1 in the proof of Theorem \ref{thm:sfe}, we have 
\begin{align*}
L^*_{sfe,n}(u,\tau) \equiv - L_{1,n}^*(u,\tau) + L_{2,n}^*(u,\tau),
\end{align*}
where 
\begin{align*}
& L_{1,n}^*(u,\tau) \\
= & \sum_{i =1}^n \sum_{s \in \mathcal{S}}A_i^*1\{S_i^*=s\}\left(\frac{u_0}{\sqrt{n}} + (1 - \hat{\pi}^*(s))\frac{u_1}{\sqrt{n}} + (\pi - \hat{\pi}^*(s))q(\tau)\right)(\tau - 1\{Y_i^* \leq q_1(\tau)\}) \\
& + \sum_{i =1}^n \sum_{s \in \mathcal{S}}(1-A_i^*)1\{S_i^*=s\}\left(\frac{u_0}{\sqrt{n}}-\hat{\pi}^*(s)\frac{u_1}{\sqrt{n}} + (\pi - \hat{\pi}^*(s))q(\tau)\right)(\tau - 1\{Y_i^* \leq q_0(\tau)\}) \\
\equiv & L_{1,1,n}^*(u,\tau) + L_{1,0,n}^*(u,\tau),
\end{align*}
\begin{align*}
& L_{2,n}^*(u,\tau) \\
= & \sum_{s \in \mathcal{S}} \sum_{i =1}^n A_i^* 1\{S_i^*=s\}\int_0^{\frac{u'\iota_1}{\sqrt{n}} - \frac{E_n^*(s)}{\sqrt{n}}\left(q(\tau)+\frac{u_1}{\sqrt{n}}\right) }\left(1\{Y_i^* \leq q_1(\tau)+v \} - 1\{Y_i^* \leq q_1(\tau)\}\right)dv \\
& + \sum_{s \in \mathcal{S}} \sum_{i =1}^n (1-A_i^*) 1\{S_i^*=s\}\int_0^{\frac{u'\iota_0}{\sqrt{n}} - \frac{E_n^*(s)}{\sqrt{n}}\left(q(\tau)+\frac{u_1}{\sqrt{n}}\right) }\left(1\{Y_i^* \leq q_0(\tau)+v \} - 1\{Y_i^* \leq q_0(\tau)\}\right)dv \\
\equiv & L_{2,1,n}^*(u,\tau) + L_{2,0,n}^*(u,\tau), 
\end{align*}
and $E_n^*(s) = \sqrt{n}(\hat{\pi}^*(s) - \pi)$.   

\textbf{Step 1.1.} Recall that $\iota_1 = (1,1-\pi)'$ and $\iota_0=(1,-\pi)'$. In addition, $\hat{\pi}^*(s) - \pi = \frac{D_n^*(s)}{n^*(s)}$. Then, 
\begin{align}
\label{eq:l11nstar}
& L_{1,1,n}^*(u,\tau) \notag \\
= & \sum_{s \in \mathcal{S}} \frac{u'\iota_1}{\sqrt{n}} \sum_{i=1}^n \left[A_i^*1\{S_i^*=s\}\eta_{i,1}^*(s,\tau)+ (A_i^* - \pi)1\{S_i^*=s\}m_1(s,\tau)+ \pi1\{S_i^*=s\}m_1(s,\tau)\right]  \notag \\
& - \sum_{s \in \mathcal{S}}\frac{u_1 D^*_n(s)\pi m_1(s,\tau)}{\sqrt{n}}+h_{1,1}^*(\tau) + R_{sfe,1,1}^*(u,\tau),
\end{align}
where $\eta_{i,1}^*(s,\tau) = (\tau - 1\{Y_i^*(1) \leq q_1(\tau)\}) - m_1(s,\tau)$, 
\begin{align*}
h_{1,1}^*(\tau) = \sum_{s\in \mathcal{S}}(\pi - \hat{\pi}^*(s))q(\tau)\left(\sum_{i =1}^n A_i^*1\{S_i^*=s\}(\tau - 1\{Y_i^* \leq q_1(\tau)\})\right),
\end{align*}
and 
\begin{align}
\label{eq:Rsfe11star}
R_{sfe,1,1}^*(u,\tau)= - \sum_{s \in \mathcal{S}}\frac{u_1D_n^*(s)}{\sqrt{n}n^*(s)}\left\{\sum_{i =1}^n A_i^*1\{S_i^*=s\}\eta_{i,1}^*(s,\tau)+(A_i^*-\pi)1\{S_i^*=s\}m_1(s,\tau) \right\}.
\end{align}
Note that 
\begin{align*}
\sup_{s \in \mathcal{S},\tau \in \Upsilon}|\sum_{i =1}^n(A_i^*-\pi)1\{S_i^*=s\}m_1(s,\tau)| = \sup_{s \in \mathcal{S},\tau \in \Upsilon}|D_n^*(s)m_1(s,\tau)| = O_p(\sqrt{n}). 
\end{align*}
In addition, Lemma \ref{lem:Rsfestar} shows   
\begin{align*}
\sup_{s \in \mathcal{S},\tau \in \Upsilon}|\sum_{i =1}^nA_i^*1\{S_i^*=s\}\eta_{i,1}^*(s,\tau)| = O_p(\sqrt{n(s)}).
\end{align*}
Therefore, we have 
\begin{align*}
& \sup_{ \tau \in \Upsilon}|R_{sfe,1,1}^*(u,\tau)| \\
\leq & \sum_{s \in \mathcal{S}}\sup_{s \in \mathcal{S} }\left|\frac{u_1 D_n^*(s)}{\sqrt{n}n^*(s)}\right|\biggl[\sup_{s \in \mathcal{S},\tau \in \Upsilon}|\sum_{i =1}^nA_i^*1\{S_i^*=s\}\eta_{i,1}^*(s,\tau)| +  \sup_{s \in \mathcal{S},\tau \in \Upsilon}|\sum_{i =1}^n(A_i^*-\pi)1\{S_i^*=s\}m_1(s,\tau)|\biggr] \\
= & O_p(1/\sqrt{n}). 
\end{align*}

Similarly, we have 
\begin{align}
\label{eq:l10nstar}
& L_{1,0,n}^*(u,\tau) \notag \\
= & \sum_{s \in \mathcal{S}} \frac{u'\iota_0}{\sqrt{n}} \sum_{i=1}^n \left[(1-A_i^*)1\{S_i^*=s\}\eta_{i,1}^*(s,\tau) - (A_i^* - \pi)1\{S_i^*=s\}m_0(s,\tau)+ (1-\pi)1\{S_i^*=s\}m_0(s,\tau)\right] \notag \\
& - \sum_{s \in \mathcal{S}}\frac{u_1 D^*_n(s)(1-\pi) m_0(s,\tau)}{\sqrt{n}}+h_{1,0}^*(\tau) + R_{sfe,1,0}^*(u,\tau),
\end{align}
where 
\begin{align*}
h_{1,0}^*(\tau) = \sum_{s\in \mathcal{S}}(\pi - \hat{\pi}^*(s))q(\tau)\left(\sum_{i =1}^n (1-A_i^*)1\{S_i^*=s\}(\tau - 1\{Y_i^* \leq q_0(\tau)\})\right),
\end{align*}
and 
\begin{align}
\label{eq:Rsfe10star}
R_{sfe,1,0}^*(u,\tau)= - \sum_{s \in \mathcal{S}}\frac{u_1D_n^*(s)}{\sqrt{n}n^*(s)}\left\{\sum_{i =1}^n (1-A_i^*)1\{S_i^*=s\}\eta_{i,0}^*(s,\tau) -(A_i^*-\pi)1\{S_i^*=s\}m_0(s,\tau) \right\}
\end{align}
such that 
\begin{align*}
\sup_{\tau \in \Upsilon}|R_{sfe,1,0}^*(u,\tau)| = O_p(1/\sqrt{n}). 
\end{align*}
Therefore, 
\begin{align*}
L_{1,n}^*(u,\tau) = & \frac{1}{\sqrt{n}}\sum_{s \in \mathcal{S}}\sum_{i =1}^n\left[u'\iota_1 A_i^*1\{S_i^*=s\}\eta_{i,1}^*(s,\tau) + u'\iota_0(1-A_i^*)1\{S_i^*=s\}\eta_{i,0}^*(s,\tau) \right] \\
& + \sum_{s \in \mathcal{S}} u'\iota_2 \frac{D_n^*(s)}{\sqrt{n}}\left(m_1(s,\tau) - m_0(s,\tau)\right) \\
& + \frac{1}{\sqrt{n}}\sum_{i=1}^n(u'\iota_1\pi m_1(S_i^*,\tau) + u'\iota_0(1-\pi)m_0(S_i^*,\tau)) \\
& + R_{sfe,1,1}^*(u,\tau) + R_{sfe,1,0}^*(u,\tau) + h_{1,1}(\tau) + h_{1,0}(\tau). 
\end{align*}

Furthermore, by Lemma \ref{lem:L2star}, we have 
\begin{align}
\label{eq:l21nstar}
L^*_{2,1,n}(u,\tau) = \frac{\pi f_1(q_1(\tau))}{2}(u'\iota_1)^2 - \sum_{s \in \mathcal{S}}f_1(q_1(\tau)|s)\frac{\pi D_n^*(s) u'\iota_1}{\sqrt{n}}q(\tau) + h_{2,1}^*(\tau) + R_{sfe,2,1}^*(u,\tau)
\end{align}
and 
\begin{align}
\label{eq:l20nstar}
L^*_{2,0,n}(u,\tau) = \frac{(1-\pi) f_0(q_0(\tau))}{2}(u'\iota_0)^2 - \sum_{s \in \mathcal{S}}f_0(q_0(\tau)|s)\frac{(1-\pi) D_n^*(s) u'\iota_0}{\sqrt{n}}q(\tau) + h_{2,0}^*(\tau) + R_{sfe,2,0}^*(u,\tau),
\end{align}
where 
\begin{align*}
h_{2,1}^*(\tau)  = \sum_{s \in \mathcal{S}} \frac{\pi f_1(q_1(\tau)|s)}{2}p(s)(E_n^*(s))^2q^2(\tau),
\end{align*}
\begin{align*}
h_{2,0}^*(\tau)  = \sum_{s \in \mathcal{S}} \frac{(1-\pi) f_0(q_0(\tau)|s)}{2}p(s)(E_n^*(s))^2q^2(\tau),
\end{align*}
\begin{align*}
\sup_{\tau \in \Upsilon}|R_{sfe,2,1}^*(u,\tau)| = o_p(1),
\end{align*}
and
\begin{align*}
\sup_{\tau \in \Upsilon}|R_{sfe,2,0}^*(u,\tau)| = o_p(1).
\end{align*}
Therefore, 
\begin{align*}
L_{2,n}^*(u,\tau) = & \frac{1}{2}u'Q_{sfe}(\tau)u - \sum_{s \in \mathcal{S}}q(\tau)\left[f_1(q_1(\tau)|s)\pi u'\iota_1 + f_0(q_0(\tau)|s)(1-\pi) u'\iota_0 \right]\frac{D_n^*(s)}{\sqrt{n}} \\
& + R_{sfe,2,1}^*(u,\tau) + R_{sfe,2,0}^*(u,\tau) + h_{2,1}^*(\tau)+h_{2,0}^*(\tau). 
\end{align*}
Combining \eqref{eq:l11nstar}, \eqref{eq:l10nstar},  \eqref{eq:l21nstar}, and \eqref{eq:l20nstar}, we have 
\begin{align*}
L_{sfe,n}^*(u,\tau) = - u'W_{sfe,n}^*(\tau) + \frac{1}{2}u'Q_{sfe}u + \tilde{R}_{sfe,n}^*(u,\tau) + h_{sfe,n}^*(\tau), 
\end{align*}
where 
\begin{align*}
& W_{sfe,n}^*(\tau) \\
= & \frac{1}{\sqrt{n}}\sum_{s \in \mathcal{S}} \sum_{i =1}^n \biggl[\iota_1A_i^*1\{S_i^*=s\}\eta_{i,1}^*(s,\tau) + \iota_0(1-A_i^*)1\{S_i^*=s\}\eta_{i,0}^*(s,\tau)\biggr] \\
& + \sum_{s \in \mathcal{S}}\biggl\{\iota_2(m_1(s,\tau)- m_0(s,\tau)) + q(\tau)\biggl[f_1(q_1(\tau)|s)\pi \iota_1 + f_0(q_0(\tau)|s)(1-\pi)\iota_0 \biggr]  \biggr\}\frac{D_n^*(s)}{\sqrt{n}} \\
& + \frac{1}{\sqrt{n}}\sum_{i=1}^n(\iota_1 \pi m_1(S_i^*,\tau) + \iota_0(1-\pi)m_0(S_i^*,\tau)), 
\end{align*}
\begin{align*}
h_{sfe,n}^*(\tau) = h_{1,1}^*(\tau)+h_{1,0}^*(\tau)+h_{2,1}^*(\tau)+h_{2,0}^*(\tau),
\end{align*}
and
$$\sup_{\tau \in \Upsilon}|\tilde{R}_{sfe,n}^*(u,\tau)| = o_p(1).$$

By Lemma \ref{lem:Qsfestar}, $\sup_{\tau \in \Upsilon}|W_{sfe,n}^*(\tau)| = O_p(1)$. Then, by \citet[Theorem 2]{K09}, we have 
\begin{align*}
\sqrt{n}(\hat{\beta}^*_{sfe}(\tau)-\tilde{\beta}(\tau)) = [Q_{sfe}(\tau)]^{-1}W_{sfe,n}^*(\tau)  + R_{sfe,n}^*(\tau),
\end{align*}
where 
\begin{align*}
\sup_{\tau \in \Upsilon}||R_{sfe,n}^*(\tau)|| = o_p(1). 
\end{align*}
This concludes Step 1. 

\textbf{Step 2.} We now focus on the second element of $\hat{\beta}^*_{sfe}(\tau)$. From Step 1, we know that  
\begin{align*}
& \sqrt{n}(\hat{\beta}^*_{sfe,1}(\tau) - q(\tau)) \\
= & \frac{1}{\sqrt{n}}\sum_{s \in \mathcal{S}} \sum_{i =1}^n \left[\frac{A_i^*1\{S_i^* = s\}\eta_{i,1}^*(s,\tau)}{\pi f_1(q_1(\tau))} - \frac{(1-A_i^*)1\{S_i^* = s\}\eta_{i,0}^*(s,\tau)}{(1-\pi) f_0(q_0(\tau))}\right] \\
& + \sum_{s \in \mathcal{S}}\biggl\{\left(\frac{1-\pi}{\pi f_1(q_1(\tau))} - \frac{\pi}{(1-\pi)f_0(q_0(\tau))}\right)(m_1(s,\tau)- m_0(s,\tau)) + q(\tau)\biggl[\frac{f_1(q_1(\tau)|s)}{f_1(q_1(\tau))} - \frac{f_0(q_0(\tau)|s)}{f_0(q_0(\tau))} \biggr]  \biggr\}\frac{D_n^*(s)}{\sqrt{n}} \\
& + \frac{1}{\sqrt{n}}\sum_{i =1}^n\left(\frac{m_1(S_i^*,\tau)}{f_1(q_1(\tau))} - \frac{m_0(S_i^*,\tau)}{f_0(q_0(\tau))}\right) + R_{sfe,n,1}^*(\tau) \\
\equiv & W_{sfe,n,1}^*(\tau) + W_{sfe,n,2}^*(\tau) +W_{sfe,n,3}^*(\tau)  + R_{sfe,n,1}^*(\tau), 
\end{align*}
where 
\begin{align*}
\sup_{\tau \in \Upsilon}|R_{sfe,n,1}^*(\tau)| = o_p(1).
\end{align*}

By \eqref{eq:wipw2}, we have  
\begin{align*}
& \sqrt{n}(\hat{q}(\tau) - q(\tau)) \\
= & \frac{1}{\sqrt{n}}\sum_{s \in \mathcal{S}} \sum_{i =1}^n \left[\frac{A_i 1\{S_i = s\}\eta_{i,1}(s,\tau)}{\pi f_1(q_1(\tau))} - \frac{(1-A_i)1\{S_i = s\}\eta_{i,0}(s,\tau)}{(1-\pi) f_0(q_0(\tau))}\right] \\
& + \frac{1}{\sqrt{n}}\sum_{i =1}^n\left(\frac{m_1(S_i,\tau)}{f_1(q_1(\tau))} - \frac{m_0(S_i,\tau)}{f_0(q_0(\tau))}\right) + R_{ipw,n}(\tau) \\
\equiv  & \mathcal{W}_{n,1}(\tau) + \mathcal{W}_{n,2}(\tau) + R_{ipw,n}(\tau),
\end{align*}
where 
\begin{align*}
\sup_{\tau \in \Upsilon}|R_{ipw,n}(\tau)|= o_p(1).
\end{align*}

Taking the difference of the above two equations, we have 
\begin{align}
\label{eq:betasfestar-qhat}
\sqrt{n}(\hat{\beta}^*_{sfe,1}(\tau) - \hat{q}(\tau)) = (W_{sfe,n,1}^*(\tau)-\mathcal{W}_{n,1}(\tau)) + W_{sfe,n,2}^*(\tau) +(W_{sfe,n,3}^*(\tau)-\mathcal{W}_{n,2}(\tau)) + R^*(\tau),
\end{align}
where 
\begin{align*}
\sup_{\tau \in \Upsilon}|R^*(\tau)| = o_p(1).
\end{align*}
Lemma \ref{lem:Qsfestar} shows that, conditionally on data,  
\begin{align*}
(W_{sfe,n,1}^*(\tau)-\mathcal{W}_{n,1}(\tau)), W_{sfe,n,2}^*(\tau), (W_{sfe,n,3}^*(\tau)-\mathcal{W}_{n,2}(\tau)) \convD (\mathcal{B}_1(\tau),\mathcal{B}_2(\tau),\mathcal{B}_3(\tau)),
\end{align*}
where $(\mathcal{B}_1(\tau),\mathcal{B}_2(\tau),\mathcal{B}_3(\tau))$ are three independent Gaussian processes and $\sum_{j=1}^3\mathcal{B}_j(\tau) \stackrel{d}{=} \mathcal{B}_{sfe}(\tau).$ This concludes the proof.

\section{Technical Lemmas}
\label{sec:lem2}
\begin{lem}
	\label{lem:Qsfe}
	Let $W_{sfe,n,j}(\tau)$, $j = 1,2,3$ be defined as in \eqref{eq:wsfe}. If Assumptions in Theorem \ref{thm:sfe} hold, then uniformly over $\tau \in \Upsilon$, 
	\begin{align*}
	(W_{sfe,n,1}(\tau), W_{sfe,n,2}(\tau), W_{sfe,n,3}(\tau)) \convD (\mathcal{B}_{sfe,1}(\tau),\mathcal{B}_{sfe,2}(\tau),\mathcal{B}_{sfe,3}(\tau)),
	\end{align*}
	where $(\mathcal{B}_{sfe,1}(\tau),\mathcal{B}_{sfe,2}(\tau),\mathcal{B}_{sfe,3}(\tau))$ are three independent two-dimensional Gaussian process with covariance kernels $\Sigma_{sfe,1}(\tau_1,\tau_2)$, $\Sigma_{sfe,2}(\tau_1,\tau_2)$, and $\Sigma_{sfe,3}(\tau_1,\tau_2)$, respectively. The expressions for the three kernels are derived in the proof below. 
\end{lem}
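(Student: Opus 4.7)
The structure of this lemma is essentially parallel to Lemma \ref{lem:Q}, with the same three-way decomposition into (i) a "$Y$-variation" piece $W_{sfe,n,1}$ that depends on the potential outcomes, (ii) a "$A$-imbalance" piece $W_{sfe,n,2}$ that is a deterministic continuous function of $\tau$ contracted with the vector $\{D_n(s)/\sqrt{n}\}_{s\in\mathcal{S}}$, and (iii) an "$S$-variation" piece $W_{sfe,n,3}$ that is a standard i.i.d.\ average. I would therefore mimic the two-step structure used there, with only minor modifications to accommodate the new vector coefficients $\iota_1,\iota_0,\iota_2$ and the extra density-weighted term inside $W_{sfe,n,2}$.

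The plan is as follows. In Step 1, I would condition on $\{A_i,S_i\}_{i=1}^n$ and rearrange units stratum-by-stratum (treated first, then control) using the exchangeable sequence $\{Y^s_i(0),Y^s_i(1)\}_{i\ge 1}$ of the proof of Theorem \ref{thm:qr}, writing
\begin{align*}
\tilde W_{sfe,n,1}(\tau)
=\iota_1\sum_{s\in\mathcal S}\sum_{i=N(s)+1}^{N(s)+n_1(s)}\frac{\tilde\eta_{i,1}(s,\tau)}{\sqrt n}
+\iota_0\sum_{s\in\mathcal S}\sum_{i=N(s)+n_1(s)+1}^{N(s)+n(s)}\frac{\tilde\eta_{i,0}(s,\tau)}{\sqrt n}.
\end{align*}
Then I would introduce the "idealized" counterpart $W^\star_{sfe,n,1}(\tau)$ by replacing the random endpoints $N(s),N(s)+n_1(s),N(s)+n(s)$ with their deterministic limits $\lfloor nF(s)\rfloor, \lfloor n(F(s)+\pi p(s))\rfloor, \lfloor n(F(s)+p(s))\rfloor$. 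By construction, $W^\star_{sfe,n,1}(\tau)$ depends only on $\{Y^s_i(j)\}$ and is therefore independent of $(W_{sfe,n,2}(\tau),W_{sfe,n,3}(\tau))$, which are measurable w.r.t.\ $\{A_i,S_i\}_{i=1}^n$.

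In Step 2, I would prove that $\sup_{\tau\in\Upsilon}\|\tilde W_{sfe,n,1}(\tau)-W^\star_{sfe,n,1}(\tau)\|=o_p(1)$ and that $W^\star_{sfe,n,1}(\tau)\convD\mathcal B_{sfe,1}(\tau)$ uniformly in $\tau$. Both follow from exactly the same two ingredients used in the proof of Lemma \ref{lem:Q}: (1) stochastic equicontinuity of the partial-sum process $\{\Gamma_{n,j}(s,t,\tau):t\in(0,1),\tau\in\Upsilon\}$, which uses Lemma \ref{lem:S} together with the Chernozhukov–Chetverikov–Kato maximal inequality applied to the VC-class $\mathcal F=\{\tilde\eta_{i,j}(s,\tau):\tau\in\Upsilon\}$, and (2) finite-dimensional CLT convergence via Cram\'er–Wold. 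This yields the covariance kernel
\begin{align*}
\Sigma_{sfe,1}(\tau_1,\tau_2)=\;&\pi[\min(\tau_1,\tau_2)-\tau_1\tau_2-\mathbb E m_1(S,\tau_1)m_1(S,\tau_2)]\,\iota_1\iota_1'\\
&+(1-\pi)[\min(\tau_1,\tau_2)-\tau_1\tau_2-\mathbb E m_0(S,\tau_1)m_0(S,\tau_2)]\,\iota_0\iota_0'.
\end{align*}

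In Step 3, I would handle $(W_{sfe,n,2}(\tau),W_{sfe,n,3}(\tau))$. For the marginal of $W_{sfe,n,2}$, conditionally on $\{S_i\}_{i=1}^n$ Assumption \ref{ass:assignment1}.3 gives $\{D_n(s)/\sqrt n\}_{s\in\mathcal S}\convD N(0,\Sigma_D)$, and the coefficient vector (involving $\iota_2(m_1-m_0)$ plus the density-weighted $\iota_1,\iota_0$ terms) is a continuous deterministic function of $\tau$, so tightness is immediate and one reads off $\Sigma_{sfe,2}$. For $W_{sfe,n,3}$, the standard Donsker argument applied to the VC class $\{\iota_1\pi m_1(\cdot,\tau)+\iota_0(1-\pi)m_0(\cdot,\tau):\tau\in\Upsilon\}$ gives weak convergence to $\mathcal B_{sfe,3}$ with kernel $\Sigma_{sfe,3}$. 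Joint convergence and the independence $\mathcal B_{sfe,2}\indep\mathcal B_{sfe,3}$ are obtained exactly as in Step 2 of Lemma \ref{lem:Q}: for any fixed $\tau$, conditioning on $\{S_i\}$ factors the joint characteristic function, and the Cram\'er–Wold device extends this to finite dimensions. Marginal tightness in each component then gives joint tightness. The main technical obstacle, as in Lemma \ref{lem:Q}, is the stochastic equicontinuity in Step 2, since everything else is either a deterministic-coefficient Gaussian limit or a routine application of the same VC/maximal-inequality machinery; the extra density-weighted summand inside $W_{sfe,n,2}$ only enlarges the deterministic coefficient of $D_n(s)/\sqrt n$ and thus requires no new argument.
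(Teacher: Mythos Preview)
Your proposal is correct and follows essentially the same approach as the paper: the paper's own proof simply states that the weak convergence and independence arguments are ``similar to that in Lemma \ref{lem:Q}, and thus, are omitted,'' and then writes out the three covariance kernels $\Sigma_1,\Sigma_2,\Sigma_3$ exactly as you derived $\Sigma_{sfe,1}$ (and would derive the other two). Your Steps 1--3 are precisely the expansion of that reference, with the correct substitutions $\iota_1,\iota_0,\iota_2$ for $e_1,e_0$ and the additional density-weighted term in the deterministic coefficient of $D_n(s)/\sqrt n$.
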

\begin{proof}
	The proofs of weak convergence and the independence among $(\mathcal{B}_{sfe,1}(\tau),\mathcal{B}_{sfe,2}(\tau),\mathcal{B}_{sfe,3}(\tau))$ are similar to that in Lemma \ref{lem:Q}, and thus, are omitted. In the following, we focus on deriving the covariance kernels. 
	
	First, similar to the argument in the proof of Lemma \ref{lem:Q},
	\begin{align*}
	W_{sfe,n,1}(\tau) \stackrel{d}{=} \iota_1 \sum_{s \in \mathcal{S}} \sum_{i=N(s) + 1}^{N(s)+n_1(s)}\frac{1}{\sqrt{n}}\tilde{\eta}_{i,1}(s,\tau) + \iota_0 \sum_{s \in \mathcal{S}} \sum_{i=N(s)+n_1(s) + 1}^{N(s)+n(s)}\frac{1}{\sqrt{n}}\tilde{\eta}_{i,0}(s,\tau).
	\end{align*}
	Therefore, 
	\begin{align*}
	\Sigma_{1}(\tau_1,\tau_2) = & \pi[\min(\tau_1,\tau_2) - \tau_1\tau_2 - \mathbb{E}m_1(S,\tau_1)m_1(S,\tau_2)]\iota_1\iota_1' \\
	& + (1-\pi)[\min(\tau_1,\tau_2) - \tau_1\tau_2 - \mathbb{E}m_0(S,\tau_1)m_0(S,\tau_2)]\iota_0\iota_0'.
	\end{align*}
	
	For $W_{sfe,n,2}(\tau)$, we have 
	\begin{align*}
	\Sigma_{2}(\tau_1,\tau_2) = &\mathbb{E}\gamma(S)\biggl[\iota_2(m_1(S,\tau_1) - m_0(S,\tau_1))+q(\tau_1)\biggl(f_1(q_1(\tau_1)|S)\pi \iota_1 + f_0(q_0(\tau_1)|S)(1-\pi) \iota_0\biggr) \biggr] \\
	& \times \biggl[\iota_2(m_1(S,\tau_2) - m_0(S,\tau_2))+q(\tau_2)\biggl(f_1(q_1(\tau_2)|S)\pi \iota_1 + f_0(q_0(\tau_2)|S)(1-\pi) \iota_0\biggr) \biggr]'.
	\end{align*}
	Next, we have 
	\begin{align*}
	\Sigma_{3}(\tau_1,\tau_2) = \mathbb{E}(\iota_1\pi m_1(S,\tau_1) + \iota_0(1-\pi)m_0(S,\tau_1))(\iota_1\pi m_1(S,\tau_2) + \iota_0(1-\pi)m_0(S,\tau_2))'.
	\end{align*}
	
	In addition, 
	\begin{align*}
	[Q_{sfe}(\tau)]^{-1} = \begin{pmatrix}
	\frac{1-\pi}{f_0(q_0(\tau))} + \frac{\pi}{f_1(q_1(\tau))} &  \frac{1}{f_1(q_1(\tau))}-\frac{1}{f_0(q_0(\tau))}\\ 
	\frac{1}{f_1(q_1(\tau))}-\frac{1}{f_0(q_0(\tau))} & \frac{1}{(1-\pi)f_0(q_0(\tau))} + \frac{1}{\pi f_1(q_1(\tau))} 
	\end{pmatrix}.
	\end{align*}
	
	Therefore, 
	\begin{align*}
	& \Sigma(\tau_1,\tau_2) \\
	= & \biggl\{ \frac{1}{\pi f_1(q_1(\tau_1))f_1(q_1(\tau_2))}\left[\min(\tau_1,\tau_2) - \tau_1\tau_2 - \mathbb{E}m_1(S,\tau_1)m_1(S,\tau_2)\right]\begin{pmatrix}
	\pi^2 & \pi \\
	\pi & 1
	\end{pmatrix} \\
	& + \frac{1}{(1-\pi) f_0(q_0(\tau_1))f_0(q_0(\tau_2))}\left[\min(\tau_1,\tau_2) - \tau_1\tau_2 - \mathbb{E}m_0(S,\tau_1)m_0(S,\tau_2)\right]\begin{pmatrix}
	(1-\pi)^2 & \pi-1 \\
	\pi-1 & 1
	\end{pmatrix} \biggr\}\\
	& + \biggl\{\mathbb{E}\gamma(S)\biggl[(m_1(S,\tau_1) - m_0(S,\tau_1))\begin{pmatrix}
	\frac{\pi}{f_0(q_0(\tau_1))} + \frac{1-\pi}{f_1(q_1(\tau_1))} \\
	\frac{1-\pi}{\pi f_1(q_1(\tau_1))} - \frac{\pi}{(1-\pi)f_0(q_0(\tau_1))}
	\end{pmatrix} 
	+q(\tau_1)\frac{f_1(q_1(\tau_1)|S)}{f_1(q_1(\tau_1))}\begin{pmatrix}
	\pi \\
	1
	\end{pmatrix} \\
	& + q(\tau_1)\frac{f_0(q_0(\tau_1)|S)}{f_0(q_0(\tau_1))}\begin{pmatrix}
	1-\pi \\
	-1
	\end{pmatrix}\biggr] \times \biggl[(m_1(S,\tau_2) - m_0(S,\tau_2))\begin{pmatrix}
	\frac{\pi}{f_0(q_0(\tau_2))} + \frac{1-\pi}{f_1(q_1(\tau_2))} \\
	\frac{1-\pi}{\pi f_1(q_1(\tau_2))} - \frac{\pi}{(1-\pi)f_0(q_0(\tau_2))}
	\end{pmatrix}  \\
	& +q(\tau_2)\frac{f_1(q_1(\tau_2)|S)}{f_1(q_1(\tau_2))}\begin{pmatrix}
	\pi \\
	1
	\end{pmatrix}
	+ q(\tau_2)\frac{f_0(q_0(\tau_2)|S)}{f_0(q_0(\tau_2))}\begin{pmatrix}
	1-\pi \\
	-1
	\end{pmatrix}\biggr] \biggr\}\\
	& + \biggl\{\mathbb{E}\biggl[\frac{m_1(S,\tau_1)}{f_1(q_1(\tau_1))}\begin{pmatrix}
	\pi \\
	1
	\end{pmatrix} + \frac{m_0(S,\tau_1)}{f_0(q_0(\tau_1))}\begin{pmatrix}
	1-\pi \\
	-1
	\end{pmatrix} \biggr] \biggl[\frac{m_1(S,\tau_2)}{f_1(q_1(\tau_2))}\begin{pmatrix}
	\pi \\
	1
	\end{pmatrix} + \frac{m_0(S,\tau_2)}{f_0(q_0(\tau_2))}\begin{pmatrix}
	1-\pi \\
	-1
	\end{pmatrix} \biggr]' \biggr\}.
	\end{align*}
	
\end{proof}

\begin{lem}
	\label{lem:Rsfe11c}
	Recall the definition of $R^w_{sfe,1,1}(u,\tau)$ in \eqref{eq:Rsfe11c}. If Assumptions \ref{ass:assignment1} and \ref{ass:tau} hold, then 
	\begin{align*}
	\sup_{\tau \in \Upsilon}|R^w_{sfe,1,1}(u,\tau)| = o_p(1).
	\end{align*}
\end{lem}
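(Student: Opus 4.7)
The plan is to factor $R^w_{sfe,1,1}(u,\tau)$ into a ``small outer factor'' times a ``bounded inner sum,'' uniformly in $\tau$. Concretely, write
\begin{align*}
R^w_{sfe,1,1}(u,\tau) = -u_1\sum_{s\in\mathcal{S}} \frac{D_n^w(s)}{n^w(s)}\cdot \frac{I_1(s,\tau)+I_2(s,\tau)}{\sqrt{n}},
\end{align*}
where $I_1(s,\tau)=\sum_{i=1}^n \xi_i A_i 1\{S_i=s\}\eta_{i,1}(s,\tau)$ and $I_2(s,\tau)=D_n^w(s)\,m_1(s,\tau)$. Since $\mathcal{S}$ is finite and $u_1$ is fixed, it suffices to show that for every $s\in\mathcal{S}$, (i) the outer ratio satisfies $D_n^w(s)/n^w(s)=o_p(1)$, and (ii) the inner suprema satisfy $\sup_{\tau\in\Upsilon}|I_j(s,\tau)|=O_p(\sqrt{n})$ for $j=1,2$. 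Then $\sup_{\tau\in\Upsilon}|R^w_{sfe,1,1}(u,\tau)|\leq |u_1|\sum_s \tfrac{|D_n^w(s)|}{n^w(s)}\cdot O_p(1)=o_p(1)$.

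For (i), the denominator satisfies $n^w(s)/n\convP p(s)>0$ because $\xi_i 1\{S_i=s\}$ has mean $p(s)$ and the LLN applies. For the numerator, decompose
\begin{align*}
D_n^w(s) = D_n(s) + \sum_{i=1}^n(\xi_i-1)(A_i-\pi)1\{S_i=s\}.
\end{align*}
The first piece is $o_p(n)$ by Assumption \ref{ass:assignment1}.4, and the second piece, conditionally on data, is a sum of independent mean-zero terms with conditional variance $\sum_{i=1}^n(A_i-\pi)^2 1\{S_i=s\}=O_p(n)$, hence is $O_p(\sqrt{n})$ by the Lindeberg CLT. This is precisely the argument already carried out in the proof of Theorem \ref{thm:b}, and yields $D_n^w(s)/n^w(s)=O_p(n^{-1/2})=o_p(1)$. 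For $I_2$ the bound is immediate: $\sup_{\tau\in\Upsilon}|I_2(s,\tau)|\leq |D_n^w(s)|\cdot\sup_{\tau\in\Upsilon,s\in\mathcal{S}}|m_1(s,\tau)|=O_p(\sqrt{n})$, using the boundedness of $m_1$.

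The main obstacle is $I_1$, since we need a uniform-in-$\tau$ bound on a weighted partial-sum process in which the $\xi_i$ are independent of data but the indicators $A_i 1\{S_i=s\}$ introduce stratum-induced dependence. I would handle this with the rearrangement identity used throughout the paper (as in \eqref{eq:Qn1}): conditional on $\{A_i,S_i\}_{i=1}^n$, the outcomes $\{Y_i(1):A_i=1,S_i=s\}$ are distributed as an i.i.d. sample $Y_i^s(1)$, $i=N(s)+1,\ldots,N(s)+n_1(s)$, from the law of $Y(1)\mid S=s$, and since $\{\xi_i\}$ is independent of everything,
\begin{align*}
I_1(s,\tau)\stackrel{d}{=}\sum_{i=N(s)+1}^{N(s)+n_1(s)}\xi_i\,\tilde\eta_{i,1}(s,\tau).
\end{align*}
The class $\mathcal{F}=\{\tilde\eta_{i,1}(s,\tau):\tau\in\Upsilon\}$ is a VC class with a bounded envelope, and $\xi_i$ has a sub-exponential tail by Assumption \ref{ass:weight}, so by the standard maximal inequality (e.g., \citet[Thm.~2.14.1]{VW96}) applied conditionally on $n_1(s)$, we obtain $\sup_{\tau\in\Upsilon}|I_1(s,\tau)|=O_p(\sqrt{n_1(s)})=O_p(\sqrt{n})$. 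This is the analogue of \eqref{eq:xi} in the proof of Lemma \ref{lem:Rsfestar} and presents no genuine new difficulty. Assembling the three pieces completes the proof.
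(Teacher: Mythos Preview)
Your proposal is correct and follows essentially the same route as the paper: both split the remainder into an outer factor $D_n^w(s)/n^w(s)$ times inner sums $I_1,I_2$, establish $D_n^w(s)=O_p(\sqrt{n})$ via the decomposition $D_n^w(s)=D_n(s)+\sum_i(\xi_i-1)(A_i-\pi)1\{S_i=s\}$ and a conditional Lindeberg CLT, and bound $I_1$ by the rearrangement identity plus a maximal inequality for the VC class $\{\tilde\eta_{i,1}(s,\tau):\tau\in\Upsilon\}$. The only cosmetic difference is that the paper controls $I_1$ through the full partial-sum process $\Gamma_{n,1}(s,t,\tau)=n^{-1/2}\sum_{i\le\lfloor nt\rfloor}\xi_i\tilde\eta_{i,1}(s,\tau)$ uniformly in $t$ (via Lemma~\ref{lem:S}), whereas you condition directly on $n_1(s)$; both yield the same $O_p(\sqrt{n})$ bound.
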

\begin{proof}
	We divide the proof into two steps. In the first step, we show that $\sup_{ s \in \mathcal{S}}|D_n^w(s)| = O_p(\sqrt{n})$. In the second step, we show that 
	\begin{align}
	\label{eq:Mn}
	\sup_{\tau \in \Upsilon,s \in \mathcal{S}}|\sum_{i =1}^n \xi_iA_i 1\{S_i = s\}\eta_{i,1}(s,\tau)| = O_p(\sqrt{n}).
	\end{align}
	Then, 
	\begin{align*}
	& \sup_{\tau \in \Upsilon}|R^w_{sfe,1,1}(u,\tau)| \\
	\leq & \sum_{s \in \mathcal{S}}\frac{|u_1|}{n^w(s)} \sup_{ s \in \mathcal{S}}\left|\frac{D_n^w(s)}{\sqrt{n}}\right|\biggl[\sup_{\tau \in \Upsilon, s \in \mathcal{S}}\biggl|\sum_{i =1}^n \xi_iA_i 1\{S_i = s\}\eta_{i,1}(s,\tau)\biggr| + \sup_{ s \in \mathcal{S}}|D_n^w(s)| \biggr] \\
	= & O_p(1/\sqrt{n}), 
	\end{align*}
	as $n^w(s)/n \convP p(s) > 0$. 
	
	\textbf{Step 1.} Because 
	\begin{align*}
	\sup_{s \in \mathcal{S} }|D_n(s)| = O_p(\sqrt{n}), 
	\end{align*}
	we only need to bound the difference $D_n^w(s) - D_n(s)$. Note that 
	\begin{align}
	\label{eq:Dnc}
	&n(s)^{-1/2}D_n^w(s)-n(s)^{-1/2}D_n(s)= n^{-1/2}\sum_{i=1}^n(\xi_i-1)(A_i-\pi)1\{S_i=s\}.
	\end{align}
	
	We aim to prove that, if $n(s) \rightarrow \infty$ and $D_n(s)/n(s) = o_p(1)$, then conditionally on data, for $s \in \mathcal{S}$, 
	\begin{align}
	\label{eq:clt}
	n(s)^{-1/2}\sum_{i=1}^n(\xi_i-1)(A_i-\pi)1\{S_i =s\} \convD N(0,\pi(1-\pi))
	\end{align}
	and they are independent across $s\in\mathcal{S}$. The independence is straightforward because 
	\begin{align*}
	\frac{1}{n(s)}\sum_{i =1}^n (\xi_i-1)^2(A_i-\pi)^21\{S_i =s\}1\{S_i =s'\} = 0 \quad \text{for} \quad s \neq s'.
	\end{align*}
	
	For the limiting distribution, let $\mathcal{D}_n = \{Y_i,A_i,S_i\}_{i=1}^n$ denote data. According to the Lindeberg-Feller central limit theorem, \eqref{eq:clt} holds because (1)
	\begin{align*}
	n(s)^{-1}\sum_{i=1}^n\mathbb{E}[(\xi_i-1)^2(A_i-\pi)^21\{S_i=s\}|\mathcal{D}_n]=&n(s)^{-1}\sum_{i=1}^n(A_i-2A_i\pi+\pi^2)1\{S_i=s\}\\
	=&n(s)^{-1}\sum_{i=1}^n(A_i-\pi-2(A_i-\pi)\pi+\pi-\pi^2)1\{S_i=s\}\\
	=&\frac{1-2\pi}{n(s)}D_n(s)+\pi(1-\pi)\\
	\convP&\pi(1-\pi),
	\end{align*}
	and (2) for every $\eps>0$, 
	\begin{align*}
	&n(s)^{-1}\sum_{i=1}^n(A_i- \pi)^21\{S_i = s\}\mathbb{E}\left[(\xi_i-1)^21\{|\xi_i-1|(A_i- \pi)^21\{S_i = s\}>\varepsilon\sqrt{n(s)}\}|\mathcal{D}_n\right]\\
	\leq &4 \mathbb{E}(\xi_i-1)^2 1\{2|\xi_i - 1| \geq \varepsilon\sqrt{n(s)}\}\rightarrow 0,
	\end{align*}
	where we use the fact that $|A_i - \pi|1\{S_i = s\} \leq 2$ and $n(s) \rightarrow \infty$. This concludes the proof of Step 1. 
	
	\textbf{Step 2.} By the same rearrangement argument and the fact that $\{\xi_i\}_{i=1}^n \indep \mathcal{D}_n$, we have 
	\begin{align*}
	\sup_{\tau \in \Upsilon, s \in \mathcal{S}}\biggl| \frac{1}{n}\sum_{i=1}^{n} \xi_i A_i1\{S_i = s\}\eta_{i,1}(s,\tau) \biggr| \stackrel{d}{=} \sup_{\tau \in \Upsilon, s \in \mathcal{S}}\biggl| \frac{1}{n}\sum_{i=N(s) + 1}^{N(s)+n_1(s)} \xi_i \tilde{\eta}_{i,1}(s,\tau)\biggr|.
	\end{align*}
	Let $\Gamma_{n,1}(s,t,\tau) = \sum_{i =1}^{\lfloor nt \rfloor}\frac{\xi_i \tilde{\eta}_{i,1}(s,\tau)}{\sqrt{n}}$ and $\mathcal{F} = \{\xi_i \tilde{\eta}_{i,1}(s,\tau): \tau \in \Upsilon, s \in \mathcal{S}\}$ with envelope $F_i = C\xi_i$ and $||F_i||_{P,2}<\infty$. By Lemma \ref{lem:S} and \citet[Theorem 2.14.1]{VW96}, for any $\eps>0$, we can choose $M$ sufficiently large such that
	\begin{align*}
	\mathbb{P}(\sup_{0 < t \leq 1,\tau \in \Upsilon, s\in \mathcal{S}}|\Gamma_{n,1}(s,t,\tau)| \geq M)\leq &\frac{270 \mathbb{E}\sup_{\tau \in \Upsilon, s\in \mathcal{S}}|\Gamma_{n,1}(s,1,\tau)|}{M} \\
	= & \frac{270 \mathbb{E}\sqrt{n}||\mathbb{P}_n - \mathbb{P}||_\mathcal{F}}{M} \lesssim \frac{J(1,\mathcal{F})||F_i||_{P,2}}{M} < \eps.
	\end{align*}
	Therefore,   
	\begin{align*}
	\sup_{0 < t \leq 1,\tau \in \Upsilon, s\in \mathcal{S}}|\Gamma_{n,1}(s,t,\tau)| = O_p(1)
	\end{align*} 
	and 
	\begin{align}
	\label{eq:l5}
	\sup_{\tau \in \Upsilon, s \in \mathcal{S}}\biggl| \frac{1}{n}\sum_{i=1}^{n} \xi_i A_i1\{S_i = s\}\eta_{i,1}(s,\tau) \biggr| \stackrel{d}{=} & \sup_{\tau \in \Upsilon, s \in \mathcal{S}}\frac{1}{\sqrt{n}}\left|\Gamma_{n,1}\left(s,\frac{N(s)+n_1(s)}{n},\tau\right) - \Gamma_{n,1}\left(s,\frac{N(s)}{n},\tau\right)\right| \notag \\
	= & O_p(1/\sqrt{n}).
	\end{align}
	This concludes the proof of Step 2. 
\end{proof}

\begin{lem}
	\label{lem:L2c}
	If Assumptions \ref{ass:assignment1} and \ref{ass:tau} hold, then \ref{eq:l21nc} and \ref{eq:l20nc} hold. 
\end{lem}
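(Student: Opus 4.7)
The plan is to prove \eqref{eq:l21nc}; the argument for \eqref{eq:l20nc} is obtained verbatim with $(A_i, Y_i(1), q_1, f_1, \iota_1)$ replaced by $(1-A_i, Y_i(0), q_0, f_0, \iota_0)$, so I focus on the treated piece. The structure mirrors Step 1.2 of the proof of Theorem \ref{thm:sfe}, but with $\xi_i$ attached and $E_n(s)$ replaced by $E_n^w(s)$. The first preparatory fact I would use is that $E_n^w(s) = \sqrt{n}(\hat{\pi}^w(s)-\pi) = O_p(1)$ uniformly in $s \in \mathcal{S}$. This follows by combining $E_n(s) = O_p(1)$ (from Assumption \ref{ass:assignment1}.3) with the bound $n^{-1/2}(D_n^w(s)-D_n(s)) = O_p(1)$ established in Step 1 of Lemma \ref{lem:Rsfe11c}, together with $n^w(s)/n \convP p(s) > 0$. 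Thus for any fixed large $M$, the event $\{\sup_{s\in\mathcal{S}}|E_n^w(s)| \leq M\}$ occurs with probability approaching one, which lets me restrict attention to integrations over compact intervals of length $O(n^{-1/2})$.

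Next I would invoke the same rearrangement-by-strata argument as in \eqref{eq:L2sfe}. Using the i.i.d.~sequences $\{Y_i^s(1)\}$ and defining
\begin{align*}
\Gamma_n^{s,w}(k,\tau,e) = \sum_{i=1}^k \xi_i \int_0^{\frac{u'\iota_1 - e(q(\tau)+u_1/\sqrt{n})}{\sqrt{n}}} \bigl(1\{Y_i^s(1) \leq q_1(\tau)+v\} - 1\{Y_i^s(1) \leq q_1(\tau)\}\bigr) dv,
\end{align*}
I can write $L_{2,1,n}^w(u,\tau)$ as $\sum_{s} [\Gamma_n^{s,w}(N(s)+n_1(s),\tau,E_n^w(s)) - \Gamma_n^{s,w}(N(s),\tau,E_n^w(s))]$ in distribution, since $\{\xi_i\}$ is independent of everything else. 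The key uniform convergence claim I then need is
\begin{align*}
\sup_{0 < t \leq 1,\, \tau \in \Upsilon,\, |e| \leq M,\, s \in \mathcal{S}} \bigl|\Gamma_n^{s,w}(\lfloor nt \rfloor,\tau,e) - \mathbb{E}\Gamma_n^{s,w}(\lfloor nt \rfloor,\tau,e)\bigr| = o_p(1).
\end{align*}
This I would establish by the same route as \eqref{eq:Gamma_sfe}: apply Lemma \ref{lem:S} (maximal inequality), then bound the supremum over $(\tau,e,s)$ at the endpoint via \citet[Corollary 5.1]{CCK14} applied to the VC class of $\xi$-weighted integrand functions with envelope $O(\xi_i/\sqrt{n})$ and $L_2$-norm $O(n^{-3/4})$, giving a total bound of order $n\cdot(\sqrt{\log n/n^{5/2}} + \log n/n^{3/2}) = o(1)$.

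Once uniform convergence is in hand, the computation of $\mathbb{E}\Gamma_n^{s,w}(\lfloor nt \rfloor,\tau,e)$ reduces (since $\mathbb{E}\xi_i=1$) to the same expression as in \eqref{eq:Gamma_sfe1}, so plugging in $t = (N(s)+n_1(s))/n \convP \pi p(s) + F(s)$ and $e = E_n^w(s)$ yields
\begin{align*}
L_{2,1,n}^w(u,\tau) = \sum_{s\in\mathcal{S}} \pi p(s) \frac{f_1(q_1(\tau)|s)}{2}\bigl(u'\iota_1 - E_n^w(s)q(\tau)\bigr)^2 + o_p(1),
\end{align*}
uniformly in $\tau$, where continuity of $f_1(q_1(\tau)|s)$ in $\tau$ (Assumption \ref{ass:tau}) justifies the uniformity. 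Expanding the square separates the three pieces: the $(u'\iota_1)^2$ term gives $\tfrac{\pi f_1(q_1(\tau))}{2}(u'\iota_1)^2$ after summing $p(s)f_1(q_1(\tau)|s)$ to $f_1(q_1(\tau))$; the cross term produces $-\sum_s f_1(q_1(\tau)|s)\pi p(s) E_n^w(s) u'\iota_1 \, q(\tau) = -\sum_s f_1(q_1(\tau)|s)\pi D_n^w(s) u'\iota_1 \, q(\tau)/\sqrt{n} + o_p(1)$, using $p(s) E_n^w(s) = D_n^w(s)/\sqrt{n} \cdot (n p(s)/n^w(s))$ and $n^w(s)/n \convP p(s)$; and the square of $E_n^w(s)q(\tau)$ yields exactly $h_{2,1}^w(\tau)$.

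The main obstacle is the uniform convergence step: one must handle the $\xi_i$-weighting inside the maximal inequality and verify that the envelope-and-variance bounds deliver a VC-type rate fast enough to overwhelm the $n$-factor coming from the sum, all while keeping the supremum over the auxiliary parameter $e$ (not present in the non-bootstrap proof of Theorem \ref{thm:sfe} in unweighted form). The remaining algebra, including absorbing the $u_1/\sqrt{n}$ contribution inside the integration limits and the replacement of $p(s) E_n^w(s)$ by $D_n^w(s)/\sqrt{n}$ up to $o_p(1)$, is routine once the empirical-process bound is secured.
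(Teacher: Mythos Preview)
Your proposal is correct and follows essentially the same route as the paper: establish $E_n^w(s)=O_p(1)$, rearrange by strata, control the centered $\xi$-weighted partial-sum process uniformly over $(t,\tau,e)$ via the VC/CCK machinery used in \eqref{eq:gammasfe}, and then expand the deterministic mean piece. The only cosmetic difference is that the paper centers $\phi_i$ alone (keeping $\xi_i$ as a multiplier, so the mean part carries $n_1^w(s)=\pi n^w(s)+D_n^w(s)$) whereas you center $\xi_i\phi_i^s$ jointly (so the mean part carries $n_1(s)$); since $n_1^w(s)-n_1(s)=O_p(\sqrt{n})$ and $\mathbb{E}\phi_i^s=O(n^{-1})$, the two decompositions differ by $o_p(1)$ and the argument goes through identically.
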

\begin{proof}
	We focus on \eqref{eq:l21nc}. Note that 
	\begin{align}
	\label{eq:l21nc'}
	& L_{2,1,n}^w(u,\tau) \notag \\
	= & \sum_{s \in \mathcal{S}} \sum_{i=1}^n \xi_iA_i 1\{S_i = s\} \int_0^{\frac{u'\iota_1}{\sqrt{n}}- \frac{E_n^w(s)}{\sqrt{n}}\left(q(\tau) + \frac{u_1}{\sqrt{n}}\right)}\left(1\{Y_i(1) \leq q_1(\tau)+v\} - 1\{Y_i(1) \leq q_1(\tau)\} \right)dv \notag \\
	= &  \sum_{s \in \mathcal{S}}\sum_{i=1}^{n} \xi_iA_i 1\{S_i = s\}[\phi_i(u,\tau,s,E_n^w(s)) - \mathbb{E}\phi_i(u,\tau,s,E_n^w(s)|S_i=s)] \notag \\
	& + \sum_{s \in \mathcal{S}}\sum_{i=1}^{n}  \xi_iA_i 1\{S_i=s\}\mathbb{E}\phi_i(u,\tau,s,E_n^w(s)|S_i=s), \notag \\
	\end{align} 
	where by Lemma \ref{lem:Rsfe11c}, $E_n^w(s) = \sqrt{n}(\hat{\pi}^w(s) - \pi) = \frac{n}{n^w(s)}\frac{D_n^w(s)}{\sqrt{n}} = O_p(1)$,
	\begin{align*}
	\phi_i(u,\tau,s,e) = \int_0^{\frac{u'\iota_1}{\sqrt{n}}- \frac{e}{\sqrt{n}}\left(q(\tau) + \frac{u_1}{\sqrt{n}}\right)}\left(1\{Y_i(1) \leq q_1(\tau)+v\} - 1\{Y_i(1) \leq q_1(\tau)\} \right)dv,
	\end{align*}
	and $ \mathbb{E}\phi_i(u,\tau,s,E_n^w(s)|S_i=s)$ is interpreted as $ \mathbb{E}(\phi_i(u,\tau,s,e)|S_i=s)$ with $e$ being evaluated at $E_n^w(s)$. 
	
	For the first term on the RHS of \eqref{eq:l21nc'}, by the rearrangement argument in Lemma \ref{lem:Q}, we have
	\begin{align*}
	& \sum_{s \in \mathcal{S}}\sum_{i=1}^{n}\xi_{i}A_i 1\{S_i = s\}[\phi_i(u,\tau,s,E_n^w(s)) - \mathbb{E}\phi_i(u,\tau,s,E_n^w(s)|S_i=s)]  \\
	\stackrel{d}{=} & \sum_{s \in \mathcal{S}}\sum_{i=N(s)+1}^{N(s)+n_1(s)}\xi_{i}[\phi_i^s(u,\tau,s,E_n^w(s)) - \mathbb{E}\phi_i^s(u,\tau,s,E_n^w(s))] ,
	\end{align*}
	where 
	\begin{align*}
	\phi_i^s(u,\tau,s,e) = \int_0^{\frac{u'\iota_1}{\sqrt{n}}- \frac{e}{\sqrt{n}}\left(q(\tau) + \frac{u_1}{\sqrt{n}}\right)}\left(1\{Y^s_i(1) \leq q_1(\tau)+v\} - 1\{Y^s_i(1) \leq q_1(\tau)\} \right)dv.
	\end{align*}
	Similar to \eqref{eq:gammasfe}, we can show that, as $n \rightarrow \infty$,   
	\begin{align}
	\label{eq:term1xi}
	\sup_{\tau \in \Upsilon, s\in \mathcal{S}}\left|\sum_{i=N(s)+1}^{N(s)+n_1(s)}\xi_{i}\left[\phi_i^s(u,\tau,s,E_n^w(s)) - \mathbb{E}\phi_i^s(u,\tau,s,E_n^w(s))\right]\right| = o_p(1). 
	\end{align}
	
	For the second term in \eqref{eq:l21nc'}, we have 
	\begin{align}
	\label{eq:l21nc'''}
	& \sum_{s \in \mathcal{S}}\sum_{i=1}^{n} \xi_i A_i 1\{S_i=s\}\mathbb{E}\phi_i(u,\tau,s,E_n^w(s)|S_i=s) \notag \\
	= & \sum_{s \in \mathcal{S}} \frac{\sum_{i=1}^{n} \xi_i\pi 1\{S_i=s\}}{n} n\mathbb{E}\phi^s_i(u,\tau,s,E_n^w(s)) + \sum_{s \in \mathcal{S}}\frac{D_n^w(s)}{n} n\mathbb{E}\phi^s_i(u,\tau,s,E_n^w(s)) \notag \\
	= & \sum_{s \in \mathcal{S}} \pi p(s) \left[\frac{f_1(q_1(\tau)|s)}{2}(u'\iota_1 - E_n^w(s)q(\tau))^2 + o_p(1)\right] + \sum_{s \in \mathcal{S}}\frac{D_n^w(s)}{n}\left[\frac{f_1(q_1(\tau)|s)}{2}(u'\iota_1 - E_n^w(s)q(\tau))^2 + o_p(1)\right] \notag \\
	= & \frac{\pi f_1(q_1(\tau))}{2}(u'\iota_1)^2 - \sum_{s \in \mathcal{S}}f_1(q_1(\tau)|s)\frac{\pi D_{n}^w(s)u'\iota_1}{\sqrt{n}}q(\tau) + h_{2,1}^w(\tau) + o_p(1),
	\end{align}
	where the $o_p(1)$ term holds uniformly over $(\tau,s) \in \Upsilon \times \mathcal{S}$. The second equality holds by the same calculation in \eqref{eq:Gamma_sfe1} and the fact that $\sum_{i =1}^n \xi_i1\{S_i = s\}/n \convP p(s)$. The last inequality holds because $\frac{D_n^w(s)}{n} = o_p(1)$, $E_n^w(s) = \frac{n}{n^w(s)}\frac{D_n^w(s)}{\sqrt{n}} = O_p(1)$, $ \frac{n}{n^w(s)} \convP 1/p(s)$, and 
	\begin{align*}
	h_{2,1}^w(\tau)  = \sum_{s \in \mathcal{S}} \frac{\pi f_1(q_1(\tau)|s)}{2}p(s)(E_n^w(s))^2q^2(\tau).
	\end{align*}
	
	Combining \eqref{eq:l21nc'}--\eqref{eq:l21nc'''}, we have 
	\begin{align*}
	L_{2,1,n}^w(u,\tau) = \frac{\pi f_1(q_1(\tau))}{2}(u'\iota_1)^2 - \sum_{s \in \mathcal{S}}f_1(q_1(\tau)|s)\frac{\pi D_{n}^w(s)u'\iota_1}{\sqrt{n}}q(\tau) + h_{2,1}^w(\tau) + R_{sfe,2,1}^w(u,\tau),
	\end{align*}
	where 
	\begin{align*}
	h_{2,1}^w(\tau)  = \sum_{s \in \mathcal{S}} \frac{\pi f_1(q_1(\tau)|s)}{2}p(s)(E_n^w(s))^2q^2(\tau)
	\end{align*}
	and
	\begin{align*}
	\sup_{\tau \in \Upsilon}|R_{sfe,2,1}^w(u,\tau)| = o_p(1).
	\end{align*}
	This concludes the proof. 
	
\end{proof}

\begin{lem}
	\label{lem:wboot}
	If Assumptions \ref{ass:assignment1} and \ref{ass:tau} hold, then $\sup_{\tau \in \Upsilon}|| W_{sfe,n}^w(\tau) || = O_p(1)$. 
\end{lem}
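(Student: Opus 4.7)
The plan is to decompose $W_{sfe,n}^w(\tau) = W_{sfe,n,1}^w(\tau) + W_{sfe,n,2}^w(\tau) + W_{sfe,n,3}^w(\tau)$ corresponding to the three lines in the display defining $W_{sfe,n}^w(\tau)$ in the proof of Theorem \ref{thm:sfeb}, and to show each piece is $O_p(1)$ uniformly in $\tau \in \Upsilon$ by separate arguments.

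For the first piece $W_{sfe,n,1}^w(\tau)$, which aggregates the weighted demeaned quantile indicators $\xi_i A_i 1\{S_i=s\}\eta_{i,1}(s,\tau)$ and $\xi_i(1-A_i)1\{S_i=s\}\eta_{i,0}(s,\tau)$, I would apply the rearrangement argument from the proof of Lemma \ref{lem:Q} (already used in Step 2 of Lemma \ref{lem:Rsfe11c}) to re-express the inner sums in terms of the i.i.d. sequences $\{\tilde{\eta}_{i,j}(s,\tau)\}_{i \ge 1}$ weighted by the i.i.d. multipliers $\xi_i$, which are independent of the data. Because Assumption \ref{ass:weight} gives $\xi_i$ a sub-exponential tail, the class $\mathcal{F} = \{\xi_i \tilde{\eta}_{i,j}(s,\tau) : \tau \in \Upsilon,\, s \in \mathcal{S},\, j=0,1\}$ is nested in a VC class with envelope $C\xi_i \in L_2$, so the standard maximal inequality in \citet[Theorem 2.14.1]{VW96} combined with Lemma \ref{lem:S} gives $\sup_{\tau \in \Upsilon}\|W_{sfe,n,1}^w(\tau)\| = O_p(1)$ by exactly the calculation already carried out in the second step of Lemma \ref{lem:Rsfe11c}.

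For the second piece $W_{sfe,n,2}^w(\tau)$, observe that it factors as
\begin{equation*}
W_{sfe,n,2}^w(\tau) = \sum_{s \in \mathcal{S}} C^w(s,\tau)\, \frac{D_n^w(s)}{\sqrt{n}},
\end{equation*}
where $C^w(s,\tau) = \iota_2(m_1(s,\tau)-m_0(s,\tau)) + q(\tau)[f_1(q_1(\tau)|s)\pi\iota_1 + f_0(q_0(\tau)|s)(1-\pi)\iota_0]$ is bounded uniformly over $(s,\tau) \in \mathcal{S} \times \Upsilon$ by Assumption \ref{ass:tau}. Step 1 of Lemma \ref{lem:Rsfe11c} already established $\sup_{s \in \mathcal{S}} |D_n^w(s)|/\sqrt{n} = O_p(1)$, so this piece is trivially $O_p(1)$ uniformly in $\tau$. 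For the third piece $W_{sfe,n,3}^w(\tau)$, note that $\{\xi_i(\iota_1\pi m_1(S_i,\tau) + \iota_0(1-\pi) m_0(S_i,\tau))\}_{i=1}^n$ is i.i.d. across $i$ (since $\xi_i$ is independent of $(Y_i,A_i,S_i)$), with common mean zero because $\mathbb{E}m_j(S,\tau) = \tau - \mathbb{P}(Y(j) \le q_j(\tau)) = 0$ for $j=0,1$, and the indexing class in $\tau$ is again VC with envelope $C\xi_i \in L_2$. A direct application of \citet[Theorem 2.14.1]{VW96} then gives $\sup_{\tau \in \Upsilon}\|W_{sfe,n,3}^w(\tau)\| = O_p(1)$.

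The main technical obstacle is the first piece, because it combines the (negatively dependent) treatment assignment with the strata structure and the sub-exponential weights; but the rearrangement trick reduces it to a purely i.i.d. maximal-inequality problem that the paper has already solved elsewhere, so the proof is essentially assembling three previously established ingredients rather than introducing a new technique.
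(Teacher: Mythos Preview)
Your proposal is correct and follows essentially the same route as the paper: the paper's proof also splits $W_{sfe,n}^w(\tau)$ into the same three pieces, handling the first via Step~2 of Lemma~\ref{lem:Rsfe11c}, the second via the bound $D_n^w(s)/\sqrt{n}=O_p(1)$ from Step~1 of Lemma~\ref{lem:Rsfe11c}, and the third via the maximal inequality in \citet[Theorem~2.14.1]{VW96}. Your explicit verification that $\mathbb{E}m_j(S,\tau)=0$ (needed for the third piece to be centered) is a detail the paper leaves implicit but is indeed correct.
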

\begin{proof}
	It suffices to show that 
	\begin{align}
	\label{eq:wboot_4}
	\sup_{\tau \in \Upsilon, s \in \mathcal{S}}\left|\frac{1}{\sqrt{n}}\sum_{i =1}^n \xi_i A_i 1\{S_i=s\}\eta_{i,1}(s,\tau)\right| = O_p(1) 
	\end{align}
	\begin{align}
	\label{eq:wboot_4'}
	\sup_{\tau \in \Upsilon, s \in \mathcal{S}}\left|\frac{1}{\sqrt{n}}\sum_{i =1}^n \xi_i(1-A_i)1\{S_i=s\}\eta_{i,0}(s,\tau)\right| = O_p(1), 
	\end{align}
	\begin{align}
	\label{eq:wboot_5}
	\sup_{ s \in \mathcal{S}}\left|\frac{1}{\sqrt{n}}\sum_{i =1}^n\xi_i(A_i - \pi)1\{S_i = s\}\right| =O_p(1),
	\end{align}
	and 
	\begin{align}
	\label{eq:wboot_6}
	\sup_{\tau \in \Upsilon}\left\Vert \frac{1}{\sqrt{n}}\sum_{i=1}^n\xi_i(\iota_1 \pi m_1(S_i,\tau) + \iota_0(1-\pi)m_0(S_i,\tau))\right\Vert = O_p(1). 
	\end{align}
	Note that \eqref{eq:wboot_4} holds by the argument in step 2 in the proof of Lemma \ref{lem:Rsfe11c}, \eqref{eq:wboot_4'} holds similarly, \eqref{eq:wboot_5} holds by \eqref{eq:Dnc} and \eqref{eq:clt}, and \eqref{eq:wboot_6} holds by the usual maximal inequality, e.g., \citet[Theorem 2.14.1]{VW96}. This concludes the proof. 	
\end{proof}

\begin{lem}
	\label{lem:Qsfec}
	If Assumptions \ref{ass:assignment1} and \ref{ass:tau} hold, then conditionally on data,  
	\begin{align*}
	\frac{1}{\sqrt{n}}\sum_{s \in \mathcal{S}} \sum_{i =1}^n (\xi_i-1) \mathcal{J}_i(s,\tau) \convD \tilde{\mathcal{B}}_{sfe}(\tau),
	\end{align*}
	where $\tilde{\mathcal{B}}_{sfe}(\tau)$ is a Gaussian process with covariance kernel $\tilde{\Sigma}_{sfe}(\cdot,\cdot)$ defined in \eqref{eq:sigma_sfe_tilde}. 
\end{lem}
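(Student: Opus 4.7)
The plan is to mirror the argument used for Lemma \ref{lem:Qwqr}, since the only structural difference between the SFE case and the SQR case is that $\mathcal{J}_i(s,\tau)$ here contains an extra term $q(\tau)\bigl[\tfrac{f_1(q_1(\tau)|s)}{f_1(q_1(\tau))}-\tfrac{f_0(q_0(\tau)|s)}{f_0(q_0(\tau))}\bigr](A_i-\pi)1\{S_i=s\}$ inside the component that multiplies $(A_i-\pi)1\{S_i=s\}$. Accordingly, I will decompose $\mathcal{J}_i(s,\tau) = \mathcal{J}_{i,1}(s,\tau)+\mathcal{J}_{i,2}(s,\tau)+\mathcal{J}_{i,3}(s,\tau)$ exactly as in the proof of Lemma \ref{lem:Qwqr}, but with the coefficient $F_1(s,\tau)$ on $(A_i-\pi)1\{S_i=s\}$ replaced by the SFE coefficient
\begin{equation*}
\tilde F_1(s,\tau) = \Bigl(\tfrac{1-\pi}{\pi f_1(q_1(\tau))}-\tfrac{\pi}{(1-\pi)f_0(q_0(\tau))}\Bigr)(m_1(s,\tau)-m_0(s,\tau)) + q(\tau)\Bigl[\tfrac{f_1(q_1(\tau)|s)}{f_1(q_1(\tau))}-\tfrac{f_0(q_0(\tau)|s)}{f_0(q_0(\tau))}\Bigr].
\end{equation*}
By Assumption \ref{ass:tau} and the smoothness conditions there, $\tilde F_1(s,\tau)$ is Lipschitz in $\tau$ uniformly in $s$, which is all that was used for $F_1$ in Lemma \ref{lem:Qwqr}. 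I then aim to establish conditional stochastic equicontinuity and conditional finite-dimensional convergence, from which the lemma follows.

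For conditional stochastic equicontinuity, I would split the increment $\frac{1}{\sqrt n}\sum_{i,s}(\xi_i-1)(\mathcal{J}_i(s,\tau_2)-\mathcal{J}_i(s,\tau_1))$ into its three pieces and bound each by Markov's inequality after taking unconditional expectations. The piece involving $\mathcal{J}_{i,1}$ is handled by the rearrangement identity and \citet[Corollary 5.1]{CCK14} as in Claim (1) of Lemma \ref{lem:Q}; the piece involving $\mathcal{J}_{i,2}$ follows from the Lipschitz property of $\tilde F_1$ combined with the conditional CLT bound $\sup_{s}|\tfrac{1}{\sqrt n}\sum_i(\xi_i-1)(A_i-\pi)1\{S_i=s\}|=O_p(1)$ shown in Step 1 of Lemma \ref{lem:Rsfe11c}; the piece involving $\mathcal{J}_{i,3}$ uses the usual weighted-bootstrap maximal inequality (\citet[Theorem 2.14.1]{VW96}) applied to the Lipschitz-in-$\tau$, VC class indexed by $S$.

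For conditional finite-dimensional convergence, by the Cram\'er--Wold device it suffices to verify the one-dimensional Lindeberg--Feller conditions for $\frac{1}{\sqrt n}\sum_{i,s}(\xi_i-1)\mathcal{J}_i(s,\tau)$ given data. Lindeberg's condition is immediate because $\mathcal{J}_i(s,\tau)$ is bounded and $\max_i|\xi_i-1|=O_p(\log n)$ from the sub-exponential tail. The conditional variance equals $\frac{1}{n}\sum_i[\sum_s \mathcal{J}_i(s,\tau)]^2$, and I would expand this into six pieces $\sigma_1^2+\sigma_2^2+\sigma_3^2+2\sigma_{12}+2\sigma_{13}+2\sigma_{23}$ exactly as in Step 2 of Lemma \ref{lem:Qwqr}. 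The terms $\sigma_1^2$ and $\sigma_3^2$ deliver $\zeta_Y^2(\pi,\tau)$ and the $\mathbb{E}[\cdot]^2$ contribution of $\tilde\Sigma_{sfe}(\tau,\tau)$ just as before, while $\sigma_2^2=\frac{1}{n}\sum_s\tilde F_1^2(s,\tau)\sum_i(A_i-\pi)^2 1\{S_i=s\}\convP \pi(1-\pi)\mathbb{E}\tilde F_1^2(S,\tau)$, which is precisely the $\pi(1-\pi)$-weighted term in $\tilde\Sigma_{sfe}(\tau,\tau)$. The cross terms $\sigma_{12}$ and $\sigma_{13}$ vanish via the rearrangement trick together with $\frac{1}{n}\sum\tilde\eta=o_p(1)$, and $\sigma_{23}$ vanishes because $D_n(s)/n=o_p(1)$.

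The main obstacle I anticipate is bookkeeping the variance expansion: verifying algebraically that the sum of contributions from $\sigma_1^2$, $\sigma_2^2$, $\sigma_3^2$ exactly reproduces $\tilde\Sigma_{sfe}(\tau,\tau)$ as written in \eqref{eq:sigma_sfe_tilde}, in particular that the $\pi(1-\pi)$ factor replaces $\gamma(s)$ and that the cross-product between $(m_1-m_0)$ and $q(\tau)(f_1(\cdot|s)/f_1(\cdot)-f_0(\cdot|s)/f_0(\cdot))$ inside the squared bracket appears correctly. Extending the one-dimensional CLT to finite-dimensional convergence at multiple $\tau_1,\ldots,\tau_k$ then follows by the same computation applied to arbitrary linear combinations, yielding the full covariance kernel $\tilde\Sigma_{sfe}(\tau_1,\tau_2)$ and completing the proof.
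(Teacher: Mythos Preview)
Your proposal is correct and follows essentially the same approach as the paper's proof: the paper likewise decomposes $\mathcal{J}_i(s,\tau)$ into the three pieces $\mathcal{J}_{i,1},\mathcal{J}_{i,2},\mathcal{J}_{i,3}$ with the SFE coefficient $F_1(s,\tau)$ (your $\tilde F_1$), establishes conditional stochastic equicontinuity piece by piece using the rearrangement identity, the Lipschitz property of $F_1$, and the maximal inequality, and then verifies the Lindeberg--Feller conditions with the same six-term variance expansion $\sigma_1^2+\sigma_2^2+\sigma_3^2+2\sigma_{12}+2\sigma_{13}+2\sigma_{23}$, showing $\sigma_2^2\convP\pi(1-\pi)\mathbb{E}F_1^2(S,\tau)$ and the cross terms vanish exactly as you describe.
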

\begin{proof}
	In order to show the weak convergence, we only need to show (1) conditional stochastic equicontinuity and (2) conditional convergence in finite dimension. We divide the proof into two steps accordingly. 
	
	\textbf{Step 1.} In order to show the conditional stochastic equicontinuity, it suffices to show that, for any $\eps>0$, as $n \rightarrow \infty$ followed by $\delta \rightarrow 0$, 
	\begin{align*}
	\mathbb{P}_\xi\left( \sup_{\tau_1,\tau_2 \in \Upsilon, \tau_1 < \tau_2 < \tau_1+\delta,s \in \mathcal{S}}\left|\frac{1}{\sqrt{n}}\sum_{i=1}^n (\xi_i - 1)(\mathcal{J}_i(s,\tau_2) - \mathcal{J}_i(s,\tau_1))\right| \geq \eps\right) \convP 0,
	\end{align*}
	where $\mathbb{P}_\xi(\cdot)$ means that the probability operator is with respect to $\xi_1,\cdots,\xi_n$ and conditional on data. Note
	\begin{align*}
	& \mathbb{E}\mathbb{P}_\xi\left( \sup_{\tau_1,\tau_2 \in \Upsilon, \tau_1 < \tau_2 < \tau_1+\delta,s \in \mathcal{S}}\left|\frac{1}{\sqrt{n}}\sum_{i=1}^n (\xi_i - 1)(\mathcal{J}_i(s,\tau_1) - \mathcal{J}_i(s,\tau_1))\right| \geq \eps\right) \\
	= & \mathbb{P}\left( \sup_{\tau_1,\tau_2 \in \Upsilon, \tau_1 < \tau_2 < \tau_1+\delta,s \in \mathcal{S}}\left|\frac{1}{\sqrt{n}}\sum_{i=1}^n (\xi_i - 1)(\mathcal{J}_i(s,\tau_2) - \mathcal{J}_i(s,\tau_1))\right| \geq \eps\right)\\
	\leq &  \mathbb{P}\left( \sup_{\tau_1,\tau_2 \in \Upsilon, \tau_1 < \tau_2 < \tau_1+\delta,s \in \mathcal{S}}\left|\frac{1}{\sqrt{n}}\sum_{i=1}^n (\xi_i - 1)(\mathcal{J}_{i,1}(s,\tau_2) - \mathcal{J}_{i,1}(s,\tau_1))\right| \geq \eps/3\right) \\
	& + \mathbb{P}\left( \sup_{\tau_1,\tau_2 \in \Upsilon, \tau_1 < \tau_2 < \tau_1+\delta,s \in \mathcal{S}}\left|\frac{1}{\sqrt{n}}\sum_{i=1}^n (\xi_i - 1)(\mathcal{J}_{i,2}(s,\tau_2) - \mathcal{J}_{i,2}(s,\tau_1))\right| \geq \eps/3\right)\\
	& + \mathbb{P}\left( \sup_{\tau_1,\tau_2 \in \Upsilon, \tau_1 < \tau_2 < \tau_1+\delta,s \in \mathcal{S}}\left|\frac{1}{\sqrt{n}}\sum_{i=1}^n (\xi_i - 1)(\mathcal{J}_{i,3}(s,\tau_2) - \mathcal{J}_{i,3}(s,\tau_1))\right| \geq \eps/3\right),
	\end{align*}
	where 
	\begin{align*}
	\mathcal{J}_{i,1}(s,\tau) = \frac{A_i 1\{S_i = s\}\eta_{i,1}(s,\tau)}{\pi f_1(q_1(\tau))} - \frac{(1-A_i)1\{S_i = s\}\eta_{i,0}(s,\tau)}{(1-\pi) f_0(q_0(\tau))},
	\end{align*}
	\begin{align*}
	\mathcal{J}_{i,2}(s,\tau) = F_1(s,\tau) (A_i - \pi)1\{S_i = s\},
	\end{align*}
	$$F_1(s,\tau) =  \left(\frac{1-\pi}{\pi f_1(q_1(\tau))} - \frac{\pi}{(1-\pi)f_0(q_0(\tau))}\right)(m_1(s,\tau)- m_0(s,\tau)) + q(\tau)\biggl[\frac{f_1(q_1(\tau)|s)}{f_1(q_1(\tau))} - \frac{f_0(q_0(\tau)|s)}{f_0(q_0(\tau))} \biggr],$$
	\begin{align*}
	\mathcal{J}_{i,3}(s,\tau) =  \left(\frac{m_1(s,\tau)}{f_1(q_1(\tau))} - \frac{m_0(s,\tau)}{f_0(q_0(\tau))}\right)1\{S_i = s\}.
	\end{align*}
	
	Further note that 
	\begin{align*}
	\sum_{i =1}^n (\xi_i-1)\mathcal{J}_{i,1}(s,\tau) \stackrel{d}{=} \sum_{i=N(s) + 1}^{N(s)+n_1(s)}\frac{(\xi_i-1)\tilde{\eta}_{i,1}(s,\tau)}{\pi f_1(q_1(\tau))} - \sum_{i=N(s) +n_1(s)+ 1}^{N(s)+n(s)}\frac{(\xi_i-1)\tilde{\eta}_{i,0}(s,\tau)}{(1-\pi) f_0(q_0(\tau))}
	\end{align*}
	By the same argument in Claim (1) in the proof of Lemma \ref{lem:Q}, we have 
	\begin{align*}
	& \mathbb{P}\left( \sup_{\tau_1,\tau_2 \in \Upsilon, \tau_1 < \tau_2 < \tau_1+\delta,s \in \mathcal{S}}\left|\frac{1}{\sqrt{n}}\sum_{i=1}^n (\xi_i - 1)(\mathcal{J}_{i,1}(s,\tau_2) - \mathcal{J}_i(s,\tau_1))\right| \geq \eps/3\right) \\
	\leq & \frac{3\mathbb{E}\sup_{\tau_1,\tau_2 \in \Upsilon, \tau_1 < \tau_2 < \tau_1+\delta,s \in \mathcal{S}}\left|\frac{1}{\sqrt{n}}\sum_{i=1}^n (\xi_i - 1)(\mathcal{J}_{i,1}(s,\tau_2) - \mathcal{J}_{i,1}(s,\tau_1))\right|}{\eps} \\
	\leq & \frac{3\sqrt{c_2\delta\log(\frac{C}{c_1\delta})} + \frac{3C\log(\frac{C}{c_1\delta})}{\sqrt{n}}}{\eps},
	\end{align*}	
	where $C$, $c_1< c_2$  are some positive constants that are independent of $(n,\eps,\delta)$. By letting $n\rightarrow \infty$ followed by $\delta \rightarrow 0$, the RHS vanishes. 	
	
	For $\mathcal{J}_{i,2}$, we note that $F_1(s,\tau)$ is Lipschitz in $\tau$. Therefore, 
	\begin{align*}
	& \mathbb{P}\left( \sup_{\tau_1,\tau_2 \in \Upsilon, \tau_1 < \tau_2 < \tau_1+\delta,s \in \mathcal{S}}\left|\frac{1}{\sqrt{n}}\sum_{i=1}^n (\xi_i - 1)(\mathcal{J}_{i,2}(s,\tau_2) - \mathcal{J}_{i,2}(s,\tau_1))\right| \geq \eps/3\right) \\
	\leq & \sum_{s \in \mathcal{S}}\mathbb{P}\left(C\delta\left|\frac{1}{\sqrt{n}} \sum_{i =1}^n (\xi_i - 1)(A_i - \pi)1\{S_i = s\}\right| \geq \eps/3 \right) \rightarrow 0 
	\end{align*}
	as $n \rightarrow \infty$ followed by $\delta \rightarrow 0$, in which we use the fact that, by \eqref{eq:clt}, 
	\begin{align*}
	\sup_{s \in \mathcal{S} }\left|\frac{1}{\sqrt{n}} \sum_{i =1}^n (\xi_i - 1)(A_i - \pi)1\{S_i = s\}\right| = O_p(1). 
	\end{align*}
	Last, by the standard maximal inequality (e.g., \citet[Theorem 2.14.1]{VW96}) and the fact that 
	\begin{align*}
	\left(\frac{m_1(s,\tau)}{f_1(q_1(\tau))} - \frac{m_0(s,\tau)}{f_0(q_0(\tau))}\right)
	\end{align*}
	is Lipschitz in $\tau$, we have, as $n \rightarrow \infty$ followed by $\delta \rightarrow 0$,  
	\begin{align*}
	\mathbb{P}\left( \sup_{\tau_1,\tau_2 \in \Upsilon, \tau_1 < \tau_2 < \tau_1+\delta,s \in \mathcal{S}}\left|\frac{1}{\sqrt{n}}\sum_{i=1}^n (\xi_i - 1)(\mathcal{J}_{i,3}(s,\tau_2) - \mathcal{J}_{i,3}(s,\tau_1))\right| \geq \eps/3\right) \rightarrow 0
	\end{align*}
	This concludes the proof of the conditional stochastic equicontinuity.

	\textbf{Step 2.}
	We focus on the one-dimension case and aim to show that, conditionally on data, for fixed $\tau \in \Upsilon$,  
	\begin{align*}
	\frac{1}{\sqrt{n}}\sum_{s \in \mathcal{S}} \sum_{i =1}^n (\xi_i-1) \mathcal{J}_i(s,\tau) \convD \N(0,\tilde{\Sigma}_{sfe}(\tau,\tau)).
	\end{align*}
	The finite-dimensional convergence can be established similarly by the Cram\'{e}r-Wold device. In view of Lindeberg-Feller central limit theorem, we only need to show that (1)
	\begin{align*}
	\frac{1}{n}\sum_{i =1}^n[\sum_{s \in \mathcal{S}}\mathcal{J}_i(s,\tau)]^2 \convP \zeta_Y^2(\pi,\tau) + \tilde{\xi}_A^{\prime 2}(\pi,\tau) + \xi_S^2(\pi,\tau)
	\end{align*}
	and (2)
	\begin{align*}
	\frac{1}{n}\sum_{i =1}^n[\sum_{s \in \mathcal{S}}\mathcal{J}_i(s,\tau)]^2 \mathbb{E}_\xi(\xi-1)^21\{|\sum_{s \in \mathcal{S}}(\xi_i - 1)\mathcal{J}_i(s,\tau)| \geq \eps \sqrt{n}\} \rightarrow 0.
	\end{align*}
	(2) is obvious as $|\mathcal{J}_i(s,\tau)|$ is bounded and $\max_i|\xi_i-1| \lesssim \log(n)$ as $\xi_i$ is sub-exponential. Next, we focus on (1). We have 
	\begin{align*}
	& \frac{1}{n}\sum_{i =1}^n[\sum_{s \in \mathcal{S}}\mathcal{J}_i(s,\tau)]^2 \\
	= & \frac{1}{n}\sum_{i=1}^n\sum_{s \in \mathcal{S}}\biggl\{\biggl[\frac{A_i 1\{S_i = s\}\eta_{i,1}(s,\tau)}{\pi f_1(q_1(\tau))} - \frac{(1-A_i)1\{S_i = s\}\eta_{i,0}(s,\tau)}{(1-\pi) f_0(q_0(\tau))}\biggr] \\
	& + F_1(s,\tau)(A_i -\pi)1\{S_i=s\} + \biggl[ \left(\frac{m_1(s,\tau)}{f_1(q_1(\tau))} - \frac{m_0(s,\tau)}{f_0(q_0(\tau))}\right)1\{S_i = s\}\biggr]\biggr\}^2 \\
	\equiv & \sigma_1^2 + \sigma_2^2 + \sigma_3^2 + 2\sigma_{12} + 2\sigma_{13} + 2 \sigma_{23},
	\end{align*}
	where 
	\begin{align*}
	\sigma_1^2 = \frac{1}{n}\sum_{s \in \mathcal{S}}\sum_{i =1}^n \biggl[\frac{A_i 1\{S_i = s\}\eta_{i,1}(s,\tau)}{\pi f_1(q_1(\tau))} - \frac{(1-A_i)1\{S_i = s\}\eta_{i,0}(s,\tau)}{(1-\pi) f_0(q_0(\tau))}\biggr]^2,
	\end{align*}
	\begin{align*}
	\sigma_2^2 = \frac{1}{n}\sum_{s \in \mathcal{S}}F^2_1(s,\tau)\sum_{i =1}^n (A_i-\pi)^21\{S_i = s\},
	\end{align*}
	\begin{align*}
	\sigma_3^2 = \frac{1}{n}\sum_{i =1}^n \biggl[ \left(\frac{m_1(S_i,\tau)}{f_1(q_1(\tau))} - \frac{m_0(S_i,\tau)}{f_0(q_0(\tau))}\right)\biggr]^2,
	\end{align*}
	\begin{align*}
	\sigma_{12} = \frac{1}{n}\sum_{i=1}^n\sum_{s \in \mathcal{S}}\biggl[\frac{A_i 1\{S_i = s\}\eta_{i,1}(s,\tau)}{\pi f_1(q_1(\tau))} - \frac{(1-A_i)1\{S_i = s\}\eta_{i,0}(s,\tau)}{(1-\pi) f_0(q_0(\tau))}\biggr]F_1(s,\tau)(A_i - \pi)1\{S_i=s\},
	\end{align*}
	
	\begin{align*}
	\sigma_{13} = \frac{1}{n}\sum_{i=1}^n\sum_{s \in \mathcal{S}}\biggl[\frac{A_i 1\{S_i = s\}\eta_{i,1}(s,\tau)}{\pi f_1(q_1(\tau))} - \frac{(1-A_i)1\{S_i = s\}\eta_{i,0}(s,\tau)}{(1-\pi) f_0(q_0(\tau))}\biggr] \biggl[ \left(\frac{m_1(s,\tau)}{f_1(q_1(\tau))} - \frac{m_0(s,\tau)}{f_0(q_0(\tau))}\right)\biggr],
	\end{align*}
	and 
	\begin{align*}
	\sigma_{23} = \sigma_{12} = \frac{1}{n}\sum_{i=1}^n\sum_{s \in \mathcal{S}}F_1(s,\tau)(A_i - \pi)1\{S_i=s\}\biggl[ \left(\frac{m_1(s,\tau)}{f_1(q_1(\tau))} - \frac{m_0(s,\tau)}{f_0(q_0(\tau))}\right)\biggr].
	\end{align*}
	For $\sigma_1^2$, we have 
	\begin{align*}
	\sigma_1^2 = & \frac{1}{n}\sum_{s \in \mathcal{S}}\sum_{i =1}^n \biggl[\frac{A_i 1\{S_i = s\}\eta^2_{i,1}(s,\tau)}{\pi^2 f^2_1(q_1(\tau))} - \frac{(1-A_i)1\{S_i = s\}\eta^2_{i,0}(s,\tau)}{(1-\pi)^2 f^2_0(q_0(\tau))}\biggr] \\
	\stackrel{d}{=} & \frac{1}{n}\sum_{s \in \mathcal{S}} \sum_{i=N(s)+1}^{N(s)+n_1(s)}\frac{\tilde{\eta}^2_{i,1}(s,\tau)}{\pi^2 f^2_1(q_1(\tau))} +  \frac{1}{n}\sum_{s \in \mathcal{S}} \sum_{i=N(s)+n_1(s)+1}^{N(s)+n(s)}\frac{\tilde{\eta}^2_{i,0}(s,\tau)}{(1-\pi)^2 f^2_0(q_0(\tau))} \\
	\convP & \frac{\tau(1-\tau) - \mathbb{E}m_1^s(S,\tau)}{\pi f_1^2(q_1(\tau))} + \frac{\tau(1-\tau) - \mathbb{E}m_0^s(S,\tau)}{(1-\pi) f_0^2(q_0(\tau))} = \zeta_Y^2(\pi,\tau),
	\end{align*}
	where the second equality holds due to the rearrangement argument in Lemma \ref{lem:Q} and the convergence in probability holds due to uniform convergence of the partial sum process. 
	
	For $\sigma_2^2$, by Assumption \ref{ass:assignment1},  
	\begin{align*}
	\sigma_2^2 = \frac{1}{n}\sum_{s \in \mathcal{S}}F_1^2(s,\tau)(D_n(s) - 2\pi D_n(s) + \pi(1-\pi)1\{S_i =s\}) \convP \pi(1-\pi)\mathbb{E}F_1^2(S_i,\tau) = \tilde{\xi}_{A}^{\prime 2}(\pi,\tau).
	\end{align*}
	
	For $\sigma_3^2$, by the law of large number, 
	\begin{align*}
	\sigma_3^2 \convP \mathbb{E}\biggl[ \left(\frac{m_1(S_i,\tau)}{f_1(q_1(\tau))} - \frac{m_0(S_i,\tau)}{f_0(q_0(\tau))}\right)\biggr]^2 = \xi_S^2(\pi,\tau).
	\end{align*}
	
	For $\sigma_{12}$, we have 
	\begin{align*}
	\sigma_{12} = & \frac{1}{n}\sum_{s \in \mathcal{S}}(1-\pi)F_1(s,\tau)\sum_{i=1}^n\frac{A_i 1\{S_i = s\}\eta_{i,1}(s,\tau)}{\pi f_1(q_1(\tau))} - \frac{1}{n}\sum_{s \in \mathcal{S}}\pi F_1(s,\tau)\sum_{i=1}^{n}\frac{(1-A_i)1\{S_i = s\}\eta_{i,0}(s,\tau)}{(1-\pi) f_0(q_0(\tau))} \\
	\stackrel{d}{=} & \frac{1}{n}\sum_{s \in \mathcal{S}}(1-\pi)F_1(s,\tau)\sum_{i=N(s)+1}^{N(s)+n_1(s)}\frac{\tilde{\eta}_{i,1}(s,\tau)}{\pi f_1(q_1(\tau))} - \frac{1}{n}\sum_{s \in \mathcal{S}}\pi F_1(s,\tau)\sum_{i=N(s)+n_1(s)+1}^{N(s)+n(s)}\frac{\tilde{\eta}_{i,0}(s,\tau)}{(1-\pi) f_0(q_0(\tau))} \convP 0,
	\end{align*}
	where the last convergence holds because by Lemma \ref{lem:Q}, 
	\begin{align*}
	\frac{1}{n}\sum_{i=N(s)+1}^{N(s)+n_1(s)}\tilde{\eta}_{i,1}(s,\tau) \convP 0 \quad \text{and} \quad \frac{1}{n}\sum_{i=N(s)+n_1(s)+1}^{N(s)+n(s)}\tilde{\eta}_{i,0}(s,\tau) \convP 0.
	\end{align*}
	By the same argument, we can show that 
	\begin{align*}
	\sigma_{13} \convP 0.
	\end{align*}
	
	Last, for $\sigma_{23}$, by Assumption \ref{ass:assignment1},  
	\begin{align*}
	\sigma_{23} = \sum_{s \in \mathcal{S}}F_1(s,\tau)\biggl[ \left(\frac{m_1(s,\tau)}{f_1(q_1(\tau))} - \frac{m_0(s,\tau)}{f_0(q_0(\tau))}\right)\biggr]\frac{D_n(s)}{n} \convP 0. 
	\end{align*}
	Therefore, we have 
	\begin{align*}
	\frac{1}{n}\sum_{i =1}^n[\sum_{s \in \mathcal{S}}\mathcal{J}_i(s,\tau)]^2 \convP \zeta_Y^2(\pi,\tau) + \tilde{\xi}_A^{\prime 2}(\pi,\tau) + \xi_S^2(\pi,\tau). 
	\end{align*}
\end{proof}

\begin{lem}
	\label{lem:L2star}
	Recall $R_{sfe,2,1}^*(u,\tau)$ and $R_{sfe,2,0}^*(u,\tau)$ defined in \eqref{eq:l21nstar} and \eqref{eq:l20nstar}, respectively. If Assumptions in Theorem \ref{thm:cab} hold, then \eqref{eq:l21nstar} and \eqref{eq:l20nstar} hold and 
	\begin{align*}
	\sup_{\tau \in \Upsilon}|R_{sfe,2,1}^*(u,\tau)| = o_p(1) \quad \text{and} \quad \sup_{\tau \in \Upsilon}|R_{sfe,2,0}^*(u,\tau)| = o_p(1).
	\end{align*}
\end{lem}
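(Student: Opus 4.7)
The plan is to mimic the proof of Lemma \ref{lem:L2c} for the weighted bootstrap, but now exploit the Poisson approximation established in Lemma \ref{lem:Rsfestar} to deal with the multinomial structure of the covariate-adaptive bootstrap sample in Step 3 of the bootstrap procedure. I will only write out the argument for $L^*_{2,1,n}(u,\tau)$; the analysis of $L^*_{2,0,n}(u,\tau)$ is completely symmetric.

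First, I would write
\begin{align*}
L^*_{2,1,n}(u,\tau) = & \sum_{s \in \mathcal{S}}\sum_{i=1}^{n} A_i^* 1\{S_i^*=s\}\bigl[\phi_i^*(u,\tau,s,E_n^*(s)) - \bar{\phi}(u,\tau,s,E_n^*(s))\bigr] \\
& + \sum_{s \in \mathcal{S}}\sum_{i=1}^{n} A_i^* 1\{S_i^*=s\}\,\bar{\phi}(u,\tau,s,E_n^*(s)),
\end{align*}
where $\phi_i^*(u,\tau,s,e)=\int_0^{(u'\iota_1-e(q(\tau)+u_1/\sqrt{n}))/\sqrt{n}}(1\{Y_i^*(1)\leq q_1(\tau)+v\}-1\{Y_i^*(1)\leq q_1(\tau)\})dv$ and $\bar\phi(u,\tau,s,e)=\mathbb{E}[\phi_i(u,\tau,s,e)\mid S_i=s]$. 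Assumption \ref{ass:bassignment}.1 together with $n^*(s)/n\convP p(s)>0$ ensures $\sup_{s\in\mathcal{S}}|E_n^*(s)|=O_p(1)$, so I may restrict attention to $|e|\le M$ on an event of arbitrarily high probability.

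For the first (centered) piece, I would apply exactly the rearrangement and Poisson coupling carried out in Lemma \ref{lem:Rsfestar}: conditional on $\mathbb{N}_n$ the summands equal in distribution $\sum_{s}\sum_{i=N(s)+1}^{N(s)+n_1(s)} M_{ni}[\phi_i^s(u,\tau,s,E_n^*(s))-\mathbb{E}\phi_i^s(u,\tau,s,E_n^*(s))]$, and the $M_{ni}$ can be replaced by i.i.d.\ Poisson$(1)$ weights $\xi_i^s$ at the cost of an $o_p(\sqrt{n(s)})$ remainder by the same calculation as \eqref{eq:rnqr} in Lemma \ref{lem:Qstar}. The resulting Poisson-weighted process can be controlled uniformly over $\tau\in\Upsilon$ and $|e|\le M$ by the same VC/maximal-inequality argument used to derive \eqref{eq:gammasfe}, because the relevant envelope is still $O(n^{-1/2})$ and $\mathbb{E}\sup_{i,s}\xi_i^s=O(\log n)$. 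This makes the whole centered piece $o_p(1)$ uniformly in $\tau$.

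For the second (mean) piece, using $\sum_{i=1}^n A_i^*1\{S_i^*=s\} = n_1^*(s) = \pi n^*(s)+D_n^*(s)$ and the Lipschitz expansion in \eqref{eq:Gamma_sfe1} (now evaluated at $e=E_n^*(s)$), I get
\[
\sum_{s}\bigl(\pi n^*(s)+D_n^*(s)\bigr)\frac{f_1(q_1(\tau)|s)}{2n}(u'\iota_1-E_n^*(s)q(\tau))^2+o_p(1).
\]
Expanding the square, using $n^*(s)/n\convP p(s)$, $D_n^*(s)/n=o_p(1)$, and $\sum_s p(s)f_1(q_1(\tau)|s)=f_1(q_1(\tau))$, the $(u'\iota_1)^2$ coefficient yields the leading term $\tfrac{\pi f_1(q_1(\tau))}{2}(u'\iota_1)^2$; the cross term $-2(u'\iota_1)E_n^*(s)q(\tau)$ combined with $\pi p(s) E_n^*(s)=\pi D_n^*(s)/\sqrt{n}+o_p(1)$ produces exactly $-\sum_s f_1(q_1(\tau)|s)\pi D_n^*(s)u'\iota_1 q(\tau)/\sqrt{n}$; and the $(E_n^*(s))^2q^2(\tau)$ term gives $h_{2,1}^*(\tau)$. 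All remaining cross terms and error terms are uniformly $o_p(1)$ by $D_n^*(s)/n = o_p(1)$ and $E_n^*(s)=O_p(1)$.

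The main obstacle is the joint control over $\tau$ and the random bandwidth $e=E_n^*(s)$ in the centered piece. Unlike the weighted bootstrap case, $E_n^*(s)$ here encodes the cross-sectional dependence generated by the bootstrap assignment rule, so the envelope, variance bound, and VC-class argument must be made uniform in $(\tau,e)\in\Upsilon\times[-M,M]$; once this uniformity is in place (as in \eqref{eq:gammasfe}), evaluation at the random point $E_n^*(s)$ is automatic on the event $\{\sup_s|E_n^*(s)|\le M\}$.
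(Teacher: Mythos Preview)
Your proposal is correct and follows essentially the same route as the paper's proof: rearrange to the $M_{ni}$ representation, replace $M_{ni}$ by i.i.d.\ Poisson$(1)$ weights via the coupling of Lemma~\ref{lem:Rsfestar}/\ref{lem:Qstar} (uniformly over $(\tau,e)\in\Upsilon\times[-M,M]$, then evaluate at $e=E_n^*(s)$), control the Poisson-weighted centered piece by the VC/maximal-inequality argument as in \eqref{eq:gammasfe}, and expand the mean piece using $\sum_i M_{ni}=n_1^*(s)=\pi n^*(s)+D_n^*(s)$. One small slip: the Poisson-coupling remainder here is $o_p(1)$ (as in \eqref{eq:rnqr}), not $o_p(\sqrt{n(s)})$, since the integrand $\phi_i$ already carries an $O(n^{-1/2})$ envelope.
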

\begin{proof}
	We focus on \eqref{eq:l21nstar}. Following the definition of $M_{ni}$ in the proof of Lemma \ref{lem:Rsfestar} and the argument in the Step 1.2 of the proof of Theorem \ref{thm:sfe}, we have 
	\begin{align}
	\label{eq:l21nstar'}
	& L_{2,1,n}^*(u,\tau) \notag \\
	= & \sum_{s \in \mathcal{S}} \sum_{i=N(s) + 1}^{N(s)+n_1(s)}M_{ni}\int_0^{\frac{u'\iota_1}{\sqrt{n}}- \frac{E_n^*(s)}{\sqrt{n}}\left(q(\tau) + \frac{u_1}{\sqrt{n}}\right)}\left(1\{Y_i^s(1) \leq q_1(\tau)+v\} - 1\{Y_i^s(1) \leq q_1(\tau)\} \right)dv \notag \\
	= & \sum_{s \in \mathcal{S}} \sum_{i=N(s) + 1}^{N(s)+n_1(s)}M_{ni}\left[\phi_i(u,\tau,s,E_n^*(s)) - \mathbb{E}\phi_i(u,\tau,E_n^*(s)) \right] + \sum_{s \in \mathcal{S}} \sum_{i=N(s) + 1}^{N(s)+n_1(s)}M_{ni}\mathbb{E}\phi_i(u,\tau,s,E_n^*(s)),
	\end{align} 
	where $E_n^*(s) = \sqrt{n}(\hat{\pi}^*(s) - \pi) = \frac{n}{n^*(s)}\frac{D_n^*(s)}{\sqrt{n}} = O_p(1)$, 
	\begin{align*}
	\phi_i(u,\tau,s,e) = \int_0^{\frac{u'\iota_1}{\sqrt{n}}- \frac{e}{\sqrt{n}}\left(q(\tau) + \frac{u_1}{\sqrt{n}}\right)}\left(1\{Y_i^s(1) \leq q_1(\tau)+v\} - 1\{Y_i^s(1) \leq q_1(\tau)\} \right)dv,
	\end{align*}
	and $ \mathbb{E}\phi_i(u,\tau,s,E_n^*(s))$ is interpreted as $ \mathbb{E}\phi_i(u,\tau,s,e)$ with $e$ being evaluated at $E_n^*(s)$. 
	
	For the first term on the RHS of \eqref{eq:l21nstar'}, similar to \eqref{eq:Mtilde-M}, we have
	\begin{align}
	\label{eq:term1}
	& \sum_{s \in \mathcal{S}} \sum_{i=N(s) + 1}^{N(s)+n_1(s)}M_{ni}\left[\phi_i(u,\tau,s,E_n^*(s)) - \mathbb{E}\phi_i(u,\tau,s,E_n^*(s)) \right] \notag \\
	= & \sum_{s \in \mathcal{S}} \sum_{i=N(s) + 1}^{N(s)+n_1(s)}\xi_i^s\left[\phi_i(u,\tau,s,E_n^*(s)) - \mathbb{E}\phi_i(u,\tau,s,E_n^*(s)) \right] + \sum_{s \in \mathcal{S}}r_n(u,\tau,s,E_n^*(s)),
	\end{align}
	where $\{\xi_i^s\}_{i=1}^n$ is a sequence of i.i.d. Poisson(1) random variables and is independent of everything else, and 
	\begin{align*}
	r_n(u,\tau,s,e) = \text{sign}(N(n_1(s)) - n_1(s))\sum_{j=1}^\infty\frac{\#I_{n}^j(s)}{\sqrt{n}}\frac{1}{\#I_{n}^j(s)}\sum_{i \in I_n^j(s)}\sqrt{n}\left[\phi_i(u,\tau,s,e) - \mathbb{E}\phi_i(u,\tau,s,e) \right].
	\end{align*}
	We aim to show 
	\begin{align}
	\label{eq:rn}
	\sup_{|e| \leq M,\tau \in \Upsilon, s\in \mathcal{S}}|r_n(u,\tau,s,e)| = o_p(1), 
	\end{align}

	Recall that the proof of Lemma \ref{lem:Rsfestar} relies on \eqref{eq:kk} and the fact that 
	$$\mathbb{E}\sup_{n \geq k \geq n_0}\sup_{\tau \in \Upsilon, s \in \mathcal{S}}\left|\frac{1}{k}\sum_{j=1}^k \tilde{\eta}_{j,1}(s,\tau)\right| \rightarrow 0.$$ 
	Using the same argument and replacing $\tilde{\eta}_{j,1}(s,\tau)$ by $\sqrt{n}\left[\phi_i(u,\tau,s,e) - \mathbb{E}\phi_i(u,\tau,s,e) \right]$, in order to show \eqref{eq:rn}, we only need to verify that, as $n \rightarrow \infty$ followed by $n_0 \rightarrow \infty$, 
	\begin{align*}
	\mathbb{E}\sup_{n \geq k \geq n_0}\sup_{|e| \leq M,\tau \in \Upsilon, s\in \mathcal{S}}\left|\frac{1}{k}\sum_{j=1}^k\sqrt{n}\left[\phi_i(u,\tau,s,e) - \mathbb{E}\phi_i(u,\tau,s,e) \right]\right| \rightarrow 0
	\end{align*}
	Because $\sup_{|e| \leq M,\tau \in \Upsilon, s\in \mathcal{S}}\left|\frac{1}{k}\sum_{j=1}^k\sqrt{n}\left[\phi_i(u,\tau,s,e) - \mathbb{E}\phi_i(u,\tau,s,e) \right]\right|$ is bounded as shown below, it suffices to show that, for any $\eps>0$, as $n \rightarrow \infty$ followed by $n_0 \rightarrow \infty$,  
	\begin{align}
	\label{eq:targetp}
	\mathbb{P}\left(\sup_{n \geq k \geq n_0}\sup_{|e| \leq M,\tau \in \Upsilon, s\in \mathcal{S}}\left|\frac{1}{k}\sum_{j=1}^k\sqrt{n}\left[\phi_i(u,\tau,s,e) - \mathbb{E}\phi_i(u,\tau,s,e) \right]\right| \geq \eps \right) \rightarrow 0. 
	\end{align}
	Define the class of functions $\mathcal{F}_n$ as
	\begin{align*}
	\mathcal{F}_n = \{\sqrt{n}\left[\phi_i(u,\tau,s,e) - \mathbb{E}\phi_i(u,\tau,s,e) \right]: |e| \leq M,\tau \in \Upsilon, s\in \mathcal{S} \}.
	\end{align*} 
	Then, $\mathcal{F}_n$ is nested by a VC-class with fixed VC-index. In addition, for fixed $u$, $\mathcal{F}_n$ has a bounded (and independent of $n$) envelope function  
	$$F = |u'\iota_1|+M\left(\max_{\tau \in \Upsilon}|q(\tau)| + \left|u_1\right|\right).$$
	Last, define $\mathcal{I}_l = \{2^l,2^l+1,\cdots,2^{l+1}-1\}$. Then, 
	\begin{align*}
	& \mathbb{P}\left(\sup_{n \geq k \geq n_0}\sup_{|e| \leq M,\tau \in \Upsilon, s\in \mathcal{S}}\left|\frac{1}{k}\sum_{j=1}^k\sqrt{n}\left[\phi_i(u,\tau,s,e) - \mathbb{E}\phi_i(u,\tau,s,e) \right]\right| \geq \eps\right) \\
	\leq & \sum_{l=\lfloor \log_2(n_0) \rfloor}^{\lfloor \log_2(n) \rfloor+1}\mathbb{P}\left(\sup_{k \in \mathcal{I}_l}\sup_{|e| \leq M,\tau \in \Upsilon, s\in \mathcal{S}}\left|\frac{1}{k}\sum_{j=1}^k\sqrt{n}\left[\phi_i(u,\tau,s,e) - \mathbb{E}\phi_i(u,\tau,s,e) \right]\right| \geq \eps \right) \\
	\leq & \sum_{l=\lfloor \log_2(n_0) \rfloor}^{\lfloor \log_2(n) \rfloor+1}\mathbb{P}\left(\sup_{k \leq 2^{l+1}}\sup_{|e| \leq M,\tau \in \Upsilon, s\in \mathcal{S}}\left|\sum_{j=1}^k\sqrt{n}\left[\phi_i(u,\tau,s,e) - \mathbb{E}\phi_i(u,\tau,s,e) \right]\right| \geq \eps 2^l \right) \\ 
	\leq & \sum_{l=\lfloor \log_2(n_0) \rfloor}^{\lfloor \log_2(n) \rfloor+1}9\mathbb{P}\left(\sup_{|e| \leq M,\tau \in \Upsilon, s\in \mathcal{S}}\left|\sum_{j=1}^{2^{l+1}}\sqrt{n}\left[\phi_i(u,\tau,s,e) - \mathbb{E}\phi_i(u,\tau,s,e) \right]\right| \geq \eps 2^l/30 \right) \\ 
	\leq & \sum_{l=\lfloor \log_2(n_0) \rfloor}^{\lfloor \log_2(n) \rfloor+1}\frac{270 \mathbb{E}\sup_{|e| \leq M,\tau \in \Upsilon, s\in \mathcal{S}}\left|\sum_{j=1}^{2^{l+1}}\sqrt{n}\left[\phi_i(u,\tau,s,e) - \mathbb{E}\phi_i(u,\tau,s,e) \right]\right|}{\eps 2^l} \\
	\leq & \sum_{l=\lfloor \log_2(n_0) \rfloor}^{\lfloor \log_2(n) \rfloor+1}\frac{C_1}{\eps 2^{l/2}} \\
	\leq & \frac{2C_1}{\eps \sqrt{n_0}} \rightarrow 0,
	\end{align*}
	where the first inequality holds by the union bound, the second inequality holds because on $\mathcal{I}_l$, $2^{l+1} \geq k \geq 2^l$, the third inequality follows the same argument in the proof of Theorem \ref{thm:qr}, the fourth inequality is due to the Markov inequality, the fifth inequality follows the standard maximal inequality such as \citet[Theorem 2.14.1]{VW96} and the constant $C_1$ is independent of $(l,\eps,n)$, and the last inequality holds by letting $n \rightarrow \infty$. Because $\eps$ is arbitrary, we have established \eqref{eq:targetp}, and thus, \eqref{eq:rn}, which further implies that 
	\begin{align*}
	\sup_{\tau \in \Upsilon, s\in \mathcal{S}}|r_n(u,\tau,s,E_n^*(s))| = o_p(1), 
	\end{align*}
	For the leading term of \eqref{eq:term1}, we have 
	\begin{align*}
	& \sum_{s \in \mathcal{S}} \sum_{i=N(s) + 1}^{N(s)+n_1(s)}\xi_i^s\left[\phi_i(u,\tau,s,E_n^*(s)) - \mathbb{E}\phi_i(u,\tau,s,E_n^*(s)) \right] \\
	= & \sum_{s \in \mathcal{S}}\left[ \Gamma_n^{s*}(N(s),\tau,E_n^*(s)) - \Gamma_n^{s*}(N(s)+n_1(s),\tau,E_n^*(s))\right],
	\end{align*}
	where 
	\begin{align*}
	\Gamma_n^{s*}(k,\tau,e) = & \sum_{i=1}^k \xi_i^s\int_0^{\frac{u'\iota_1 - e(q(\tau)+\frac{u_1}{\sqrt{n}})}{\sqrt{n}}}\left(1\{Y_i^s(1) \leq q_1(\tau)+v\} - 1\{Y_i^s(1) \leq q_1(\tau)\} \right)dv \\
	& - k \mathbb{E}\left[\int_0^{\frac{u'\iota_1 - e(q(\tau)+\frac{u_1}{\sqrt{n}})}{\sqrt{n}}}\left(1\{Y_i^s(1) \leq q_1(\tau)+v\} - 1\{Y_i^s(1) \leq q_1(\tau)\} \right)dv\right].
	\end{align*}
	By the same argument in \eqref{eq:Gamma_sfe}, we can show that 
	\begin{align*}
	\sup_{0 < t \leq 1,\tau \in \Upsilon, |e| \leq M}|\Gamma_n^{s*}(k,\tau,e) | = o_p(1), 
	\end{align*}
	where we need to use the fact that the Poisson(1) random variable has an exponential tail and thus
	\begin{align*}
	\mathbb{E}\sup_{i\in\{1,\cdots,n\},s\in \mathcal{S}} \xi_i^s = O(\log(n)). 
	\end{align*}
	Therefore, 
	\begin{align}
	\label{eq:l21nstar''}
	\sup_{\tau \in \Upsilon}\left|\sum_{s \in \mathcal{S}} \sum_{i=N(s) + 1}^{N(s)+n_1(s)}M_{ni}\left[\phi_i(u,\tau,s,E_n^*(s)) - \mathbb{E}\phi_i(u,\tau,E_n^*(s)) \right]\right| = o_p(1).
	\end{align}
	For the second term on the RHS of \eqref{eq:l21nstar'}, we have 
	\begin{align}
	\label{eq:l21nstar'''}
	\sum_{s \in \mathcal{S}} \sum_{i=N(s) + 1}^{N(s)+n_1(s)}M_{ni}\mathbb{E}\phi_i(u,\tau,s,e) = & \sum_{s \in \mathcal{S}}n_1^*(s)\mathbb{E}\phi_i(u,\tau,s,e) \notag \\
	= & \sum_{s \in \mathcal{S}} \pi p(s) \frac{f_1(q_1(\tau)|s)}{2}(u'\iota_1 - eq(\tau))^2 +o(1),
	\end{align}
	where the $o(1)$ term holds uniformly over $(\tau,e) \in \Upsilon \times [-M,M]$, the first equality holds because $\sum_{i=N(s) + 1}^{N(s)+n_1(s)}M_{ni} = n^*_1(s)$ and the second equality holds by the same calculation in \eqref{eq:Gamma_sfe1} and the facts that $n^*(s)/n \convP p(s)$ and 
	\begin{align*}
	\frac{n^*_1(s)}{n} = \frac{D_n^*(s)+\pi n^*(s)}{n} \convP \pi p(s).
	\end{align*}
	
	Combining \eqref{eq:l21nstar}, \eqref{eq:l21nstar'}, \eqref{eq:l21nstar''}, \eqref{eq:l21nstar'''}, and the facts that $E_n^*(s) = \frac{n}{n^*(s)}\frac{D_n^*(s)}{\sqrt{n}}$ and $ \frac{n}{n^*(s)} \convP 1/p(s)$, we have 
	\begin{align*}
	L_{2,1,n}^*(u,\tau) = \frac{\pi f_1(q_1(\tau))}{2}(u'\iota_1)^2 - \sum_{s \in \mathcal{S}}f_1(q_1(\tau)|s)\frac{\pi D_{n}^*(s)u'\iota_1}{\sqrt{n}}q(\tau) + h_{2,1}^*(\tau) + R_{sfe,2,1}^*(u,\tau),
	\end{align*}
	where 
	\begin{align*}
	h_{2,1}^*(\tau)  = \sum_{s \in \mathcal{S}} \frac{\pi f_1(q_1(\tau)|s)}{2}p(s)(E_n^*(s))^2q^2(\tau)
	\end{align*}
	and
	\begin{align*}
	\sup_{\tau \in \Upsilon}|R_{sfe,2,1}^*(u,\tau)| = o_p(1).
	\end{align*}
	This concludes the proof. 
\end{proof}

\begin{lem}
	\label{lem:Qsfestar}
	Recall the definition of $(W_{sfe,n,1}^*(\tau)-\mathcal{W}_{n,1}(\tau), W_{sfe,n,2}^*(\tau), W_{sfe,n,3}^*(\tau)-\mathcal{W}_{n,2}(\tau))$ in \eqref{eq:betasfestar-qhat}. If Assumptions in Theorem \ref{thm:cab} hold, then conditionally on data, 
	\begin{align*}
	(W_{sfe,n,1}^*(\tau)-\mathcal{W}_{n,1}(\tau), W_{sfe,n,2}^*(\tau), W_{sfe,n,3}^*(\tau)-\mathcal{W}_{n,2}(\tau)) \convD (\mathcal{B}_1(\tau),\mathcal{B}_2(\tau),\mathcal{B}_3(\tau)),
	\end{align*}
	where $(\mathcal{B}_1(\tau),\mathcal{B}_2(\tau),\mathcal{B}_3(\tau))$ are three independent Gaussian processes with covariance kernels
	\begin{align*}
	\Sigma_1(\tau_1,\tau_2) = \frac{\min(\tau_1,\tau_2) - \tau_1\tau_2 - \mathbb{E}m_1(S,\tau_1)m_1(S,\tau_2)}{\pi f_1(q_1(\tau_1))f_1(q_1(\tau_2))} + \frac{\min(\tau_1,\tau_2) - \tau_1\tau_2 - \mathbb{E}m_0(S,\tau_1)m_0(S,\tau_2)}{(1-\pi) f_0(q_0(\tau_1))f_0(q_0(\tau_2))},
	\end{align*}
	\begin{align*}
	& \Sigma_2(\tau_1,\tau_2) \\
	= & \mathbb{E}\gamma(S)\biggl[(m_1(S,\tau_1) - m_0(S,\tau_1))\left(\frac{1-\pi}{\pi f_1(q_1(\tau_1))} - \frac{\pi}{(1-\pi)f_0(q_0(\tau_1))}\right) + q(\tau_1)\left(\frac{f_1(q(\tau_1)|S)}{f_1(q_1(\tau_1))} -\frac{f_0(q(\tau_1)|S)}{f_0(q_0(\tau_1))}\right)\biggr]\\
	&\times \biggl[(m_1(S,\tau_2) - m_0(S,\tau_2))\left(\frac{1-\pi}{\pi f_1(q_1(\tau_2))} - \frac{\pi}{(1-\pi)f_0(q_0(\tau_2))}\right) + q(\tau_2)\left(\frac{f_1(q(\tau_2)|S)}{f_1(q_2(\tau_2))} -\frac{f_0(q(\tau_2)|S)}{f_0(q_0(\tau_2))}\right)\biggr], 
	\end{align*}
	and 
	\begin{align*}
	\Sigma_3(\tau_1,\tau_2) = \mathbb{E}\biggl[\frac{m_1(S,\tau_1)}{f_1(q_1(\tau_1))} - \frac{m_0(S,\tau_1)}{f_0(q_0(\tau_1))}\biggr] \biggl[\frac{m_1(S,\tau_2)}{f_1(q_1(\tau_2))} - \frac{m_0(S,\tau_2)}{f_0(q_0(\tau_2))}\biggr],
	\end{align*}
	respectively. 
\end{lem}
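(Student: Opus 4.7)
The plan is to exploit the fact that the three components correspond to three distinct and (conditionally) independent sources of randomness in the covariate-adaptive bootstrap: Step 3 (drawing $Y_i^\ast$ from the within-cell empirical distribution), Step 2 (drawing $A_i^\ast$ via the assignment rule), and Step 1 (drawing $S_i^\ast$ nonparametrically). Once each term is analyzed marginally via an appropriate Poisson coupling, joint convergence with independence will follow from the Cram\'er--Wold device and the fact that the Poisson weights used in the three couplings are chosen to be mutually independent and independent of the data.

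For the first component $W_{sfe,n,1}^\ast(\tau) - \mathcal{W}_{n,1}(\tau)$, I would apply Lemma \ref{lem:Rsfestar} to write $\sum_i A_i^\ast 1\{S_i^\ast = s\}\eta_{i,1}^\ast(s,\tau) = \sum_{i=N(s)+1}^{N(s)+n_1(s)}\xi_i^s \tilde{\eta}_{i,1}(s,\tau) + o_p(\sqrt{n})$ uniformly in $(s,\tau)$, with an analogous expansion for the control arm. The rearrangement identity (used throughout the paper, e.g.\ in Lemma \ref{lem:Qipw}) gives $\mathcal{W}_{n,1}(\tau) \stackrel{d}{=} \sum_{s,i}\tilde{\eta}_{i,1}(s,\tau)/(\sqrt{n}\pi f_1(q_1(\tau))) - \text{(control analog)}$ over the same index ranges, so subtracting leaves a $(\xi_i^s - 1)$-weighted sum of i.i.d.\ bounded summands. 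Conditional stochastic equicontinuity follows from the maximal-inequality argument in Step 1 of Lemma \ref{lem:Qsfec} (with $\xi_i - 1$ replaced by $\xi_i^s - 1$, which has the same first two moments), and Lindeberg--Feller conditional on data delivers finite-dimensional normality. The covariance is computed exactly as in Step 2 of Lemma \ref{lem:Qsfec}, yielding $\Sigma_1(\tau_1,\tau_2)$.

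For the second component, $W_{sfe,n,2}^\ast(\tau)$ is a linear functional of $\{D_n^\ast(s)/\sqrt{n}\}_{s \in \mathcal{S}}$ with coefficient that is continuous in $\tau$ and bounded. By Assumption \ref{ass:bassignment}.1, conditional on $\{S_i^\ast\}$ (and hence on data and the strata-bootstrap draw), $\{D_n^\ast(s)/\sqrt{n}\}_{s\in\mathcal{S}}\convD\N(0,\Sigma_D)$ almost surely. Finite-dimensional convergence then follows from the Cram\'er--Wold device, and tightness over $\Upsilon$ from the Lipschitz property of $m_j(s,\cdot)$, $f_j(q_j(\cdot)|s)$ and $q(\cdot)$ supplied by Assumption \ref{ass:tau}. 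The resulting covariance kernel is $\Sigma_2(\tau_1,\tau_2)$.

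For the third component, $W_{sfe,n,3}^\ast(\tau) - \mathcal{W}_{n,2}(\tau) = \sum_{i=1}^n [h(S_i^\ast,\tau) - h(S_i,\tau)]/\sqrt{n}$ where $h(s,\tau) = m_1(s,\tau)/f_1(q_1(\tau)) - m_0(s,\tau)/f_0(q_0(\tau))$, which is exactly the nonparametric bootstrap of an i.i.d.\ sample of bounded $\tau$-indexed functions. By \citet[Section 3.6]{VW96}, this is first-order equivalent to $\sum_{i=1}^n (\tilde{\xi}_i - 1)h(S_i,\tau)/\sqrt{n}$ with $\{\tilde{\xi}_i\}$ i.i.d.\ Poisson(1) independent of data, whose conditional weak convergence to a Gaussian process with covariance $\Sigma_3$ is a standard weighted bootstrap result. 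The main obstacle, and the delicate part of the argument, is the joint independence: it hinges on arranging the three Poisson couplings $\{\xi_i^s\}$, $\{A_i^\ast\}$ under Assumption \ref{ass:bassignment}.1, and $\{\tilde{\xi}_i\}$ to be mutually independent conditional on data, so that the joint conditional characteristic function factors into the product of three Gaussian characteristic functions. This independence, together with componentwise tightness, delivers joint weak convergence to $(\mathcal{B}_1,\mathcal{B}_2,\mathcal{B}_3)$ with independent coordinates.
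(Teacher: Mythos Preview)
Your plan matches the paper's architecture: Poissonize component~1 via Lemma~\ref{lem:Rsfestar}, invoke Assumption~\ref{ass:bassignment}.1 for component~2, and Poissonize component~3 via \citet[Section~3.6]{VW96}. The gap is in the independence argument. You claim the three sources of randomness can be \emph{arranged} to be mutually independent conditional on data, but the Step~1 Poisson coupling $\{\tilde{\xi}_i\}$ is, by construction, tied to the multinomial weights that determine $\{S_i^\ast\}$, and $\{A_i^\ast\}$ is generated from $\{S_i^\ast\}$ via the assignment rule; hence $\{\tilde{\xi}_i\}$ and $\{A_i^\ast\}$ cannot be taken independent while preserving the approximation $W_{sfe,n,3}^\ast-\mathcal{W}_{n,2}\approx n^{-1/2}\sum_i(\tilde{\xi}_i-1)h(S_i,\tau)$. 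The paper avoids this in two moves. First, it replaces the random summation limits $N(s),n_1(s)$ by their deterministic versions $\lfloor nF(s)\rfloor,\lfloor n(F(s)+\pi p(s))\rfloor$, so that the Poissonized first component $W_1^\ast(\tau)$ depends only on $\{\xi_i^s,\tilde{\eta}_{i,j}\}$ and is therefore independent of $\mathcal{A}_n=\{A_i^\ast,S_i^\ast,A_i,S_i\}$ outright; this yields $\mathcal{B}_1\indep(\mathcal{B}_2,\mathcal{B}_3)$. Second, for $\mathcal{B}_2\indep\mathcal{B}_3$ it does \emph{not} rely on any Poisson coupling for $\{S_i^\ast\}$; instead it conditions on $\{S_i^\ast\}$ and uses that (i) $W_{sfe,n,3}^\ast-\mathcal{W}_{n,2}$ is $\{S_i^\ast,S_i\}$-measurable while (ii) the conditional limit of $W_{sfe,n,2}^\ast$ given $\{S_i^\ast\}$ has covariance $\Sigma_2$ not depending on the realization of $\{S_i^\ast\}$ --- exactly the device in Step~2 of Lemma~\ref{lem:Q}.

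A smaller point: writing ``$\mathcal{W}_{n,1}(\tau)\stackrel{d}{=}\cdots$, so subtracting leaves a $(\xi_i^s-1)$-weighted sum'' is not licit. A marginal $\stackrel{d}{=}$ does not permit subtraction inside a joint expression. The paper makes this rigorous by establishing the \emph{joint} conditional equality $\{W_{sfe,n,1}^\ast-\mathcal{W}_{n,1}\mid\mathcal{A}_n\}\stackrel{d}{=}\{\sum(M_{ni}-1)\tilde{\eta}_{i,j}(\cdot)\mid\mathcal{A}_n\}$, which is valid precisely because, conditional on $\mathcal{A}_n$, both $W_{sfe,n,1}^\ast$ and $\mathcal{W}_{n,1}$ are represented through the \emph{same} rearranged copies $\tilde{\eta}_{i,j}$.
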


\begin{proof}
	Let $\mathcal{A}_n = \{(A_i^*,S_i^*, A_i, S_i): i =1,\cdots,n\}$. Following the definition of $M_{ni}$ and arguments in the proof of Lemma \ref{lem:Rsfestar}, we have 
	\begin{align*}
	& \{W_{sfe,n,1}^*(\tau)-\mathcal{W}_{n,1}(\tau)|\mathcal{A}_n\} \\
	\stackrel{d}{=} &\left\{\sum_{s \in \mathcal{S}} \frac{1}{\sqrt{n}}\left[\sum_{i=N(s) + 1}^{N(s)+n_1(s)}(M_{ni}-1)\left(\frac{\tilde{\eta}_{i,1}(s,\tau)}{\pi f_1(q_1(\tau))}\right)  - \sum_{i=N(s) +n_1(s)+ 1}^{N(s)+n(s)}(M_{ni}-1)\left(\frac{\tilde{\eta}_{i,0}(s,\tau)}{(1-\pi) f_0(q_0(\tau))}\right)\right] \biggl|\mathcal{A}_n\right\}\\
	= & \left\{\sum_{s \in \mathcal{S}}\frac{1}{\sqrt{n}}\left[\sum_{i=N(s) + 1}^{N(s)+n_1(s)}(\xi_{i}^s-1)\frac{\tilde{\eta}_{i,1}(s,\tau)}{\pi f_1(q_1(\tau))} - \sum_{i=N(s) +n_1(s)+ 1}^{N(s)+n(s)}(\xi_{i}^s-1)\frac{\tilde{\eta}_{i,0}(s,\tau)}{(1-\pi) f_0(q_0(\tau))}\right] + R_1(\tau) \biggl|\mathcal{A}_n\right\},
	\end{align*}
	where $\sup_{\tau \in \Upsilon}|R_1(\tau)| = o_p(1)$ and $\{\xi_{i}^s\}_{i=1}^n$, $s \in \mathcal{S}$ are sequences of i.i.d. Poisson(1) random variables that are independent of $\mathcal{A}_n$ and across $s \in \mathcal{S}$. In addition, by the same argument in the proof of Lemma \ref{lem:Q}, we have 
	\begin{align*}
	& \sum_{s \in \mathcal{S}}\frac{1}{\sqrt{n}}\left[\sum_{i=N(s) + 1}^{N(s)+n_1(s)}(\xi_{i}^s-1)\frac{\tilde{\eta}_{i,1}(s,\tau)}{\pi f_1(q_1(\tau))} - \sum_{i=N(s) +n_1(s)+ 1}^{N(s)+n(s)}(\xi_{i}^s-1)\frac{\tilde{\eta}_{i,0}(s,\tau)}{(1-\pi) f_0(q_0(\tau))} \right] \\
	= & \sum_{s \in \mathcal{S}}\frac{1}{\sqrt{n}}\left[\sum_{i= \lfloor nF(s)\rfloor + 1}^{\lfloor n(F(s)+\pi p(s))\rfloor}(\xi_{i}^s-1)\frac{\tilde{\eta}_{i,1}(s,\tau)}{\pi f_1(q_1(\tau))} - \sum_{i=\lfloor n(F(s)+\pi p(s))\rfloor+1}^{\lfloor n(F(s)+ p(s))\rfloor}(\xi_{i}^s-1) \frac{\tilde{\eta}_{i,0}(s,\tau)}{(1-\pi) f_0(q_0(\tau))} \right] + R_2(\tau) \\
	\equiv & W_{1}^*(\tau) + R_2(\tau),
	\end{align*}
	where  $\sup_{\tau \in \Upsilon}|R_2(\tau)| = o_p(1)$. Because both $ W_{sfe,n,2}^*(\tau)$ and $W_{sfe,n,3}^*(\tau)-\mathcal{W}_{n,2}(\tau)$ are in the $\sigma$-field generated by $\mathcal{A}_n$, we have 
	\begin{align*}
	& (W_{sfe,n,1}^*(\tau)-\mathcal{W}_{n,1}(\tau), W_{sfe,n,2}^*(\tau), W_{sfe,n,3}^*(\tau)-\mathcal{W}_{n,2}(\tau)) \\
	\stackrel{d}{=} & (W_{1}^*(\tau) + R_1(\tau)+R_2(\tau), W_{sfe,n,2}^*(\tau), W_{sfe,n,3}^*(\tau)-\mathcal{W}_{n,2}(\tau)). 
	\end{align*}
	In addition, note that $\{\xi_i^s\}_{i=1}^n$ and $\{\tilde{\eta}_{i,1}(s,\tau),\tilde{\eta}_{i,1}(s,\tau)\}_{i=1}^n$ are independent of $\mathcal{A}_n$, therefore, $W_{1}^*(\tau) \indep (W_{sfe,n,2}^*(\tau),W_{sfe,n,3}^*(\tau)-\mathcal{W}_{n,2}(\tau))$. Applying \citet[Theorem 2.9.6]{VW96} to each segment 
	$$\lfloor nF(s)\rfloor + 1,\cdots,\lfloor n(F(s)+\pi p(s))\rfloor \quad \text{or} \quad \lfloor n(F(s)+\pi p(s))\rfloor+1,\cdots,\lfloor n(F(s)+ p(s))\rfloor$$ for $s \in \mathcal{S}$ and noticing that $\{\tilde{\eta}_{i,1}(s,\tau)\}_{i=1}^n$ and $\{\tilde{\eta}_{i,0}(s,\tau)\}_{i=1}^n$ are two i.i.d. sequences for each $s \in \mathcal{S}$, independent of each other, and independent across $s$, we have, conditionally on $\{\tilde{\eta}_{i,1}(s,\tau),\tilde{\eta}_{i,0}(s,\tau)\}_{i=1}^n$, $s \in \mathcal{S}$,   
	\begin{align*}
	W_1^*(\tau) \convD \mathcal{B}_1(\tau)
	\end{align*}  
	with the covariance kernel $\Sigma_1(\tau_1,\tau_2)$. 
	
	For $W^*_{sfe,n,2}(\tau)$, we note that it depends on data only through $\{S_i^*\}_{i=1}^n$. By Assumption \ref{ass:bassignment}, 
	\begin{align*}
	W^*_{sfe,n,2}(\tau)|\{S_i^*\}_{i=1}^n \convD \mathcal{B}_2(\tau)
	\end{align*}
	with the covariance kernel $\Sigma_2(\tau_1,\tau_2)$. 
	
	Last, for $W^*_{sfe,n,3}(\tau) - \mathcal{W}_{n,2}(\tau)$, note that $\{S_i^*\}$ is sampled by the standard bootstrap procedure. Therefore, directly applying \citet[Theorem 3.6.2]{VW96}, we have 
	\begin{align*}
	W^*_{sfe,n,3}(\tau) - \mathcal{W}_{n,2}(\tau) = \frac{1}{\sqrt{n}}\sum_{i =1}^n (\xi'_i-1) \biggl[\frac{m_1(S_i,\tau)}{f_1(q_1(\tau))} - \frac{m_0(S_i,\tau)}{f_0(q_0(\tau))}\biggr]+R_3(\tau)
	\end{align*}
	where $\sup_{\tau \in \Upsilon}|R_3(\tau)| = o_p(1)$, $\{\xi_i'\}_{i=1}^n$ is a sequence of i.i.d. Poisson(1) random variables that is independent of data and $\{\xi^s_i\}_{i=1}^n$, $s \in \mathcal{S}$. By \citet[Theorem 3.6.2]{VW96}, conditionally on data $\{S_i\}_{i=1}^n$, 
	\begin{align*}
	\frac{1}{\sqrt{n}}\sum_{i =1}^n (\xi'_i-1) \biggl[\frac{m_1(S_i,\tau)}{f_1(q_1(\tau))} - \frac{m_0(S_i,\tau)}{f_0(q_0(\tau))}\biggr] \convD \mathcal{B}_3(\tau),
	\end{align*}
	where $\mathcal{B}_3(\tau)$ has the covariance kernel $\Sigma_3(\tau_1,\tau_2)$. Furthermore, $\mathcal{B}_2(\tau)$ and  $\mathcal{B}_3(\tau)$ are independent as $\Sigma_2(\tau_1,\tau_2)$ is not a function of $\{S_i^*\}_{i=1}^n$. This concludes the proof. 
	
\end{proof}

\section{Additional Simulation Results}
\label{sec:addsim}
\subsection{DGPs}
We consider the following four DGPs with parameters $\gamma =4$, $\sigma =2$, and $\mu$ which will be specified later. DGPs 1 and 3 correspond to DGPs 1 and 2 in Section \ref{sec:sim} in the main paper. 
\begin{enumerate}
	\item Let $Z$ be the standardized $\text{Beta}(2,2)$ distributed, $S_i = \sum_{j=1}^4\{Z_i \leq g_j\}$, and $(g_1,\cdots,g_4) = (-0.25\sqrt{20},0,0.25\sqrt{20},0.5\sqrt{20})$. The outcome equation is 
	$$Y_i = A_i \mu + \gamma Z_i + \eta_i,$$ 
	where $\eta_i = \sigma A_i \eps_{i,1} + (1-A_i)\eps_{i,2}$ and $(\eps_{i,1},\eps_{i,2})$ are jointly standard normal. 
	\item Let $S$ be the same as in DGP1. The outcome equation is 
	\begin{align*}
	Y_i = A_i \mu+\gamma Z_i A_i - \gamma(1-A_i)(\log(Z_i+3)1\{Z_i \leq 0.5\}) + \eta_i.
	\end{align*}
	where $\eta_i = \sigma A_i \eps_{i,1} + (1-A_i)\eps_{i,2}$ and $(\eps_{i,1},\eps_{i,2})$ are jointly standard normal.  
	\item Let $Z$ be uniformly distributed on $[-2,2]$, $S_i = \sum_{j=1}^4\{Z_i \leq g_j\}$, and $(g_1,\cdots,g_4) = (-1,0,1,2)$. The outcome equation is 
	\begin{align*}
	Y_i = A_i \mu+A_im_{i,1}   + (1-A_i)m_{i,0} + \eta_i,
	\end{align*}
	where $m_{i,0} = \gamma Z_i^2 1\{|Z_i|\geq 1\} + \frac{\gamma}{4}(2 - Z_i^2)1\{|Z_i|<1\}$, $\eta_i = \sigma(1+Z_i^2)A_i\eps_{i,1} + (1+Z_i^2)(1-A_i)\eps_{i,2}$, and $(\eps_{i,1},\eps_{i,2})$ are mutually independent $T(3)/3$ distributed. 
	\item Let $Z_i$ be normally distributed with mean $0$ and variance $4$, $S_i = \sum_{j=1}^4\{Z_i \leq g_j\}$, $(g_1,\cdots,g_4) = (2\Phi^{-1}(0.25),2\Phi^{-1}(0.5),2\Phi^{-1}(0.75),\infty)$, and $\Phi(\cdot)$ is the standard normal CDF. The outcome equation is 
	\begin{align*}
	Y_i = A_i \mu+A_i m_{i,1} + (1-A_i)m_{i,0} + \eta_i,
	\end{align*}
	where $m_{i,0} = -\gamma Z_i^2/4$, $m_{i,1} = \gamma Z_i^2/4$, 
	\begin{align*}
	\eta_i = \sigma(1 + 0.5 \exp(-Z_i^2/2))A_i\eps_{i,1} + (1+0.5\exp(-Z_i^2/2))(1-A_i)\eps_{i,2},
	\end{align*}
	and $(\eps_{i,1},\eps_{i,2})$ are jointly standard normal.
\end{enumerate}

When $\pi = \frac{1}{2}$, for each DGP, we consider four randomization schemes:
\begin{enumerate}
	\item SRS: Treatment assignment is generated as in Example \ref{ex:srs}.
	\item WEI: Treatment assignment is generated as in Example \ref{ex:wei} with $\phi(x) = (1-x)/2$.
	\item BCD: Treatment assignment is generated as in Example \ref{ex:bcd} with $\lambda = 0.75$. 
	\item SBR: Treatment assignment is generated as in Example \ref{ex:sbr}. 
\end{enumerate}
When $\pi \neq 0.5$, we focus on SRS and SBR. We conduct the simulations with sample sizes $n=200$ and $400$. The numbers of simulation replications and bootstrap samples are 1000. Under the null, $\mu=0$ and the true parameters of interest are computed by simulations with $10^6$ sample size and $10^4$ replications. Under the alternative, we perturb the true values by $\mu = 1$ and $\mu = 0.75$ for $n = 200$ and $400$, respectively. We consider the following eight t-statistics. 

\begin{enumerate}
	\item ``s/naive": the point estimator is computed by the simple QR and its standard error $\sigma_{naive}$ is computed as 
	\begin{align}
	\label{eq:stddev1}
	\sigma^2_{naive} = & \frac{\tau(1-\tau) - \frac{1}{n}\sum_{i =1}^n\hat{m}^2_1(S_i,\tau) }{\pi\hat{f}_1^2(\hat{q}_1(\tau))}+\frac{\tau(1-\tau) - \frac{1}{n}\sum_{i =1}^n\hat{m}^2_0(S_i,\tau) }{(1-\pi)\hat{f}_0^2(\hat{q}_0(\tau))} \notag \\
	& + \frac{1}{n}\sum_{i =1}^n \pi(1-\pi)\left(\frac{\hat{m}_1(S_i,\tau)}{\pi \hat{f}_1(\hat{q}_1(\tau))} + \frac{\hat{m}_0(S_i,\tau)}{(1-\pi) \hat{f}_0(\hat{q}_0(\tau))}\right)^2 \notag \\
	& + \frac{1}{n}\sum_{i =1}^n \left(\frac{\hat{m}_1(S_i,\tau)}{ \hat{f}_1(\hat{q}_1(\tau))} - \frac{\hat{m}_0(S_i,\tau)}{ \hat{f}_0(\hat{q}_0(\tau))}\right)^2,
	\end{align}
	where $\hat{q}_j(\tau)$ is the $\tau$-the empirical quantile of $Y_i|A_i = j$, $$\hat{m}_{i,1}(s,\tau) = \frac{\sum_{i =1}^nA_i1\{S_i = s\}(\tau - 1\{Y_i \leq \hat{q}_1(\tau)\})}{n_1(s)},$$ 
	$$\hat{m}_{i,0}(s,\tau) = \frac{\sum_{i =1}^n(1-A_i)1\{S_i = s\}(\tau - 1\{Y_i \leq \hat{q}_0(\tau)\})}{n(s)-n_1(s)},$$ and for $j=0,1$, $\hat{f}_j(\cdot)$ is computed by the kernel density estimation using the observations $Y_i$ provided that $A_i = j$, bandwidth $h_j = 1.06\hat{\sigma}_jn_j^{-1/5}$, and the Gaussian kernel function, where $\hat{\sigma}_j$ is the standard deviation of the observations $Y_i$ provided that $A_i = j$, and $n_j = \sum_{i =1}^n \{A_i = j\}$, $j=0,1$. 
	\item ``s/adj": exactly the same as the ``s/naive" method with one difference: replacing $\pi(1-\pi)$ in $\sigma^2_{naive}$ by $\gamma(S_i)$.
	\item ``s/W": the point estimator is computed by the simple QR and its standard error $\sigma_B$ is computed by the weighted bootstrap procedure. The bootstrap weights $\{\xi_i\}_{i=1}^n$ are generated from the standard exponential distribution. Denote $\{\hat{\beta}^w_{1,b}\}_{b=1}^B$ as the collection of $B$ estimates obtained by the simple QR applied to the samples generated by the weighted bootstrap procedure. Then, 
	\begin{align*}
	\sigma_{B} = \frac{\hat{Q}(0.9) - \hat{Q}(0.1)}{\Phi^{-1}(0.9) - \Phi^{-1}(0.1)},
	\end{align*}
	where $\Phi(\cdot)$ is the standard normal CDF and $\hat{Q}(\tau)$ is the $\tau$-th empirical quantile of $\{\hat{\beta}^w_{1,b}\}_{b=1}^B$. 
	\item ``sfe/W": the same as above with one difference: the estimation method for both the original and bootstrap samples is the QR with strata fixed effects.
	\item ``ipw/W": the same as above with one difference: the estimation method for both the original and bootstrap samples is the inverse propensity score weighted QR.  
	\item  ``s/CA": the point estimator is computed by the simple QR and its standard error $\sigma_{CA}$ is computed by the covariate-adaptive bootstrap procedure. Denote $\{\hat{\beta}^*_{1,b}\}_{b=1}^B$ as the collection of $B$ estimates obtained by the simple QR applied to the samples generated by the covariate-adaptive bootstrap procedure. Then, 
	\begin{align*}
	\sigma_{CA} = \frac{\hat{Q}(0.9) - \hat{Q}(0.1)}{\Phi^{-1}(0.9) - \Phi^{-1}(0.1)},
	\end{align*}
	where $\hat{Q}(\tau)$ is the $\tau$-th empirical quantile of $\{\hat{\beta}^*_{1,b}\}_{b=1}^B$. 
	\item ``sfe/CA": the same as above with one difference:  the estimation method for both the original and bootstrap samples is the QR with strata fixed effects.
	\item ``ipw/CA": the same as above with one difference:  the estimation method for both the original and bootstrap samples is the inverse propensity score weighted QR.  
\end{enumerate}

\subsection{QTE, $H_0$, $\pi = 0.5$}
\begin{table}[H]
	\centering
	\caption{$H_0$, $n = 200$, $\tau = 0.25$}
	\begin{tabular}{l|l|cccccccc}
		\multicolumn{1}{c}{M} & \multicolumn{1}{c}{A}   & \multicolumn{1}{l}{s/naive} & \multicolumn{1}{c}{s/adj} & \multicolumn{1}{c}{s/W} & \multicolumn{1}{c}{sfe/W} & \multicolumn{1}{c}{ipw/W} & \multicolumn{1}{c}{s/CA } & \multicolumn{1}{c}{sfe/CA} & \multicolumn{1}{c}{ipw/CA} \\ \hline 
		1     & SRS   & 0.042 & 0.042 & 0.051 & 0.039 & 0.047 & 0.046 & 0.044 & 0.046 \\
		& WEI   & 0.011 & 0.038 & 0.018 & 0.043 & 0.046 & 0.037 & 0.047 & 0.047 \\
		& BCD   & 0.004 & 0.041 & 0.010 & 0.043 & 0.043 & 0.045 & 0.048 & 0.048 \\
		& SBR   & 0.003 & 0.047 & 0.003 & 0.047 & 0.054 & 0.049 & 0.046 & 0.046 \\  \hline 
		2     & SRS   & 0.045 & 0.045 & 0.060 & 0.062 & 0.066 & 0.056 & 0.069 & 0.069 \\
		& WEI   & 0.023 & 0.037 & 0.049 & 0.056 & 0.066 & 0.068 & 0.064 & 0.068 \\
		& BCD   & 0.021 & 0.037 & 0.032 & 0.049 & 0.057 & 0.063 & 0.059 & 0.057 \\
		& SBR   & 0.025 & 0.042 & 0.037 & 0.050 & 0.054 & 0.057 & 0.054 & 0.053 \\  \hline 
		3     & SRS   & 0.042 & 0.042 & 0.045 & 0.045 & 0.054 & 0.055 & 0.044 & 0.058 \\
		& WEI   & 0.042 & 0.043 & 0.037 & 0.044 & 0.045 & 0.045 & 0.043 & 0.045 \\
		& BCD   & 0.052 & 0.056 & 0.044 & 0.050 & 0.057 & 0.057 & 0.057 & 0.055 \\
		& SBR   & 0.046 & 0.053 & 0.041 & 0.043 & 0.048 & 0.052 & 0.048 & 0.047 \\  \hline 
		4     & SRS   & 0.054 & 0.054 & 0.048 & 0.046 & 0.049 & 0.046 & 0.043 & 0.048 \\
		& WEI   & 0.050 & 0.051 & 0.045 & 0.035 & 0.047 & 0.051 & 0.043 & 0.055 \\
		& BCD   & 0.056 & 0.059 & 0.040 & 0.030 & 0.049 & 0.047 & 0.044 & 0.048 \\
		& SBR   & 0.061 & 0.065 & 0.044 & 0.032 & 0.053 & 0.057 & 0.051 & 0.053 \\
	\end{tabular}%
	\label{tab:200_1}%
\end{table}%

\begin{table}[H]
	\centering
	\caption{$H_0$, $n = 200$, $\tau = 0.5$}
	\begin{tabular}{l|l|cccccccc}
		\multicolumn{1}{c}{M} & \multicolumn{1}{c}{A}   & \multicolumn{1}{l}{s/naive} & \multicolumn{1}{c}{s/adj} & \multicolumn{1}{c}{s/W} & \multicolumn{1}{c}{sfe/W} & \multicolumn{1}{c}{ipw/W} & \multicolumn{1}{c}{s/CA } & \multicolumn{1}{c}{sfe/CA} & \multicolumn{1}{c}{ipw/CA} \\ \hline 
		1     & SRS   & 0.045 & 0.045 & 0.047 & 0.043 & 0.044 & 0.044 & 0.039 & 0.039 \\
		& WEI   & 0.012 & 0.040 & 0.014 & 0.044 & 0.043 & 0.037 & 0.041 & 0.035 \\
		& BCD   & 0.002 & 0.057 & 0.003 & 0.040 & 0.041 & 0.044 & 0.039 & 0.039 \\
		& SBR   & 0.001 & 0.057 & 0.001 & 0.045 & 0.046 & 0.045 & 0.045 & 0.044 \\  \hline 
		2     & SRS   & 0.045 & 0.045 & 0.057 & 0.066 & 0.061 & 0.048 & 0.064 & 0.066 \\
		& WEI   & 0.033 & 0.065 & 0.037 & 0.056 & 0.065 & 0.065 & 0.056 & 0.061 \\
		& BCD   & 0.022 & 0.062 & 0.027 & 0.048 & 0.056 & 0.057 & 0.057 & 0.054 \\
		& SBR   & 0.017 & 0.050 & 0.017 & 0.040 & 0.046 & 0.048 & 0.048 & 0.046 \\  \hline 
		3     & SRS   & 0.004 & 0.004 & 0.047 & 0.045 & 0.052 & 0.052 & 0.047 & 0.053 \\
		& WEI   & 0.006 & 0.006 & 0.045 & 0.050 & 0.058 & 0.052 & 0.053 & 0.057 \\
		& BCD   & 0.010 & 0.010 & 0.045 & 0.050 & 0.051 & 0.050 & 0.050 & 0.053 \\
		& SBR   & 0.008 & 0.011 & 0.048 & 0.048 & 0.053 & 0.046 & 0.051 & 0.047 \\  \hline 
		4     & SRS   & 0.013 & 0.013 & 0.050 & 0.036 & 0.051 & 0.055 & 0.035 & 0.043 \\
		& WEI   & 0.011 & 0.011 & 0.043 & 0.033 & 0.051 & 0.049 & 0.043 & 0.052 \\
		& BCD   & 0.013 & 0.013 & 0.049 & 0.041 & 0.053 & 0.055 & 0.047 & 0.052 \\
		& SBR   & 0.013 & 0.013 & 0.040 & 0.033 & 0.047 & 0.046 & 0.044 & 0.045 \\
	\end{tabular}%
	\label{tab:200_2_2}%
\end{table}%

\begin{table}[H]
	\centering
	\caption{$H_0$,  $n = 200$, $\tau = 0.75$}
	\begin{tabular}{l|l|cccccccc}
		\multicolumn{1}{c}{M} & \multicolumn{1}{c}{A}   & \multicolumn{1}{l}{s/naive} & \multicolumn{1}{c}{s/adj} & \multicolumn{1}{c}{s/W} & \multicolumn{1}{c}{sfe/W} & \multicolumn{1}{c}{ipw/W} & \multicolumn{1}{c}{s/CA } & \multicolumn{1}{c}{sfe/CA} & \multicolumn{1}{c}{ipw/CA} \\ \hline 
		1     & SRS   & 0.052 & 0.052 & 0.053 & 0.044 & 0.044 & 0.048 & 0.041 & 0.042 \\
		& WEI   & 0.012 & 0.042 & 0.014 & 0.043 & 0.046 & 0.037 & 0.039 & 0.045 \\
		& BCD   & 0.002 & 0.047 & 0.002 & 0.051 & 0.054 & 0.055 & 0.053 & 0.053 \\
		& SBR   & 0.001 & 0.026 & 0.003 & 0.030 & 0.035 & 0.030 & 0.033 & 0.035 \\  \hline 
		2     & SRS   & 0.052 & 0.052 & 0.066 & 0.057 & 0.058 & 0.053 & 0.048 & 0.058 \\
		& WEI   & 0.021 & 0.045 & 0.027 & 0.047 & 0.052 & 0.057 & 0.051 & 0.054 \\
		& BCD   & 0.013 & 0.046 & 0.025 & 0.051 & 0.060 & 0.067 & 0.061 & 0.060 \\
		& SBR   & 0.008 & 0.036 & 0.012 & 0.037 & 0.046 & 0.046 & 0.046 & 0.050 \\  \hline 
		3     & SRS   & 0.058 & 0.058 & 0.048 & 0.054 & 0.047 & 0.058 & 0.054 & 0.051 \\
		& WEI   & 0.053 & 0.055 & 0.041 & 0.044 & 0.047 & 0.047 & 0.048 & 0.046 \\
		& BCD   & 0.042 & 0.043 & 0.026 & 0.026 & 0.033 & 0.033 & 0.032 & 0.034 \\
		& SBR   & 0.048 & 0.052 & 0.040 & 0.036 & 0.046 & 0.051 & 0.043 & 0.048 \\  \hline 
		4     & SRS   & 0.044 & 0.044 & 0.057 & 0.059 & 0.062 & 0.053 & 0.051 & 0.065 \\
		& WEI   & 0.034 & 0.034 & 0.044 & 0.029 & 0.053 & 0.048 & 0.044 & 0.054 \\
		& BCD   & 0.029 & 0.032 & 0.040 & 0.019 & 0.045 & 0.047 & 0.043 & 0.047 \\
		& SBR   & 0.034 & 0.037 & 0.042 & 0.025 & 0.051 & 0.055 & 0.049 & 0.051 \\
	\end{tabular}%
	\label{tab:200_3}%
\end{table}%

\begin{table}[H]
	\centering
	\caption{$H_0$,  $n = 400$, $\tau = 0.25$}
	\begin{tabular}{l|l|cccccccc}
		\multicolumn{1}{c}{M} & \multicolumn{1}{c}{A}   & \multicolumn{1}{l}{s/naive} & \multicolumn{1}{c}{s/adj} & \multicolumn{1}{c}{s/W} & \multicolumn{1}{c}{sfe/W} & \multicolumn{1}{c}{ipw/W} & \multicolumn{1}{c}{s/CA } & \multicolumn{1}{c}{sfe/CA} & \multicolumn{1}{c}{ipw/CA} \\ \hline 
		1     & SRS   & 0.047 & 0.047 & 0.053 & 0.041 & 0.039 & 0.049 & 0.040 & 0.040 \\
		& WEI   & 0.009 & 0.043 & 0.017 & 0.041 & 0.042 & 0.045 & 0.044 & 0.043 \\
		& BCD   & 0.002 & 0.042 & 0.003 & 0.037 & 0.040 & 0.035 & 0.036 & 0.037 \\
		& SBR   & 0.002 & 0.043 & 0.004 & 0.034 & 0.034 & 0.036 & 0.032 & 0.030 \\ \hline 
		2     & SRS   & 0.046 & 0.046 & 0.056 & 0.059 & 0.059 & 0.055 & 0.057 & 0.059 \\
		& WEI   & 0.035 & 0.046 & 0.046 & 0.056 & 0.062 & 0.065 & 0.061 & 0.060 \\
		& BCD   & 0.030 & 0.044 & 0.037 & 0.055 & 0.065 & 0.060 & 0.060 & 0.057 \\
		& SBR   & 0.026 & 0.049 & 0.042 & 0.058 & 0.067 & 0.063 & 0.062 & 0.066 \\ \hline 
		3     & SRS   & 0.044 & 0.044 & 0.039 & 0.041 & 0.042 & 0.042 & 0.041 & 0.043 \\
		& WEI   & 0.042 & 0.045 & 0.048 & 0.041 & 0.048 & 0.051 & 0.046 & 0.049 \\
		& BCD   & 0.039 & 0.040 & 0.041 & 0.040 & 0.044 & 0.046 & 0.047 & 0.048 \\
		& SBR   & 0.048 & 0.051 & 0.046 & 0.048 & 0.052 & 0.056 & 0.056 & 0.055 \\ \hline 
		4     & SRS   & 0.056 & 0.056 & 0.039 & 0.042 & 0.041 & 0.041 & 0.043 & 0.042 \\
		& WEI   & 0.052 & 0.055 & 0.038 & 0.034 & 0.045 & 0.042 & 0.044 & 0.044 \\
		& BCD   & 0.054 & 0.058 & 0.040 & 0.026 & 0.045 & 0.044 & 0.045 & 0.043 \\
		& SBR   & 0.061 & 0.068 & 0.049 & 0.027 & 0.047 & 0.054 & 0.055 & 0.051 \\
	\end{tabular}%
	\label{tab:400_1}%
\end{table}%

\begin{table}[H]
	\centering
	\caption{$H_0$,  $n = 400$, $\tau = 0.5$}
	\begin{tabular}{l|l|cccccccc}
		\multicolumn{1}{c}{M} & \multicolumn{1}{c}{A}   & \multicolumn{1}{l}{s/naive} & \multicolumn{1}{c}{s/adj} & \multicolumn{1}{c}{s/W} & \multicolumn{1}{c}{sfe/W} & \multicolumn{1}{c}{ipw/W} & \multicolumn{1}{c}{s/CA } & \multicolumn{1}{c}{sfe/CA} & \multicolumn{1}{c}{ipw/CA} \\ \hline 
		1     & SRS   & 0.042 & 0.042 & 0.054 & 0.046 & 0.040 & 0.046 & 0.050 & 0.041 \\
		& WEI   & 0.010 & 0.049 & 0.008 & 0.047 & 0.047 & 0.046 & 0.043 & 0.042 \\
		& BCD   & 0.003 & 0.045 & 0.002 & 0.043 & 0.043 & 0.035 & 0.039 & 0.040 \\
		& SBR   & 0.002 & 0.046 & 0.000 & 0.035 & 0.037 & 0.036 & 0.036 & 0.037 \\ \hline 
		2     & SRS   & 0.050 & 0.050 & 0.055 & 0.049 & 0.047 & 0.051 & 0.052 & 0.050 \\
		& WEI   & 0.018 & 0.048 & 0.025 & 0.041 & 0.046 & 0.045 & 0.048 & 0.045 \\
		& BCD   & 0.011 & 0.042 & 0.011 & 0.041 & 0.046 & 0.045 & 0.046 & 0.043 \\
		& SBR   & 0.017 & 0.051 & 0.014 & 0.042 & 0.050 & 0.053 & 0.047 & 0.050 \\ \hline 
		3     & SRS   & 0.012 & 0.012 & 0.043 & 0.046 & 0.048 & 0.046 & 0.050 & 0.050 \\
		& WEI   & 0.014 & 0.016 & 0.057 & 0.055 & 0.060 & 0.055 & 0.058 & 0.057 \\
		& BCD   & 0.013 & 0.013 & 0.055 & 0.059 & 0.061 & 0.051 & 0.053 & 0.052 \\
		& SBR   & 0.006 & 0.006 & 0.040 & 0.040 & 0.039 & 0.038 & 0.039 & 0.038 \\ \hline 
		4     & SRS   & 0.019 & 0.019 & 0.056 & 0.052 & 0.064 & 0.056 & 0.051 & 0.061 \\
		& WEI   & 0.018 & 0.018 & 0.060 & 0.046 & 0.065 & 0.064 & 0.062 & 0.066 \\
		& BCD   & 0.015 & 0.015 & 0.057 & 0.046 & 0.066 & 0.063 & 0.059 & 0.067 \\
		& SBR   & 0.021 & 0.021 & 0.057 & 0.043 & 0.060 & 0.062 & 0.062 & 0.062 \\
	\end{tabular}%
	\label{tab:400_2_2}%
\end{table}%

\begin{table}[H]
	\centering
	\caption{$H_0$,  $n = 400$, $\tau = 0.75$}
	\begin{tabular}{l|l|cccccccc}
		\multicolumn{1}{c}{M} & \multicolumn{1}{c}{A}   & \multicolumn{1}{l}{s/naive} & \multicolumn{1}{c}{s/adj} & \multicolumn{1}{c}{s/W} & \multicolumn{1}{c}{sfe/W} & \multicolumn{1}{c}{ipw/W} & \multicolumn{1}{c}{s/CA } & \multicolumn{1}{c}{sfe/CA} & \multicolumn{1}{c}{ipw/CA} \\ \hline 
		1     & SRS   & 0.051 & 0.051 & 0.056 & 0.055 & 0.056 & 0.052 & 0.055 & 0.054 \\
		& WEI   & 0.007 & 0.041 & 0.014 & 0.055 & 0.053 & 0.051 & 0.050 & 0.051 \\
		& BCD   & 0.006 & 0.038 & 0.004 & 0.046 & 0.048 & 0.041 & 0.042 & 0.046 \\
		& SBR   & 0.004 & 0.033 & 0.002 & 0.044 & 0.043 & 0.042 & 0.043 & 0.042 \\ \hline 
		2     & SRS   & 0.048 & 0.048 & 0.073 & 0.055 & 0.061 & 0.060 & 0.057 & 0.059 \\
		& WEI   & 0.020 & 0.039 & 0.024 & 0.046 & 0.053 & 0.048 & 0.051 & 0.053 \\
		& BCD   & 0.012 & 0.048 & 0.020 & 0.050 & 0.051 & 0.057 & 0.055 & 0.051 \\
		& SBR   & 0.011 & 0.047 & 0.014 & 0.046 & 0.052 & 0.050 & 0.052 & 0.052 \\ \hline 
		3     & SRS   & 0.054 & 0.054 & 0.050 & 0.045 & 0.052 & 0.049 & 0.044 & 0.052 \\
		& WEI   & 0.053 & 0.055 & 0.049 & 0.047 & 0.053 & 0.050 & 0.049 & 0.054 \\
		& BCD   & 0.059 & 0.063 & 0.038 & 0.041 & 0.045 & 0.044 & 0.043 & 0.043 \\
		& SBR   & 0.049 & 0.051 & 0.042 & 0.044 & 0.043 & 0.049 & 0.049 & 0.049 \\ \hline 
		4     & SRS   & 0.054 & 0.054 & 0.057 & 0.053 & 0.063 & 0.055 & 0.056 & 0.063 \\
		& WEI   & 0.047 & 0.051 & 0.055 & 0.043 & 0.064 & 0.055 & 0.061 & 0.059 \\
		& BCD   & 0.049 & 0.051 & 0.054 & 0.033 & 0.063 & 0.062 & 0.056 & 0.063 \\
		& SBR   & 0.046 & 0.048 & 0.047 & 0.026 & 0.051 & 0.057 & 0.056 & 0.053 \\
	\end{tabular}%
	\label{tab:400_3}%
\end{table}%

\subsection{QTE, $H_1$, $\pi = 0.5$}

\begin{table}[H]
	\centering
	\caption{$H_1$,  $n = 200$, $\tau = 0.25$}
	\begin{tabular}{l|l|cccccccc}
		\multicolumn{1}{c}{M} & \multicolumn{1}{c}{A}   & \multicolumn{1}{l}{s/naive} & \multicolumn{1}{c}{s/adj} & \multicolumn{1}{c}{s/W} & \multicolumn{1}{c}{sfe/W} & \multicolumn{1}{c}{ipw/W} & \multicolumn{1}{c}{s/CA } & \multicolumn{1}{c}{sfe/CA} & \multicolumn{1}{c}{ipw/CA} \\ \hline 
		1     & SRS   & 0.191 & 0.191 & 0.203 & 0.354 & 0.356 & 0.205 & 0.340 & 0.342 \\
		& WEI   & 0.126 & 0.257 & 0.147 & 0.359 & 0.358 & 0.279 & 0.345 & 0.350 \\
		& BCD   & 0.105 & 0.372 & 0.122 & 0.379 & 0.375 & 0.361 & 0.369 & 0.365 \\
		& SBR   & 0.099 & 0.400 & 0.114 & 0.378 & 0.382 & 0.411 & 0.375 & 0.368 \\  \hline 
		2     & SRS   & 0.284 & 0.284 & 0.315 & 0.352 & 0.376 & 0.319 & 0.345 & 0.378 \\
		& WEI   & 0.270 & 0.319 & 0.314 & 0.356 & 0.364 & 0.359 & 0.363 & 0.369 \\
		& BCD   & 0.282 & 0.333 & 0.304 & 0.361 & 0.375 & 0.390 & 0.385 & 0.383 \\
		& SBR   & 0.290 & 0.346 & 0.296 & 0.335 & 0.361 & 0.387 & 0.358 & 0.356 \\  \hline 
		3     & SRS   & 0.712 & 0.712 & 0.694 & 0.688 & 0.698 & 0.704 & 0.677 & 0.686 \\
		& WEI   & 0.701 & 0.707 & 0.678 & 0.685 & 0.680 & 0.699 & 0.687 & 0.674 \\
		& BCD   & 0.712 & 0.720 & 0.673 & 0.686 & 0.695 & 0.699 & 0.698 & 0.698 \\
		& SBR   & 0.672 & 0.684 & 0.659 & 0.639 & 0.647 & 0.673 & 0.647 & 0.638 \\  \hline 
		4     & SRS   & 0.166 & 0.166 & 0.124 & 0.112 & 0.132 & 0.135 & 0.131 & 0.128 \\
		& WEI   & 0.166 & 0.170 & 0.126 & 0.098 & 0.125 & 0.144 & 0.139 & 0.133 \\
		& BCD   & 0.165 & 0.176 & 0.126 & 0.094 & 0.155 & 0.157 & 0.145 & 0.157 \\
		& SBR   & 0.167 & 0.175 & 0.122 & 0.088 & 0.139 & 0.145 & 0.133 & 0.140 \\
	\end{tabular}%
	\label{tab:200_1'}%
\end{table}%

\begin{table}[H]
	\centering
	\caption{$H_1$,  $n = 200$, $\tau = 0.5$}
	\begin{tabular}{l|l|cccccccc}
		\multicolumn{1}{c}{M} & \multicolumn{1}{c}{A}   & \multicolumn{1}{l}{s/naive} & \multicolumn{1}{c}{s/adj} & \multicolumn{1}{c}{s/W} & \multicolumn{1}{c}{sfe/W} & \multicolumn{1}{c}{ipw/W} & \multicolumn{1}{c}{s/CA } & \multicolumn{1}{c}{sfe/CA} & \multicolumn{1}{c}{ipw/CA} \\ \hline 
		1     & SRS   & 0.183 & 0.183 & 0.193 & 0.443 & 0.441 & 0.200 & 0.431 & 0.429 \\
		& WEI   & 0.116 & 0.295 & 0.138 & 0.442 & 0.447 & 0.298 & 0.437 & 0.436 \\
		& BCD   & 0.072 & 0.472 & 0.095 & 0.450 & 0.453 & 0.434 & 0.446 & 0.448 \\
		& SBR   & 0.085 & 0.485 & 0.099 & 0.463 & 0.460 & 0.457 & 0.453 & 0.448 \\  \hline 
		2     & SRS   & 0.267 & 0.267 & 0.256 & 0.359 & 0.366 & 0.265 & 0.358 & 0.371 \\
		& WEI   & 0.248 & 0.346 & 0.247 & 0.358 & 0.394 & 0.346 & 0.378 & 0.389 \\
		& BCD   & 0.229 & 0.402 & 0.233 & 0.358 & 0.396 & 0.388 & 0.395 & 0.392 \\
		& SBR   & 0.232 & 0.404 & 0.234 & 0.365 & 0.392 & 0.399 & 0.401 & 0.391 \\  \hline 
		3     & SRS   & 0.797 & 0.797 & 0.904 & 0.897 & 0.916 & 0.902 & 0.897 & 0.913 \\
		& WEI   & 0.802 & 0.807 & 0.907 & 0.903 & 0.909 & 0.913 & 0.902 & 0.906 \\
		& BCD   & 0.796 & 0.804 & 0.902 & 0.910 & 0.911 & 0.908 & 0.911 & 0.906 \\
		& SBR   & 0.771 & 0.774 & 0.897 & 0.896 & 0.901 & 0.899 & 0.894 & 0.899 \\  \hline 
		4     & SRS   & 0.176 & 0.176 & 0.312 & 0.269 & 0.317 & 0.316 & 0.297 & 0.316 \\
		& WEI   & 0.171 & 0.175 & 0.289 & 0.255 & 0.307 & 0.309 & 0.297 & 0.298 \\
		& BCD   & 0.169 & 0.174 & 0.299 & 0.262 & 0.313 & 0.329 & 0.311 & 0.316 \\
		& SBR   & 0.163 & 0.165 & 0.283 & 0.255 & 0.304 & 0.302 & 0.298 & 0.298 \\
	\end{tabular}%
	\label{tab:200_2_2'}%
\end{table}%

\begin{table}[H]
	\centering
	\caption{$H_1$,  $n = 200$, $\tau = 0.75$}
	\begin{tabular}{l|l|cccccccc}
		\multicolumn{1}{c}{M} & \multicolumn{1}{c}{A}   & \multicolumn{1}{l}{s/naive} & \multicolumn{1}{c}{s/adj} & \multicolumn{1}{c}{s/W} & \multicolumn{1}{c}{sfe/W} & \multicolumn{1}{c}{ipw/W} & \multicolumn{1}{c}{s/CA } & \multicolumn{1}{c}{sfe/CA} & \multicolumn{1}{c}{ipw/CA} \\ \hline 
		1     & SRS   & 0.198 & 0.198 & 0.215 & 0.362 & 0.358 & 0.216 & 0.353 & 0.355 \\
		& WEI   & 0.143 & 0.293 & 0.153 & 0.361 & 0.368 & 0.315 & 0.362 & 0.364 \\
		& BCD   & 0.108 & 0.377 & 0.131 & 0.356 & 0.360 & 0.355 & 0.353 & 0.353 \\
		& SBR   & 0.079 & 0.386 & 0.105 & 0.397 & 0.396 & 0.381 & 0.403 & 0.386 \\  \hline 
		2     & SRS   & 0.268 & 0.268 & 0.315 & 0.386 & 0.439 & 0.322 & 0.391 & 0.434 \\
		& WEI   & 0.238 & 0.339 & 0.285 & 0.396 & 0.430 & 0.390 & 0.417 & 0.428 \\
		& BCD   & 0.209 & 0.407 & 0.263 & 0.398 & 0.428 & 0.425 & 0.428 & 0.418 \\
		& SBR   & 0.206 & 0.427 & 0.267 & 0.439 & 0.455 & 0.450 & 0.465 & 0.456 \\  \hline 
		3     & SRS   & 0.698 & 0.698 & 0.607 & 0.594 & 0.619 & 0.634 & 0.609 & 0.622 \\
		& WEI   & 0.668 & 0.673 & 0.607 & 0.606 & 0.616 & 0.631 & 0.623 & 0.624 \\
		& BCD   & 0.690 & 0.698 & 0.607 & 0.612 & 0.616 & 0.635 & 0.618 & 0.621 \\
		& SBR   & 0.669 & 0.675 & 0.596 & 0.614 & 0.633 & 0.617 & 0.631 & 0.630 \\  \hline 
		4     & SRS   & 0.163 & 0.163 & 0.158 & 0.122 & 0.167 & 0.173 & 0.140 & 0.169 \\
		& WEI   & 0.144 & 0.152 & 0.152 & 0.105 & 0.175 & 0.169 & 0.152 & 0.178 \\
		& BCD   & 0.133 & 0.138 & 0.151 & 0.085 & 0.170 & 0.177 & 0.173 & 0.172 \\
		& SBR   & 0.146 & 0.154 & 0.143 & 0.090 & 0.175 & 0.171 & 0.177 & 0.180 \\
	\end{tabular}%
	\label{tab:200_3'}%
\end{table}%

\begin{table}[H]
	\centering
	\caption{$H_1$,  $n = 400$, $\tau = 0.25$}
	\begin{tabular}{l|l|cccccccc}
		\multicolumn{1}{c}{M} & \multicolumn{1}{c}{A}   & \multicolumn{1}{l}{s/naive} & \multicolumn{1}{c}{s/adj} & \multicolumn{1}{c}{s/W} & \multicolumn{1}{c}{sfe/W} & \multicolumn{1}{c}{ipw/W} & \multicolumn{1}{c}{s/CA } & \multicolumn{1}{c}{sfe/CA} & \multicolumn{1}{c}{ipw/CA} \\ \hline 
		1     & SRS   & 0.206 & 0.206 & 0.229 & 0.403 & 0.417 & 0.231 & 0.401 & 0.405 \\
		& WEI   & 0.163 & 0.332 & 0.173 & 0.408 & 0.413 & 0.337 & 0.408 & 0.413 \\
		& BCD   & 0.121 & 0.430 & 0.143 & 0.420 & 0.422 & 0.421 & 0.419 & 0.413 \\
		& SBR   & 0.128 & 0.451 & 0.144 & 0.428 & 0.429 & 0.458 & 0.426 & 0.423 \\ \hline 
		2     & SRS   & 0.312 & 0.312 & 0.345 & 0.422 & 0.415 & 0.351 & 0.416 & 0.416 \\
		& WEI   & 0.312 & 0.352 & 0.332 & 0.405 & 0.424 & 0.378 & 0.408 & 0.426 \\
		& BCD   & 0.299 & 0.378 & 0.333 & 0.392 & 0.405 & 0.403 & 0.415 & 0.413 \\
		& SBR   & 0.330 & 0.389 & 0.345 & 0.401 & 0.407 & 0.426 & 0.410 & 0.406 \\ \hline 
		3     & SRS   & 0.763 & 0.763 & 0.734 & 0.730 & 0.740 & 0.738 & 0.732 & 0.738 \\
		& WEI   & 0.763 & 0.764 & 0.739 & 0.739 & 0.748 & 0.744 & 0.746 & 0.746 \\
		& BCD   & 0.781 & 0.783 & 0.760 & 0.760 & 0.768 & 0.772 & 0.774 & 0.767 \\
		& SBR   & 0.766 & 0.773 & 0.745 & 0.739 & 0.744 & 0.763 & 0.751 & 0.744 \\ \hline 
		4     & SRS   & 0.177 & 0.177 & 0.129 & 0.108 & 0.136 & 0.127 & 0.121 & 0.133 \\
		& WEI   & 0.170 & 0.176 & 0.129 & 0.096 & 0.139 & 0.139 & 0.131 & 0.143 \\
		& BCD   & 0.178 & 0.185 & 0.132 & 0.089 & 0.141 & 0.141 & 0.139 & 0.138 \\
		& SBR   & 0.180 & 0.186 & 0.129 & 0.102 & 0.134 & 0.147 & 0.135 & 0.133 \\
	\end{tabular}%
	\label{tab:400_1'}%
\end{table}%

\begin{table}[H]
	\centering
	\caption{$H_1$,  $n = 400$, $\tau = 0.5$}
	\begin{tabular}{l|l|cccccccc}
		\multicolumn{1}{c}{M} & \multicolumn{1}{c}{A}   & \multicolumn{1}{l}{s/naive} & \multicolumn{1}{c}{s/adj} & \multicolumn{1}{c}{s/W} & \multicolumn{1}{c}{sfe/W} & \multicolumn{1}{c}{ipw/W} & \multicolumn{1}{c}{s/CA } & \multicolumn{1}{c}{sfe/CA} & \multicolumn{1}{c}{ipw/CA} \\ \hline 
		1     & SRS   & 0.218 & 0.218 & 0.232 & 0.504 & 0.502 & 0.235 & 0.497 & 0.502 \\
		& WEI   & 0.147 & 0.356 & 0.160 & 0.503 & 0.503 & 0.350 & 0.498 & 0.507 \\
		& BCD   & 0.089 & 0.526 & 0.117 & 0.498 & 0.502 & 0.493 & 0.495 & 0.496 \\
		& SBR   & 0.089 & 0.550 & 0.109 & 0.520 & 0.518 & 0.524 & 0.526 & 0.519 \\ \hline 
		2     & SRS   & 0.301 & 0.301 & 0.309 & 0.402 & 0.426 & 0.306 & 0.413 & 0.423 \\
		& WEI   & 0.287 & 0.387 & 0.281 & 0.402 & 0.418 & 0.372 & 0.411 & 0.420 \\
		& BCD   & 0.268 & 0.451 & 0.262 & 0.400 & 0.443 & 0.434 & 0.434 & 0.441 \\
		& SBR   & 0.260 & 0.433 & 0.252 & 0.403 & 0.421 & 0.418 & 0.431 & 0.420 \\ \hline 
		3     & SRS   & 0.897 & 0.897 & 0.956 & 0.957 & 0.956 & 0.957 & 0.956 & 0.957 \\
		& WEI   & 0.892 & 0.892 & 0.954 & 0.944 & 0.948 & 0.951 & 0.942 & 0.948 \\
		& BCD   & 0.887 & 0.889 & 0.952 & 0.949 & 0.954 & 0.957 & 0.954 & 0.956 \\
		& SBR   & 0.900 & 0.902 & 0.954 & 0.954 & 0.954 & 0.958 & 0.962 & 0.957 \\ \hline 
		4     & SRS   & 0.234 & 0.234 & 0.345 & 0.317 & 0.351 & 0.353 & 0.339 & 0.343 \\
		& WEI   & 0.222 & 0.224 & 0.336 & 0.326 & 0.352 & 0.352 & 0.335 & 0.358 \\
		& BCD   & 0.226 & 0.230 & 0.346 & 0.321 & 0.349 & 0.368 & 0.359 & 0.365 \\
		& SBR   & 0.238 & 0.242 & 0.369 & 0.350 & 0.380 & 0.379 & 0.374 & 0.377 \\
	\end{tabular}%
	\label{tab:400_2_2'}%
\end{table}%

\begin{table}[H]
	\centering
	\caption{$H_1$,  $n = 400$, $\tau = 0.75$}
	\begin{tabular}{l|l|cccccccc}
		\multicolumn{1}{c}{M} & \multicolumn{1}{c}{A}   & \multicolumn{1}{l}{s/naive} & \multicolumn{1}{c}{s/adj} & \multicolumn{1}{c}{s/W} & \multicolumn{1}{c}{sfe/W} & \multicolumn{1}{c}{ipw/W} & \multicolumn{1}{c}{s/CA } & \multicolumn{1}{c}{sfe/CA} & \multicolumn{1}{c}{ipw/CA} \\ \hline 
		1     & SRS   & 0.218 & 0.218 & 0.237 & 0.430 & 0.435 & 0.242 & 0.438 & 0.435 \\
		& WEI   & 0.163 & 0.321 & 0.176 & 0.441 & 0.437 & 0.344 & 0.433 & 0.432 \\
		& BCD   & 0.136 & 0.422 & 0.152 & 0.421 & 0.420 & 0.417 & 0.417 & 0.416 \\
		& SBR   & 0.103 & 0.446 & 0.124 & 0.459 & 0.459 & 0.448 & 0.463 & 0.461 \\ \hline 
		2     & SRS   & 0.300 & 0.300 & 0.337 & 0.445 & 0.479 & 0.335 & 0.449 & 0.479 \\
		& WEI   & 0.258 & 0.369 & 0.313 & 0.446 & 0.465 & 0.414 & 0.453 & 0.463 \\
		& BCD   & 0.247 & 0.462 & 0.295 & 0.451 & 0.476 & 0.483 & 0.481 & 0.477 \\
		& SBR   & 0.227 & 0.444 & 0.276 & 0.472 & 0.490 & 0.471 & 0.496 & 0.492 \\ \hline 
		3     & SRS   & 0.763 & 0.763 & 0.710 & 0.702 & 0.707 & 0.712 & 0.701 & 0.715 \\
		& WEI   & 0.773 & 0.776 & 0.696 & 0.701 & 0.700 & 0.720 & 0.709 & 0.706 \\
		& BCD   & 0.753 & 0.755 & 0.705 & 0.716 & 0.720 & 0.720 & 0.717 & 0.726 \\
		& SBR   & 0.746 & 0.750 & 0.684 & 0.699 & 0.705 & 0.692 & 0.709 & 0.708 \\ \hline 
		4     & SRS   & 0.209 & 0.209 & 0.199 & 0.140 & 0.221 & 0.208 & 0.149 & 0.221 \\
		& WEI   & 0.201 & 0.208 & 0.191 & 0.110 & 0.203 & 0.206 & 0.178 & 0.204 \\
		& BCD   & 0.195 & 0.200 & 0.199 & 0.121 & 0.213 & 0.224 & 0.213 & 0.220 \\
		& SBR   & 0.198 & 0.203 & 0.198 & 0.114 & 0.229 & 0.214 & 0.230 & 0.225 \\
	\end{tabular}%
	\label{tab:400_3'}%
\end{table}%

\subsection{QTE, $H_0$, $\pi = 0.7$}
\begin{table}[H]
	\centering
	\caption{$H_0$,  $n = 200$, $\tau = 0.25$}
	\begin{tabular}{l|l|cccccccc}
		\multicolumn{1}{c}{M} & \multicolumn{1}{c}{A}   & \multicolumn{1}{l}{s/naive} & \multicolumn{1}{c}{s/adj} & \multicolumn{1}{c}{s/W} & \multicolumn{1}{c}{sfe/W} & \multicolumn{1}{c}{ipw/W} & \multicolumn{1}{c}{s/CA } & \multicolumn{1}{c}{sfe/CA} & \multicolumn{1}{c}{ipw/CA} \\ \hline 
		1     & SRS   & 0.042 & 0.042 & 0.046 & 0.042 & 0.036 & 0.036 & 0.039 & 0.039 \\
		& SBR   & 0.002 & 0.014 & 0.005 & 0.053 & 0.052 & 0.049 & 0.050 & 0.047 \\ \hline 
		2     & SRS   & 0.037 & 0.037 & 0.051 & 0.059 & 0.057 & 0.061 & 0.057 & 0.064 \\
		& SBR   & 0.032 & 0.036 & 0.042 & 0.046 & 0.048 & 0.055 & 0.055 & 0.055 \\ \hline 
		3     & SRS   & 0.046 & 0.046 & 0.046 & 0.047 & 0.039 & 0.045 & 0.049 & 0.043 \\
		& SBR   & 0.040 & 0.044 & 0.032 & 0.031 & 0.034 & 0.041 & 0.037 & 0.040 \\ \hline 
		4     & SRS   & 0.098 & 0.098 & 0.067 & 0.075 & 0.069 & 0.062 & 0.057 & 0.066 \\
		& SBR   & 0.057 & 0.066 & 0.043 & 0.016 & 0.062 & 0.061 & 0.066 & 0.064 \\
	\end{tabular}%
	\label{tab:200_1_70}%
\end{table}%

\begin{table}[H]
	\centering
	\caption{$H_0$,  $n = 200$, $\tau = 0.5$}
	\begin{tabular}{l|l|cccccccc}
		\multicolumn{1}{c}{M} & \multicolumn{1}{c}{A}   & \multicolumn{1}{l}{s/naive} & \multicolumn{1}{c}{s/adj} & \multicolumn{1}{c}{s/W} & \multicolumn{1}{c}{sfe/W} & \multicolumn{1}{c}{ipw/W} & \multicolumn{1}{c}{s/CA } & \multicolumn{1}{c}{sfe/CA} & \multicolumn{1}{c}{ipw/CA} \\ \hline 
		1     & SRS   & 0.048 & 0.048 & 0.052 & 0.045 & 0.047 & 0.034 & 0.040 & 0.044 \\
		& SBR   & 0.001 & 0.007 & 0.002 & 0.039 & 0.040 & 0.044 & 0.038 & 0.037 \\ \hline 
		2     & SRS   & 0.057 & 0.057 & 0.065 & 0.051 & 0.058 & 0.050 & 0.051 & 0.053 \\
		& SBR   & 0.022 & 0.034 & 0.021 & 0.053 & 0.053 & 0.050 & 0.059 & 0.053 \\ \hline 
		3     & SRS   & 0.016 & 0.016 & 0.052 & 0.046 & 0.054 & 0.051 & 0.048 & 0.053 \\
		& SBR   & 0.004 & 0.005 & 0.039 & 0.038 & 0.048 & 0.045 & 0.046 & 0.048 \\ \hline 
		4     & SRS   & 0.009 & 0.009 & 0.046 & 0.037 & 0.049 & 0.046 & 0.045 & 0.051 \\
		& SBR   & 0.004 & 0.005 & 0.036 & 0.016 & 0.052 & 0.049 & 0.043 & 0.046 \\
	\end{tabular}%
	\label{tab:200_2_2_70}%
\end{table}%

\begin{table}[H]
	\centering
	\caption{$H_0$,  $n = 200$, $\tau = 0.75$}
	\begin{tabular}{l|l|cccccccc}
		\multicolumn{1}{c}{M} & \multicolumn{1}{c}{A}   & \multicolumn{1}{l}{s/naive} & \multicolumn{1}{c}{s/adj} & \multicolumn{1}{c}{s/W} & \multicolumn{1}{c}{sfe/W} & \multicolumn{1}{c}{ipw/W} & \multicolumn{1}{c}{s/CA } & \multicolumn{1}{c}{sfe/CA} & \multicolumn{1}{c}{ipw/CA} \\ \hline 
		1     & SRS   & 0.052 & 0.052 & 0.057 & 0.045 & 0.049 & 0.044 & 0.040 & 0.043 \\
		& SBR   & 0.002 & 0.008 & 0.004 & 0.033 & 0.034 & 0.036 & 0.036 & 0.036 \\ \hline
		2     & SRS   & 0.042 & 0.042 & 0.061 & 0.055 & 0.067 & 0.047 & 0.055 & 0.068 \\
		& SBR   & 0.006 & 0.014 & 0.009 & 0.029 & 0.037 & 0.042 & 0.039 & 0.040 \\ \hline
		3     & SRS   & 0.056 & 0.056 & 0.043 & 0.038 & 0.054 & 0.048 & 0.046 & 0.054 \\
		& SBR   & 0.055 & 0.057 & 0.048 & 0.042 & 0.050 & 0.053 & 0.052 & 0.052 \\ \hline
		4     & SRS   & 0.019 & 0.019 & 0.038 & 0.032 & 0.046 & 0.045 & 0.042 & 0.042 \\
		& SBR   & 0.022 & 0.022 & 0.044 & 0.028 & 0.045 & 0.044 & 0.038 & 0.042 \\
	\end{tabular}%
	\label{tab:200_3_70}%
\end{table}

\begin{table}[H]
	\centering
	\caption{$H_0$,  $n = 400$, $\tau = 0.25$}
	\begin{tabular}{l|l|cccccccc}
		\multicolumn{1}{c}{M} & \multicolumn{1}{c}{A}   & \multicolumn{1}{l}{s/naive} & \multicolumn{1}{c}{s/adj} & \multicolumn{1}{c}{s/W} & \multicolumn{1}{c}{sfe/W} & \multicolumn{1}{c}{ipw/W} & \multicolumn{1}{c}{s/CA } & \multicolumn{1}{c}{sfe/CA} & \multicolumn{1}{c}{ipw/CA} \\ \hline 
		1     & SRS   & 0.044 & 0.044 & 0.054 & 0.039 & 0.041 & 0.038 & 0.040 & 0.042 \\
		& SBR   & 0.003 & 0.015 & 0.003 & 0.051 & 0.052 & 0.043 & 0.046 & 0.046 \\ \hline 
		2     & SRS   & 0.034 & 0.034 & 0.057 & 0.058 & 0.054 & 0.062 & 0.058 & 0.053 \\
		& SBR   & 0.031 & 0.034 & 0.040 & 0.044 & 0.049 & 0.051 & 0.051 & 0.051 \\ \hline 
		3     & SRS   & 0.037 & 0.037 & 0.029 & 0.034 & 0.036 & 0.033 & 0.033 & 0.039 \\
		& SBR   & 0.045 & 0.049 & 0.037 & 0.037 & 0.042 & 0.044 & 0.040 & 0.041 \\ \hline 
		4     & SRS   & 0.073 & 0.073 & 0.044 & 0.054 & 0.046 & 0.045 & 0.048 & 0.041 \\
		& SBR   & 0.065 & 0.076 & 0.036 & 0.014 & 0.060 & 0.058 & 0.062 & 0.060 \\
	\end{tabular}%
	\label{tab:400_1_70}%
\end{table}%

\begin{table}[H]
	\centering
	\caption{$H_0$,  $n = 400$, $\tau = 0.5$}
	\begin{tabular}{l|l|cccccccc}
		\multicolumn{1}{c}{M} & \multicolumn{1}{c}{A}   & \multicolumn{1}{l}{s/naive} & \multicolumn{1}{c}{s/adj} & \multicolumn{1}{c}{s/W} & \multicolumn{1}{c}{sfe/W} & \multicolumn{1}{c}{ipw/W} & \multicolumn{1}{c}{s/CA } & \multicolumn{1}{c}{sfe/CA} & \multicolumn{1}{c}{ipw/CA} \\ \hline 
		1     & SRS   & 0.044 & 0.044 & 0.051 & 0.037 & 0.039 & 0.048 & 0.036 & 0.037 \\
		& SBR   & 0.001 & 0.002 & 0.000 & 0.035 & 0.039 & 0.035 & 0.040 & 0.040 \\ \hline 
		2     & SRS   & 0.062 & 0.062 & 0.062 & 0.049 & 0.049 & 0.059 & 0.041 & 0.048 \\
		& SBR   & 0.015 & 0.029 & 0.015 & 0.034 & 0.040 & 0.040 & 0.042 & 0.037 \\ \hline 
		3     & SRS   & 0.007 & 0.007 & 0.039 & 0.036 & 0.042 & 0.042 & 0.042 & 0.047 \\
		& SBR   & 0.006 & 0.006 & 0.035 & 0.037 & 0.036 & 0.037 & 0.041 & 0.037 \\ \hline 
		4     & SRS   & 0.013 & 0.013 & 0.046 & 0.029 & 0.061 & 0.053 & 0.035 & 0.054 \\
		& SBR   & 0.009 & 0.010 & 0.033 & 0.025 & 0.056 & 0.054 & 0.052 & 0.050 \\
	\end{tabular}%
	\label{tab:400_2_2_70}%
\end{table}%

\begin{table}[H]
	\centering
	\caption{$H_0$,  $n = 400$, $\tau = 0.75$}
	\begin{tabular}{l|l|cccccccc}
		\multicolumn{1}{c}{M} & \multicolumn{1}{c}{A}   & \multicolumn{1}{l}{s/naive} & \multicolumn{1}{c}{s/adj} & \multicolumn{1}{c}{s/W} & \multicolumn{1}{c}{sfe/W} & \multicolumn{1}{c}{ipw/W} & \multicolumn{1}{c}{s/CA } & \multicolumn{1}{c}{sfe/CA} & \multicolumn{1}{c}{ipw/CA} \\ \hline 
		1     & SRS   & 0.049 & 0.049 & 0.053 & 0.046 & 0.050 & 0.043 & 0.048 & 0.050 \\
		& SBR   & 0.001 & 0.006 & 0.002 & 0.038 & 0.041 & 0.037 & 0.036 & 0.036 \\ \hline 
		2     & SRS   & 0.050 & 0.050 & 0.065 & 0.050 & 0.049 & 0.056 & 0.052 & 0.052 \\
		& SBR   & 0.010 & 0.019 & 0.015 & 0.041 & 0.048 & 0.042 & 0.041 & 0.041 \\ \hline 
		3     & SRS   & 0.044 & 0.044 & 0.031 & 0.042 & 0.039 & 0.032 & 0.038 & 0.039 \\
		& SBR   & 0.057 & 0.059 & 0.040 & 0.036 & 0.044 & 0.043 & 0.043 & 0.043 \\ \hline 
		4     & SRS   & 0.034 & 0.034 & 0.051 & 0.046 & 0.049 & 0.051 & 0.046 & 0.051 \\
		& SBR   & 0.028 & 0.028 & 0.044 & 0.040 & 0.045 & 0.045 & 0.045 & 0.046 \\
	\end{tabular}%
	\label{tab:400_3_70}%
\end{table}%

\subsection{QTE, $H_1$, $\pi = 0.7$}

\begin{table}[H]
	\centering
	\caption{$H_1$,  $n = 200$, $\tau = 0.25$}
	\begin{tabular}{l|l|cccccccc}
		\multicolumn{1}{c}{M} & \multicolumn{1}{c}{A}   & \multicolumn{1}{l}{s/naive} & \multicolumn{1}{c}{s/adj} & \multicolumn{1}{c}{s/W} & \multicolumn{1}{c}{sfe/W} & \multicolumn{1}{c}{ipw/W} & \multicolumn{1}{c}{s/CA } & \multicolumn{1}{c}{sfe/CA} & \multicolumn{1}{c}{ipw/CA} \\ \hline
		1     & SRS   & 0.152 & 0.152 & 0.176 & 0.359 & 0.313 & 0.187 & 0.343 & 0.339 \\
		& SBR   & 0.065 & 0.186 & 0.100 & 0.346 & 0.336 & 0.357 & 0.341 & 0.338 \\ \hline
		2     & SRS   & 0.314 & 0.314 & 0.334 & 0.361 & 0.325 & 0.347 & 0.367 & 0.365 \\
		& SBR   & 0.309 & 0.334 & 0.336 & 0.355 & 0.368 & 0.383 & 0.375 & 0.376 \\ \hline
		3     & SRS   & 0.704 & 0.704 & 0.671 & 0.665 & 0.626 & 0.685 & 0.663 & 0.691 \\
		& SBR   & 0.697 & 0.716 & 0.663 & 0.671 & 0.669 & 0.702 & 0.686 & 0.688 \\ \hline
		4     & SRS   & 0.136 & 0.136 & 0.097 & 0.094 & 0.129 & 0.106 & 0.093 & 0.122 \\
		& SBR   & 0.116 & 0.127 & 0.081 & 0.050 & 0.103 & 0.107 & 0.105 & 0.106 \\
	\end{tabular}%
	\label{tab:200_70_1'}%
\end{table}%

\begin{table}[H]
	\centering
	\caption{$H_1$,  $n = 200$, $\tau = 0.5$}
	\begin{tabular}{l|l|cccccccc}
		\multicolumn{1}{c}{M} & \multicolumn{1}{c}{A}   & \multicolumn{1}{l}{s/naive} & \multicolumn{1}{c}{s/adj} & \multicolumn{1}{c}{s/W} & \multicolumn{1}{c}{sfe/W} & \multicolumn{1}{c}{ipw/W} & \multicolumn{1}{c}{s/CA } & \multicolumn{1}{c}{sfe/CA} & \multicolumn{1}{c}{ipw/CA} \\ \hline 
		1     & SRS   & 0.170 & 0.170 & 0.172 & 0.411 & 0.425 & 0.167 & 0.407 & 0.406 \\
		& SBR   & 0.043 & 0.212 & 0.060 & 0.445 & 0.455 & 0.457 & 0.435 & 0.434 \\ \hline 
		2     & SRS   & 0.287 & 0.287 & 0.280 & 0.371 & 0.364 & 0.275 & 0.374 & 0.360 \\
		& SBR   & 0.258 & 0.327 & 0.236 & 0.367 & 0.387 & 0.372 & 0.383 & 0.381 \\ \hline 
		3     & SRS   & 0.771 & 0.771 & 0.891 & 0.882 & 0.903 & 0.895 & 0.883 & 0.894 \\
		& SBR   & 0.760 & 0.769 & 0.892 & 0.896 & 0.911 & 0.901 & 0.904 & 0.900 \\ \hline 
		4     & SRS   & 0.145 & 0.145 & 0.265 & 0.218 & 0.305 & 0.264 & 0.241 & 0.301 \\
		& SBR   & 0.128 & 0.136 & 0.235 & 0.177 & 0.288 & 0.290 & 0.284 & 0.287 \\
	\end{tabular}%
	\label{tab:200_70_2'}%
\end{table}%

\begin{table}[H]
	\centering
	\caption{$H_1$,  $n = 200$, $\tau = 0.75$}
	\begin{tabular}{l|l|cccccccc}
		\multicolumn{1}{c}{M} & \multicolumn{1}{c}{A}   & \multicolumn{1}{l}{s/naive} & \multicolumn{1}{c}{s/adj} & \multicolumn{1}{c}{s/W} & \multicolumn{1}{c}{sfe/W} & \multicolumn{1}{c}{ipw/W} & \multicolumn{1}{c}{s/CA } & \multicolumn{1}{c}{sfe/CA} & \multicolumn{1}{c}{ipw/CA} \\ \hline 
		1     & SRS   & 0.181 & 0.181 & 0.183 & 0.342 & 0.340 & 0.188 & 0.340 & 0.338 \\
		& SBR   & 0.072 & 0.175 & 0.076 & 0.353 & 0.364 & 0.342 & 0.357 & 0.357 \\ \hline 
		2     & SRS   & 0.279 & 0.279 & 0.321 & 0.404 & 0.427 & 0.341 & 0.400 & 0.427 \\
		& SBR   & 0.243 & 0.341 & 0.293 & 0.430 & 0.451 & 0.430 & 0.454 & 0.435 \\ \hline 
		3     & SRS   & 0.662 & 0.662 & 0.586 & 0.559 & 0.599 & 0.605 & 0.569 & 0.592 \\
		& SBR   & 0.631 & 0.639 & 0.572 & 0.564 & 0.597 & 0.594 & 0.601 & 0.598 \\ \hline 
		4     & SRS   & 0.150 & 0.150 & 0.201 & 0.164 & 0.199 & 0.208 & 0.189 & 0.211 \\
		& SBR   & 0.143 & 0.145 & 0.193 & 0.166 & 0.206 & 0.206 & 0.208 & 0.205 \\
	\end{tabular}%
	\label{tab:200_70_3'}%
\end{table}%

\begin{table}[H]
	\centering
	\caption{$H_1$,  $n = 400$, $\tau = 0.25$}
	\begin{tabular}{l|l|cccccccc}
		\multicolumn{1}{c}{M} & \multicolumn{1}{c}{A}   & \multicolumn{1}{l}{s/naive} & \multicolumn{1}{c}{s/adj} & \multicolumn{1}{c}{s/W} & \multicolumn{1}{c}{sfe/W} & \multicolumn{1}{c}{ipw/W} & \multicolumn{1}{c}{s/CA } & \multicolumn{1}{c}{sfe/CA} & \multicolumn{1}{c}{ipw/CA} \\ \hline 
		1     & SRS   & 0.181 & 0.181 & 0.192 & 0.351 & 0.354 & 0.202 & 0.346 & 0.351 \\
		& SBR   & 0.083 & 0.233 & 0.113 & 0.392 & 0.392 & 0.407 & 0.394 & 0.392 \\ \hline 
		2     & SRS   & 0.362 & 0.362 & 0.406 & 0.403 & 0.415 & 0.408 & 0.415 & 0.424 \\
		& SBR   & 0.350 & 0.381 & 0.388 & 0.412 & 0.426 & 0.426 & 0.422 & 0.419 \\ \hline 
		3     & SRS   & 0.781 & 0.781 & 0.743 & 0.751 & 0.758 & 0.746 & 0.750 & 0.759 \\
		& SBR   & 0.791 & 0.797 & 0.752 & 0.765 & 0.777 & 0.781 & 0.778 & 0.779 \\ \hline 
		4     & SRS   & 0.160 & 0.160 & 0.082 & 0.072 & 0.112 & 0.097 & 0.095 & 0.116 \\
		& SBR   & 0.133 & 0.154 & 0.091 & 0.044 & 0.119 & 0.119 & 0.121 & 0.120 \\
	\end{tabular}%
	\label{tab:400_70_1'}%
\end{table}%

\begin{table}[H]
	\centering
	\caption{$H_1$,  $n = 400$, $\tau = 0.5$}
	\begin{tabular}{l|l|cccccccc}
		\multicolumn{1}{c}{M} & \multicolumn{1}{c}{A}   & \multicolumn{1}{l}{s/naive} & \multicolumn{1}{c}{s/adj} & \multicolumn{1}{c}{s/W} & \multicolumn{1}{c}{sfe/W} & \multicolumn{1}{c}{ipw/W} & \multicolumn{1}{c}{s/CA } & \multicolumn{1}{c}{sfe/CA} & \multicolumn{1}{c}{ipw/CA} \\ \hline 
		1     & SRS   & 0.184 & 0.184 & 0.187 & 0.468 & 0.479 & 0.194 & 0.460 & 0.466 \\
		& SBR   & 0.042 & 0.220 & 0.059 & 0.486 & 0.498 & 0.505 & 0.480 & 0.482 \\ \hline 
		2     & SRS   & 0.322 & 0.322 & 0.298 & 0.405 & 0.404 & 0.303 & 0.412 & 0.400 \\
		& SBR   & 0.262 & 0.342 & 0.237 & 0.376 & 0.399 & 0.385 & 0.389 & 0.389 \\ \hline 
		3     & SRS   & 0.867 & 0.867 & 0.939 & 0.930 & 0.933 & 0.941 & 0.932 & 0.936 \\
		& SBR   & 0.883 & 0.888 & 0.948 & 0.952 & 0.952 & 0.955 & 0.952 & 0.952 \\ \hline 
		4     & SRS   & 0.209 & 0.209 & 0.327 & 0.275 & 0.354 & 0.341 & 0.308 & 0.351 \\
		& SBR   & 0.194 & 0.217 & 0.310 & 0.256 & 0.365 & 0.364 & 0.359 & 0.356 \\
	\end{tabular}%
	\label{tab:400_70_2'}%
\end{table}%

\begin{table}[H]
	\centering
	\caption{$H_1$,  $n = 400$, $\tau = 0.75$}
	\begin{tabular}{l|l|cccccccc}
		\multicolumn{1}{c}{M} & \multicolumn{1}{c}{A}   & \multicolumn{1}{l}{s/naive} & \multicolumn{1}{c}{s/adj} & \multicolumn{1}{c}{s/W} & \multicolumn{1}{c}{sfe/W} & \multicolumn{1}{c}{ipw/W} & \multicolumn{1}{c}{s/CA } & \multicolumn{1}{c}{sfe/CA} & \multicolumn{1}{c}{ipw/CA} \\ \hline 
		1     & SRS   & 0.217 & 0.217 & 0.224 & 0.411 & 0.409 & 0.219 & 0.411 & 0.408 \\
		& SBR   & 0.103 & 0.246 & 0.107 & 0.419 & 0.418 & 0.400 & 0.421 & 0.420 \\ \hline 
		2     & SRS   & 0.335 & 0.335 & 0.378 & 0.485 & 0.505 & 0.384 & 0.468 & 0.501 \\
		& SBR   & 0.278 & 0.384 & 0.329 & 0.479 & 0.500 & 0.487 & 0.504 & 0.493 \\ \hline 
		3     & SRS   & 0.708 & 0.708 & 0.661 & 0.628 & 0.665 & 0.665 & 0.629 & 0.672 \\
		& SBR   & 0.705 & 0.706 & 0.652 & 0.631 & 0.665 & 0.673 & 0.672 & 0.673 \\ \hline 
		4     & SRS   & 0.205 & 0.205 & 0.226 & 0.221 & 0.245 & 0.234 & 0.234 & 0.240 \\
		& SBR   & 0.205 & 0.205 & 0.249 & 0.209 & 0.248 & 0.258 & 0.256 & 0.258 \\
	\end{tabular}%
	\label{tab:400_70_3'}%
\end{table}%

\subsection{ATE, $\pi = 0.5$}
\begin{table}[H]
	\centering
	\caption{$H_0$, $n=200$, $\pi = 0.5$}
	\begin{tabular}{c|c|ccccccccc}
		\multicolumn{1}{c}{M} & \multicolumn{1}{c}{A}      & \multicolumn{1}{c}{s/naive} & \multicolumn{1}{c}{s/adj} & \multicolumn{1}{c}{sfe/adj} & \multicolumn{1}{c}{s/W} & \multicolumn{1}{c}{sfe/W} & \multicolumn{1}{c}{ipw/W} & \multicolumn{1}{c}{s/CA } & \multicolumn{1}{c}{sfe/CA} & \multicolumn{1}{c}{ipw/CA} \\ \hline
		1     & SRS   & 0.059 & 0.057 & 0.051 & 0.061 & 0.055 & 0.057 & 0.053 & 0.048 & 0.049 \\
		& WEI   & 0.006 & 0.048 & 0.062 & 0.004 & 0.068 & 0.068 & 0.051 & 0.065 & 0.065 \\
		& BCD   & 0.001 & 0.089 & 0.056 & 0.000 & 0.058 & 0.058 & 0.071 & 0.056 & 0.056 \\
		& SBR   & 0.000 & 0.067 & 0.061 & 0.000 & 0.064 & 0.064 & 0.059 & 0.061 & 0.061 \\ \hline
		2     & SRS   & 0.062 & 0.061 & 0.061 & 0.061 & 0.059 & 0.062 & 0.060 & 0.057 & 0.059 \\
		& WEI   & 0.027 & 0.060 & 0.050 & 0.029 & 0.046 & 0.054 & 0.057 & 0.052 & 0.053 \\
		& BCD   & 0.014 & 0.058 & 0.053 & 0.016 & 0.053 & 0.052 & 0.052 & 0.052 & 0.049 \\
		& SBR   & 0.006 & 0.045 & 0.044 & 0.006 & 0.045 & 0.045 & 0.045 & 0.045 & 0.045 \\ \hline
		3     & SRS   & 0.057 & 0.056 & 0.068 & 0.055 & 0.061 & 0.061 & 0.056 & 0.064 & 0.065 \\
		& WEI   & 0.049 & 0.050 & 0.057 & 0.052 & 0.057 & 0.056 & 0.048 & 0.053 & 0.053 \\
		& BCD   & 0.057 & 0.058 & 0.057 & 0.057 & 0.063 & 0.063 & 0.057 & 0.056 & 0.057 \\
		& SBR   & 0.055 & 0.058 & 0.056 & 0.057 & 0.060 & 0.061 & 0.055 & 0.055 & 0.055 \\ \hline
		4     & SRS   & 0.066 & 0.067 & 0.077 & 0.068 & 0.069 & 0.063 & 0.063 & 0.070 & 0.063 \\
		& WEI   & 0.065 & 0.067 & 0.070 & 0.066 & 0.067 & 0.068 & 0.069 & 0.067 & 0.070 \\
		& BCD   & 0.068 & 0.068 & 0.067 & 0.065 & 0.061 & 0.068 & 0.065 & 0.065 & 0.065 \\
		& SBR   & 0.055 & 0.055 & 0.055 & 0.057 & 0.057 & 0.058 & 0.057 & 0.057 & 0.057 \\
	\end{tabular}%
	\label{tab:ate_200_50}%
\end{table}%

\begin{table}[H]
	\centering
	\caption{$H_1$, $n=200$, $\pi = 0.5$}
	\begin{tabular}{c|c|ccccccccc}
		\multicolumn{1}{c}{M} & \multicolumn{1}{c}{A}      & \multicolumn{1}{c}{s/naive} & \multicolumn{1}{c}{s/adj} & \multicolumn{1}{c}{sfe/adj} & \multicolumn{1}{c}{s/W} & \multicolumn{1}{c}{sfe/W} & \multicolumn{1}{c}{ipw/W} & \multicolumn{1}{c}{s/CA } & \multicolumn{1}{c}{sfe/CA} & \multicolumn{1}{c}{ipw/CA} \\ \hline
		1     & SRS   & 0.062 & 0.058 & 0.047 & 0.064 & 0.054 & 0.054 & 0.062 & 0.048 & 0.047 \\
		& WEI   & 0.006 & 0.056 & 0.049 & 0.007 & 0.052 & 0.052 & 0.062 & 0.050 & 0.051 \\
		& BCD   & 0.000 & 0.063 & 0.054 & 0.000 & 0.055 & 0.055 & 0.044 & 0.054 & 0.054 \\
		& SBR   & 0.000 & 0.053 & 0.054 & 0.000 & 0.056 & 0.057 & 0.054 & 0.055 & 0.055 \\
		2     & SRS   & 0.049 & 0.048 & 0.039 & 0.048 & 0.043 & 0.046 & 0.049 & 0.042 & 0.045 \\
		& WEI   & 0.018 & 0.050 & 0.049 & 0.018 & 0.047 & 0.048 & 0.051 & 0.044 & 0.043 \\
		& BCD   & 0.012 & 0.050 & 0.043 & 0.012 & 0.043 & 0.045 & 0.046 & 0.042 & 0.042 \\
		& SBR   & 0.008 & 0.049 & 0.049 & 0.009 & 0.047 & 0.047 & 0.050 & 0.049 & 0.049 \\
		3     & SRS   & 0.048 & 0.050 & 0.051 & 0.049 & 0.050 & 0.049 & 0.054 & 0.050 & 0.053 \\
		& WEI   & 0.049 & 0.049 & 0.050 & 0.047 & 0.047 & 0.047 & 0.051 & 0.051 & 0.051 \\
		& BCD   & 0.045 & 0.049 & 0.048 & 0.048 & 0.050 & 0.046 & 0.049 & 0.049 & 0.049 \\
		& SBR   & 0.051 & 0.053 & 0.052 & 0.050 & 0.055 & 0.054 & 0.057 & 0.057 & 0.057 \\
		4     & SRS   & 0.057 & 0.056 & 0.058 & 0.053 & 0.056 & 0.056 & 0.055 & 0.055 & 0.055 \\
		& WEI   & 0.059 & 0.059 & 0.061 & 0.056 & 0.057 & 0.061 & 0.058 & 0.063 & 0.060 \\
		& BCD   & 0.050 & 0.050 & 0.051 & 0.054 & 0.053 & 0.053 & 0.052 & 0.052 & 0.052 \\
		& SBR   & 0.058 & 0.058 & 0.059 & 0.058 & 0.055 & 0.059 & 0.058 & 0.057 & 0.057 \\
	\end{tabular}%
	\label{tab:ate_200_50'}%
\end{table}%

\begin{table}[H]
	\centering
	\caption{$H_0$, $n=400$, $\pi = 0.5$}
	\begin{tabular}{c|c|ccccccccc}
		\multicolumn{1}{c}{M} & \multicolumn{1}{c}{A}      & \multicolumn{1}{c}{s/naive} & \multicolumn{1}{c}{s/adj} & \multicolumn{1}{c}{sfe/adj} & \multicolumn{1}{c}{s/W} & \multicolumn{1}{c}{sfe/W} & \multicolumn{1}{c}{ipw/W} & \multicolumn{1}{c}{s/CA } & \multicolumn{1}{c}{sfe/CA} & \multicolumn{1}{c}{ipw/CA} \\ \hline
		1     & SRS   & 0.063 & 0.061 & 0.042 & 0.063 & 0.043 & 0.045 & 0.055 & 0.042 & 0.042 \\
		& WEI   & 0.005 & 0.050 & 0.050 & 0.006 & 0.052 & 0.052 & 0.052 & 0.050 & 0.050 \\
		& BCD   & 0.000 & 0.067 & 0.052 & 0.000 & 0.059 & 0.059 & 0.051 & 0.059 & 0.059 \\
		& SBR   & 0.000 & 0.059 & 0.058 & 0.000 & 0.057 & 0.057 & 0.063 & 0.060 & 0.060 \\  \hline
		2     & SRS   & 0.061 & 0.057 & 0.055 & 0.058 & 0.055 & 0.054 & 0.061 & 0.054 & 0.051 \\
		& WEI   & 0.018 & 0.051 & 0.064 & 0.019 & 0.063 & 0.064 & 0.052 & 0.064 & 0.064 \\
		& BCD   & 0.009 & 0.045 & 0.046 & 0.006 & 0.046 & 0.047 & 0.043 & 0.049 & 0.049 \\
		& SBR   & 0.014 & 0.062 & 0.060 & 0.016 & 0.065 & 0.065 & 0.063 & 0.063 & 0.063 \\  \hline
		3     & SRS   & 0.050 & 0.049 & 0.050 & 0.050 & 0.049 & 0.051 & 0.052 & 0.048 & 0.048 \\
		& WEI   & 0.046 & 0.047 & 0.049 & 0.047 & 0.046 & 0.047 & 0.048 & 0.047 & 0.046 \\
		& BCD   & 0.049 & 0.049 & 0.049 & 0.049 & 0.050 & 0.050 & 0.050 & 0.050 & 0.050 \\
		& SBR   & 0.055 & 0.056 & 0.056 & 0.059 & 0.058 & 0.059 & 0.055 & 0.056 & 0.056 \\  \hline
		4     & SRS   & 0.057 & 0.057 & 0.055 & 0.056 & 0.056 & 0.059 & 0.054 & 0.051 & 0.056 \\
		& WEI   & 0.051 & 0.051 & 0.053 & 0.052 & 0.054 & 0.054 & 0.051 & 0.051 & 0.052 \\
		& BCD   & 0.056 & 0.056 & 0.056 & 0.054 & 0.056 & 0.056 & 0.054 & 0.053 & 0.053 \\
		& SBR   & 0.056 & 0.058 & 0.058 & 0.055 & 0.056 & 0.057 & 0.057 & 0.057 & 0.057 \\
	\end{tabular}%
	\label{tab:ate_400_50}%
\end{table}%

\begin{table}[H]
	\centering
	\caption{$H_1$, $n=400$, $\pi = 0.5$}
	\begin{tabular}{c|c|ccccccccc}
		\multicolumn{1}{c}{M} & \multicolumn{1}{c}{A}      & \multicolumn{1}{c}{s/naive} & \multicolumn{1}{c}{s/adj} & \multicolumn{1}{c}{sfe/adj} & \multicolumn{1}{c}{s/W} & \multicolumn{1}{c}{sfe/W} & \multicolumn{1}{c}{ipw/W} & \multicolumn{1}{c}{s/CA } & \multicolumn{1}{c}{sfe/CA} & \multicolumn{1}{c}{ipw/CA} \\ \hline
		1     & SRS   & 0.422 & 0.422 & 0.964 & 0.416 & 0.968 & 0.966 & 0.415 & 0.964 & 0.962 \\
		& WEI   & 0.387 & 0.732 & 0.969 & 0.393 & 0.969 & 0.969 & 0.732 & 0.967 & 0.968 \\
		& BCD   & 0.341 & 0.962 & 0.971 & 0.350 & 0.969 & 0.968 & 0.955 & 0.968 & 0.968 \\
		& SBR   & 0.357 & 0.967 & 0.967 & 0.368 & 0.966 & 0.966 & 0.967 & 0.965 & 0.965 \\ \hline
		2     & SRS   & 0.572 & 0.568 & 0.806 & 0.579 & 0.795 & 0.805 & 0.568 & 0.796 & 0.805 \\
		& WEI   & 0.577 & 0.723 & 0.813 & 0.575 & 0.814 & 0.810 & 0.728 & 0.811 & 0.808 \\
		& BCD   & 0.606 & 0.809 & 0.813 & 0.618 & 0.817 & 0.821 & 0.802 & 0.810 & 0.810 \\
		& SBR   & 0.601 & 0.828 & 0.829 & 0.603 & 0.832 & 0.836 & 0.830 & 0.834 & 0.834 \\ \hline
		3     & SRS   & 0.804 & 0.801 & 0.803 & 0.798 & 0.798 & 0.799 & 0.804 & 0.803 & 0.803 \\
		& WEI   & 0.804 & 0.804 & 0.806 & 0.802 & 0.800 & 0.803 & 0.803 & 0.803 & 0.803 \\
		& BCD   & 0.816 & 0.818 & 0.820 & 0.822 & 0.825 & 0.825 & 0.819 & 0.819 & 0.819 \\
		& SBR   & 0.821 & 0.823 & 0.823 & 0.816 & 0.820 & 0.819 & 0.822 & 0.822 & 0.822 \\ \hline
		4     & SRS   & 0.228 & 0.230 & 0.229 & 0.225 & 0.227 & 0.228 & 0.234 & 0.226 & 0.226 \\
		& WEI   & 0.229 & 0.230 & 0.230 & 0.225 & 0.223 & 0.228 & 0.233 & 0.235 & 0.234 \\
		& BCD   & 0.221 & 0.224 & 0.225 & 0.227 & 0.225 & 0.231 & 0.231 & 0.231 & 0.233 \\
		& SBR   & 0.224 & 0.226 & 0.225 & 0.224 & 0.225 & 0.230 & 0.235 & 0.235 & 0.235 \\
	\end{tabular}%
	\label{tab:ate_400_50'}%
\end{table}%

\subsection{ATE, $\pi = 0.7$}
\begin{table}[H]
	\centering
	\caption{$H_0$, $n=200$, $\pi = 0.7$}
	\begin{tabular}{c|c|ccccccccc}
		\multicolumn{1}{c}{M} & \multicolumn{1}{c}{A}      & \multicolumn{1}{c}{s/naive} & \multicolumn{1}{c}{s/adj} & \multicolumn{1}{c}{sfe/adj} & \multicolumn{1}{c}{s/W} & \multicolumn{1}{c}{sfe/W} & \multicolumn{1}{c}{ipw/W} & \multicolumn{1}{c}{s/CA } & \multicolumn{1}{c}{sfe/CA} & \multicolumn{1}{c}{ipw/CA} \\ \hline
		1     & SRS   & 0.050 & 0.045 & 0.056 & 0.051 & 0.056 & 0.062 & 0.046 & 0.054 & 0.055 \\
		& SBR   & 0.000 & 0.004 & 0.051 & 0.000 & 0.061 & 0.064 & 0.064 & 0.060 & 0.059 \\ \hline
		2     & SRS   & 0.048 & 0.055 & 0.074 & 0.055 & 0.049 & 0.056 & 0.045 & 0.049 & 0.057 \\
		& SBR   & 0.013 & 0.030 & 0.041 & 0.013 & 0.024 & 0.051 & 0.056 & 0.049 & 0.051 \\ \hline
		3     & SRS   & 0.059 & 0.060 & 0.066 & 0.060 & 0.060 & 0.064 & 0.058 & 0.055 & 0.064 \\
		& SBR   & 0.051 & 0.053 & 0.052 & 0.053 & 0.045 & 0.057 & 0.056 & 0.056 & 0.055 \\ \hline
		4     & SRS   & 0.057 & 0.057 & 0.056 & 0.058 & 0.056 & 0.068 & 0.054 & 0.057 & 0.058 \\
		& SBR   & 0.047 & 0.050 & 0.044 & 0.051 & 0.037 & 0.054 & 0.054 & 0.055 & 0.055 \\
	\end{tabular}%
	\label{tab:ate_200_70}%
\end{table}%

\begin{table}[H]
	\centering
	\caption{$H_1$, $n=200$, $\pi = 0.7$}
	\begin{tabular}{c|c|ccccccccc}
		\multicolumn{1}{c}{M} & \multicolumn{1}{c}{A}      & \multicolumn{1}{c}{s/naive} & \multicolumn{1}{c}{s/adj} & \multicolumn{1}{c}{sfe/adj} & \multicolumn{1}{c}{s/W} & \multicolumn{1}{c}{sfe/W} & \multicolumn{1}{c}{ipw/W} & \multicolumn{1}{c}{s/CA } & \multicolumn{1}{c}{sfe/CA} & \multicolumn{1}{c}{ipw/CA} \\ \hline
		1     & SRS   & 0.329 & 0.328 & 0.934 & 0.336 & 0.943 & 0.946 & 0.326 & 0.941 & 0.941 \\
		& SBR   & 0.220 & 0.631 & 0.938 & 0.233 & 0.946 & 0.949 & 0.932 & 0.943 & 0.943 \\ \hline
		2     & SRS   & 0.581 & 0.578 & 0.687 & 0.582 & 0.619 & 0.756 & 0.571 & 0.601 & 0.758 \\
		& SBR   & 0.598 & 0.699 & 0.747 & 0.599 & 0.686 & 0.768 & 0.752 & 0.766 & 0.764 \\ \hline
		3     & SRS   & 0.773 & 0.779 & 0.758 & 0.769 & 0.741 & 0.784 & 0.773 & 0.729 & 0.782 \\
		& SBR   & 0.771 & 0.773 & 0.772 & 0.777 & 0.763 & 0.782 & 0.782 & 0.780 & 0.781 \\ \hline
		4     & SRS   & 0.149 & 0.154 & 0.121 & 0.153 & 0.140 & 0.168 & 0.154 & 0.141 & 0.165 \\
		& SBR   & 0.144 & 0.151 & 0.129 & 0.153 & 0.118 & 0.175 & 0.172 & 0.170 & 0.169 \\
	\end{tabular}%
	\label{tab:ate_200_70'}%
\end{table}

\begin{table}[H]
	\centering
	\caption{$H_0$, $n=400$, $\pi = 0.7$}
	\begin{tabular}{c|c|ccccccccc}
		\multicolumn{1}{c}{M} & \multicolumn{1}{c}{A}      & \multicolumn{1}{c}{s/naive} & \multicolumn{1}{c}{s/adj} & \multicolumn{1}{c}{sfe/adj} & \multicolumn{1}{c}{s/W} & \multicolumn{1}{c}{sfe/W} & \multicolumn{1}{c}{ipw/W} & \multicolumn{1}{c}{s/CA } & \multicolumn{1}{c}{sfe/CA} & \multicolumn{1}{c}{ipw/CA} \\ \hline
		1     & SRS   & 0.062 & 0.059 & 0.065 & 0.061 & 0.056 & 0.056 & 0.062 & 0.060 & 0.061 \\
		& SBR   & 0.000 & 0.000 & 0.034 & 0.000 & 0.039 & 0.040 & 0.045 & 0.045 & 0.044 \\ \hline
		2     & SRS   & 0.052 & 0.050 & 0.087 & 0.054 & 0.055 & 0.052 & 0.050 & 0.057 & 0.051 \\
		& SBR   & 0.013 & 0.029 & 0.040 & 0.012 & 0.027 & 0.044 & 0.042 & 0.044 & 0.042 \\ \hline
		3     & SRS   & 0.042 & 0.041 & 0.049 & 0.045 & 0.043 & 0.052 & 0.040 & 0.040 & 0.046 \\
		& SBR   & 0.028 & 0.028 & 0.031 & 0.029 & 0.025 & 0.032 & 0.035 & 0.036 & 0.034 \\ \hline
		4     & SRS   & 0.053 & 0.055 & 0.043 & 0.058 & 0.053 & 0.058 & 0.055 & 0.050 & 0.056 \\
		& SBR   & 0.050 & 0.051 & 0.043 & 0.051 & 0.035 & 0.054 & 0.055 & 0.055 & 0.053 \\
	\end{tabular}%
	\label{tab:ate_400_70}%
\end{table}%

\begin{table}[H]
	\centering
	\caption{$H_1$, $n=400$, $\pi = 0.7$}
	\begin{tabular}{c|c|ccccccccc}
		\multicolumn{1}{c}{M} & \multicolumn{1}{c}{A}      & \multicolumn{1}{c}{s/naive} & \multicolumn{1}{c}{s/adj} & \multicolumn{1}{c}{sfe/adj} & \multicolumn{1}{c}{s/W} & \multicolumn{1}{c}{sfe/W} & \multicolumn{1}{c}{ipw/W} & \multicolumn{1}{c}{s/CA } & \multicolumn{1}{c}{sfe/CA} & \multicolumn{1}{c}{ipw/CA} \\ \hline
		1     & SRS   & 0.384 & 0.380 & 0.972 & 0.381 & 0.971 & 0.976 & 0.382 & 0.970 & 0.973 \\
		& SBR   & 0.250 & 0.736 & 0.970 & 0.254 & 0.972 & 0.972 & 0.967 & 0.973 & 0.974 \\ \hline
		2     & SRS   & 0.616 & 0.628 & 0.753 & 0.622 & 0.693 & 0.796 & 0.617 & 0.690 & 0.795 \\
		& SBR   & 0.659 & 0.759 & 0.806 & 0.665 & 0.740 & 0.827 & 0.817 & 0.827 & 0.827 \\ \hline
		3     & SRS   & 0.818 & 0.817 & 0.805 & 0.812 & 0.793 & 0.821 & 0.816 & 0.793 & 0.829 \\
		& SBR   & 0.833 & 0.838 & 0.836 & 0.831 & 0.824 & 0.840 & 0.838 & 0.839 & 0.837 \\ \hline
		4     & SRS   & 0.177 & 0.172 & 0.145 & 0.180 & 0.162 & 0.195 & 0.181 & 0.171 & 0.186 \\
		& SBR   & 0.181 & 0.190 & 0.164 & 0.184 & 0.142 & 0.202 & 0.202 & 0.202 & 0.200 \\
	\end{tabular}%
	\label{tab:ate_400_70'}%
\end{table}%
	
	\bibliographystyle{chicago}
	\bibliography{BCAR}

\begin{thebibliography}{}

\bibitem[\protect\citeauthoryear{Angrist and Pischke}{Angrist and
  Pischke}{2008}]{AP08}
Angrist, J.~D. and J.-S. Pischke (2008).
\newblock {\em Mostly harmless econometrics: An empiricist's companion}.
\newblock Princeton University Press.

\bibitem[\protect\citeauthoryear{Banerjee, Duflo, Glennerster, and
  Kinnan}{Banerjee et~al.}{2015}]{BDGK15}
Banerjee, A., E.~Duflo, R.~Glennerster, and C.~Kinnan (2015).
\newblock The miracle of microfinance? evidence from a randomized evaluation.
\newblock {\em American Economic Journal: Applied Economics\/}~{\em 7\/}(1),
  22--53.

\bibitem[\protect\citeauthoryear{Bruhn and McKenzie}{Bruhn and
  McKenzie}{2009}]{B09}
Bruhn, M. and D.~McKenzie (2009).
\newblock In pursuit of balance: Randomization in practice in development field
  experiments.
\newblock {\em American Economic Journal: Applied Economics\/}~{\em 1\/}(4),
  200--232.

\bibitem[\protect\citeauthoryear{Bugni, Canay, and Shaikh}{Bugni
  et~al.}{2018}]{BCS17}
Bugni, F.~A., I.~A. Canay, and A.~M. Shaikh (2018).
\newblock Inference under covariate-adaptive randomization.
\newblock {\em Journal of the American Statistical Association\/}~{\em
  113\/}(524), 1741--1768.

\bibitem[\protect\citeauthoryear{Bugni, Canay, and Shaikh}{Bugni
  et~al.}{2019}]{BCS18}
Bugni, F.~A., I.~A. Canay, and A.~M. Shaikh (2019).
\newblock Inference under covariate-adaptive randomization with multiple
  treatments.
\newblock {\em Quantitative Economics\/}~{\em 10\/}(4), 1747--1785.

\bibitem[\protect\citeauthoryear{Byrne, Nauze, and Martin}{Byrne
  et~al.}{2018}]{BNM18}
Byrne, D.~P., A.~L. Nauze, and L.~A. Martin (2018).
\newblock Tell me something {I} don't already know: Informedness and the impact
  of information programs.
\newblock {\em Review of Economics and Statistics\/}~{\em 100\/}(3), 510--527.

\bibitem[\protect\citeauthoryear{Chernozhukov, Chetverikov, and
  Kato}{Chernozhukov et~al.}{2014}]{CCK14}
Chernozhukov, V., D.~Chetverikov, and K.~Kato (2014).
\newblock Gaussian approximation of suprema of empirical processes.
\newblock {\em The Annals of Statistics\/}~{\em 42\/}(4), 1564--1597.

\bibitem[\protect\citeauthoryear{Chong, Cohen, Field, Nakasone, and
  Torero}{Chong et~al.}{2016}]{CCFNT16}
Chong, A., I.~Cohen, E.~Field, E.~Nakasone, and M.~Torero (2016).
\newblock Iron deficiency and schooling attainment in {P}eru.
\newblock {\em American Economic Journal: Applied Economics\/}~{\em 8\/}(4),
  222--55.

\bibitem[\protect\citeauthoryear{Cr{\'e}pon, Devoto, Duflo, and
  Parient{\'e}}{Cr{\'e}pon et~al.}{2015}]{CDDP15}
Cr{\'e}pon, B., F.~Devoto, E.~Duflo, and W.~Parient{\'e} (2015).
\newblock Estimating the impact of microcredit on those who take it up:
  Evidence from a randomized experiment in {M}orocco.
\newblock {\em American Economic Journal: Applied Economics\/}~{\em 7\/}(1),
  123--50.

\bibitem[\protect\citeauthoryear{Doksum}{Doksum}{1974}]{D74}
Doksum, K. (1974).
\newblock Empirical probability plots and statistical inference for nonlinear
  models in the two-sample case.
\newblock {\em The Annals of Statistics\/}~{\em 2\/}(2), 267--277.

\bibitem[\protect\citeauthoryear{Duflo, Glennerster, and Kremer}{Duflo
  et~al.}{2007}]{DGK07}
Duflo, E., R.~Glennerster, and M.~Kremer (2007).
\newblock Using randomization in development economics research: A toolkit.
\newblock {\em Handbook of Development Economics\/}~{\em 4}, 3895--3962.

\bibitem[\protect\citeauthoryear{Duflo, Greenstone, Pande, and Ryan}{Duflo
  et~al.}{2013}]{DGPR13}
Duflo, E., M.~Greenstone, R.~Pande, and N.~Ryan (2013).
\newblock Truth-telling by third-party auditors and the response of polluting
  firms: Experimental evidence from {I}ndia.
\newblock {\em The Quarterly Journal of Economics\/}~{\em 128\/}(4),
  1499--1545.

\bibitem[\protect\citeauthoryear{Efron}{Efron}{1971}]{E71}
Efron, B. (1971).
\newblock Forcing a sequential experiment to be balanced.
\newblock {\em Biometrika\/}~{\em 58\/}(3), 403--417.

\bibitem[\protect\citeauthoryear{Firpo}{Firpo}{2007}]{F07}
Firpo, S. (2007).
\newblock Efficient semiparametric estimation of quantile treatment effects.
\newblock {\em Econometrica\/}~{\em 75\/}(1), 259--276.

\bibitem[\protect\citeauthoryear{Hahn, Hirano, and Karlan}{Hahn
  et~al.}{2011}]{HHK11}
Hahn, J., K.~Hirano, and D.~Karlan (2011).
\newblock Adaptive experimental design using the propensity score.
\newblock {\em Journal of Business \& Economic Statistics\/}~{\em 29\/}(1),
  96--108.

\bibitem[\protect\citeauthoryear{Hu}{Hu}{2016}]{H16}
Hu, Y. (2016).
\newblock Generalized {E}fron's biased coin design and its theoretical
  properties.
\newblock {\em Journal of Applied Probability\/}~{\em 53\/}(2), 327--340.

\bibitem[\protect\citeauthoryear{Hu and Hu}{Hu and Hu}{2012}]{HH12}
Hu, Y. and F.~Hu (2012).
\newblock Asymptotic properties of covariate-adaptive randomization.
\newblock {\em The Annals of Statistics\/}~{\em 40\/}(3), 1794--1815.

\bibitem[\protect\citeauthoryear{Imbens and Rubin}{Imbens and
  Rubin}{2015}]{IR05}
Imbens, G.~W. and D.~B. Rubin (2015).
\newblock {\em Causal inference in statistics, social, and biomedical
  sciences}.
\newblock Cambridge University Press.

\bibitem[\protect\citeauthoryear{Kato}{Kato}{2009}]{K09}
Kato, K. (2009).
\newblock Asymptotics for argmin processes: Convexity arguments.
\newblock {\em Journal of Multivariate Analysis\/}~{\em 100\/}(8), 1816--1829.

\bibitem[\protect\citeauthoryear{Knight}{Knight}{1998}]{K98}
Knight, K. (1998).
\newblock Limiting distributions for l1 regression estimators under general
  conditions.
\newblock {\em The Annals of Statistics\/}~{\em 26\/}(2), 755--770.

\bibitem[\protect\citeauthoryear{Ma, Qin, Li, and Hu}{Ma et~al.}{2018}]{MQLH18}
Ma, W., Y.~Qin, Y.~Li, and F.~Hu (2018).
\newblock Statistical inference of covariate-adjusted randomized experiments.
\newblock {\em arXiv preprint arXiv:1807.09678\/}.

\bibitem[\protect\citeauthoryear{Montgomery-Smith}{Montgomery-Smith}{1993}]{M93}
Montgomery-Smith, S.~J. (1993).
\newblock Comparison of sums of independent identically distributed random
  variables.
\newblock {\em Probability and Mathematical Statistics\/}~{\em 14\/}(2),
  281--285.

\bibitem[\protect\citeauthoryear{Muralidharan and Sundararaman}{Muralidharan
  and Sundararaman}{2011}]{MS11}
Muralidharan, K. and V.~Sundararaman (2011).
\newblock Teacher performance pay: Experimental evidence from {I}ndia.
\newblock {\em Journal of Political Economy\/}~{\em 119\/}(1), 39--77.

\bibitem[\protect\citeauthoryear{Pe{\~n}a, Lai, and Shao}{Pe{\~n}a
  et~al.}{2008}]{PLS08}
Pe{\~n}a, V.~H., T.~L. Lai, and Q.-M. Shao (2008).
\newblock {\em Self-normalized processes: Limit theory and Statistical
  Applications}.
\newblock Springer Science \& Business Media.

\bibitem[\protect\citeauthoryear{Shao and Yu}{Shao and Yu}{2013}]{SY13}
Shao, J. and X.~Yu (2013).
\newblock Validity of tests under covariate-adaptive biased coin randomization
  and generalized linear models.
\newblock {\em Biometrics\/}~{\em 69\/}(4), 960--969.

\bibitem[\protect\citeauthoryear{Shao, Yu, and Zhong}{Shao
  et~al.}{2010}]{SYZ10}
Shao, J., X.~Yu, and B.~Zhong (2010).
\newblock A theory for testing hypotheses under covariate-adaptive
  randomization.
\newblock {\em Biometrika\/}~{\em 97\/}(2), 347--360.

\bibitem[\protect\citeauthoryear{Tabord-Meehan}{Tabord-Meehan}{2018}]{T18}
Tabord-Meehan, M. (2018).
\newblock Stratification trees for adaptive randomization in randomized
  controlled trials.
\newblock {\em arXiv preprint arXiv:1806.05127\/}.

\bibitem[\protect\citeauthoryear{van~der Vaart and Wellner}{van~der Vaart and
  Wellner}{1996}]{VW96}
van~der Vaart, A. and J.~A. Wellner (1996).
\newblock {\em Weak Convergence and Empirical Processes}.
\newblock Springer, New York.

\bibitem[\protect\citeauthoryear{Wei}{Wei}{1978}]{W78}
Wei, L. (1978).
\newblock An application of an urn model to the design of sequential controlled
  clinical trials.
\newblock {\em Journal of the American Statistical Association\/}~{\em
  73\/}(363), 559--563.

\bibitem[\protect\citeauthoryear{Zhang}{Zhang}{2018}]{Z18}
Zhang, Y. (2018).
\newblock Extremal quantile treatment effects.
\newblock {\em The Annals of Statistics\/}~{\em 46\/}(6B), 3707--3740.

\end{thebibliography}
	
\end{document}